\documentclass[12pt]{article}

\usepackage[margin=1in]{geometry}

\usepackage{setspace}

\usepackage{graphicx}

\usepackage{enumitem}

\usepackage{microtype}
\usepackage{subfigure}
\usepackage{booktabs}
\usepackage{comment}
\usepackage{wrapfig}
\usepackage{arydshln}
\usepackage{enumitem}
\usepackage{multirow}
\usepackage{url}
\usepackage{natbib}
\usepackage{authblk}

\RequirePackage[colorlinks,citecolor=blue,linkcolor=blue,urlcolor=blue,pagebackref]{hyperref}

\usepackage{amsmath}
\usepackage{amssymb}
\usepackage{mathtools}
\usepackage{amsfonts}
\usepackage{bm}
\usepackage{bbm}

\usepackage{algorithm}
\usepackage{algorithmic}

\newcommand{\ip}[2]{\left\langle #1,#2\right\rangle}

\def\1{\bm{1}}

\def\rmd{{\mathrm{d}}}

\DeclareMathAlphabet{\mathsfit}{\encodingdefault}{\sfdefault}{m}{sl}
\SetMathAlphabet{\mathsfit}{bold}{\encodingdefault}{\sfdefault}{bx}{n}

\def\calA{{\mathcal{A}}}

\def\calE{{\mathcal{E}}}

\def\calN{{\mathcal{N}}}

\def\calP{{\mathcal{P}}}

\def\calR{{\mathcal{R}}}
\def\calS{{\mathcal{S}}}
\def\calT{{\mathcal{T}}}

\def\calW{{\mathcal{W}}}

\def\bbR{{\mathbb{R}}}

\newcommand{\E}{\mathbb{E}}

\newcommand{\KL}{D_{\mathrm{KL}}}
\newcommand{\Var}{\mathrm{Var}}

\DeclareMathOperator*{\argmax}{arg\,max}

\newcommand{\abs}[1]{\left|#1\right|}

\newcommand{\norm}[1]{\left\|#1\right\|}

\usepackage{amsthm}

\theoremstyle{plain}

\newtheorem{theorem}{Theorem}[section]
\newtheorem{lemma}[theorem]{Lemma}

\newtheorem{corollary}[theorem]{Corollary}
\newtheorem{proposition}[theorem]{Proposition}

\newtheorem{assumption}{Assumption}[section]
\newtheorem*{example}{Example}
\newtheorem*{remark}{Remark}

\usepackage[textsize=tiny]{todonotes}
\usepackage{multirow}
\usepackage{wrapfig}
\usepackage{subfigure}

\usepackage{xcolor}
\newcount\Comments  
\newcommand{\kibitz}[2]{\ifnum\Comments=1\textcolor{#1}{#2}\fi}

\usepackage{listings}

\usepackage{tabularx}

\usepackage[capitalize,noabbrev]{cleveref}

\allowdisplaybreaks

\title{Expected Utility Regret Rule: Minimax and Bayes Optimal Portfolio Choice}

\author{Masahiro Kato\thanks{Email: \texttt{mkato-csecon@g.ecc.u-tokyo.ac.jp}}$\,$}

\affil{The University of Tokyo}

\date{\today}

\begin{document}

\maketitle 

\begin{abstract}
This study considers the problem of portfolio choice, where we recommend a portfolio to an investor to maximize the expected utility of their wealth. Our goal is to construct an asymptotically optimal portfolio choice rule in terms of expected utility regret, the difference between the expected utility of an oracle investor and that achieved by a portfolio chosen from data. We propose the Expected Utility Regret (EUR) rule, which jointly selects a portfolio class and estimates its weights. In a regular parametric return model, a single EUR rule attains both the minimax and the Bayes lower bounds, including their leading constants, without using the evaluation prior. We then derive the mean--variance and risk-parity portfolios as special cases of this framework. First, under smooth increasing and concave utility, the EUR rule and the sample mean--variance portfolio attain the same leading expected regret when expected excess returns approach zero sufficiently fast. Second, when the returns divided by their volatilities have a joint distribution that does not depend on the order of the assets, the EUR rule and the risk-parity portfolio coincide.
\end{abstract}

{\flushleft{\textbf{Keywords:} Portfolio choice; expected utility regret; minimax and Bayes optimality; mean--variance portfolios; risk parity.}}

\section{Introduction}
\label{sec:introduction}
This study considers the portfolio choice problem of an investor who uses observed returns to decide both which assets to hold and how much to invest in them. Our goal is to establish a portfolio choice rule with an optimality guarantee. In the portfolio choice problem, our interest lies in the result of the portfolio choice. For this problem, existing approaches often start from the estimation of distributional information of asset returns, such as means and volatilities, and then construct portfolios such as mean--variance portfolios. However, such approaches do not consider the portfolio choice problem itself from the standpoint of decision-making, and do not by themselves establish which portfolio is appropriate. Some studies are exceptions: \citet{Kan2007optimalportfolio} derive the expected utility loss of mean--variance portfolios built from sample means and covariances of normally distributed returns, and \citet{Brandt2009parametricportfolio} model portfolio weights as a function of asset characteristics and maximize the investor's average utility over the sample period. These studies, however, compare rules within a given family, and they establish neither an optimal rule among all portfolio choice rules nor the conditions under which a familiar portfolio, such as the sample mean--variance portfolio, is optimal for an investor with general expected utility. In this study, we discuss the problem in a more direct manner, by considering a performance metric for the portfolio choice.

In the estimate-then-optimize approach, accurate estimates alone do not determine how good the resulting portfolio is, because the first step is judged in the units of the parameters rather than in utility, and the second step treats the estimates as the true parameters. Comparing portfolio choices therefore requires a criterion defined in terms of the chosen portfolio. Such a criterion applies equally to rules that plug in estimates, as the rule proposed in this study does within each class.

\subsection{Contributions}
Our contribution is to provide a general framework and a method for the portfolio choice problem. We first propose expected utility regret as the framework for evaluating a portfolio choice: the difference between the expected utility of an oracle who knows the return distribution, under the investor's investment restrictions, and that of the portfolio chosen from data. It extends the utility loss that \citet{Kan2007optimalportfolio} measure for mean--variance portfolios to general expected utility and to a choice among portfolio classes.

We then propose the Expected Utility Regret (EUR) rule, which attains both minimax and Bayes optimality in terms of expected utility regret. When multiple classes are available, it uses a preliminary sample to select the classes worth comparing and the remaining observations to choose a class and estimate its weights. In a regular parametric return model with a fixed number of assets and finitely many portfolio classes, its worst-case and prior-integrated expected regrets attain, including their leading constants, lower bounds over all measurable randomized portfolio choice rules. The same EUR rule is used for every evaluation prior with a continuous density that does not depend on the sample size $n$.

In the regret analysis, an exact decomposition relates expected utility regret to two estimation problems: estimating the weights within a portfolio class, a closed convex set of feasible portfolios, and comparing the optimized expected utilities of different classes.

The evaluation criterion determines which term carries the leading constant. A pair of classes whose optimized expected utilities differ by $g>0$ has a selection contribution of order at most $g\exp(-cng^2)$ for a constant $c>0$. When classes can tie, pairs separated by gaps of order $n^{-1/2}$ therefore carry the leading worst-case regret, while weight estimation contributes at order $n^{-1}$ and carries the worst-case constant only when no tie occurs. Under a prior whose continuous density is positive where two classes tie, the parameters with such gaps have prior probability of order $n^{-1/2}$, so both terms contribute at order $n^{-1}$. In either evaluation, the variance of the difference of the estimated optimized utilities of two classes determines how hard they are to separate, and movements common to the two portfolios cancel in this difference.

We derive familiar portfolios as special cases through two routes. The first is a limit: when expected excess returns range over a neighborhood of zero whose radius shrinks faster than $n^{-1/4}$ but more slowly than $n^{-1/2}$, and the return covariance remains nondegenerate, the EUR rule and the sample mean--variance portfolio attain the same minimax and Bayes constants, because the higher return moments affect expected utility below the order of the regret from estimating the mean. When expected excess returns are of order $n^{-1/4}$, a third-moment term displaces the optimal weights by as much as their estimation error, and the sample mean--variance portfolio pays an excess regret that the EUR rule avoids. The second route is an exact symmetry: when standardized returns and feasible standardized holdings are invariant under a group of asset permutations that can move any asset to any other, such as the cyclic permutations, concavity makes equal standardized holdings optimal, with equal contributions to volatility. This risk-parity composition is an exact representation of the EUR rule, so the minimax and Bayes conclusions apply to it, and when relative volatilities are unknown, the rule estimates the composition together with the amount invested. Once these conditions fail, imposing a simple composition near a symmetric model improves precision but leaves a loss quadratic in the departure along the omitted investment direction, and a binding money budget changes the composition when assets differ in relative volatility.

In practice, an investor who may hold only a few assets can lose more by choosing which assets to hold than by estimating their weights, when multiple asset sets can be equally good. Although an investor's utility function is often not known, in both routes the utility affects the optimal risky holdings only through the amount invested in risky assets, to first order in the expected excess returns in the first route and exactly in the second. Once the conditions of either route are established, the composition can therefore be recommended without knowing the investor's utility function.

\subsection{Related Literature}
Our analysis connects portfolio choice to the literature on the effects of estimating return parameters.
\paragraph{Estimation error in portfolio choice.}
Mean--variance theory specifies a population objective \citep{Markowitz1952portfolioselection}, and subsequent studies examine the utility lost when its inputs are estimated \citep{Klein1976theeffect,Jorion1986bayessteinestimation,Kan2007optimalportfolio,DeMiguel2009optimalversus,Tu2011markowitzmeets}. Bayesian and robust portfolio choice build prior beliefs or ambiguity into the investment decision \citep{Kandel1996onthe,Barberis2000investingfor,Pastor2000portfolioselection,Goldfarb2003robustportfolio,Garlappi2007portfolioselection}, whereas we use a prior only to evaluate a rule.

\paragraph{Statistical decision theory.}
Expected utility regret places portfolio choice in statistical decision theory, where a rule maps data to an action and is judged by a prespecified loss \citep{Wald1949statisticaldecision,Haavelmo1944theprobability}. Econometrics has developed minimax-regret treatment choice \citep{Manski2004statisticaltreatment,Manski2000identificationproblems,Manski2002treatmentchoice,Manski2010partialidentification,Dehejia2005programevaluation,Stoye2009minimaxregret,Tetenov2012statisticaltreatment,Kitagawa2018whoshould,Athey2021policylearning,Zhou2023offlinemultiaction,Manski2021econometricsfor}. Our portfolio classes play the role of actions in treatment choice, except that each class is a continuum of portfolios whose weights must be estimated.

\paragraph{Local asymptotics.}
Such rules are usually analyzed locally by replacing the experiment near a reference parameter with its Gaussian limit \citep{LeCam1972theoryofstatisics,LeCam1986asymptoticmethods,VanderVaart1991anasymptotic}, including treatment rules and sequential experiments \citep{Hirano2009asymptotics,Hirano2020asymptoticanalysis,Hirano2025asymptoticrepresentations,Adusumilli2023risk,Adusumilli2025samplestopsampling}. Our lower bounds are of this type, but localization alone does not determine performance over the original parameter set, where classes with fixed positive utility gaps require large-deviation control \citep{Glynn2004largedeviations,Imbens2025admissibilitycompletely}.

\paragraph{Best-arm identification.}
Best-arm identification chooses the action with the highest expected reward from a finite set \citep{Bechhofer1954asingle,Audibert2010bestarm,Bubeck2011pureexploration}, with information bounds for fixed confidence and fixed budget \citep{Lai1987adaptivetreatment,Kaufmann2014complexity,Kaufmann2016complexity,Garivier2016optimalbest,Komiyama2022minimaxoptimal,Degenne2023existence,Wang2024uniformlyoptimal}. Bayes, simple-regret and policy-choice formulations evaluate recommendations by regret \citep{Russo2020simplebayesian,Komiyama2023rateoptimal,Kasy2021adaptivetreatment,Ariu2021policychoice,Armstrong2022asymptoticefficiency,Adusumilli2022neymanallocation,Kato2024adaptiveexperimental}, and \citet{Kato2025minimaxbayesoptimalbestarm,Kato2025minimaxbayesadaptiveexperimental} obtain simultaneous minimax and Bayes optimality with exact constants. Portfolio choice differs in two respects: the investor does not allocate observations across actions, and the recommendation is a point in a continuum whose estimated weights contribute their own constant.

\paragraph{Moments and expected utility.}
The approximation of expected utility by return moments is developed by \citet{Samuelson1970fundamental}, and \citet{Markov2023portfolio} derive allocations with a skewness contribution for asymmetric Laplace returns. Section~\ref{sec:distribution} compares this approximation error with the statistical error and identifies the range of expected excess returns in which the sample mean--variance portfolio attains the smallest leading expected regret among all portfolio choice rules.

\paragraph{Risk parity.}
Earlier studies characterize equal volatility contributions and give robust or utility-based interpretations of risk parity \citep{Maillard2010properties,Fisher2015riskparity,Gava2022alpha}, and \citet{Noguer2026heuristic} shows that, for a mean--variance objective, exchangeable beliefs about returns make equal weighting the Bayes portfolio, and exchangeable beliefs about standardized returns with known volatilities make inverse-volatility weighting the Bayes portfolio, without estimation from a sample. We derive the risk-parity form from full expected utility, with exact estimation constants, and identify both the precision gained from investment restrictions \citep{Jagannathan2003riskreduction} and the utility lost.

\paragraph{Predictors and states.}
When portfolios depend on predictors, direct methods estimate allocations from the information at the investment date \citep{AitSahalia2001variableselection,Brandt2009parametricportfolio}, while high-dimensional methods control errors in estimated return inputs \citep{Fan2008highdimensional}.

\paragraph{Performance bounds from statistical learning.}
Performance bounds from statistical learning relate estimation error directly to portfolio utility for the empirical maximizer of an expected relative utility \citep{Rokhlin2020relativeutility} and for a regularized allocation rule using market predictors \citep{BazierMatte2018generalization}. Universal portfolios \citep{Cover1991universalportfolios,Luo2018efficientonlineportfoliowithlogarithmicregret} instead measure regret against the best rule in hindsight over an arbitrary return path. Beyond such bounds, we identify when the lower and upper bounds agree, including their constants, under worst-case and Bayes evaluation.

Section~\ref{sec:model} defines the decision problem, Section~\ref{sec:complexity} analyzes how estimation error produces regret, Section~\ref{sec:eurr} gives the EUR rule, and Sections~\ref{sec:minimax} and~\ref{sec:bayes} establish its minimax and Bayes optimality. Section~\ref{sec:existing-portfolios} relates the EUR rule to mean--variance and risk-parity portfolios, and Section~\ref{sec:dynamic} considers observed states. Section~\ref{sec:experiments} reports the experiments, Section~\ref{sec:conclusion} concludes, and the appendices collect the proofs.
\paragraph{Notation.}
For an integer $K\geq1$, $w\in\bbR^K$, and $p\in[1,\infty)$, we write
\begin{align}
  \norm{w}_p
  =
  \left(\sum_{a=1}^K\abs{w_a}^p\right)^{1/p},
  \qquad
  \norm{w}_\infty
  =
  \max_{1\leq a\leq K}\abs{w_a},
  \label{eq:lp-norm}
\end{align}
so that $\norm{w}_1$ is the sum of the absolute weights and $\norm{w}_2$ is the Euclidean norm. The support of $w$ is $\operatorname{supp}(w)=\{a\in\{1,\ldots,K\}:w_a\neq0\}$, and we write $\norm{w}_0=\abs{\operatorname{supp}(w)}$ for the number of nonzero weights. This $\ell_0$ quantity is not a norm, because multiplying $w$ by a nonzero scalar leaves it unchanged. Sections~\ref{sec:complexity} and~\ref{sec:dynamic} use a general norm on $\bbR^K$, which is introduced there together with its dual norm. For a matrix $A$, we write $A^\top$ for its transpose, $\operatorname{tr}(A)$ for its trace, and $\norm{A}$ for the operator norm induced by the Euclidean norm; for symmetric matrices $A$ and $B$ of the same size, $A\preceq B$ means that $B-A$ is positive semidefinite. The $q\times q$ identity matrix is $I_q$. We use $\abs{\cdot}$ for the absolute value of a real number and for the number of elements of a finite set, and $\lfloor x\rfloor$ for the largest integer at most $x$. In bounds, unsubscripted $c$ and $C$ denote positive constants that do not depend on $n$ or on the return distribution and may change between displays. Expectation and probability under a distribution $P$ are $\E_P$ and $\Pr_P$, and $N(\mu,\Sigma)$ is the Gaussian distribution with mean $\mu$ and covariance matrix $\Sigma$. We write $X_n\xrightarrow{p}X$ and $X_n\xrightarrow{d}X$ for convergence in probability and in distribution as $n\to\infty$, with $\rightsquigarrow$ also denoting convergence in distribution. We write $x_n\asymp y_n$ when $x_n/y_n$ is bounded away from zero and from infinity, and $x_n\lesssim y_n$ when $x_n\leq Cy_n$ for a constant independent of $n$. Derivatives of the utility are $U'$ and $U''$, and $\nabla$ denotes the gradient with respect to the argument of the function being differentiated; a subscript specifies that argument when needed.

\section{Setup}
\label{sec:model}
The investor observes returns and chooses a portfolio for a new investment period. We specify the returns, the portfolio choice problem and its loss, and the two ways of evaluating a rule under uncertainty about the return distribution, and then represent the feasible portfolios as a union of portfolio classes.

\subsection{Return}
Let $K$ be the number of assets and $[K]=\{1,\ldots,K\}$ the set of assets. In this study, we treat the problem in general terms without restricting the assets to a specific type, such as stocks, bonds, or gold.

Let $R\in\bbR^K$ be a one-period vector of asset returns. In the static problem, the observations
\begin{align}
  R_1,\ldots,R_n\sim P
  \label{eq:iid}
\end{align}
are independent with a common unknown distribution $P$ in a set $\calP_K$ of return distributions. The new return vector used to evaluate a chosen portfolio is independent of the training observations and has the same distribution. Section~\ref{sec:dynamic} replaces this independent-observation model with a stationary time series and an observed state.

Unless a result states otherwise, the number of assets $K$ and the set of return distributions $\calP_K$ do not change with $n$. A statement that is uniform over $P\in\calP_K$ uses a bound whose constants and remainder do not depend on $P$. Although $\calP_K$ is fixed, the return distribution at which a supremum over $\calP_K$ is evaluated may differ with $n$.

\subsection{Portfolio Choice Problem}
Let $\calW_K\subset\bbR^K$ be a nonempty compact set of feasible portfolios $w=(w_a)_{a\in[K]}$, where $w_a$ is the investment in asset $a\in[K]$. Let $x_0$ be the initial wealth and let $U$ be a utility function that is continuous, increasing, and concave on an interval containing all attainable values of $x_0+w^\top R$.

Under an unknown distribution $P$, the investor evaluates a portfolio by
\begin{align}
  V_P(w)
  =
  \E_P\left[U\left(x_0+w^\top R\right)\right].
  \label{eq:value}
\end{align}
The largest expected utility and the set of portfolios attaining it are
\begin{align}
  V_P^\star
  =
  \sup_{w\in\calW_K}V_P(w),
  \qquad
  \calA(P)
  =
  \argmax_{w\in\calW_K}V_P(w).
  \label{eq:oracle-set}
\end{align}
We call a member of $\calA(P)$ an oracle portfolio because choosing it uses knowledge of $P$.

A portfolio choice rule $\delta_n$ is a measurable probability kernel from $(\bbR^K)^n$ to $\calW_K$, so its output $\widehat w_n$ may be deterministic or randomized conditional on the observations. Let $\mathfrak D_{n,K}$ be the set of these rules. When an optimization has multiple solutions, we use a specified measurable selection; for the finite-dimensional compact optimizations in the EUR rule, this is the lexicographically smallest solution.

\subsection{Expected Utility Regret}
The utility lost by using a feasible portfolio $w$ instead of an oracle portfolio is
\begin{align}
  r_P(w)
  =
  V_P^\star-V_P(w).
  \label{eq:regret}
\end{align}
The expected regret of $\delta_n$ is obtained by also averaging over its training sample and any randomization:
\begin{align}
  \calR_n(\delta_n,P)
  =
  \E_P\left[r_P(\widehat w_n)\right].
  \label{eq:expected-regret}
\end{align}
The expectation defining $V_P$ evaluates a portfolio on a new return, whereas the expectation in \eqref{eq:expected-regret} evaluates the error in choosing that portfolio from data. We keep the cardinal normalization of $U$ fixed: adding a constant to $U$ does not affect regret, and multiplying $U$ by a positive constant multiplies regret and all its leading constants by the same factor.

\subsection{Uncertainty}
The model $\calP_K$ describes the distributions that the investor cannot distinguish before observing data, so we evaluate a rule either by its largest expected regret over that model or by its expected regret integrated under a specified prior.

\paragraph{Minimax optimality.}
The smallest achievable worst-case expected regret at sample size $n$ is
\begin{align}
  \mathcal V_{n,K}^{\mathrm{mm}}
  =
  \inf_{\delta_n\in\mathfrak D_{n,K}}
  \sup_{P\in\calP_K}
  \calR_n(\delta_n,P).
  \label{eq:minimax}
\end{align}
For a positive deterministic normalization $a_n$, asymptotic minimax optimality with constant $C$ means that both $a_n\sup_{P\in\calP_K}\calR_n(\delta_n,P)$ and $a_n\mathcal V_{n,K}^{\mathrm{mm}}$ converge to $C$ as $n\to\infty$. The supremum is taken anew at each $n$, so this criterion includes return distributions approaching a tie between optimal classes.

\paragraph{Bayes optimality.}
For Bayes evaluation, let $\calP_K=\{P_\theta:\theta\in\Theta\}$, where the parameter set $\Theta\subset\bbR^p$ is compact and $p$ is the number of return parameters. Let $\mathfrak H$ be the set of probability measures on $\Theta$ with continuous densities relative to Lebesgue measure. For $H\in\mathfrak H$, define
\begin{align}
  \mathcal V_{n,K}^{H}
  =
  \inf_{\delta_n\in\mathfrak D_{n,K}}
  \int_\Theta
  \calR_n(\delta_n,P_\theta)
  \rmd H(\theta).
  \label{eq:bayes-value}
\end{align}
The infimum allows a competing rule to use $H$, but the EUR rule uses the same construction for every $H$: with $\mathfrak D_K=\prod_{n\geq1}\mathfrak D_{n,K}$ denoting rules specified at all sample sizes, the order of quantifiers is
\begin{align}
  \exists(\delta_n^\star)\in\mathfrak D_K
  \quad
  \forall H\in\mathfrak H.
  \label{eq:bayes-quantifier-setup}
\end{align}
For each $H\in\mathfrak H$ that does not depend on $n$, asymptotic Bayes optimality means that the integrated expected regret of this rule and the infimum in \eqref{eq:bayes-value}, both multiplied by the same positive deterministic normalization $a_n$, have the same limit as $n\to\infty$.

\subsection{Portfolio Class}
In this study, we represent the feasible set $\calW_K$ as a union of sets of portfolios, each of a simple form. Investment restrictions used in practice, such as a limit on the number of assets held, often make $\calW_K$ nonconvex, but the analysis of estimated portfolio weights relies on convexity. When a smooth, strictly concave expected utility is maximized over a convex set at a portfolio in its relative interior, the optimal portfolio changes continuously with the return distribution, a first-order condition characterizes it, and a small error in the weights costs expected utility of the order of the squared error. Writing $\calW_K$ as a union of convex sets keeps these properties within each set, so that the nonconvexity affects only the choice among the sets. This representation therefore lets us treat complicated feasible sets, in particular nonconvex ones, without losing generality.

To allow a choice among supports or other investment restrictions, we write the feasible set as
\begin{align}
  \calW_K
  =
  \bigcup_{m\in\mathfrak M_K}\calW_{K,m},
  \label{eq:model-decomposition}
\end{align}
where each $\calW_{K,m}$ is nonempty, closed, and convex. We call each $\calW_{K,m}$ a portfolio class, or simply a class, and $\mathfrak M_K$ the set of class indices. A class consists of the portfolios that satisfy one investment restriction, such as holding only a given set of assets. The decomposition separates two decisions: which class to invest in, a choice of the index $m$, and which weights to hold inside that class, an optimization over the convex set $\calW_{K,m}$. Each class is convex, but the union $\calW_K$ need not be. We assume that $\mathfrak M_K$ is a separable metric space, that the graph of $m\mapsto\calW_{K,m}$ is measurable, and that the optimizations below admit measurable selections. We keep this representation fixed when decomposing regret.

\paragraph{The role of convex portfolio classes.}
The next example shows what fails when a nonconvex feasible set is handled as a single set rather than as a union of convex classes.

\begin{example}[A nonconvex feasible set treated as one set]
Let $K=2$ and let $e_a$ be the portfolio that invests all wealth in asset $a$. An investor who must hold exactly one of the two assets chooses from $\calW_2=\{e_1,e_2\}$, which is compact but not convex. Write $g_P=V_P(e_1)-V_P(e_2)$. Then, we have $\calA(P)=\{e_1\}$ if $g_P>0$, $\calA(P)=\{e_2\}$ if $g_P<0$, and $\calA(P)=\calW_2$ if $g_P=0$, and a portfolio outside $\calA(P)$ has regret $\abs{g_P}$. Three properties of a concave objective on a convex set, on which the analysis of estimated weights rests, fail here.
\begin{itemize}
\item The oracle portfolio is not continuous in $P$: it jumps between $e_1$ and $e_2$ as $g_P$ changes sign, however smoothly $P$ varies.
\item The first-order condition no longer characterizes the oracle portfolio. On a convex set, a maximizer $v$ of a differentiable concave $V_P$ satisfies $\langle\nabla V_P(v),w-v\rangle\leq0$ for every feasible $w$, and this condition is also sufficient. Here, the segment from $e_1$ to $e_2$ leaves $\calW_2$, so $e_1$ can be optimal while $\langle\nabla V_P(e_1),e_2-e_1\rangle>0$, and the comparison is decided by the sign of $g_P$ rather than by derivatives at the oracle portfolio.
\item The loss is no longer quadratic in the error. On a convex set with smooth $V_P$ and a maximizer $v$ in its relative interior, the derivative of $V_P$ at $v$ is zero in every direction that stays in the set. Hence, for $w$ in the set near $v$, $V_P(v)-V_P(w)$ is at most a constant times $\norm{w-v}_2^2$, and weights with estimation error of order $n^{-1/2}$ lose expected utility of only order $n^{-1}$. Here, by contrast, holding the wrong portfolio loses the whole difference $\abs{g_P}$. If the portfolio with the larger estimated expected utility is held and $\sqrt n$ times the error in the estimate of $g_P$ has a nondegenerate normal limit, then for $\abs{g_P}$ of order $n^{-1/2}$ the wrong portfolio is held with probability bounded away from zero, so the expected regret is of order $n^{-1/2}$ (Section~\ref{sec:complexity}).
\end{itemize}
Writing $\calW_2=\{e_1\}\cup\{e_2\}$ as a union of two singleton classes isolates these failures: nothing is estimated within a class, and the whole difficulty is the comparison of the two numbers $V_P(e_1)$ and $V_P(e_2)$.
\end{example}

In general, under the decomposition \eqref{eq:model-decomposition}, the nonconvexity of $\calW_K$ affects only the choice of class. Within a class, expected utility is concave on a closed convex set, so the first-order condition characterizes its maximizers, and under the regularity conditions of Section~\ref{sec:complexity} the expected utility lost by estimated weights is of the order of the squared estimation error. The choice of class, in contrast, is made by comparing the highest expected utilities attainable in the classes. This choice is easy when one class is clearly better, because a class whose best portfolio has a clearly lower estimated expected utility can be discarded, as the EUR rule does in its preliminary comparison (Section~\ref{sec:eurr}). It is difficult when the best portfolios of two classes attain nearly the same expected utility, and Sections~\ref{sec:complexity} and~\ref{sec:minimax} show that, when classes can tie, the choice of class is the dominant source of worst-case regret, because a wrong class loses the whole utility gap rather than an amount of the order of a squared error.

\paragraph{Nonconvex feasible sets defined as unions of classes.}
Many investment restrictions used in practice define nonconvex feasible sets of the form \eqref{eq:model-decomposition}.

\begin{example}[Nonconvex portfolios]
Let $\Delta_K$ be the simplex of fully invested portfolios without short positions in \eqref{eq:simplex}. The following feasible sets are not convex in general, and each is a union of nonempty closed convex classes.
\begin{itemize}
\item \emph{A limit on the number of assets held.} Every position has to be monitored and traded, and portfolio selection under such a cardinality constraint is a standard problem \citep{Chang2000heuristics,Jobst2001computational,Bertsimas2009algorithm}; portfolios with few active positions are also advocated for their stability \citep{Brodie2009sparse}. For $s\in\{1,\ldots,K-1\}$,
$\{w\in\Delta_K:\norm{w}_0\leq s\}=\bigcup_{m\subseteq[K],\,1\leq\abs{m}\leq s}\{w\in\Delta_K:w_a=0\text{ for }a\notin m\}$,
and each class holds only the assets in $m$.
\item \emph{A minimum position size.} Each asset is either not held or held with a weight of at least $\ell\in(0,1]$ \citep{Jobst2001computational}:
$\{w\in\Delta_K:w_a=0\text{ or }w_a\geq\ell\text{ for every }a\in[K]\}=\bigcup_{m}\{w\in\Delta_K:w_a=0\text{ for }a\notin m,\ w_a\geq\ell\text{ for }a\in m\}$,
where $m$ ranges over the nonempty sets of assets with $\ell\abs{m}\leq1$.
\item \emph{The diversification rule for European investment funds (UCITS).} The securities of one issuer may make up at most $5\%$ of the fund's assets, a limit that may be raised to $10\%$ provided that the holdings above $5\%$ total at most $40\%$ \citep[Article~52]{EU2009ucits}. With one issuer per asset,
$\{w\in\Delta_K:w_a\leq0.10\text{ for every }a\in[K],\ \sum_{a:w_a>0.05}w_a\leq0.40\}=\bigcup_{S}\{w\in\Delta_K:w_a\leq0.05\text{ for }a\notin S,\ w_a\leq0.10\text{ for }a\in S,\ \sum_{a\in S}w_a\leq0.40\}$,
where $S\subseteq[K]$ ranges over the sets of issuers allowed above $5\%$ for which the class is nonempty.
\item \emph{Round lots.} Holdings must be multiples of a lot $1/L$ of wealth for an integer $L\geq1$ \citep{Jobst2001computational}:
$\{w\in\Delta_K:Lw_a\in\mathbb Z\text{ for every }a\in[K]\}=\bigcup_{k}\{k/L\}$,
where $k$ ranges over the vectors of nonnegative integers with $\sum_{a=1}^Kk_a=L$. The classes are singletons.
\end{itemize}
Each equality follows by taking $m=\operatorname{supp}(w)$ in the first two cases, $S=\{a:w_a>0.05\}$ in the third, and $k=Lw$ in the fourth. In each case, the investor decides which class to invest in as well as the weights within it.
\end{example}

When the feasible set is one convex set, as for the mean--variance and risk-parity portfolios in Section~\ref{sec:existing-portfolios}, there is one class and no comparison occurs. The next two paragraphs give the forms of $\calW_{K,m}$ used throughout this study.

\paragraph{Portfolios without short positions.}
The portfolios that invest all wealth and hold no short position form the simplex
\begin{align}
  \Delta_K
  &:={}
  \left\{w\in\bbR^K:w_a\geq0,\ \sum_{a=1}^Kw_a=1\right\}.
  \label{eq:simplex}
\end{align}
A limit on the number of assets held restricts the simplex further to
\begin{align}
  \calW_{K,s}^{\ell_0}
  &:={}
  \left\{w\in\Delta_K:\norm{w}_0\leq s\right\},
  \label{eq:sparse}
\end{align}
where $s\in\{1,\ldots,K\}$ is that limit and $\norm{w}_0$ counts the nonzero weights of $w$. For $s<K$, this set is not convex, since the average of two portfolios that hold different assets can hold more than $s$ assets. As in the first case of the example of nonconvex portfolios, its classes are indexed by the sets of assets $m\subseteq[K]$ with $1\leq\abs{m}\leq s$, and $\calW_{K,m}=\{w\in\Delta_K:\operatorname{supp}(w)\subseteq m\}$ is the simplex of portfolios that hold only the assets in $m$. Choosing a class then means choosing which assets to hold.

\paragraph{Portfolios with short positions.}
Allowing short positions instead enlarges the simplex, and a bound on the sum of the absolute weights keeps the enlarged set compact:
\begin{align}
  \calW_K^{\ell_1}(L)
  &:={}
  \left\{w\in\bbR^K:\sum_{a=1}^Kw_a=1,\ \norm{w}_1\leq L\right\},
  \label{eq:gross}
\end{align}
where $L\geq1$. The quantity $\norm{w}_1$ is the gross exposure of $w$, the total size of its long and short positions, and $L=1$ recovers $\Delta_K$. Both $\Delta_K$ and $\calW_K^{\ell_1}(L)$ are convex, so each forms a single class, and $\mathfrak M_K$ then has one index.

\section{Analyses of Estimation Error and Expected Utility Regret}
\label{sec:complexity}
\label{sec:upper}
We show how the analysis of the estimation problem relates to that of the decision problem. Because the feasible set is the union of the portfolio classes $\calW_{K,m}$ (Section~\ref{sec:model}), the investor must both estimate the weights within a class and compare classes.

\subsection{Return Model}
We impose two assumptions on the return model: integrability of utility and a regular parametric model.

\paragraph{Integrability of utility.}
To compare sample and population expected utilities and their gradients uniformly over portfolios and distributions, the next assumption bounds expected utility and, where needed, its derivatives.

\begin{assumption}[Utility derivatives and integrability]
\label{ass:utility}
The function $U$ is continuous, increasing, and concave, and the following integrability bound holds:
\begin{align}
  \sup_{P\in\calP_K}
  \E_P\left[
    \sup_{w\in\calW_K}
    \abs{U\left(x_0+w^\top R\right)}
  \right]
  <
  \infty.
  \label{eq:portfolio-utility-integrability}
\end{align}
For results involving the gradient, $U$ is continuously differentiable and
\begin{align}
  \sup_{P\in\calP_K}
  \E_P\left[
    \sup_{w\in\calW_K}
    \abs{U'\left(x_0+w^\top R\right)}\norm{R}_2
  \right]
  <\infty.
  \label{eq:first-derivative-integrability}
\end{align}
For results involving derivatives through order $k_U\geq2$, $U$ is $k_U$ times continuously differentiable and, for every $k=0,\ldots,k_U$,
\begin{align}
  \sup_{P\in\calP_K}
  \E_P\left[
    \sup_{w\in\calW_K}
    \abs{U^{(k)}\left(x_0+w^\top R\right)}\norm{R}_2^k
  \right]
  <\infty,
  \label{eq:higher-derivative-integrability}
\end{align}
where $U^{(0)}=U$ and the factor $\norm{R}_2^0$ equals one.
\end{assumption}

\paragraph{A regular parametric model.}
The conditions in this paragraph are used from Section~\ref{sec:eurr} onward. For the parametric model $\{P_\theta:\theta\in\Theta\}$, define
\begin{align}
  B_m(\theta)
  =
  \sup_{w\in\calW_{K,m}}V_{P_\theta}(w).
  \label{eq:component-value-theta}
\end{align}
Let $w_m(\theta)\in\argmax_{w\in\calW_{K,m}}V_{P_\theta}(w)$ attain $B_m(\theta)$, and let $A(\theta)=\argmax_{m\in\mathfrak M_K}B_m(\theta)$ be the set of optimal classes. Let $p_\theta$ denote the density of $P_\theta$, $S_\theta(R)=\nabla\log p_\theta(R)$ its score, and $I_\theta=\E_\theta[S_\theta(R)S_\theta(R)^\top]$ the Fisher information, and let $\Theta^+$ be a compact estimation set containing $\Theta$ in its interior. For a sample $D=\{R_i:i\in I_D\}$ indexed by a nonempty set $I_D\subseteq\{1,\ldots,n\}$, the maximum likelihood estimator of $\theta$ is
\begin{align}
  \widehat\theta(D)
  \in
  \argmax_{\theta\in\Theta^+}
  \sum_{i\in I_D}\log p_\theta(R_i).
  \label{eq:mle}
\end{align}
Either assumption below ensures that this maximum is attained: Assumption~\ref{ass:regular-return-model} makes the log-likelihood continuous on the compact set $\Theta^+$, and Assumption~\ref{ass:regular-one-class} requires it directly. When multiple maximizers exist, $\widehat\theta(D)$ denotes the lexicographically smallest one, a measurable function of the sample (Section~\ref{sec:model}), and $\widehat\theta_n=\widehat\theta(\{R_1,\ldots,R_n\})$ uses all $n$ observations. The exact constants in Sections~\ref{sec:minimax} and~\ref{sec:bayes} depend on how the error of this estimator propagates to $B_m$ and $w_m$, and the next assumption makes this dependence smooth.

\begin{assumption}[Regular return model]
\label{ass:regular-return-model}
The returns are independent with a common return distribution $P_\theta$, which has a density $p_\theta$ relative to a known finite measure $\nu$ on a bounded subset of $\bbR^K$. The evaluation set $\Theta$ is the closure of a bounded open set in $\bbR^p$ with piecewise $C^2$ boundary and lies strictly inside a compact estimation set $\Theta^+$. The following conditions hold.
\begin{enumerate}[label=(\roman*),leftmargin=2.1em]
  \item On a neighborhood of $\Theta^+$, the densities are bounded above and away from zero, uniformly in the return vector. Their parameter derivatives through order four exist, are continuous in the parameter, and are uniformly bounded. The model is identifiable on $\Theta^+$, and $I_\theta$ is uniformly positive definite on $\Theta^+$.
  \item There are finitely many compact convex portfolio classes. Utility is increasing, concave, and $C^4$ on a neighborhood of the attainable wealth interval. In each positive-dimensional class, the negative tangent Hessian of expected utility is uniformly positive definite throughout the class and $\Theta^+$, and the class optimizer $w_m(\theta)$ remains a positive distance from the relative boundary. Singleton classes are also allowed.
  \item Whenever $\theta\in\Theta$ satisfies $\abs{A(\theta)}\geq2$, the optimizers indexed by $A(\theta)$ are distinct. If $m_0$ is the smallest index in $A(\theta)$, the matrix whose rows are $\nabla(B_m-B_{m_0})(\theta)^\top$, for $m\in A(\theta)\setminus\{m_0\}$, has full row rank.
  \item Whenever $\theta\in\partial\Theta$ satisfies $\abs{A(\theta)}\geq2$, the boundary $\partial\Theta$ is smooth near $\theta$, and the matrix obtained by appending the normal of $\partial\Theta$ at $\theta$ to the rows in part (iii) also has full row rank.
\end{enumerate}
\end{assumption}

\begin{remark}[Scope of the regular return model]
Taking $\nu$ to be counting measure gives finite-support models, and continuous models are covered when they have a common bounded support and a smooth positive density. For $\calW_{K,s}^{\ell_0}$ with $K$ and $s$ fixed, part (ii) requires the optimizer over each support simplex to hold every asset in that support, so a class cannot tie with a class whose support contains its own.
\end{remark}

\paragraph{One class with unbounded returns.}
With a single convex portfolio class, no comparison between classes occurs. The next assumption states the properties of the maximum likelihood estimator used by the one-class results in Sections~\ref{sec:minimax} and~\ref{sec:bayes}, and allows unbounded returns and unknown relative asset volatilities.

\begin{assumption}[Regular maximum likelihood estimation in one class]
\label{ass:regular-one-class}
There is one full-dimensional compact convex feasible set $\calW_K$, independent of $\theta$ and $n$. The evaluation and estimation sets $\Theta\subset\operatorname{int}(\Theta^+)$ have the properties stated in Assumption~\ref{ass:regular-return-model}. The observations are independent and have density $p_\theta$ relative to a sigma-finite measure. The following conditions hold.
\begin{enumerate}[label=(\roman*),leftmargin=2.1em]
\item The model is identifiable and differentiable in quadratic mean on a neighborhood of $\Theta^+$. Its score $S_\theta$ has mean zero, continuous positive definite Fisher information $I_\theta$, and fourth moments bounded over that neighborhood. Differentiation under the integral is valid for bounded functions of a sample. The likelihood attains its maximum on $\Theta^+$, and its lexicographically selected maximizer is measurable.
\item The map $(\theta,w)\mapsto V_{P_\theta}(w)$ has continuous mixed derivatives through order three on a neighborhood of $\Theta^+\times\calW_K$. Its negative Hessian in $w$ is bounded above and below by positive multiples of $I_K$. The unique optimizer $w_1(\theta)$ stays a positive distance from the boundary of $\calW_K$.
\item The estimator $\widehat\theta_n$ in \eqref{eq:mle} has a remainder $\eta_n(\theta)$ such that
\begin{align}
 \widehat\theta_n-\theta
 &=I_\theta^{-1}\frac1n\sum_{i=1}^n S_\theta(R_i)+\eta_n(\theta),
 &\sup_{\theta\in\Theta}n\E_\theta[\norm{\eta_n(\theta)}_2^2]&\longrightarrow0,
 \label{eq:regular-estimator-ltwo}\\
 \sup_{n\geq1}\sup_{\theta\in\Theta}
 n^2\E_\theta[\norm{\widehat\theta_n-\theta}_2^4]&<\infty.
 \label{eq:regular-estimator-fourth}
\end{align}
\end{enumerate}
All sets, the utility, and both dimensions $K$ and $p$ are independent of $n$.
\end{assumption}

Appendix~\ref{app:regular-one-class} derives the expected-regret constants from these conditions, which hold for the bounded model in Assumption~\ref{ass:regular-return-model} with one full-dimensional class (Appendix~\ref{app:likelihood-global}) and for a non-Gaussian model with unbounded returns and estimated asset scales (Section~\ref{sec:estimated-volatility}).

\subsection{Regret Decomposition}
The investor loses utility by estimating weights within the chosen class, and selecting a suboptimal class causes an additional loss even with perfectly estimated weights.

For $m\in\mathfrak M_K$, define
\begin{align}
  V_{P,m}^\star
  =
  \sup_{w\in\calW_{K,m}}V_P(w),
  \qquad
  \calA_m(P)
  =
  \argmax_{w\in\calW_{K,m}}V_P(w).
  \label{eq:component-values}
\end{align}
Then, it holds that
\begin{align}
  V_P^\star
  =
  \sup_{m\in\mathfrak M_K}V_{P,m}^\star.
  \label{eq:component-supremum}
\end{align}
If $\widehat m$ is the selected class and $\widehat w\in\calW_{K,\widehat m}$, then the following identity holds:
\begin{align}
  r_P(\widehat w)
  =
  \underbrace{V_P^\star-V_{P,\widehat m}^\star}_{\text{regret from class selection}}
  +
  \underbrace{V_{P,\widehat m}^\star-V_P(\widehat w)}_{\text{regret from allocation within a class}}.
  \label{eq:two-layer-decomposition}
\end{align}
The two components in \eqref{eq:two-layer-decomposition}, but not their sum, depend on the collection $(\calW_{K,m})_{m\in\mathfrak M_K}$, which we therefore fix.

\subsection{Estimation Error and Regret}
We use derivatives to connect a portfolio's estimation error to its utility loss.

\paragraph{Estimating weights within a portfolio class.}
Fix a portfolio class $\calW_{K,m}$. Because the investor does not know $V_P$, the investor maximizes its sample counterpart,
\begin{align}
  \widehat V_n(w)
  =
  \frac1n\sum_{i=1}^nU\left(x_0+w^\top R_i\right),
  \label{eq:empirical-value}
\end{align}
the average utility of $w$ over the observed returns. Holding its maximizer $\widehat w_m$ in \eqref{eq:component-erm} below, the investor loses $V_{P,m}^\star-V_P(\widehat w_m)$, and we bound this loss in two steps. First, $V_P$ does not increase along the segment from an optimal portfolio $v$ to any portfolio $w$ of the class. If $\widehat V_n$ changes at nearly the same rate along every such segment, it cannot increase much either, and the utility-loss condition below turns this into a bound on the distance of $\widehat w_m$ from the optimal portfolios. Here, the rate is the derivative in the direction $w-v$, and $Z_{n,m}(P)$ below is the largest difference between these rates for $\widehat V_n$ and $V_P$. Second, the loss is at most this difference times the distance moved, and Proposition~\ref{prop:within-class} combines the two steps.

For each $m\in\mathfrak M_K$, let $\norm{\cdot}_m$ be a norm on $\bbR^K$ with dual norm $\norm{\cdot}_{m,*}$. When $V_P$ is differentiable on a neighborhood of $\calW_{K,m}$, define
\begin{align}
  Z_{n,m}(P)
  =
  \sup_{\substack{v\in\calA_m(P),\ w\in\calW_{K,m}\\w\neq v}}
  \left|
    \int_0^1
    \left\langle
      \nabla\widehat V_n(v+t(w-v))-\nabla V_P(v+t(w-v)),
      \frac{w-v}{\norm{w-v}_m}
    \right\rangle
    \rmd t
  \right|.
  \label{eq:directional-estimation-error}
\end{align}
For a singleton class, set $Z_{n,m}(P)=0$. A possibly looser bound is
\begin{align}
  Z_{n,m}(P)
  \leq
  \sup_{u\in\calW_{K,m}}
  \norm{\nabla\widehat V_n(u)-\nabla V_P(u)}_{m,*}.
  \label{eq:global-gradient-sufficient-bound}
\end{align}
Both steps require $Z_{n,m}(P)$ to be well defined and to have moments.

\begin{assumption}[Accuracy of the sample-average gradient]
\label{ass:gradient-process}
For every $m\in\mathfrak M_K$, the map $w\mapsto V_P(w)$ is differentiable on a neighborhood of $\calW_{K,m}$, and $Z_{n,m}(P)$ is measurable and has the moments used in each result.
\end{assumption}

The first step also depends on how quickly expected utility falls away from its maximizing set. For a nonempty set $A\subset\calW_{K,m}$, write
\begin{align}
  d_m(w,A)
  =
  \inf_{v\in A}\norm{w-v}_m.
  \label{eq:model-distance}
\end{align}

\begin{assumption}[Utility loss away from an optimal portfolio]
\label{ass:utility-loss}
There are constants $q>1$ and $c_0>0$ such that, for every $P\in\calP_K$, every $m\in\mathfrak M_K$, and every $w\in\calW_{K,m}$,
\begin{align}
  V_{P,m}^\star-V_P(w)
  \geq
  c_0 d_m\left(w,\calA_m(P)\right)^q.
  \label{eq:utility-loss-condition}
\end{align}
\end{assumption}

With $q=2$, the loss grows quadratically in the distance, and a larger $q$ allows expected utility to be flatter near its maximizing set. Consider a maximizer of the sample average utility in class $m$,
\begin{align}
  \widehat w_m
  \in
  \argmax_{w\in\calW_{K,m}}\widehat V_n(w).
  \label{eq:component-erm}
\end{align}

\begin{proposition}[Within-class regret bound]
\label{prop:within-class}
Suppose that Assumptions \ref{ass:gradient-process} and \ref{ass:utility-loss} hold. Then, for every $P\in\calP_K$ and $m\in\mathfrak M_K$, the portfolio $\widehat w_m$ in \eqref{eq:component-erm} satisfies
\begin{align}
  d_m\left(\widehat w_m,\calA_m(P)\right)
  &\leq
  \left(\frac{Z_{n,m}(P)}{c_0}\right)^{1/(q-1)},
  \label{eq:within-class-distance}
  \\
  V_{P,m}^\star-V_P(\widehat w_m)
  &\leq
  c_0^{-1/(q-1)}Z_{n,m}(P)^{q/(q-1)}.
  \label{eq:within-class-bound}
\end{align}
\end{proposition}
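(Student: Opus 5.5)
The argument compares the sample and population objectives along a segment from an oracle portfolio to $\widehat w_m$, and it has two steps, as the text before the statement indicates. Fix $P$ and $m$. If the class is a singleton, then $\widehat w_m$ is the unique element of the class and hence lies in $\calA_m(P)$, so both bounds hold trivially. Otherwise, for any $v\in\calA_m(P)$ with $v\neq\widehat w_m$, set $u=(\widehat w_m-v)/\norm{\widehat w_m-v}_m$. By the fundamental theorem of calculus along the segment, which lies in the convex class and hence in the neighborhood where $V_P$ is differentiable (Assumption~\ref{ass:gradient-process}),
\begin{align*}
 \left[\widehat V_n(\widehat w_m)-\widehat V_n(v)\right]-\left[V_P(\widehat w_m)-V_P(v)\right]
 =\norm{\widehat w_m-v}_m\int_0^1\left\langle\nabla\widehat V_n(v+t(\widehat w_m-v))-\nabla V_P(v+t(\widehat w_m-v)),u\right\rangle\rmd t .
\end{align*}
Here I would note that $\widehat V_n$ is differentiable wherever $U$ is; if only one-sided derivatives are available, the same identity can be read with directional derivatives of the concave function $\widehat V_n$. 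Either way the definition of $Z_{n,m}(P)$ already presumes these gradients exist.

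By the definition of $Z_{n,m}(P)$, the absolute value of the right-hand side is at most $Z_{n,m}(P)\norm{\widehat w_m-v}_m$. Since $\widehat w_m$ maximizes $\widehat V_n$ over the class, $\widehat V_n(\widehat w_m)-\widehat V_n(v)\ge0$. Rearranging the identity then gives the key inequality
\begin{align*}
 V_{P,m}^\star-V_P(\widehat w_m)\le Z_{n,m}(P)\norm{\widehat w_m-v}_m
 \qquad\text{for every }v\in\calA_m(P).
\end{align*}
It also holds trivially when $v=\widehat w_m$. Taking the infimum over $v$ gives $V_{P,m}^\star-V_P(\widehat w_m)\le Z_{n,m}(P)\,d_m(\widehat w_m,\calA_m(P))$.

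The two bounds then follow by combining this with Assumption~\ref{ass:utility-loss}. Write $d=d_m(\widehat w_m,\calA_m(P))$. The two inequalities give $c_0d^q\le Z_{n,m}(P)\,d$. If $d=0$, then \eqref{eq:within-class-distance} is immediate, and the key inequality gives zero regret. If $d>0$, dividing by $d$ gives $d^{q-1}\le Z_{n,m}(P)/c_0$, which is \eqref{eq:within-class-distance} because $q>1$. Substituting this back into the regret bound gives $Z\cdot(Z/c_0)^{1/(q-1)}=c_0^{-1/(q-1)}Z^{q/(q-1)}$, which is \eqref{eq:within-class-bound}.

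I expect the main obstacle to be a technical subtlety rather than the algebra. Because $\calA_m(P)$ need not be closed, the infimum over $v$ might not be attained. The argument sidesteps this by proving the key inequality for every $v$ in $\calA_m(P)$ and only then taking the infimum, so attainment is never needed. The other point to check is differentiability of $\widehat V_n$ along the segment; as noted above, the definition of $Z_{n,m}(P)$ implicitly assumes it.
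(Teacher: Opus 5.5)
Your proposal is correct and is essentially the paper's proof: empirical optimality of $\widehat w_m$ together with the fundamental theorem of calculus along the segment gives $V_{P,m}^\star-V_P(\widehat w_m)\le Z_{n,m}(P)\norm{\widehat w_m-v}_m$, which is then combined with Assumption~\ref{ass:utility-loss} and divided by the distance. The only difference is that the paper fixes a nearest oracle by compactness, whereas you prove the bound for every $v\in\calA_m(P)$ and then take the infimum. Your worry about attainment is unnecessary, because $V_P$ is continuous and the class is a closed subset of the compact $\calW_K$, so $\calA_m(P)$ is compact, but your variant is equally valid.
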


Taking expectations in \eqref{eq:within-class-bound} shows that if $\E_P[Z_{n,m}(P)^{q/(q-1)}]=O(n^{-q/(2(q-1))})$ holds, as for a sample average, then the expected allocation regret is $O(n^{-q/(2(q-1))})$, which is $O(n^{-1})$ when $q=2$.

\paragraph{Comparing fitted portfolios.}
To choose a class, the investor fits a portfolio in every class on one part of the data, $\mathcal D_1$, and compares the fitted portfolios by their average utility on another part, as in \eqref{eq:split-rule} below. Since the selection regret depends on how accurately the fitted portfolios can be compared, we measure complexity through utility differences. Let $(\widetilde w_m)_{m\in\mathfrak M_K}$ be these fitted portfolios, defined in \eqref{eq:training-component-rule}. Conditional on $\mathcal D_1$, define the random semimetric
\begin{align}
  d_{P,\mathcal D_1}(m,m')
  =
  \left\|
    U\left(x_0+\widetilde w_m^\top R\right)
    -
    U\left(x_0+\widetilde w_{m'}^\top R\right)
  \right\|_{L^2(P)},
  \label{eq:model-metric}
\end{align}
where $R\sim P$ is independent of $\mathcal D_1$ and $\norm{f}_{L^2(P)}=(\E_P[f(R)^2])^{1/2}$. Because adding the same quantity to the average utility of every class does not change their ranking, we subtract the utility of one fitted portfolio. Given an independent sample $R_1',\ldots,R_{n_2}'$, define
\begin{align}
  \mathfrak C_{\mathrm{sel},n_2}(P\mid\mathcal D_1)
  =
  \E_{R',\varepsilon}
  \left[
    \sup_{m\in\mathfrak M_K}
    \left|
      \frac1{n_2}\sum_{i=1}^{n_2}\varepsilon_i
      \left(
      \ell_{\widetilde w_m}(R_i')
      -
      \ell_{\widetilde w_{m_0}}(R_i')
      \right)
    \right|
    \mathrel{\Big|}\mathcal D_1
  \right],
  \label{eq:selection-complexity}
\end{align}
where $\ell_w(r)=U(x_0+w^\top r)$, $(\varepsilon_i)$ are independent Rademacher signs, and $m_0$ is a fixed reference class.

Suppose that, conditional on $\mathcal D_1$, the process indexed by $m\in\mathfrak M_K$ with coordinates $n_2^{-1/2}\sum_{i=1}^{n_2}\varepsilon_i(\ell_{\widetilde w_m}(R_i')-\ell_{\widetilde w_{m_0}}(R_i'))$ has sub-Gaussian increments with variance proxy at most $C d_{P,\mathcal D_1}(m,m')^2$, with probability taken over both the evaluation sample and the Rademacher signs. Write $\calN(\varepsilon,\mathfrak M_K,d_{P,\mathcal D_1})$ and $\operatorname{diam}(\mathfrak M_K,d_{P,\mathcal D_1})$ for the $\varepsilon$-covering number and the diameter of the set of class indices under this semimetric. Dudley's inequality then gives
\begin{align}
  \mathfrak C_{\mathrm{sel},n_2}(P\mid\mathcal D_1)
  \leq
  \frac{C}{\sqrt{n_2}}
  \int_0^{\operatorname{diam}(\mathfrak M_K,d_{P,\mathcal D_1})}
  \sqrt{\log\calN(\varepsilon,\mathfrak M_K,d_{P,\mathcal D_1})}
  \rmd\varepsilon.
  \label{eq:dudley}
\end{align}

\paragraph{The split-sample rule and its regret bound.}
Split the observations into independent samples $\mathcal D_1$ and $\mathcal D_2=\{R_1',\ldots,R_{n_2}'\}$ with sizes $n_1$ and $n_2$. On $\mathcal D_1$, compute
\begin{align}
  \widetilde w_m
  \in
  \argmax_{w\in\calW_{K,m}}
  \frac1{n_1}\sum_{R_i\in\mathcal D_1}
  U\left(x_0+w^\top R_i\right)
  \label{eq:training-component-rule}
\end{align}
for every $m\in\mathfrak M_K$. On $\mathcal D_2$, choose
\begin{align}
  \widehat m
  \in
  \argmax_{m\in\mathfrak M_K}
  \frac1{n_2}\sum_{i=1}^{n_2}
  U\left(x_0+\widetilde w_m^\top R_i'\right),
  \qquad
  \widehat w_n=\widetilde w_{\widehat m}.
  \label{eq:split-rule}
\end{align}
Together with deterministic tie-breaking, the measurability conditions in Section~\ref{sec:model} make this rule measurable.

Let
\begin{align}
  \mathfrak M_P^\star
  =
  \argmax_{m\in\mathfrak M_K}V_{P,m}^\star
  \label{eq:optimal-components}
\end{align}
denote the set of population-optimal portfolio classes.

\begin{theorem}[Selection and allocation upper bound]
\label{thm:selection-allocation-upper}
Suppose that Assumptions \ref{ass:utility}, \ref{ass:gradient-process}, and \ref{ass:utility-loss} hold. For the rule in \eqref{eq:split-rule}, we have
\begin{align}
  \E_P\left[r_P(\widehat w_n)\right]
  &\leq
  \inf_{m\in\mathfrak M_P^\star}
  \E_P\left[V_{P,m}^\star-V_P(\widetilde w_m)\right]
  +
  4\E_P\left[
    \mathfrak C_{\mathrm{sel},n_2}(P\mid\mathcal D_1)
  \right]
  \nonumber\\
  &\leq
  c_0^{-1/(q-1)}
  \inf_{m\in\mathfrak M_P^\star}
  \E_P\left[Z_{n_1,m}(P)^{q/(q-1)}\right]
  +
  4\E_P\left[
    \mathfrak C_{\mathrm{sel},n_2}(P\mid\mathcal D_1)
  \right],
  \label{eq:selection-allocation-upper}
\end{align}
where $Z_{n_1,m}(P)$ is computed from $\mathcal D_1$.
\end{theorem}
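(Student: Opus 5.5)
Fix any population-optimal class $m^\star\in\mathfrak M_P^\star$ and condition on $\mathcal D_1$, so that the fitted portfolios $(\widetilde w_m)_{m\in\mathfrak M_K}$ are fixed. We take the infimum over $m^\star$ only at the end. Since $V_P^\star=V_{P,m^\star}^\star$ by \eqref{eq:component-supremum}, we split the regret of $\widehat w_n=\widetilde w_{\widehat m}$ as
\begin{align*}
  r_P(\widehat w_n)
  =
  \left(V_{P,m^\star}^\star-V_P(\widetilde w_{m^\star})\right)
  +
  \left(V_P(\widetilde w_{m^\star})-V_P(\widetilde w_{\widehat m})\right).
\end{align*}
This differs from \eqref{eq:two-layer-decomposition}: the second term compares the \emph{fitted} portfolios rather than the class optima. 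That is exactly what the selection step on $\mathcal D_2$ controls. The first term depends only on $\mathcal D_1$, and its expectation is the first term of \eqref{eq:selection-allocation-upper}.

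For the second term, write $\widehat G(m)=n_2^{-1}\sum_{i=1}^{n_2}(\ell_{\widetilde w_m}(R_i')-\ell_{\widetilde w_{m_0}}(R_i'))$ and $G(m)=V_P(\widetilde w_m)-V_P(\widetilde w_{m_0})$, where $m_0$ is the reference class. Subtracting the reference class does not change the argmax in \eqref{eq:split-rule}. Since $\widehat m$ maximizes $\widehat G$, we have $\widehat G(\widehat m)\geq\widehat G(m^\star)$, and therefore
\begin{align*}
  G(m^\star)-G(\widehat m)
  \leq
  \left(G(m^\star)-\widehat G(m^\star)\right)+\left(\widehat G(\widehat m)-G(\widehat m)\right)
  \leq
  2\sup_{m\in\mathfrak M_K}\abs{\widehat G(m)-G(m)}.
\end{align*}
Conditional on $\mathcal D_1$, the $R_i'$ are i.i.d.\ from $P$ and independent of $\widetilde w_m$, so $\E[\widehat G(m)\mid\mathcal D_1]=G(m)$. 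Assumption~\ref{ass:utility} makes the function class $\{\ell_{\widetilde w_m}-\ell_{\widetilde w_{m_0}}\}$ integrable with an integrable envelope, so expectations of the supremum are finite. The standard symmetrization inequality, applied via a ghost sample and Rademacher signs, then gives
\begin{align*}
  \E\left[\sup_m\abs{\widehat G(m)-G(m)}\mid\mathcal D_1\right]\leq 2\,\mathfrak C_{\mathrm{sel},n_2}(P\mid\mathcal D_1).
\end{align*}
Combined with the factor $2$ above, this yields the constant $4$. Taking the expectation over $\mathcal D_1$ and then the infimum over $m^\star\in\mathfrak M_P^\star$ gives the first inequality in \eqref{eq:selection-allocation-upper}.

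The second inequality follows from Proposition~\ref{prop:within-class} applied to the sample $\mathcal D_1$ of size $n_1$. Since $\widetilde w_m$ is exactly the sample-average maximizer \eqref{eq:component-erm} on $\mathcal D_1$, the proposition gives, pointwise, $V_{P,m}^\star-V_P(\widetilde w_m)\leq c_0^{-1/(q-1)}Z_{n_1,m}(P)^{q/(q-1)}$. Taking expectations and then the infimum over $m$ completes the bound.

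The main obstacle is technical rather than conceptual. When $\mathfrak M_K$ is infinite, the supremum over classes must be measurable, and symmetrization needs some care. We will handle this using the separability of $\mathfrak M_K$ and the measurable-selection assumptions of Section~\ref{sec:model}: reduce the supremum to a countable dense subset, or interpret it as an outer expectation. We also need to check that $\widehat m$ is measurable so that the expectations are defined. If $\mathfrak M_P^\star$ is empty (for an infinite index set), the infimum is $+\infty$ and the bound holds trivially.
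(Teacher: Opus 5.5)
Your proposal is correct and follows the paper's proof. Both arguments use the basic inequality from the definition of $\widehat m$, centering at the reference class $m_0$, conditional symmetrization to get the factor $4$, and Proposition~\ref{prop:within-class} on $\mathcal D_1$ for the second inequality. The only difference is that you compare $\widehat m$ directly with $m^\star$, whereas the paper passes through a maximizer $\widetilde m$ of the conditional values $Q_m$; your route is slightly cleaner because that maximizer need not exist when $\mathfrak M_K$ is infinite.
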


The proof is in Appendix~\ref{app:selection-upper}. Because the second term in \eqref{eq:selection-allocation-upper} uses the process of utility differences, the bound also applies to infinitely many classes.

\paragraph{Utility gaps that determine the largest regret.}
Which gaps matter can be seen from two classes whose optimized expected utilities differ by $g\geq0$, where $g$ is estimated by $\widehat g_n$. Suppose that, for constants $c>0$ and $\sigma>0$, where $\sigma$ measures the scale of the deviations of $\widehat g_n$, the probability of choosing the inferior portfolio satisfies
\begin{align}
  \Pr_P\left(\widehat g_n-g\leq-g\right)
  \leq
  \exp\left(-cng^2/\sigma^2\right).
  \label{eq:scalar-tail}
\end{align}
The selection loss is $g$ when the inferior class is chosen and zero otherwise, so its expected value is bounded by
\begin{align}
  g\exp\left(-cng^2/\sigma^2\right).
  \label{eq:gap-tail-contribution}
\end{align}
Now let the gap $g_n$ depend on $n$. If $\sqrt n g_n\to0$, the selection regret multiplied by $\sqrt n$ is at most $\sqrt n g_n$ and tends to zero. If $\sqrt n g_n\to\infty$, it also tends to zero by \eqref{eq:gap-tail-contribution}, provided that the deviation bound holds uniformly with $c$ bounded away from zero and $\sigma$ bounded above. Therefore, only gaps comparable to $n^{-1/2}$ can contribute a positive leading constant, and because the bounds are uniform, the same conclusion holds for the supremum over the original model.

For these gaps, by a uniform central limit theorem, if $\sigma>0$ is the limiting standard deviation of $\sqrt n(\widehat g_n-g_n)$ and $g_n=\sigma t/\sqrt n$, then $\sqrt n$ times the expected selection regret of the sign rule, which selects the class with the larger estimate, converges to $\sigma t\Phi(-t)$ for bounded $t\geq0$, where $\Phi$ is the standard normal distribution function. Taking the largest value over $t$, the comparison of the two classes contributes $\sigma\sup_{t\in[0,\infty)}t\Phi(-t)$ when the model contains the corresponding alternatives.

\paragraph{Complexity of the feasible set.}
The directions along which the weights can vary determine the relevant dimension. Suppose first that $\mathfrak M_K$ contains one class and that the oracle portfolio $w_P^\star$ is unique, and define its tangent cone by
\begin{align}
  T_P
  =
  \overline{\left\{t(w-w_P^\star):t\geq0,\ w\in\calW_K\right\}},
  \label{eq:tangent-cone}
\end{align}
and let
\begin{align}
  \omega(T_P)
  =
  \E\left[
    \sup_{u\in T_P\cap\mathbb S^{K-1}}\abs{G^\top u}
  \right],
  \qquad
  G\sim N(0,I_K),
  \label{eq:tangent-width}
\end{align}
be the Gaussian width of its symmetrized unit directions, where $\mathbb S^{K-1}=\{u\in\bbR^K:\norm{u}_2=1\}$ and $\omega(\{0\})=0$. Put $p_q=q/(q-1)$, and suppose that the directional error in Assumption~\ref{ass:gradient-process} satisfies the moment bound
\begin{align}
  \sup_{P\in\calP_K}
  \E_P\left[Z_{n,1}(P)^{p_q}\right]^{1/p_q}
  \leq
  C_q\sup_{P\in\calP_K}\frac{\omega(T_P)}{\sqrt n}.
  \label{eq:width-gradient}
\end{align}
Proposition~\ref{prop:within-class} then gives
\begin{align}
  \mathcal V_{n,K}^{\mathrm{mm}}
  \lesssim
  \sup_{P\in\calP_K}
  \left(
    \frac{\omega(T_P)^2}{n}
  \right)^{q/(2(q-1))}.
  \label{eq:convex-rate}
\end{align}
For a dense interior optimizer, $\omega(T_P)^2$ can be of order $K$. If optimization is instead restricted to a support $S$ with $\abs{S}=s$, the tangent cone lies in an $(s-1)$-dimensional linear space and its squared Gaussian width is at most of order $s$.

For $\calW_{K,s}^{\ell_0}$, whose portfolio classes are support simplices, Theorem~\ref{thm:selection-allocation-upper} gives
\begin{align}
  \E_P\left[r_P(\widehat w_n)\right]
  \lesssim
  \E_P\left[\mathfrak C_{\mathrm{sel},n_2}(P\mid\mathcal D_1)\right]
  +
  \inf_{S\in\mathfrak M_P^\star}
  \E_P\left[Z_{n_1,S}(P)^{q/(q-1)}\right].
  \label{eq:sparse-upper}
\end{align}
These bounds hold at each sample size, so they also apply when the number of assets or classes changes with $n$, with $q$ and $c_0$ bounded as stated. Appendix~\ref{app:minimax-lower} gives finite-support lower bounds for the allocation and class-selection terms.

\paragraph{The cost of simplifying a portfolio.}
A simpler allocation is useful when the utility lost by imposing its form is small relative to the loss from estimation. Consider one convex class $\mathcal W$ and a matrix $N$ whose columns form an orthonormal basis of its affine tangent space, of dimension $r\geq1$. For a feasible $w$, define
\begin{align}
 \zeta_P(w)&=N^\top\nabla V_P(w),
 &J_P(w)&=-N^\top\nabla^2V_P(w)N.
 \label{eq:marginal-utility-diagnostic}
\end{align}
Suppose that $V_P$ is three times continuously differentiable with bounded third derivatives, and $\lambda I_r\preceq J_P(w)\preceq\Lambda I_r$ throughout the class, for constants $0<\lambda\leq\Lambda<\infty$. Let the maximizer $w_P^\star$ of $V_P$ over $\mathcal W$ lie in the relative interior of $\mathcal W$, and write $\ell_P(w)=V_P(w_P^\star)-V_P(w)$ for the within-class loss of a feasible $w$, which the next proposition relates to $\zeta_P(w)$.

\begin{proposition}[Utility loss and portfolio approximation]
\label{prop:portfolio-approximation}
Under the conditions just stated, the within-class loss satisfies
\begin{align}
 0\leq\ell_P(w)&\leq\frac{\norm{\zeta_P(w)}_2^2}{2\lambda},
 &\ell_P(w)&=\frac12\zeta_P(w)^\top J_P(w)^{-1}\zeta_P(w)
       +O(\norm{\zeta_P(w)}_2^3)
 \label{eq:utility-loss-gradient}
\end{align}
where the inequalities hold for every feasible $w$ and the expansion holds as $w\to w_P^\star$. For any two feasible random portfolios $\widehat w_n$ and $\widetilde w_n$, put $e_n=\widetilde w_n-\widehat w_n$. Then, we have
\begin{align}
 \left|\E_P[\ell_P(\widetilde w_n)]-\E_P[\ell_P(\widehat w_n)]\right|
 &\leq \Lambda\left(\E_P[\norm{\widehat w_n-w_P^\star}_2^2]
                         \E_P[\norm{e_n}_2^2]\right)^{1/2}
       +\frac\Lambda2\E_P[\norm{e_n}_2^2].
 \label{eq:portfolio-approximation-transfer}
\end{align}
Consequently, if $\widehat w_n$ has mean squared error $O(n^{-1})$ and $\E_P[\norm{e_n}_2^2]=o(n^{-1})$, then the two expected regrets differ by $o(n^{-1})$. The same conclusion holds after a supremum or integration over distributions when the stated bounds and remainders hold over that entire set.
\end{proposition}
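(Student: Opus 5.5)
We work in the coordinates of the affine tangent space of the class. Write every feasible $w$ as $w=w_P^\star+Nx$ with $x\in\bbR^r$, and set $f(x)=V_P(w_P^\star+Nx)$. Then $\nabla f(x)=\zeta_P(w)$, $-\nabla^2 f(x)=J_P(w)$, and $x=0$ is the maximizer of $f$ over a convex set. Because $w_P^\star$ lies in the relative interior, $\nabla f(0)=0$. The function $f$ is $\lambda$-strongly concave and $\Lambda$-smooth on the convex feasible region.

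\textbf{The inequalities in \eqref{eq:utility-loss-gradient}.} Nonnegativity holds because $w_P^\star$ is the maximizer. For the upper bound, strong concavity gives, for any feasible $y$,
\[
f(y)\le f(x)+\zeta^\top(y-x)-\tfrac{\lambda}{2}\norm{y-x}_2^2 .
\]
Since $f(0)=V_P(w_P^\star)$, taking $y=0$ and maximizing the right side over $y-x\in\bbR^r$ gives $\ell_P(w)\le\norm{\zeta}_2^2/(2\lambda)$. Maximizing over all of $\bbR^r$ is legitimate: it only enlarges the bound, and the inequality itself needs only the segment from $x$ to $0$, which is feasible by convexity.

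\textbf{The expansion in \eqref{eq:utility-loss-gradient}.} Taylor-expand $f(0)$ around $x$ with a third-order remainder:
\[
\ell_P(w)=f(0)-f(x)=-\zeta^\top x-\tfrac12x^\top J x+O(\norm{x}_2^3),\qquad J=J_P(w).
\]
Expanding $\nabla f$ around $x$ gives $0=\nabla f(0)=\zeta-Jx+O(\norm{x}_2^2)$, so $x=J^{-1}\zeta+O(\norm{x}_2^2)$. Strong concavity also gives $\norm{x}_2\le\norm{\zeta}_2/\lambda$, since $\zeta^\top x\le-\lambda\norm{x}_2^2$ by monotonicity of the gradient. Hence every $O(\norm{x}_2^k)$ term is $O(\norm{\zeta}_2^k)$. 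Substituting $x=J^{-1}\zeta+O(\norm{\zeta}_2^2)$ yields
\[
-\zeta^\top J^{-1}\zeta+\tfrac12\zeta^\top J^{-1}\zeta+O(\norm{\zeta}_2^3)=-\tfrac12\zeta^\top J^{-1}\zeta+O(\norm{\zeta}_2^3).
\]
I seem to have a sign issue here, and I must fix the orientation. Because $x$ is the displacement from the optimum to $w$, we have $\zeta=\nabla f(x)\approx -Jx$, so $x\approx-J^{-1}\zeta$. With that correction the leading term becomes $+\tfrac12\zeta^\top J^{-1}\zeta$, as claimed. Getting these signs right, and keeping the remainder uniform in terms of bounded third derivatives, is the main bookkeeping step.

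\textbf{The transfer bound \eqref{eq:portfolio-approximation-transfer}.} This bound does not need the expansion. Put $\widehat x$ and $\widetilde x=\widehat x+Ne_n$ in tangent coordinates; note that $e_n$ lies in the tangent space because both portfolios are in the class. Taylor's theorem with integral remainder, together with the bound $-\nabla^2 f\preceq\Lambda I_r$ along the segment (which is feasible by convexity), gives
\[
\abs{\ell_P(\widetilde w_n)-\ell_P(\widehat w_n)-\nabla\ell(\widehat x)^\top Ne_n}\le\tfrac{\Lambda}{2}\norm{e_n}_2^2 .
\]
Since $\nabla f(0)=0$ and the Hessian is bounded by $\Lambda$, we have $\norm{\nabla\ell(\widehat x)}_2=\norm{\nabla f(\widehat x)-\nabla f(0)}_2\le\Lambda\norm{\widehat w_n-w_P^\star}_2$. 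Take expectations and apply Cauchy--Schwarz to the cross term; this gives \eqref{eq:portfolio-approximation-transfer}.

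\textbf{The consequences.} If $\widehat w_n$ has mean squared error $O(n^{-1})$ and $\E_P[\norm{e_n}_2^2]=o(n^{-1})$, the cross term is $O(n^{-1/2})\cdot o(n^{-1/2})=o(n^{-1})$ and the quadratic term is $o(n^{-1})$, so the two expected regrets differ by $o(n^{-1})$. The uniform version follows because $\lambda$ and $\Lambda$ are uniform constants and the bounds are pointwise in $P$: a supremum or an integral over $P$ preserves them whenever the rates hold over that entire set.
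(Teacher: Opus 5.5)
Your proposal is correct and is essentially the paper's proof: strong concavity along the segment to the optimum gives $\ell_P(w)\le\norm{\zeta_P(w)}_2^2/(2\lambda)$, the Taylor expansions together with $\norm{\zeta_P(w)}_2\ge\lambda\norm{x}_2$ justify the substitution $x=-J_P(w)^{-1}\zeta_P(w)+O(\norm{\zeta_P(w)}_2^2)$, and the segment between the two portfolios, the upper Hessian bound, and Cauchy--Schwarz give the transfer inequality. In a write-up, state the correct relation $0=\zeta+Jx+O(\norm{x}_2^2)$ directly instead of first writing $0=\zeta-Jx+O(\norm{x}_2^2)$ and correcting it afterwards, and write the tangent-coordinate displacement as $N^\top e_n$ rather than $Ne_n$, which does not type-check; $\norm{N^\top e_n}_2=\norm{e_n}_2$ holds because $e_n$ lies in the tangent space.
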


The proof is in Appendix~\ref{app:portfolio-approximation}. By \eqref{eq:utility-loss-gradient}, the relevant test for a mean--variance approximation is the utility lost at the allocation it produces, not normality of returns. Equation~\eqref{eq:portfolio-approximation-transfer} also gives a sufficient accuracy requirement for a numerical or analytic approximation to the EUR rule within a class.

\section{The Expected Utility Regret Rule}
\label{sec:eurr}
\label{subsec:likelihood-implementation}
This section states our proposed rule, the EUR rule, which is based on maximum likelihood estimation of the return parameter. When only one portfolio class is available, the rule is a single-stage procedure, while with multiple classes it has two stages: a preliminary sample removes clearly inferior classes, and an independent second sample then estimates the portfolios of the remaining classes and determines which one to hold.

\paragraph{Maximum likelihood estimates.}
Both procedures use the maximum likelihood estimator $\widehat\theta(D)$ over $\Theta^+$ in \eqref{eq:mle}, computed from a sample $D$ of return vectors, and take the lexicographically smallest maximizer when there are multiple maximizers.

\paragraph{Estimated optimized expected utilities and weights.}
For a sample $D$ and a class $m\in\mathfrak M_K$, the rule computes two quantities from $\widehat\theta(D)$. First, it evaluates the expected utility of a portfolio $w$ under the fitted distribution, $V_{P_{\widehat\theta(D)}}(w)=\int U(x_0+w^\top r)\,P_{\widehat\theta(D)}(\rmd r)$, using a finite sum over the support points when $\nu$ is counting measure and an integral of the fitted density otherwise. Second, it maximizes this function over the class:
\begin{align}
  \widehat w_m(D)
  \in
  \argmax_{w\in\calW_{K,m}}V_{P_{\widehat\theta(D)}}(w),
  \qquad
  \widehat B_m(D)
  =
  V_{P_{\widehat\theta(D)}}(\widehat w_m(D)).
  \label{eq:plugin-class-optimization}
\end{align}
In the notation of Section~\ref{sec:complexity}, these are the estimated optimal weights $\widehat w_m(D)=w_m(\widehat\theta(D))$ and the estimated optimized expected utility $\widehat B_m(D)=B_m(\widehat\theta(D))$ of the class. Because $U$ is concave, \eqref{eq:plugin-class-optimization} is the maximization of a concave function over a compact convex set, so standard convex optimization methods solve it. Since $\widehat\theta(D)\in\Theta^+$, Assumption~\ref{ass:regular-return-model}(ii), or Assumption~\ref{ass:regular-one-class}(ii) with one class, makes the maximizer unique and places it in the relative interior of the class. The maximizer therefore solves the first-order condition that the directional derivative of $V_{P_{\widehat\theta(D)}}$ at the maximizer is zero in every direction in which one can move within the class. For a singleton class, $\widehat w_m(D)$ is its only portfolio. When only one class is available, the rule computes $\widehat w_1(D)$ from all $n$ observations (Section~\ref{subsec:eurr-one}). With multiple classes, it computes $\widehat B_m$ on the preliminary sample for every $m\in\mathfrak M_K$, but computes $\widehat B_m$ and $\widehat w_m$ on the second sample only for the candidate classes (Section~\ref{subsec:eurr-multiple}).

\subsection{EUR Rule with One Portfolio Class}
\label{subsec:eurr-one}
When only one class is available, there is no comparison between classes. For $n\geq2$, the rule therefore computes $\widehat\theta(D)$ from all $n$ observations $D=\{R_1,\ldots,R_n\}$ and holds $\widehat w_1(D)=w_1(\widehat\theta(D))$, and for $n=1$ it holds a fixed portfolio in $\calW_K$.

\subsection{EUR Rule with Multiple Portfolio Classes}
\label{subsec:eurr-multiple}
With multiple classes, the rule proceeds in two stages: the first determines which classes are worth comparing, and the second chooses among them and estimates the weights to hold.

\paragraph{Sample split and screening.}
For $n\geq2$, the first $k_n$ observations form the preliminary sample $D_0$, and the remaining $N_n=n-k_n$ form the second sample $D_1$. Let $\sigma_U>0$ be a utility scale supplied with the model and homogeneous of degree one in $U$: replacing $U$ by $\lambda U+c$ with $\lambda>0$ replaces $\sigma_U$ by $\lambda\sigma_U$. For a screening threshold $b_n>0$, the rule forms the candidate set
\begin{align}
  \widehat{\mathfrak M}_n
  =\left\{m\in\mathfrak M_K:\widehat B_m(D_0)\geq\max_{\ell\in\mathfrak M_K}\widehat B_\ell(D_0)-\sigma_Ub_n\right\},
  \label{eq:likelihood-retained-classes}
\end{align}
which contains the classes for which the optimized expected utility estimated from $D_0$ lies within $\sigma_Ub_n$ of the largest. The experiments in Section~\ref{sec:experiments} use $\sigma_U=1$.

\paragraph{Choice among the candidate classes.}
All later estimates use $D_1$. If the candidate set contains only one class, the rule holds the estimated optimizer of that class, and if it contains two, it holds the estimated optimizer of the class with the larger estimated optimized expected utility, choosing the smaller index at equality. If it contains three or more classes, the rule holds the estimated optimizer of the class selected by the finite Gaussian comparison of Section~\ref{subsec:gaussian-comparison}, which also uses a covariance estimated from $D_0$.

\paragraph{Tuning sequences.}
The sequences $k_n$ and $b_n$ must satisfy
\begin{align}
  k_n\to\infty,
  \qquad
  \frac{k_n}n\to0,
  \qquad
  k_nb_n^2\to\infty,
  \qquad
  \sqrt n\,b_n^2\to0,
  \label{eq:tuning-conditions}
\end{align}
together with $k_nb_n^2/\log n\to\infty$ and $N_nb_n^2/\log n\to\infty$. Every pair $k_n=\lfloor n^\alpha\rfloor$ and $b_n=k_n^{-\beta}$ with $\alpha<1$, $0<\beta<1/2$, and $\alpha\beta>1/4$, which force $\alpha>1/2$ and $\alpha\beta<1/2$, satisfies these conditions and is compatible with $\tau_n$ and $\epsilon_n$ in Section~\ref{subsec:gaussian-comparison}, and we use $\alpha=4/5$ and $\beta=3/8$. Section~\ref{subsec:eurr-tuning} explains what each condition controls in the proofs.

\subsection{Statement of the Rule}
\label{subsec:gaussian-comparison}
\label{subsec:eurr-statement}
This subsection completes the rule for three or more candidate classes and collects all steps into Algorithm~\ref{rule:eurr}.

\paragraph{Observed utility differences.}
When the candidate set has three or more classes, the rule selects a class at random, with selection probabilities computed from a finite linear program. Let $C=\widehat{\mathfrak M}_n$ with $\abs{C}\geq3$, let $m_0=\min C$ be the reference class, and put $q=\abs{C}-1$. To first order, the covariance of $\sqrt{N_n}$ times the error in the estimate of the vector of utility differences $d_C(\theta)=(B_m(\theta)-B_{m_0}(\theta))_{m\in C\setminus\{m_0\}}$ is
\begin{align}
  \Omega_C(\theta)
  =\nabla d_C(\theta)I_\theta^{-1}\nabla d_C(\theta)^\top.
  \label{eq:likelihood-contrast-covariance}
\end{align}
With $\tau_n=n^{-1/32}$, the rule computes the covariance $\widehat\Omega_C=\Omega_C(\widehat\theta(D_0))+\tau_n^2I_q$ from $D_0$ and the observation $Y_C=\sqrt{N_n}\,d_C(\widehat\theta(D_1))+\tau_nZ_C$ from $D_1$, where $Z_C$ is a standard normal vector independent of the data.

\paragraph{The linear program.}
Two bounds are supplied with the model: $H_U\geq1$ bounds $\abs{U(x_0+w^\top R)}$ on the common return support and feasible set, and $\Lambda\geq1$ bounds the largest eigenvalue of $\Omega_C(\vartheta)+I_q$ for every $C\subset\mathfrak M_K$ with $\abs{C}\geq3$ and every $\vartheta\in\Theta^+$. Put $\rho_n=2H_U\sqrt{N_n}$, $t_n=(1+\sqrt\Lambda)\log(n+1)$, and $\epsilon_n=n^{-4}$. The means $\mu$ considered by the program lie on an equally spaced Cartesian grid $\mathcal G_n$ on $[-\rho_n,\rho_n]^q$ that has mesh at most $\epsilon_n$ and includes the endpoints. The finite partition $\mathcal Q_n$ divides $[-\rho_n-t_n,\rho_n+t_n]^q$ into half-open Cartesian cells of side at most $\epsilon_n$, with the outer boundary assigned to the adjacent cells, and $O_n=\bbR^q\setminus\bigcup_{Q\in\mathcal Q_n}Q$ is its exterior. For $\mu\in\bbR^q$, append $\mu_{m_0}=0$ and let $L_C(\mu,j)=\max_{i\in C}\mu_i-\mu_j$ be the utility-gap loss of selecting $j\in C$. With $\Phi_{\widehat\Omega_C}$ the centered Gaussian probability measure with covariance $\widehat\Omega_C$, the rule takes the lexicographically selected solution of the finite linear program
\begin{align}
  \min_{\substack{v\in\bbR,\ \pi\in\bbR^{\mathcal Q_n\times C}}}\quad &v
  \nonumber\\
  \text{subject to}\quad
  &\pi_{Qj}\geq0,\qquad \sum_{j\in C}\pi_{Qj}=1\quad(Q\in\mathcal Q_n),
  \nonumber\\
  &\sum_{Q\in\mathcal Q_n}\Phi_{\widehat\Omega_C}(Q-\mu)
       \sum_{j\in C}\pi_{Qj}L_C(\mu,j)
    \nonumber\\
  &\qquad+\Phi_{\widehat\Omega_C}(O_n-\mu)L_C(\mu,m_0)
    \leq v\quad(\mu\in\mathcal G_n).
  \label{eq:likelihood-gaussian-program}
\end{align}
For each mean $\mu$ on the grid, the last constraint requires that the expected utility gap be at most $v$ for the rule that draws $j$ with probabilities $(\pi_{Qj})_{j\in C}$ when $Y\sim N(\mu,\widehat\Omega_C)$ falls in $Q$ and selects $m_0$ otherwise.

\paragraph{Selection.}
If $Y_C$ lies in a cell $Q\in\mathcal Q_n$, the rule draws $j\in C$ with probabilities $(\pi_{Qj})_{j\in C}$; otherwise it sets $j=m_0$. After appending $Y_{m_0}=0$, it replaces $j$ by an index attaining $\max_{i\in C}Y_i$ whenever $\max_{i\in C}Y_i-Y_j>\log(n+1)$, and it then holds $\widehat w_j(D_1)=w_j(\widehat\theta(D_1))$.

\paragraph{The complete rule.}
Algorithm~\ref{rule:eurr} collects these steps into the EUR rule $\delta_n^\star$, which uses neither the true parameter nor an evaluation prior. Figure~\ref{fig:eurr-diagram} shows which subsample each step uses when at least two classes are available.

\begin{algorithm}[t]
\caption{EUR rule}
\label{rule:eurr}
\begin{algorithmic}[1]
\REQUIRE Returns $R_1,\ldots,R_n$; likelihood family on $\Theta^+$; utility $U$; classes $(\calW_{K,m})_{m\in\mathfrak M_K}$; utility scale $\sigma_U$ when at least two classes are available; bounds $H_U$ and $\Lambda$ when the candidate set can have three or more classes.
\IF{$n=1$}
  \STATE Return a fixed portfolio in $\calW_K$.
\ENDIF
\IF{$|\mathfrak M_K|=1$}
  \STATE Let $j$ be its only index, compute the maximum likelihood estimate $\widehat\theta$ of $\theta$ from all $n$ observations, and return $w_j(\widehat\theta)$.
\ENDIF
\STATE Set $k_n=\lfloor n^{4/5}\rfloor$, $N_n=n-k_n$, and $b_n=k_n^{-3/8}$.
\STATE Compute the maximum likelihood estimates $\widehat\theta(D_0)$ of $\theta$ from the first $k_n$ observations and $\widehat\theta(D_1)$ from the remaining observations.
\STATE Form $C=\widehat{\mathfrak M}_n$ using \eqref{eq:likelihood-retained-classes}.
\IF{$|C|=1$}
  \STATE Let $j$ be its only index.
\ELSIF{$|C|=2$}
  \STATE Let $j$ maximize $B_j(\widehat\theta(D_1))$ over $j\in C$, using the smaller index at equality.
\ELSE
  \STATE Set $m_0=\min C$, $q=|C|-1$, and $\tau_n=n^{-1/32}$.
  \STATE Calculate $\widehat\Omega_C=\Omega_C(\widehat\theta(D_0))+\tau_n^2I_q$.
  \STATE Draw $Z_C\sim N(0,I_q)$ independently and calculate $Y_C=\sqrt{N_n}\,d_C(\widehat\theta(D_1))+\tau_nZ_C$.
  \STATE Construct $\mathcal G_n$ and $\mathcal Q_n$ and solve \eqref{eq:likelihood-gaussian-program} for $\pi$.
  \STATE Draw $j$ using $\pi_{Qj}$ if $Y_C\in Q$ holds for some $Q\in\mathcal Q_n$; otherwise let $j=m_0$.
  \STATE Append $Y_{m_0}=0$. If $\max_{i\in C}Y_i-Y_j>\log(n+1)$ holds, replace $j$ by an index attaining $\max_{i\in C}Y_i$.
\ENDIF
\STATE Return $\widehat w_j(D_1)=w_j(\widehat\theta(D_1))$.
\end{algorithmic}
\end{algorithm}

\begin{figure}[t]
\centering
\includegraphics[width=0.78\textwidth]{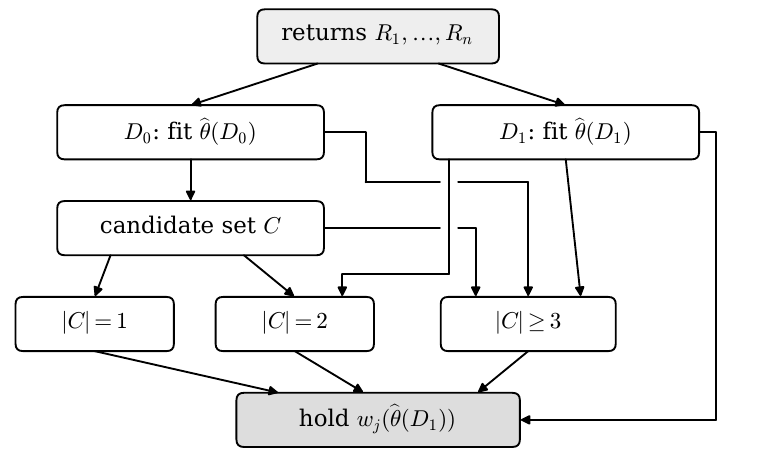}
\caption{The EUR rule with at least two portfolio classes. Each arrow points from a sample or quantity to a step that uses it. The preliminary sample $D_0$ determines the candidate set $C$ and, when $\abs{C}\geq3$, the covariance $\widehat\Omega_C$ of the finite Gaussian comparison, whereas every other estimate is computed from the second sample $D_1$ alone. The index $j$ is the class selected in the branch that applies.}
\label{fig:eurr-diagram}
\end{figure}

\section{Minimax Optimality}
\label{sec:minimax}
We compare the EUR rule with every measurable randomized portfolio choice rule over the fixed evaluation set $\Theta$ under Assumption~\ref{ass:regular-return-model}, or under Assumption~\ref{ass:regular-one-class} when only one class is available. Throughout, $K$, $p$, the portfolio classes, the utility, $\Theta$, and $\Theta^+$ stay fixed as $n\to\infty$.

\subsection{Minimax Constant}
The leading constant depends on whether different classes can attain the same largest expected utility. Near a tie, a wrong choice between distinct class-specific portfolios loses utility of the same order as the estimation error. Away from ties, by contrast, the leading loss is quadratic in the error of the estimated weights.

For a finite set $C$ of classes and a positive definite $(\abs{C}-1)\times(\abs{C}-1)$ matrix $\Omega$, which plays the role of $\Omega_C(\theta)$ in \eqref{eq:likelihood-contrast-covariance}, let $\mathfrak D_C$ be the set of measurable probability kernels from $\bbR^{|C|-1}$ to $C$. With the utility-gap loss $L_C$ of Section~\ref{subsec:gaussian-comparison}, put
\begin{align}
  \gamma_C(\Omega)
  =\inf_{\delta\in\mathfrak D_C}\sup_{\mu\in\bbR^{|C|-1}}
      \E[L_C(\mu,\delta(\mu+G))],\qquad G\sim N(0,\Omega).
  \label{eq:likelihood-gaussian-value}
\end{align}
For two classes the value has a closed form.

\begin{lemma}[Value of the two-class comparison]
\label{lem:two-class-gaussian-value}
Let $\abs{C}=2$, so that $q=1$ and $\Omega>0$ is a scalar. Then, we have $\gamma_C(\Omega)=\kappa\sqrt\Omega$, where $\kappa=\sup_{u\geq0}u\Phi(-u)$ and $\Phi$ is the standard normal distribution function.
\end{lemma}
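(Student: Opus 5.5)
Write $C=\{m_0,m_1\}$ with $m_0=\min C$. The observation is $X=\mu+G$ with $G\sim N(0,\Omega)$ scalar, and the loss is $L_C(\mu,m_1)=(-\mu)_+$ and $L_C(\mu,m_0)=\mu_+$. Rescaling $X$, $\mu$ by $\sqrt\Omega$ multiplies every risk by $\sqrt\Omega$, so it suffices to treat $\Omega=1$ and show $\gamma_C(1)=\kappa$.

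\emph{Upper bound.} I would take the sign rule $\delta(x)=m_1$ if $x>0$ and $m_0$ otherwise. For $\mu\geq0$ its risk is $\mu\Pr(\mu+G\leq0)=\mu\Phi(-\mu)$, and for $\mu<0$ it is $\abs{\mu}\Pr(\mu+G>0)=\abs{\mu}\Phi(-\abs{\mu})$. Hence the supremum over $\mu$ is $\sup_{u\geq0}u\Phi(-u)=\kappa$, which is finite and attained because $u\Phi(-u)\to0$ as $u\to\infty$.

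\emph{Lower bound.} Let $\phi(x)=\Pr(\delta(x)=m_1)\in[0,1]$ for an arbitrary kernel $\delta$. Its risk is $r(\mu)=\mu\,\E[1-\phi(\mu+G)]$ for $\mu\geq0$ and $\abs{\mu}\E[\phi(\mu+G)]$ for $\mu<0$. Let $u^\star>0$ attain $\kappa$ and use the two-point prior with mass $1/2$ on $\pm u^\star$. Because the losses at $\pm u^\star$ have the same magnitude $u^\star$, the Bayes risk is
\[
\frac{u^\star}{2}\int\left[(1-\phi(x))\varphi(x-u^\star)+\phi(x)\varphi(x+u^\star)\right]\rmd x ,
\]
where $\varphi$ is the standard normal density. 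The integrand is minimized pointwise by $\phi=\1\{x>0\}$, since $\varphi(x-u^\star)>\varphi(x+u^\star)$ exactly when $x>0$. The minimal Bayes risk is therefore $u^\star\Phi(-u^\star)=\kappa$. Since the maximum risk is at least the Bayes risk, $\sup_\mu r(\mu)\geq\kappa$ for every $\delta$, so $\gamma_C(1)\geq\kappa$. Combined with the upper bound, $\gamma_C(1)=\kappa$, and scaling gives $\gamma_C(\Omega)=\kappa\sqrt\Omega$.

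The only delicate points are routine. The supremum defining $\kappa$ is attained at some $u^\star>0$, by continuity, positivity on $(0,\infty)$, and vanishing at $0$ and at $\infty$. The pointwise Bayes minimization is legitimate for measurable $\phi$.
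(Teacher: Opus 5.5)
Your proof is correct and follows the paper's argument. The sign rule gives the upper bound $\kappa\sqrt\Omega$, and the symmetric two-point prior at $\pm u^\star$, whose Bayes rule is again the sign rule, gives the matching lower bound. The only cosmetic difference is that you first reduce to $\Omega=1$ by scaling, whereas the paper carries $\sigma=\sqrt\Omega$ throughout.
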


The proof is in Appendix~\ref{app:likelihood-global}. Because the sign rule, which holds the class with the larger estimated optimized expected utility (Section~\ref{subsec:eurr-multiple}), attains this value, a two-class candidate set needs neither a grid nor randomization. Let $\mathcal T=\{\theta\in\Theta:|A(\theta)|\geq2\}$ be the set of parameters at which at least two classes are optimal, and when $\mathcal T$ is nonempty, define $\Gamma_{\mathrm{lik}}=\sup_{\theta\in\mathcal T}\gamma_{A(\theta)}(\Omega_{A(\theta)}(\theta))$.

For the allocation term, let the columns of $N_m$ be an orthonormal basis for the affine tangent space of class $m$, write $\dot w_m(\theta)\in\bbR^{K\times p}$ for the Jacobian of its optimal weights, and set $D_m(\theta)=N_m^\top\dot w_m(\theta)$ and $J_m(\theta)=-N_m^\top\nabla_w^2V_{P_\theta}(w_m(\theta))N_m$. We use the constants
\begin{align}
  c_m(\theta)&=\frac12\operatorname{tr}\left(J_m(\theta)D_m(\theta)I_\theta^{-1}D_m(\theta)^\top\right),
  \nonumber\\
  s_{m\ell}^2(\theta)&=\nabla(B_m-B_\ell)(\theta)^\top I_\theta^{-1}\nabla(B_m-B_\ell)(\theta).
  \label{eq:likelihood-constants}
\end{align}
Here, $c_m(\theta)=0$ for a singleton class, which has no weights to estimate, and $s_{m\ell}^2(\theta)$ is the variance of the limiting normalized estimation error of $B_m-B_\ell$.

Let $m(\theta)$ denote the optimal class wherever it is unique. Define
\begin{align}
  C_{\mathrm{mm}}
  =\begin{cases}
    \Gamma_{\mathrm{lik}},&\mathcal T\ne\varnothing,\\
    \sup_{\theta\in\Theta}c_{m(\theta)}(\theta),&\mathcal T=\varnothing.
  \end{cases}
  \label{eq:minimax-constant-definition}
\end{align}
The normalization is $a_n=\sqrt n$ in the first case and $a_n=n$ in the second.

\subsection{Minimax Lower Bound}
Near an optimal tie, the lower bound comes from the Gaussian comparison in \eqref{eq:likelihood-gaussian-value}, which no rule can resolve more accurately, and without a tie it comes from the information available for estimating the optimal weights.

\begin{proposition}[Lower bound for every portfolio choice rule]
\label{prop:regular-minimax-lower}
Under Assumption~\ref{ass:regular-return-model}, or Assumption~\ref{ass:regular-one-class} with one class, with $a_n$ and $C_{\mathrm{mm}}$ defined above, we have
\begin{align}
  \liminf_{n\to\infty}a_n\mathcal V_{n,K}^{\mathrm{mm}}
  \geq C_{\mathrm{mm}}.
  \label{eq:regular-minimax-lower}
\end{align}
The infimum defining $\mathcal V_{n,K}^{\mathrm{mm}}$ ranges over all rules based on the original $n$ return observations.
\end{proposition}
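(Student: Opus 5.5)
The proof splits into the two cases of \eqref{eq:minimax-constant-definition}. In each case I would bound the worst-case expected regret from below by a Bayes risk over a shrinking local neighbourhood of a well-chosen parameter $\theta_0\in\Theta$, and then pass to the Gaussian limit experiment by local asymptotic normality. Under Assumption~\ref{ass:regular-return-model}(i), or Assumption~\ref{ass:regular-one-class}(i), the model is differentiable in quadratic mean with continuous nonsingular $I_\theta$. Hence the experiments $\{P_{\theta_0+h/\sqrt n}^{n}:h\in\bbR^p\}$ converge to $\{N(h,I_{\theta_0}^{-1})\}$, and any sequence of rules has limiting distributions representable as a randomized rule in the limit experiment (the asymptotic representation theorem). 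Since $\mathcal V_{n,K}^{\mathrm{mm}}$ is at least the supremum over the local subfamily with $\theta_0+h/\sqrt n\in\Theta$, it suffices to show that the limit of $a_n$ times the local worst-case regret is at least the claimed constant. Then take the supremum over $\theta_0$, approximating the supremum in $C_{\mathrm{mm}}$ by an $\varepsilon$-close point.

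\emph{Tie case.} Fix $\theta_0\in\mathcal T$ with $C=A(\theta_0)$ and reference class $m_0$. Work in the interior first; the boundary case uses Assumption~\ref{ass:regular-return-model}(iv) to pick local directions pointing into $\Theta$. Write $d_C(\theta_0+h/\sqrt n)=\nabla d_C(\theta_0)h/\sqrt n+O(\norm{h}^2/n)$, and note that classes outside $C$ are worse by a fixed gap for small $\norm{h}/\sqrt n$. The regret of any rule is then at least the selection regret $V^\star-B_{\widehat m}$ from \eqref{eq:two-layer-decomposition}. After multiplying by $\sqrt n$, this equals $L_C(\nabla d_C(\theta_0)h,\widehat m)+o(1)$ on $\{\widehat m\in C\}$, and it is at least a constant times $\sqrt n$ on $\{\widehat m\notin C\}$. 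The full row rank in Assumption~\ref{ass:regular-return-model}(iii) lets $\mu=\nabla d_C(\theta_0)h$ range over all of $\bbR^{|C|-1}$. Moreover, by sufficiency, the limit experiment restricted to this linear image is $N(\mu,\Omega_C(\theta_0))$. Indeed, the image of $N(h,I^{-1})$ under $\nabla d_C$ has covariance \eqref{eq:likelihood-contrast-covariance}, and choosing $h$ in the $I$-orthogonal complement parametrization makes the other coordinates ancillary for the loss.

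The key step is to replace the supremum over $\mu\in\bbR^{q}$ by a prior. I would use the minimax theorem for the finite-action Gaussian decision problem: the loss is bounded on compacts and the action set is finite, so $\gamma_C(\Omega)$ equals the supremum of Bayes risks over compactly supported priors. For a fixed prior $\Pi$ on $\{\norm{\mu}\le M\}$, the local Bayes risk of any rule sequence has liminf at least the Bayes risk in $N(\mu,\Omega_C)$ under $\Pi$. This follows from Le Cam's third lemma or the asymptotic representation theorem, together with Fatou and the lower semicontinuity of the bounded, truncated loss. Taking the supremum over $\Pi$ and $M$ gives $\liminf\sqrt n\,\mathcal V^{\mathrm{mm}}_{n,K}\ge\gamma_C(\Omega_C(\theta_0))$. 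Taking the supremum over $\theta_0\in\mathcal T$ then gives $\Gamma_{\mathrm{lik}}$.

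\emph{No-tie case.} Fix $\theta_0$ with a unique optimal class $m=m(\theta_0)$. Any rule's regret is at least the within-class loss if it picks $m$. If it picks another class, the loss is of fixed order, which exceeds the within-class loss; for an upper-level bound I simply project, so the loss is at least $\ell$ evaluated at the nearest point of class $m$. By Proposition~\ref{prop:portfolio-approximation}, this loss is $\tfrac12(N_m^\top(w-w_m(\theta)))^\top J_m(\theta)(\cdot)+o$. The problem becomes estimating $\psi(\theta)=N_m^\top w_m(\theta)$ under quadratic loss weighted by $J_m$, with $n\times$ loss, in the LAN limit. The local asymptotic minimax theorem (Hájek–Le Cam) for the smooth functional $\psi$ with derivative $D_m(\theta_0)$ gives the lower bound
\[
\tfrac12\E[G^\top J_m G]=c_m(\theta_0),\qquad G\sim N(0,D_mI^{-1}D_m^\top).
\]
The loss must be truncated, via $\min(\cdot,M)$ and then $M\to\infty$, to apply the theorem. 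Taking the supremum over $\theta_0$ finishes the argument.

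The main obstacle is the tie case: making rigorous the reduction from the $p$-dimensional local experiment to the $q$-dimensional Gaussian shift $N(\mu,\Omega_C)$. This includes handling the remainder $O(\norm{h}^2/n)$ uniformly on a growing range of $\mu$, and the boundary ties in (iv). I would first verify the exchange of supremum and infimum for $\gamma_C$ using compactly supported priors.
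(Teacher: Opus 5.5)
Your proposal is correct. Its tie case follows the paper's argument. Restricting $h$ to the $I$-orthogonal complement of $\ker G$, with $G=\nabla d_C(\theta_0)$, is exactly the paper's least favorable submodel $\theta_0+v(\mu)/\sqrt n$ with $v(\mu)=I^{-1}G^\top\Omega^{-1}\mu$; any other complement of $\ker G$ would give a smaller covariance and hence a weaker bound. The paper also passes to finite priors through minimax duality for the finite-action Gaussian problem, so the growing range of $\mu$ you flag as the main obstacle never arises.

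Two smaller differences in the tie case:
\begin{itemize}
\item You reduce portfolios to classes through $r_{P_\theta}(w)\ge V_{P_\theta}^\star-B_{\widehat m(w)}(\theta)$, where $\widehat m(w)$ is a measurable choice of a class containing $w$. You should state this choice explicitly. It is simpler than the paper's disjoint neighborhoods of the distinct class optimizers and does not need distinctness.
\item At boundary ties you use an inward offset along $\ker G$, which part (iv) makes available. The paper instead approximates boundary ties by interior tie points and uses continuity of $\gamma_C$.
\end{itemize}

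The no-tie case is where you genuinely differ. You apply the H\'ajek--Le Cam local asymptotic minimax theorem to $\psi(\theta)=N_m^\top w_m(\theta)$ under a truncated quadratic loss. The paper instead bounds the worst case below by Bayes risks under smooth priors concentrating near $\theta_0$. It computes these by the van Trees inequality for one class and by the asymptotics of the exact Bayes decision for several classes. Your route is off the shelf and needs only local asymptotic normality at a point and differentiability of $w_m$. The paper's route reuses the machinery of its Bayes lower bound, and in the van Trees form it gives a rule-free bound at every $n$.

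One sentence in your no-tie argument needs repair. Regret outside class $m$ is not in general at least the within-class loss at the projection, because that loss can exceed the fixed gap. The step becomes correct once you truncate $n\,r_{P_\theta}$ at level $M$. This is legitimate for large $n$ because both the gap to other classes and the within-class loss away from $w_m(\theta)$ are bounded below.
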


For a fixed tie point $\theta_0$ and finitely many parameters displaced from it by order $n^{-1/2}$, a likelihood expansion yields a Gaussian comparison with covariance $\Omega_{A(\theta_0)}(\theta_0)$. Because the class-specific optimizers are distinct, any portfolio choice can be converted into a class choice whose leading loss is no larger. If $\mathcal T$ is empty, the allocation lower bound follows instead from a prior concentrated near a parameter at which $c_{m(\theta)}(\theta)$ is largest.

\subsection{Worst-Case Upper Bound}
For the EUR rule, we bound the worst-case expected regret over all of $\Theta$, including parameters that vary with $n$.

\begin{proposition}[Worst-case expected regret of the EUR rule]
\label{prop:regular-minimax-upper}
Under Assumption~\ref{ass:regular-return-model}, or Assumption~\ref{ass:regular-one-class} with one class, the EUR rule satisfies
\begin{align}
  \limsup_{n\to\infty}
  a_n\sup_{\theta\in\Theta}\calR_n(\delta_n^\star,P_\theta)
  \leq C_{\mathrm{mm}}.
  \label{eq:regular-minimax-upper}
\end{align}
\end{proposition}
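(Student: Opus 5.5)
The plan is to split into the two cases of the definition of $C_{\mathrm{mm}}$ and, within the tie case, use the exact decomposition \eqref{eq:two-layer-decomposition}. This separates the regret into a class-selection term and a within-class allocation term. Throughout, $\theta_n\in\Theta$ may vary with $n$, so every estimate must be uniform in $\theta$. I would argue along an arbitrary sequence $\theta_n$ attaining the supremum up to $o(a_n^{-1})$. By compactness I pass to a subsequence with $\theta_n\to\theta_0$ and show that $a_n\calR_n(\delta_n^\star,P_{\theta_n})$ is asymptotically at most $C_{\mathrm{mm}}$.

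\textbf{One class, or $\mathcal T=\varnothing$.} The allocation term is handled by a second-order expansion. The plug-in weights are $w_m(\widehat\theta)$. Within-class loss is $\tfrac12(\widehat w-w_m)^\top N_mJ_mN_m^\top(\widehat w-w_m)+O(\norm{\widehat w-w_m}^3)$, by Proposition~\ref{prop:portfolio-approximation} and the interiority in Assumption~\ref{ass:regular-return-model}(ii). The delta method with the linear representation of the MLE gives $\widehat w-w_m=\dot w_m I_\theta^{-1}n^{-1}\sum S_\theta+\text{remainder}$. Hence $n\E[\text{loss}]\to c_m(\theta)$ uniformly in $\theta$. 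Uniformity follows from \eqref{eq:regular-estimator-ltwo}--\eqref{eq:regular-estimator-fourth}, or, in the bounded model, from uniform MLE moment bounds (Appendix~\ref{app:likelihood-global}); continuity of $c_m$ then handles moving $\theta_n$. For $\mathcal T=\varnothing$ with several classes, compactness gives a uniform positive gap between the optimal class and the others. Screening with $k_nb_n^2/\log n\to\infty$ then eliminates every suboptimal class except on an event of probability $O(n^{-2})$, via Hoeffding/large-deviation bounds for the MLE on bounded support. The rule is therefore the single-class rule on $D_1$ with $N_n=n(1-o(1))$ observations, which loses only a factor $n/N_n\to1$.

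\textbf{$\mathcal T\neq\varnothing$, normalization $\sqrt n$.} Allocation loss is $O(n^{-1})=o(n^{-1/2})$, so only selection matters. Screening first. A class whose gap $B_{m^\star}(\theta_n)-B_m(\theta_n)$ exceeds $2\sigma_Ub_n$ is removed with overwhelming probability. A class whose gap is below $\sigma_Ub_n/2$ is retained with high probability, because the $D_0$ error is $O_p(k_n^{-1/2})=o(b_n)$. The optimal class itself is always retained. Classes in the intermediate band have gap $O(b_n)$. Each such class, if selected wrongly, is chosen only when the $D_1$ comparison errs on a gap $g\gg n^{-1/2}$ (since $\sqrt n b_n\to\infty$ need not hold, I use $g\exp(-cng^2)$ from \eqref{eq:gap-tail-contribution}). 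That contributes $o(n^{-1/2})$ unless $g\asymp n^{-1/2}$, in which case the class is a genuine near-tie competitor. The condition $\sqrt n b_n^2\to0$ ensures that the $D_0$ parameter used in $\widehat\Omega_C$ is close enough that the remaining near-tied classes form, with probability tending to one, a subset of $A(\theta_0)$. Conditional on $D_0$ and $C$, the LAN/CLT on $D_1$ gives $Y_C\approx N(\sqrt{N_n}d_C(\theta_n),\Omega_C(\theta_0))$. For $\abs{C}=2$, the sign rule attains $\kappa s_{m\ell}(\theta_0)$ by Lemma~\ref{lem:two-class-gaussian-value}. For $\abs{C}\ge3$, the LP solution is $\epsilon_n$-near minimax for $N(\mu,\widehat\Omega_C)$ over the grid. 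Extending it to all $\mu$ uses grid Lipschitz continuity and truncation: means outside $[-\rho_n,\rho_n]$ are impossible, and the override when a gap exceeds $\log(n+1)$ caps losses far from ties. Since $\widehat\Omega_C\to\Omega_C(\theta_0)+0$ and $\gamma_C$ is continuous in $\Omega$, the value is $\le\gamma_{A(\theta_0)}(\Omega_{A(\theta_0)}(\theta_0))+o(1)\le\Gamma_{\mathrm{lik}}$. For subsets $C\subsetneq A(\theta_0)$, monotonicity of $\gamma$ in $C$ (dropping classes can only help the sup) keeps the bound. When $\theta_0\in\mathcal T$ is not reached, gaps are $\gg n^{-1/2}$ and the contribution vanishes.

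\textbf{Main obstacle.} The hardest step is transferring the Gaussian-limit risk of the randomized LP rule back to the finite-sample problem uniformly over all $\theta_n$. This requires a uniform normal approximation of $\sqrt{N_n}(d_C(\widehat\theta(D_1))-d_C(\theta_n))$ in a metric strong enough to control the expectation of the bounded but discontinuous loss $L_C(\mu,\delta(Y))$. I would first try smoothing: the added noise $\tau_nZ_C$ makes $Y_C$ have a density. I would then bound the total-variation distance to $N(\mu,\Omega+\tau_n^2I)$ by an Edgeworth/Berry--Esseen bound at rate $n^{-1/2}$ divided by powers of $\tau_n=n^{-1/32}$. This is the reason for the slowly vanishing $\tau_n$. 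The loss is multiplied by the gap, which is $O(\log n)$ on the retained event, so the total error is $o(1)$ after scaling by $\sqrt n$.
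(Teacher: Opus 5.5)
Your plan follows the paper's proof in Appendix~\ref{app:likelihood-global} closely. The shared steps are a subsequence $\theta_n\to\theta_0$, a screening event, the uniform quadratic expansion for the allocation term, and, for selection, the near-minimax linear program with its grid, truncation at $\rho_n$ and $\log(n+1)$ override, total-variation smoothing by $\tau_nZ_C$, and continuity and monotonicity of $\gamma_C$. Several individual steps, however, are justified incorrectly or are missing.

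First, the inclusion of the candidate set in $A(\theta_0)$ has nothing to do with $\sqrt n\,b_n^2\to0$; the paper uses that condition only for the Bayes bound near triple ties. Let $\mathcal E_n$ be the event that every $\widehat B_m(D_0)$ errs by at most $\sigma_Ub_n/4$.
\begin{itemize}
\item By \eqref{eq:likelihood-pilot-bound}, $\Pr_\theta(\mathcal E_n^c)\leq C\exp(-ck_nb_n^2)=o(n^{-1})$ uniformly in $\theta$. This rate, not $O_p(k_n^{-1/2})=o(b_n)$, is what lets you discard $\mathcal E_n^c$ after multiplying by $a_n$. Your ``the optimal class is always retained'' should be read on this event.
\item On $\mathcal E_n$, every candidate has gap at most $3\sigma_Ub_n/2\to0$. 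Classes outside $A(\theta_0)$ have gaps bounded away from zero along $\theta_n\to\theta_0$. This alone gives $C\subseteq A(\theta_0)$ eventually.
\end{itemize}

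Second, $\sqrt n\,b_n\to\infty$ does hold, since $N_nb_n^2/\log n\to\infty$. The class-by-class bound $g\exp(-cng^2)$ is valid only for the two-candidate sign rule. The randomized LP rule is constrained only through its expected loss at each mean. Its constraints therefore bound the probability of choosing a class $M$ normalized standard deviations behind only by about $v/M$, not exponentially. For $\abs{C}\geq3$ you need no band analysis: use $C\subseteq A(\theta_0)$ together with the LP risk bound, which is uniform over all means in $[-\rho_n,\rho_n]^q$.

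Third, continuity of $\gamma_C$ controls only the LP value, not the rule's actual risk. The LP guarantee holds under its design law $N(\mu,\widehat\Omega_C)$, whereas $Y_C$ is approximately $N(\mu,\Omega_C(\theta_n)+\tau_n^2I_q)$. Conditionally on $D_0$, you therefore also need the Gaussian total-variation bound $C\min\{1,\norm{\Omega_C(\widehat\theta(D_0))-\Omega_C(\theta_n)}/\tau_n^2\}$. Its expected size is of order $k_n^{-1/2}\tau_n^{-2}$, and it must be multiplied by the $O(\log n)$ loss cap.

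Finally, monotonicity $\gamma_C\leq\gamma_{C'}$ for $C\subset C'$ is not the triviality ``dropping classes can only help.'' The natural reduction embeds the $C$-problem in the $C'$-problem with the extra means pushed far below the retained ones. Since the dropped coordinates are correlated with the kept ones, this works only over bounded sets of retained means. The paper therefore proves monotonicity through the representation of $\gamma_C$ as a supremum of Bayes values over finite priors.

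With these repairs, your argument coincides with the paper's.
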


The candidate set misses a required class with probability at most $C\exp(-c k_nb_n^2)$, where $c$ and $C$ do not depend on $\theta\in\Theta$, and $n$ times this probability also vanishes. After a successful preliminary comparison, every candidate class has utility gap at most $3\sigma_Ub_n/2$, so that for $\theta_n\in\Theta$ converging to $\theta_0$, the candidate classes eventually belong to $A(\theta_0)$. Because Appendix~\ref{app:likelihood-global} bounds the errors from covariance estimation, the perturbation, and the grid, the selection contribution multiplied by $\sqrt n$ is asymptotically at most $\gamma_{A(\theta_0)}(\Omega_{A(\theta_0)}(\theta_0))$. In addition, within-class estimation contributes $c_m(\theta)/n+o(n^{-1})$ uniformly over $\Theta$. If $\Theta$ contains no tie, compactness gives each inferior class a positive minimum gap, so only the allocation term of order $n^{-1}$ remains.

\subsection{Minimax Optimality}
Because the lower and upper bounds share the same constant, the EUR rule has the smallest possible leading worst-case expected regret in the specified return model.

\begin{theorem}[Minimax optimality of the EUR rule]
\label{thm:likelihood-global}
Under Assumption~\ref{ass:regular-return-model}, if $\mathcal T\ne\varnothing$ holds, then we have
\begin{align}
  \lim_{n\to\infty}\sqrt n\sup_{\theta\in\Theta}\calR_n(\delta_n^\star,P_\theta)
  =\lim_{n\to\infty}\sqrt n\mathcal V_{n,K}^{\mathrm{mm}}
  =\Gamma_{\mathrm{lik}}.
  \label{eq:likelihood-minimax}
\end{align}
If $\mathcal T=\varnothing$ holds, then the following limits hold. They also hold for one class under Assumption~\ref{ass:regular-one-class}:
\begin{align}
  \lim_{n\to\infty}n\sup_{\theta\in\Theta}\calR_n(\delta_n^\star,P_\theta)
  =\lim_{n\to\infty}n\mathcal V_{n,K}^{\mathrm{mm}}
  =\sup_{\theta\in\Theta}c_{m(\theta)}(\theta).
  \label{eq:likelihood-unique-minimax}
\end{align}
\end{theorem}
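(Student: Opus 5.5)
The theorem is obtained by sandwiching $\mathcal V_{n,K}^{\mathrm{mm}}$ between the two preceding propositions. Since $\delta_n^\star\in\mathfrak D_{n,K}$, we have for every $n$
\begin{align}
  a_n\mathcal V_{n,K}^{\mathrm{mm}}
  \leq
  a_n\sup_{\theta\in\Theta}\calR_n(\delta_n^\star,P_\theta).
  \nonumber
\end{align}
In the tie case $\mathcal T\ne\varnothing$ (so $a_n=\sqrt n$ and $C_{\mathrm{mm}}=\Gamma_{\mathrm{lik}}$), Proposition~\ref{prop:regular-minimax-lower} gives $\liminf_n a_n\mathcal V_{n,K}^{\mathrm{mm}}\geq C_{\mathrm{mm}}$, and Proposition~\ref{prop:regular-minimax-upper} gives $\limsup_n a_n\sup_{\theta}\calR_n(\delta_n^\star,P_\theta)\leq C_{\mathrm{mm}}$. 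Chaining these,
\begin{align}
  C_{\mathrm{mm}}
  \leq\liminf_n a_n\mathcal V_{n,K}^{\mathrm{mm}}
  \leq\limsup_n a_n\mathcal V_{n,K}^{\mathrm{mm}}
  \leq\limsup_n a_n\sup_\theta\calR_n(\delta_n^\star,P_\theta)
  \leq C_{\mathrm{mm}},
  \nonumber
\end{align}
and likewise
\begin{align}
  \liminf_n a_n\sup_\theta\calR_n(\delta_n^\star,P_\theta)
  \geq\liminf_n a_n\mathcal V_{n,K}^{\mathrm{mm}}
  \geq C_{\mathrm{mm}}.
  \nonumber
\end{align}
Hence both limits exist and equal $C_{\mathrm{mm}}$, which is \eqref{eq:likelihood-minimax}. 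The case $\mathcal T=\varnothing$, including one class under Assumption~\ref{ass:regular-one-class}, is identical with $a_n=n$ and $C_{\mathrm{mm}}=\sup_\theta c_{m(\theta)}(\theta)$, giving \eqref{eq:likelihood-unique-minimax}.

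One check is needed before this is airtight: the constant must be finite, so that the limits are real numbers. For $\Gamma_{\mathrm{lik}}$, first note that $\gamma_C(\Omega)$ is at most the risk of the rule that selects the largest coordinate of $\mu+G$ (with $G_{m_0}=0$). That risk is bounded by a multiple of $\E\norm{G}_\infty$, hence by a multiple of $\sqrt{\norm{\Omega}}$. By Assumption~\ref{ass:regular-return-model}(i)--(ii), $\Omega_C(\theta)$ is bounded over the compact $\Theta$, and $\mathfrak M_K$ is finite, so $\Gamma_{\mathrm{lik}}<\infty$. For $\sup_\theta c_{m(\theta)}(\theta)$, the matrices $J_m$, $D_m$ and $I_\theta^{-1}$ are continuous and bounded on $\Theta^+$. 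Here $D_m$ is bounded because $\dot w_m$ exists by the implicit function theorem, using the positive definite tangent Hessian and the interior optimizer. The same argument applies under Assumption~\ref{ass:regular-one-class}(ii).

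The real content, and the main obstacle, lies entirely in the two propositions, which we are allowed to assume. Within them, the hardest step is the upper bound's uniformity over parameters $\theta_n$ that approach a tie point at rate $n^{-1/2}$. This requires showing that the randomized linear-program selector's risk converges to $\gamma_{A(\theta_0)}(\Omega_{A(\theta_0)}(\theta_0))$ despite three perturbations: the covariance estimated from $D_0$, the $\tau_n$ perturbation, and the $\epsilon_n$ grid. It also requires $\gamma_C$ to be continuous in $\Omega$, so that the supremum over $\mathcal T$ is approached along sequences. For the theorem itself, only the sandwich above is needed.
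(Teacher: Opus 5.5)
Your sandwich argument is the paper's proof: the paper states that the limits follow from Propositions~\ref{prop:regular-minimax-lower} and~\ref{prop:regular-minimax-upper}, since $\mathcal V_{n,K}^{\mathrm{mm}}$ is at most the worst-case expected regret of $\delta_n^\star$. Your finiteness check is a harmless extra, since the sandwich also works in $[0,\infty]$, and it agrees with the paper's remark that selecting the largest observation has loss at most $2\norm{G}_\infty$.
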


These limits follow from Propositions~\ref{prop:regular-minimax-lower} and~\ref{prop:regular-minimax-upper}, since $\mathcal V_{n,K}^{\mathrm{mm}}$ is at most the worst-case expected regret of the EUR rule. A positive tie contribution, which is of order $n^{-1/2}$, dominates the allocation term of order $n^{-1}$.

\subsection{Tuning Parameters and Approximations}
\label{subsec:eurr-tuning}
In the proofs of the upper bounds for the EUR rule, Propositions~\ref{prop:regular-minimax-upper} and~\ref{prop:regular-bayes-upper}, the tuning sequences and the finite approximations of Algorithm~\ref{rule:eurr} play the following roles.

\paragraph{Screening.}
The first two conditions in \eqref{eq:tuning-conditions} make the preliminary estimate consistent while keeping $N_n/n\to1$, so that the second sample contains almost all observations. The third ensures that the preliminary comparison removes classes with a sufficiently large utility gap with probability tending to one, exponentially fast in $k_nb_n^2$. The fourth makes parameters near a tie of three classes negligible for the Bayes criterion (Section~\ref{sec:bayes}). The logarithmic conditions make the probabilities of an incorrect removal on $D_0$ and of selecting on $D_1$ a class with utility gap above $\sigma_Ub_n/2$ negligible after multiplication by $n$. Scaling the threshold by $\sigma_U$ makes the candidate set invariant to positive affine transformations of $U$. Our choice $\alpha=4/5$ and $\beta=3/8$ gives $\alpha\beta=3/10$, $k_nb_n^2\asymp n^{1/5}$, and $\sqrt n\,b_n^2\asymp n^{-1/10}$.

\paragraph{Comparison of the candidate classes.}
Two candidate classes are compared by the sign rule, which attains the value in Lemma~\ref{lem:two-class-gaussian-value}. With three or more candidate classes, however, the estimated utility differences are correlated through $\Omega_C$, so selecting the class with the largest estimate need not minimize the largest expected utility gap. The program \eqref{eq:likelihood-gaussian-program} is a finite version of the comparison whose value $\gamma_C(\Omega)$ is defined in \eqref{eq:likelihood-gaussian-value}, and Appendix~\ref{app:likelihood-global} bounds the difference. The dimension $q=\abs{C}-1$ here is unrelated to the curvature exponent $q$ in Assumption~\ref{ass:utility-loss}.

\paragraph{Approximations in the Gaussian comparison.}
The perturbation $\tau_nZ_C$ makes $Y_C$ comparable in total variation to a Gaussian vector even when the returns have finite support, and its vanishing variance leaves the limiting constant unchanged. The mesh $\epsilon_n$ controls the discretization error. The half-width $\rho_n$ covers every mean of $\sqrt{N_n}\,d_C$, since each utility difference is at most $2H_U$ in absolute value, and the margin $t_n$ makes the probability of an observation falling outside the cells negligible. The threshold $\log(n+1)$ bounds the loss from a class with an unusually poor observed utility difference. The grid has on the order of $n^{9q/2}$ points, that is, $n^9$ with three candidate classes.

\subsection{Discussion}
The two normalizations imply different costs of deciding from data. Provided the constant is positive, halving the leading worst-case regret requires about four times as many observations when class selection dominates, but only twice as many when weight estimation alone matters.

\paragraph{A return model with both contributions.}
A bounded return model makes both contributions explicit: the investor chooses between two opposite positions and estimates a continuous position in a third asset. Take independent signs $X,Y\in\{-1,1\}$ with means $\theta$ and $\eta$, where $(\theta,\eta)\in[-1/4,1/4]\times[-1/20,1/20]$, and let $R=(\varepsilon X,-\varepsilon X,\lambda Y)$ with $\varepsilon=1/10$ and $\lambda=1/5$. The two classes are
\begin{align}
  \calW_+
  &=
  \{(1,0,t):\abs{t}\leq1/2\},
  &
  \calW_-
  &=
  \{(0,1,t):\abs{t}\leq1/2\}.
  \label{eq:joint-estimation-classes}
\end{align}
Let $U(x)=-\exp(-x)$. For $s\in\{+1,-1\}$, the portfolio of $\calW_s$ that holds $t$ in the third asset has the expected utility $V_s$ given below. Since both classes have the same interior optimizer $t^\star(\eta)$ in the third asset, substituting it gives the optimized values $B_s$ and their difference. On the decision boundary $\theta=0$, the standard deviation of the limiting normalized error of an efficient estimate of this difference (Appendix~\ref{app:influence}) is $s_{+-}(0,\eta)$, and the allocation contribution within class $s$ is $c_s$:
\begin{align}
  V_s(t;\theta,\eta)
  &=
  -e^{-x_0}
  \left(\cosh(\varepsilon)-s\theta\sinh(\varepsilon)\right)
  \left(\cosh(\lambda t)-\eta\sinh(\lambda t)\right),
  \label{eq:joint-estimation-objective}\\
  t^\star(\eta)
  &=
  \frac{\operatorname{arctanh}(\eta)}\lambda,
  \label{eq:joint-estimation-weight}\\
  B_s(\theta,\eta)
  &=
  -e^{-x_0}
  \left(\cosh(\varepsilon)-s\theta\sinh(\varepsilon)\right)
  \sqrt{1-\eta^2},
  \label{eq:joint-estimation-example}\\
  B_+(\theta,\eta)-B_-(\theta,\eta)
  &=
  2e^{-x_0}\theta\sinh(\varepsilon)\sqrt{1-\eta^2},
  \label{eq:joint-estimation-gap}\\
  s_{+-}(0,\eta)
  &=
  2e^{-x_0}\sinh(\varepsilon)\sqrt{1-\eta^2},
  \label{eq:joint-estimation-contrast-sd}\\
  c_s(\theta,\eta)
  &=
  \frac{e^{-x_0}
  \left(\cosh(\varepsilon)-s\theta\sinh(\varepsilon)\right)}
  {2\sqrt{1-\eta^2}}.
  \label{eq:joint-estimation-allocation-constant}
\end{align}
The rectangle $\Theta^+=[-3/10,3/10]\times[-3/50,3/50]$ satisfies Assumption~\ref{ass:regular-return-model}. For alternatives $\theta_n=h/\sqrt n$ with nonzero $h$, choosing the class with the wrong sign costs utility of order $n^{-1/2}$ with a probability that need not vanish, and this gives the leading worst-case selection term. At $\theta=0$, by contrast, selection causes no regret because both classes attain the same optimized value, while estimating the third-asset position still contributes regret of order $n^{-1}$.

\section{Bayes Optimality}
\label{sec:bayes}
In the setting of Section~\ref{sec:minimax}, we now integrate the expected regret of the same EUR rule under a prior $H\in\mathfrak H$ on $\Theta$ that does not depend on $n$, and we let $n\to\infty$ for each such prior.

\subsection{Bayes Constant}
The Bayes constant is the sum of a contribution from class selection near pairwise ties and a contribution from estimating weights. Let $h$ be the continuous density of $H$, and write $V_m^{\mathrm{eff}}(\theta)=D_m(\theta)I_\theta^{-1}D_m(\theta)^\top$ for the covariance of the limiting normalized estimation error of the optimizer in tangent coordinates. The plug-in estimator $w_m(\widehat\theta)$ based on maximum likelihood attains this covariance, which equals the information bound for regular estimates of the optimizer (Appendix~\ref{app:influence}).

The regions with a unique optimal class are indexed by $m(\theta)$, and the boundaries at which exactly two classes attain the largest expected utility are
\begin{align}
    \calT_{m\ell}
    =
    \left\{
      \theta\in\operatorname{int}(\Theta):
      B_m(\theta)=B_\ell(\theta)>B_j(\theta)
      \text{ for every }j\in\mathfrak M_K\setminus\{m,\ell\}
    \right\}.
    \label{eq:pairwise-boundaries}
  \end{align}
For this prior, define
\begin{align}
  C_{\mathrm{sel}}(H)
  &={}
  \frac12
  \sum_{\{m,\ell\}}
  \int_{\calT_{m\ell}}
  \frac{s_{m\ell}^2(\theta)h(\theta)}
  {\norm{\nabla(B_m-B_\ell)(\theta)}_2}
  \rmd\mathcal H^{p-1}(\theta),
  \label{eq:bayes-selection-constant}
  \\
  C_{\mathrm{alloc}}(H)
  &={}
  \int_\Theta
  \frac12\operatorname{tr}\left(
    J_{m(\theta)}(\theta)V_{m(\theta)}^{\mathrm{eff}}(\theta)
  \right)
  h(\theta)\rmd\theta.
  \label{eq:bayes-allocation-constant}
\end{align}
The sum is over unordered pairs, and $\mathcal H^{p-1}$ is $(p-1)$-dimensional Hausdorff measure. In \eqref{eq:bayes-allocation-constant}, the value of $m(\theta)$ on the lower-dimensional tie sets does not affect the integral, and the integrand is $c_{m(\theta)}(\theta)h(\theta)$ by \eqref{eq:likelihood-constants}.

Where $h$ is positive on a tie surface, a neighborhood of width $n^{-1/2}$ has prior probability of order $n^{-1/2}$, and the utility gap within it is also of order $n^{-1/2}$, so selection there contributes regret of order $n^{-1}$. The coefficient $1/2$ in \eqref{eq:bayes-selection-constant} follows from $\int_{\bbR}|v|\Phi(-|v|/\sigma)\,\rmd v=\sigma^2/2$. Define $C_H=C_{\mathrm{sel}}(H)+C_{\mathrm{alloc}}(H)$.

\subsection{Bayes Lower Bound}
The lower bound below applies to every rule in the infimum \eqref{eq:bayes-value}, including rules that use $H$, because knowing the evaluation prior cannot remove the estimation error in local comparisons or in the weights.

\begin{proposition}[Integrated lower bound]
\label{prop:regular-bayes-lower}
Under Assumption~\ref{ass:regular-return-model}, or Assumption~\ref{ass:regular-one-class} with one class, every $H\in\mathfrak H$ specified above satisfies
\begin{align}
  \liminf_{n\to\infty} n\mathcal V_{n,K}^{H}
  \geq C_H.
  \label{eq:regular-bayes-lower}
\end{align}
\end{proposition}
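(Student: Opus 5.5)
Fix $H\in\mathfrak H$ with continuous density $h$. The plan is to split $\Theta$ into a tube of width of order $n^{-1/2}$ around the pairwise tie surfaces $\calT_{m\ell}$ and its complement. In the tube, the selection term carries the bound; in the complement, the allocation term does. The two pieces are then lower bounded separately and added.

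The starting point is the decomposition \eqref{eq:two-layer-decomposition}. For any rule $\delta_n$, the regret $r_{P_\theta}(\widehat w_n)$ is at least the larger of two quantities. The first is the allocation loss when $\widehat w_n$ lies near $w_{m(\theta)}(\theta)$. The second is the selection loss $|B_m-B_\ell|(\theta)$ when $\widehat w_n$ lies near the optimizer of the wrong class. Because the optimizers in $A(\theta)$ are distinct (Assumption~\ref{ass:regular-return-model}(iii)), any portfolio can be mapped measurably to a class index. This mapping costs only $o(n^{-1})$ after integration, since within a class the loss is quadratic (Proposition~\ref{prop:portfolio-approximation}).

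\textbf{Allocation part.} Fix a compact $K_\varepsilon\subset\operatorname{int}(\Theta)$ whose points are at distance at least $\varepsilon$ from $\mathcal T$, so that $m(\theta)$ is constant on each connected piece and the class gaps exceed some $g_\varepsilon>0$. On $K_\varepsilon$, the regret is at least $\ell_{P_\theta}(\widehat w_n)$ restricted to class $m(\theta)$, up to the distance to that class. A portfolio far from class $m(\theta)$ incurs a loss of constant order; near the class, the loss is at least its projection, and the expansion \eqref{eq:utility-loss-gradient} makes it $\tfrac12 (\widehat w-w_m)^\top N_mJ_mN_m^\top(\widehat w-w_m)$ plus higher order. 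The Bayes risk of this weighted quadratic loss for estimating the smooth functional $w_m(\theta)$ in a LAN family is then bounded below by a local asymptotic (van Trees or Hájek–Le Cam) Bayes bound. Applying the van Trees inequality with a smooth prior approximating $h$ on $K_\varepsilon$ gives
\[ \liminf_n n\int_{K_\varepsilon}\E_\theta[\ell]\,h\,\rmd\theta\ \ge\ \int_{K_\varepsilon}\tfrac12\operatorname{tr}(J_mD_mI_\theta^{-1}D_m^\top)h\,\rmd\theta. \]
Letting $\varepsilon\to0$ recovers $C_{\mathrm{alloc}}(H)$, because $\mathcal T$ is Lebesgue-null. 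This argument also settles the one-class case under Assumption~\ref{ass:regular-one-class}.

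\textbf{Selection part.} The allocation bound of order $n^{-1}$ over the bulk must be added to a selection bound over the tube, which is disjoint and of vanishing measure. The tube therefore contributes no allocation constant in the limit, and the two contributions can be summed. Near a regular point $\theta_0\in\calT_{m\ell}$ (away from triple ties and $\partial\Theta$, a set of $\mathcal H^{p-1}$-small complement), I will use Fermi-type coordinates $\theta=\theta_0'+v\,\nu/\sqrt n$. Here $\theta_0'$ ranges over the surface, $\nu=\nabla(B_m-B_\ell)/\norm{\nabla(B_m-B_\ell)}_2$, and the gap is $\approx \norm{\nabla(B_m-B_\ell)}_2 v/\sqrt n$. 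By LAN along the one-dimensional submodel in the least-favourable direction $I^{-1}\nabla(B_m-B_\ell)$, the experiment for the sign of the gap converges to a Gaussian shift. Its effective noise standard deviation in gap units is $s_{m\ell}(\theta_0)/\sqrt n$. The prior is locally flat (density $h(\theta_0)$) at scale $n^{-1/2}$, so the Bayes problem in the limit is to choose the sign of $\mu\sim\text{flat}$ from $X\sim N(\mu,s^2)$ under loss $|\mu|\1\{\text{wrong}\}$. Its Bayes risk per unit length is $\int|\mu|\Phi(-|\mu|/s)\rmd\mu=s^2/2$, attained by the sign rule. Converting $\rmd\theta=\rmd\mathcal H^{p-1}\,\rmd v/\sqrt n$ and $\mu=\norm{\nabla}v$ produces the factor $1/\norm{\nabla(B_m-B_\ell)}_2$. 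The scalings give $n\cdot n^{-1/2}\cdot n^{-1/2}=1$, which yields exactly $C_{\mathrm{sel}}(H)$.

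\textbf{Main obstacle.} The hard step is making this tube argument rigorous uniformly over the surface. The limit experiment is only a local approximation, and the flat-prior Bayes risk is an improper limit. I would first truncate $v\in[-M,M]$ and apply the convergence of experiments (Le Cam's third lemma) with a uniform prior on $[-M,M]$ in $v$. This gives a proper Bayes problem whose risk converges, by lower semicontinuity of Bayes risk under weak convergence of experiments, to $\int_{-M}^M|\mu|\Phi(-|\mu|/s)\rmd\mu$ minus edge effects. These edge effects vanish as $M\to\infty$ relative to $M$ only if the loss is per unit length, which it is, since the integrand decays. A partition of unity on the surface, together with continuity of $h$, $s_{m\ell}$ and the gradient, then handles uniformity via dominated convergence.
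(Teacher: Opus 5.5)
Your route is different from the paper's. For several classes the paper does not lower-bound arbitrary rules directly. It takes the exact Bayes decision, a maximizer of $Q_n(w)=\E_H[V_{P_\vartheta}(w)\mid D_n]$. Through posterior concentration it shows that $Q_n$ and $\nabla_wQ_n$ are within $o_P(n^{-1/2})$ of $V_{P_{\widehat\theta_n}}$. Hence this rule behaves to leading order like the maximum likelihood plug-in rule: efficient normal weights away from ties, and a sign comparison of $\widehat d_{m\ell}$ near them. Fatou's lemma on the nonnegative losses then gives $C_H$, and since the Bayes decision attains $\mathcal V^H_{n,K}$ at every $n$, this bounds the infimum. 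The paper uses van Trees only in the one-class case (Appendix~\ref{app:regular-one-class}). Your direct approach avoids posterior analysis and could work, but two steps, as you set them up, do not deliver the stated constants.

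The genuine gap is the selection step. The constant $s_{m\ell}^2=\nabla d_{m\ell}^\top I^{-1}\nabla d_{m\ell}$ is the posterior variance of $\sqrt n$ times the gap when the prior is locally flat in all $p$ directions. A least-favourable one-dimensional submodel is a minimax device; in a Bayes problem you cannot choose the submodel, because $h$ fixes which parameters are present.

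Your tube is built from Euclidean normal lines $\theta_0'+v\nu/\sqrt n$, and you localize only $v$. The tangential coordinate therefore still ranges over a patch of fixed size, far beyond the LAN scale. The natural reduction to one dimension is to condition on the foot point $\theta_0'$, which is legitimate since $\inf_\delta\int\geq\int\inf_\delta$. But along a normal line the information is $\nu^\top I\nu$, so the noise in gap units is $\norm{\nabla d_{m\ell}}_2^2(\nabla d_{m\ell}^\top I\nabla d_{m\ell})^{-1/2}$. By Cauchy--Schwarz this is strictly below $s_{m\ell}$ unless $\nabla d_{m\ell}$ is an eigenvector of $I$, so the resulting bound is in general strictly smaller than $C_{\mathrm{sel}}(H)$.

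There are two repairs. One is to foliate the tube by lines parallel to $u=I^{-1}\nabla d_{m\ell}$, which are transverse to the tie surface because $\nu^\top u=s_{m\ell}^2/\norm{\nabla d_{m\ell}}_2>0$. Along these lines the one-dimensional experiment has information $s_{m\ell}^2$ for the line coordinate and gap noise $s_{m\ell}/\sqrt n$. The Jacobian of this foliation again gives $\rmd\mu\,\rmd\mathcal H^{p-1}/\norm{\nabla d_{m\ell}}_2$, so your scaling computation goes through unchanged. The other is to keep the full $p$-dimensional local experiment on tangential cells of side $L/\sqrt n$ and let $L\to\infty$ after $n$.

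The allocation step also needs localization. A single van Trees application with one smooth prior on $K_\varepsilon$ yields $\frac12\operatorname{tr}(J\,\overline D\,\overline I^{-1}\overline D^\top)$, with prior-averaged $\overline D$ and $\overline I$ and a frozen weight $J$. By joint convexity of $(D,I)\mapsto DI^{-1}D^\top$, this is in general below $\int c_{m(\theta)}(\theta)h(\theta)\,\rmd\theta$. Moreover, the quadratic lower bound on regret, obtained through a projection onto a small ball around $w_{m(\theta)}(\theta)$, holds only for $\theta$ in a small neighborhood. As in Appendix~\ref{app:regular-one-class}:
\begin{itemize}
\item partition $K_\varepsilon$ into small cubes;
\item apply van Trees to a normalized smooth density below $h$ on each cube, with $J$ frozen at its center;
\item let $n\to\infty$ before the cubes shrink, and sum using nonnegativity of regret.
\end{itemize}
With both repairs, your bulk and tube bounds are over disjoint sets for large $n$. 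Since the infimum of a sum is at least the sum of the infima, adding them gives $C_H$.
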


The proof in Appendices~\ref{app:likelihood-global} and~\ref{app:regular-one-class} rests on two facts. First, near a regular pairwise tie, the error of the posterior comparison is, to leading order, the Gaussian error of the efficient estimate of the utility difference. Second, where the optimal class is unique, the posterior-optimal weights have the efficient leading covariance $V_{m(\theta)}^{\mathrm{eff}}(\theta)$.

\subsection{Worst-Case Upper Bound}
To integrate the local approximations over the prior, we need a bound that holds for every $\theta\in\Theta$.

Let $B_{(1)}(\theta)\geq B_{(2)}(\theta)\geq B_{(3)}(\theta)$ be the three largest optimized expected utilities. If fewer than three classes are available, set $E_n=\varnothing$; otherwise put $E_n=\{\theta\in\Theta:B_{(1)}(\theta)-B_{(3)}(\theta)\leq3\sigma_Ub_n/2\}$. Because the gradients of the two utility differences have full row rank at a triple tie, $\operatorname{Leb}(E_n)\leq Cb_n^2$, and the worst-case bound of Section~\ref{sec:minimax} then yields
\begin{align}
  n\int_{E_n}\calR_n(\delta_n^\star,P_\theta)h(\theta)\,\rmd\theta
  \leq C\sqrt n\,b_n^2+o(1)\longrightarrow0.
  \label{eq:bayes-multiple-neighborhood}
\end{align}

Outside $E_n$, a successful preliminary comparison leaves at most two candidate classes, and when their utility gap $g$ is positive, the expected selection loss is at most $Cg\exp(-cN_ng^2)$ up to negligible terms. After the change of variables $g=v/\sqrt{N_n}$, the remaining integrand is dominated by $C|v|\exp(-cv^2)$ and has the local limit $|v|\Phi(-|v|/s_{m\ell})$, and the coarea formula converts the integral over the gap into the surface integral in \eqref{eq:bayes-selection-constant}.

\begin{proposition}[Integrated upper bound for the EUR rule]
\label{prop:regular-bayes-upper}
Under Assumption~\ref{ass:regular-return-model}, or Assumption~\ref{ass:regular-one-class} with one class, for each $H\in\mathfrak H$ specified above, we have
\begin{align}
  \limsup_{n\to\infty}n\int_\Theta
  \calR_n(\delta_n^\star,P_\theta)\,\rmd H(\theta)
  \leq C_H.
  \label{eq:regular-bayes-upper}
\end{align}
\end{proposition}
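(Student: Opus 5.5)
We will prove the integrated bound by splitting $\Theta$ into three regions and using the decomposition \eqref{eq:two-layer-decomposition} on each. Region one is the triple-tie neighborhood $E_n$, which has already been handled by \eqref{eq:bayes-multiple-neighborhood}. That bound combines $\operatorname{Leb}(E_n)\le Cb_n^2$ with the worst-case bound $\sup_\theta\calR_n\le C n^{-1/2}$ from Proposition~\ref{prop:regular-minimax-upper}, and $\sqrt n b_n^2\to0$ by \eqref{eq:tuning-conditions}. Region two is the event, at each $\theta$, that screening on $D_0$ removes an optimal class or retains a class with gap above $3\sigma_Ub_n/2$. Uniformly in $\theta$, this event has probability at most $C\exp(-ck_nb_n^2)$, and $k_nb_n^2/\log n\to\infty$ makes $n$ times this vanish. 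Outside $E_n$, on the complement of this event, at most two candidate classes remain and the sign rule on $D_1$ is used. The only other exceptional event is selecting on $D_1$ a class with gap above $\sigma_Ub_n/2$, and it is negligible for the same reason. In the one-class case only the allocation part below is needed.

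\textbf{Allocation term.} On the good event the held portfolio is $w_j(\widehat\theta(D_1))$ with $j$ a candidate class. Taylor-expanding $V_{P_\theta}$ around $w_j(\theta)$ (interior, uniformly curved by Assumption~\ref{ass:regular-return-model}(ii)) and using the MLE expansion from $N_n=n(1+o(1))$ observations gives $V_{P_\theta,j}^\star-V_{P_\theta}(\widehat w_j)=\frac12(\widehat\theta-\theta)^\top D_j^\top J_jD_j(\widehat\theta-\theta)+O(\norm{\widehat\theta-\theta}^3)$. Taking expectations, with fourth-moment control to bound the remainder and uniform integrability, yields $n\E_\theta[\cdot]\to c_j(\theta)$ uniformly on compacts, with a uniform bound $n\E_\theta[\cdot]\le C$. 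When $j\neq m(\theta)$ is the other candidate near a pairwise tie, $c_j$ and $c_{m(\theta)}$ differ by $o(1)$ as the gap tends to zero, by continuity. The tie set has Lebesgue measure zero in any case. Dominated convergence with the continuous density $h$ then gives $C_{\mathrm{alloc}}(H)$.

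\textbf{Selection term.} Outside $E_n$ and on the good event, write $g(\theta)=|B_m-B_\ell|(\theta)$ for the two candidates. The selection loss is $g\Pr_\theta(\text{sign of }\widehat d\text{ wrong})$, where $\widehat d=(B_m-B_\ell)(\widehat\theta(D_1))$. A uniform exponential bound $\Pr\le C\exp(-cN_ng^2)$ follows from bounded-likelihood concentration of the MLE together with smoothness of $B_m-B_\ell$. A uniform Berry--Esseen/CLT for $\sqrt{N_n}(\widehat d-d)$ gives $\Pr\to\Phi(-v/s_{m\ell}(\theta))$ along $g=v/\sqrt{N_n}$. Near each pairwise tie surface $\calT_{m\ell}$ we apply the coarea formula with $f=B_m-B_\ell$, which has nonvanishing gradient there by Assumption~\ref{ass:regular-return-model}(iii):
\[
n\int g\Pr(\cdot)h\,\rmd\theta=n\int_{\bbR}\int_{\{f=u\}}\frac{|u|\Pr(\cdot)h}{\norm{\nabla f}_2}\rmd\mathcal H^{p-1}\rmd u .
\]
Substituting $u=v/\sqrt n$ and using the domination $C|v|e^{-cv^2}$ yields $\int|v|\Phi(-|v|/s_{m\ell})\rmd v=s_{m\ell}^2/2$ times the surface integral, that is, the contribution to $C_{\mathrm{sel}}(H)$. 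Parameters with gap bounded away from zero contribute $n\cdot O(e^{-cn})$. The boundary $\partial\Theta$ is handled by Assumption~\ref{ass:regular-return-model}(iv), which makes the portion of the tube near $\partial\Theta$ of measure $O(n^{-1})$ and hence negligible. Adding the pieces gives \eqref{eq:regular-bayes-upper}.

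\textbf{Main obstacle.} The hardest step is making the local CLT for $\widehat d$ uniform in $\theta$ near the tie surfaces, jointly with the independence structure: $j$ and $\widehat w_j$ both use $D_1$. This means the allocation and selection losses cannot simply be separated. The decomposition \eqref{eq:two-layer-decomposition} is exact, however, so each term can be bounded separately. I would first prove uniform asymptotic linearity of $\widehat\theta(D_1)$ via the bounded-density conditions of Assumption~\ref{ass:regular-return-model}(i), then apply a multivariate Berry--Esseen bound to the score average to obtain the needed uniformity.
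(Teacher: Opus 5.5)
Your proposal is correct and follows essentially the paper's own argument. You excise $E_n$ using $\operatorname{Leb}(E_n)\le Cb_n^2$ and the uniform $O(n^{-1/2})$ selection bound, and discard failed screening via $k_nb_n^2/\log n\to\infty$. You obtain $C_{\mathrm{alloc}}(H)$ by pointwise convergence off the Lebesgue-null tie set plus a uniform bound, and $C_{\mathrm{sel}}(H)$ by the coarea formula in $d_{m\ell}$ with the local limit $\Phi(-|v|/s_{m\ell})$ and domination by $C|v|e^{-cv^2}$.

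Your aside that $c_j$ and $c_{m(\theta)}$ differ by $o(1)$ near a tie is false: a singleton class can tie with a positive-dimensional one. You do not need it. At fixed $\theta$ off the ties, the candidate set on the successful screening event is eventually $\{m(\theta)\}$. That event depends only on $D_0$, so the $D_1$-dependence you flag disappears; the paper uses Cauchy--Schwarz instead. Near ties, $\sum_j n\E_\theta[r_j(\widehat w_j,\theta)]\le C$ supplies the domination.
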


After multiplication by $n$, the within-class term converges to $c_{m(\theta)}(\theta)$ wherever the optimal class is unique, and a bounded second moment of the normalized allocation loss justifies integration, giving \eqref{eq:bayes-allocation-constant}. Appendices~\ref{app:likelihood-global} and~\ref{app:regular-one-class} give the details.

\subsection{Bayes Optimality}
The integrated lower bound and the upper bound for the EUR rule agree, so using the evaluation prior cannot improve the leading constant.

\begin{theorem}[Bayes optimality of the EUR rule]
\label{thm:likelihood-bayes}
Under Assumption~\ref{ass:regular-return-model}, or Assumption~\ref{ass:regular-one-class} with one class, every $H\in\mathfrak H$ specified above satisfies
\begin{align}
  \lim_{n\to\infty}n\int_\Theta\calR_n(\delta_n^\star,P_\theta)\,\rmd H(\theta)
  =\lim_{n\to\infty}n\mathcal V_{n,K}^{H}
  =C_{\mathrm{sel}}(H)+C_{\mathrm{alloc}}(H).
  \label{eq:likelihood-bayes}
\end{align}
The rule $\delta_n^\star$ is exactly the EUR rule of Section~\ref{sec:eurr} and is independent of $H$. In the one-class case, $C_{\mathrm{sel}}(H)=0$ and the allocation constant is the integral of $c_1(\theta)$ with respect to $H$, as proved in Appendix~\ref{app:regular-one-class}.
\end{theorem}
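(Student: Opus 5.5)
The theorem follows by sandwiching, exactly as Theorem~\ref{thm:likelihood-global} follows from the two minimax propositions. First, I record the trivial inequality
\begin{align}
  \mathcal V_{n,K}^{H}\leq\int_\Theta\calR_n(\delta_n^\star,P_\theta)\,\rmd H(\theta)
\end{align}
for every $n$ and every $H\in\mathfrak H$. It holds because $\delta_n^\star\in\mathfrak D_{n,K}$ is one admissible rule in the infimum \eqref{eq:bayes-value}. Multiplying by $n$ and combining Proposition~\ref{prop:regular-bayes-lower} (liminf of $n\mathcal V_{n,K}^H$ at least $C_H$) with Proposition~\ref{prop:regular-bayes-upper} (limsup of $n$ times the integrated regret of $\delta_n^\star$ at most $C_H$) gives
\begin{align}
  C_H\leq\liminf_{n\to\infty} n\mathcal V_{n,K}^{H}\leq\limsup_{n\to\infty} n\mathcal V_{n,K}^{H}\leq\limsup_{n\to\infty}n\int_\Theta\calR_n(\delta_n^\star,P_\theta)\,\rmd H(\theta)\leq C_H.
\end{align}
Likewise, $\liminf n\int\calR_n(\delta_n^\star,P_\theta)\,\rmd H\geq\liminf n\mathcal V^H_{n,K}\geq C_H$. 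Both limits therefore exist and equal $C_H=C_{\mathrm{sel}}(H)+C_{\mathrm{alloc}}(H)$, which is \eqref{eq:likelihood-bayes}.

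The second claim is independence from $H$. Algorithm~\ref{rule:eurr} takes as inputs only the data, the likelihood family on $\Theta^+$, $U$, the classes, $\sigma_U$, $H_U$, $\Lambda$, and the fixed tuning sequences $k_n,b_n,\tau_n,\epsilon_n$. None of these involves $h$. Hence the single sequence $(\delta_n^\star)\in\mathfrak D_K$ works for every $H\in\mathfrak H$, which matches the quantifier order \eqref{eq:bayes-quantifier-setup}. I need to check that Proposition~\ref{prop:regular-bayes-upper} is indeed stated for this fixed rule for each $H$, and it is.

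For the one-class statement, the sum in \eqref{eq:bayes-selection-constant} ranges over unordered pairs of distinct classes. With $|\mathfrak M_K|=1$ there are no such pairs, the tie sets $\calT_{m\ell}$ do not arise, and $C_{\mathrm{sel}}(H)=0$. Also $m(\theta)\equiv1$, so by \eqref{eq:likelihood-constants} the integrand of \eqref{eq:bayes-allocation-constant} is $c_1(\theta)h(\theta)$, and $C_{\mathrm{alloc}}(H)=\int c_1\,\rmd H$. The work here lies inside the two propositions under Assumption~\ref{ass:regular-one-class}, which are taken as given. The main point is the Bayes lower bound's use of an efficient posterior covariance $V^{\mathrm{eff}}_1$. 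Taking both propositions as given, the only step that needs care is confirming that they apply under the same hypotheses and normalization $a_n=n$ in both the multiple-class and one-class settings. Both are stated for either assumption, so the argument goes through without further obstacles.
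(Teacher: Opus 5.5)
Your proposal is correct and follows the paper's own argument. The theorem is obtained by sandwiching: Proposition~\ref{prop:regular-bayes-lower} bounds $n\mathcal V_{n,K}^H$ from below, and Proposition~\ref{prop:regular-bayes-upper} bounds it from above through the EUR rule, which is one admissible rule in the infimum. Your reading of $C_{\mathrm{sel}}(H)=0$ and $C_{\mathrm{alloc}}(H)=\int_\Theta c_1\,\rmd H$ in the one-class case, and of the $H$-independence of Algorithm~\ref{rule:eurr}, is likewise what the statement requires.
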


Theorem~\ref{thm:likelihood-bayes} follows because Proposition~\ref{prop:regular-bayes-lower} bounds the infimum in \eqref{eq:bayes-value} from below and Proposition~\ref{prop:regular-bayes-upper} bounds it from above through the EUR rule. Together with Theorem~\ref{thm:likelihood-global}, it establishes simultaneous asymptotic minimax and Bayes optimality for a single rule defined at every sample size.

\subsection{Discussion}
The two criteria weight difficult comparisons differently. A parameter close to a tie can determine the minimax constant although its shrinking neighborhood has small probability under a continuous prior that does not depend on $n$, and that probability supplies the extra factor $n^{-1/2}$ in the Bayes selection term.

\citet[Theorem~3]{Lai1987adaptivetreatment} derives matching leading bounds for cumulative regret integrated against a prior on the arm parameters, and, as in this section, the prior does not depend on the horizon. \citet[Sections~2.2.1 and~5]{Adusumilli2023risk} instead places a prior that does not depend on $n$ on $\sqrt n(\theta-\theta_0)$. The priors of Section~\ref{sec:distribution} concentrate at zero more slowly than $n^{-1/2}$.

The allocation term remains even without ties, because it is positive whenever estimating the return parameter changes the optimal weights in a direction with positive utility curvature. For instance, in the return model of Section~\ref{sec:minimax} with both contributions, a prior with positive density near $\theta=0$ assigns a nonzero leading cost both to the choice between the two opposite positions and to the position in the third asset.

\section{Relationship to the Existing Portfolios and Their Extensions}
\label{sec:existing-portfolios}
Two properties link the EUR rule to the mean--variance and risk-parity portfolios. First, when expected excess returns lie in neighborhoods of zero that shrink faster than $n^{-1/4}$ but more slowly than $n^{-1/2}$, the EUR rule and the sample mean--variance portfolio attain the same constants. Second, when standardized returns and feasible standardized holdings are invariant under a transitive group of asset permutations, the EUR rule holds equal standardized positions, which is the risk-parity form. We then quantify the expected utility lost when these conditions fail.

\subsection{Portfolios from a Small Optimal Risky Investment}
\label{sec:distribution}
We give conditions under which a mean--variance portfolio, which uses only the first two moments of returns, preserves the leading constant in expected regret.

\paragraph{Conditions on the utility.}
In this section, the investor can also hold a riskless cash asset besides the $K$ risky assets, and we measure wealth in units of the cash account, so that the cash return is zero. The vector $R$ then contains the returns of the risky assets in excess of the cash return, $w_a$ is the amount invested in risky asset $a$, and the remaining wealth $x_0-\sum_{a=1}^Kw_a$ is held in cash. Terminal wealth is therefore $x_0+w^\top R$, as in Section~\ref{sec:model}, and the portfolio $w=0$ holds all wealth in cash. The feasible set $\calW_K$ is compact and convex with zero in its interior, so small long and short positions in each risky asset are feasible.

Let $P_0$ be a bounded baseline return distribution satisfying $\E_0[R]=0$ and $\Sigma_0=\E_0[RR^\top]\succ0$. Define an exponential family by
\begin{align}
  \frac{\rmd P_\mu}{\rmd P_0}(r)
  &=\exp\left(\eta(\mu)^\top r-\psi(\eta(\mu))\right),
  &\psi(\eta)&=\log\E_0[\exp(\eta^\top R)],
  \label{eq:mv-mean-family}
\end{align}
where, for $\mu$ in a sufficiently small neighborhood of zero that does not depend on $n$, $\eta(\mu)$ solves $\nabla\psi(\eta)=\mu$, and this solution exists and is smooth because $\nabla^2\psi(0)=\Sigma_0\succ0$. Under $P_\mu$, the mean is $\mu$ and the covariance is $\Sigma_\mu=\nabla^2\psi(\eta(\mu))$. The observations are independent draws from $P_\mu$, with $P_0$ known and $\mu$ unknown. Proposition~\ref{prop:mv-localization} and Theorem~\ref{thm:local-mean-variance} use the following smoothness conditions on the utility near initial wealth.

\begin{assumption}[Smooth utility near a cash allocation]
\label{ass:mv-local-utility}
The utility function is increasing and concave and belongs to $C^3(J)$ on an open interval $J$ containing the compact set of attainable values of terminal wealth $x_0+w^\top R$. At initial wealth, $u_1=U'(x_0)>0$ and $b=-U''(x_0)>0$. The number of assets, the baseline distribution, the feasible set, and the utility do not depend on the sample size.
\end{assumption}

Write $A=b/u_1$ for the absolute risk aversion of $U$ at $x_0$. Since cash is the unique optimum at zero expected excess returns, the next result shows that optimal positions remain small when the mean is small.

\begin{proposition}[Optimal allocations near cash]
\label{prop:mv-localization}
Under \eqref{eq:mv-mean-family} and Assumption~\ref{ass:mv-local-utility}, the oracle $w^\star(\mu)$ is unique and interior for all sufficiently small $\norm{\mu}_2$. As $\mu\to0$, we have
\begin{align}
  \norm{w^\star(\mu)}_2&=O(\norm{\mu}_2),
  &w^\star(\mu)&=A^{-1}\Sigma_\mu^{-1}\mu+O(\norm{\mu}_2^2).
  \label{eq:mv-oracle-localization}
\end{align}
Moreover, for feasible $w$ near zero, we have
\begin{align}
  V_{P_\mu}(w)
  &=U(x_0)+u_1w^\top\mu
    -\frac b2 w^\top\Sigma_\mu w
    -\frac b2(w^\top\mu)^2+O(\norm{w}_2^3),
  \label{eq:mv-local-objective}
\end{align}
with constants independent of $\mu$ throughout a sufficiently small neighborhood of zero.
\end{proposition}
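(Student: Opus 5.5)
I would begin with the expansion \eqref{eq:mv-local-objective}, since it yields the other claims. Taylor-expand $U$ around $x_0$ to third order with the integral remainder: $U(x_0+w^\top R)=U(x_0)+u_1w^\top R-\frac b2(w^\top R)^2+\rho(w,R)$, where $\abs{\rho}\leq \frac16\sup_J\abs{U'''}\,\abs{w^\top R}^3$. Because $P_0$ is bounded and each $P_\mu$ is absolutely continuous with respect to $P_0$, $R$ lies in a fixed compact set, and the remainder is $O(\norm{w}_2^3)$ uniformly in $\mu$. Taking expectations under $P_\mu$ and using $\E_\mu[R]=\mu$ and $\E_\mu[RR^\top]=\Sigma_\mu+\mu\mu^\top$ gives \eqref{eq:mv-local-objective} directly.

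For uniqueness and interiority, the Hessian $\nabla_w^2V_{P_\mu}(w)=\E_\mu[U''(x_0+w^\top R)RR^\top]$ is continuous in $(w,\mu)$, and at $(0,0)$ it equals $-b\Sigma_0\prec0$. There is therefore a neighborhood $\norm{w}_2\leq r$, $\norm{\mu}_2\leq r$ on which $V_{P_\mu}$ is strongly concave in $w$, with curvature at least $\lambda>0$. On the rest of $\calW_K$, concavity of $U$ makes $V_{P_\mu}$ concave but not necessarily strict, so global uniqueness needs a separate argument. I would argue as follows. For $\mu=0$, Jensen's inequality and $U'(x_0)>0$ give $V_{P_0}(w)\leq U(x_0+w^\top\E_0R)=U(x_0)$, with equality only at $w=0$ near zero by strong concavity. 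Concavity along rays then shows $V_{P_0}(w)<U(x_0)$ for every $w\neq0$ in $\calW_K$. By compactness and continuity in $\mu$ (the densities $\rmd P_\mu/\rmd P_0$ converge uniformly), $\sup_{\norm{w}_2\geq r/2}V_{P_\mu}(w)<V_{P_\mu}(0)$ for small $\mu$. Hence every maximizer lies in the ball of radius $r/2$, where strong concavity gives uniqueness. Because zero is interior to $\calW_K$, the maximizer is interior and satisfies $\nabla V_{P_\mu}(w^\star)=0$.

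For the rates, the first-order condition reads $u_1\mu-b\Sigma_\mu w-b(w^\top\mu)\mu+O(\norm{w}_2^2)=0$. Differentiating the expansion, with the gradient remainder also $O(\norm w^2)$ by the same Taylor bound applied to $U'$, gives this form. Strong concavity yields $\lambda\norm{w^\star-0}_2\leq\norm{\nabla V_{P_\mu}(0)}_2=u_1\norm{\mu}_2$, so $\norm{w^\star}_2=O(\norm{\mu}_2)$. Substituting back, $(w^\top\mu)\mu$ and the remainder are $O(\norm{\mu}_2^2)$. Solving gives $w^\star=\frac{u_1}{b}\Sigma_\mu^{-1}\mu+O(\norm{\mu}_2^2)=A^{-1}\Sigma_\mu^{-1}\mu+O(\norm\mu_2^2)$, using that $\Sigma_\mu^{-1}$ is uniformly bounded near zero.

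I expect the main obstacle to be the global step: excluding maximizers far from zero and keeping every constant uniform in $\mu$. My approach is to rely on the strict inequality at $\mu=0$ together with uniform convergence of the exponential tilt on the bounded support.
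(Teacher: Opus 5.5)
Your proof is correct and follows essentially the paper's route: cash is the strict maximizer at $\mu=0$, continuity and compactness confine every maximizer for small $\mu$ to a ball where the Hessian is close to $-b\Sigma_0$, and substituting $w^\star=O(\norm{\mu}_2)$ into the Taylor expansion of the gradient gives \eqref{eq:mv-oracle-localization}. The differences are minor. You prove strict optimality at $\mu=0$ by local strong concavity plus concavity along rays rather than by the strict tangent-line inequality for $U$, and you bound $\norm{w^\star(\mu)}_2$ by strong monotonicity of the gradient rather than by the implicit function theorem; the paper also uses the latter to get the $C^2$ oracle map it needs in Appendix~\ref{app:local-mv}. One small correction: your remainder constant should be the supremum of $\abs{U'''}$ over a compact neighborhood of $x_0$, not over the open interval $J$, where it may be infinite.
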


The first two nonconstant terms in \eqref{eq:mv-local-objective} form the usual mean--variance objective, and at $w=O(\norm{\mu}_2)$ the squared-mean term is $O(\norm{\mu}_2^4)$. Regret is controlled by the expansion of the optimal weights in \eqref{eq:mv-oracle-localization}.

\paragraph{Attainment of the same constants.}
We let the means shrink to zero with $n$ at a rate that keeps this approximation accurate. Let $r_{\mu,n}>0$ satisfy
\begin{align}
  r_{\mu,n}\longrightarrow0,\qquad
  \sqrt n\,r_{\mu,n}\longrightarrow\infty,\qquad
  \sqrt n\,r_{\mu,n}^2\longrightarrow0,
  \label{eq:mv-joint-limit}
\end{align}
and set $\Theta_n=\{\mu\in\bbR^K:\norm{\mu}_2\leq r_{\mu,n}\}$. By the second condition, the means in $\Theta_n$ range over a region much larger than the $n^{-1/2}$ error of the estimated mean, and by the third, the $O(r_{\mu,n}^2)$ gap between $w^\star(\mu)$ and $A^{-1}\Sigma_\mu^{-1}\mu$ is smaller than that error.

Let $\bar R_n=n^{-1}\sum_{i=1}^nR_i$ and $\widehat\Sigma_n=n^{-1}\sum_{i=1}^n(R_i-\bar R_n)(R_i-\bar R_n)^\top$. Define the sample mean--variance portfolio by
\begin{align}
  \widehat w_n^{\mathrm{MV}}
  &\in\argmax_{w\in\calW_K}\widehat M_n(w),
  &\widehat M_n(w)
  &=U(x_0)+u_1w^\top\bar R_n-\frac b2w^\top\widehat\Sigma_n w.
  \label{eq:sample-mean-variance-portfolio}
\end{align}
With probability tending to one, the maximizer is the interior portfolio $A^{-1}\widehat\Sigma_n^{-1}\bar R_n$, whose composition $\widehat\Sigma_n^{-1}\bar R_n$ does not involve the utility, which affects the portfolio only through the scale $A^{-1}$.

The one-class EUR rule computes the maximum likelihood estimate $\widehat\mu(D)$ over a sufficiently small compact mean-parameter set that contains zero in its interior and does not depend on $n$, and outputs $w^\star(\widehat\mu(D))$. Because the family is indexed by its mean, $\widehat\mu(D)=\bar R_n$ whenever $\bar R_n$ lies in the interior of that set (Appendix~\ref{app:local-mv}).

Both rules attain the constant
\begin{align}
  C_0=\frac{Ku_1^2}{2b},
  \label{eq:mv-constant}
\end{align}
as the next theorem shows.

For Bayes evaluation, let $f$ be a $C^1$ probability density on the unit ball in $\bbR^K$ that does not depend on $n$ and is positive in the interior and zero on the boundary. We require the following matrix to be finite:
\begin{align}
  \mathcal I(f)
  =\int_{\norm{v}_2<1}
       \frac{\nabla f(v)\nabla f(v)^\top}{f(v)}\,\rmd v.
  \label{eq:mv-prior-information}
\end{align}
The prior $H_n$ has density $h_n(\mu)=r_{\mu,n}^{-K}f(\mu/r_{\mu,n})$ on $\Theta_n$, so $\mu/r_{\mu,n}$ has density $f$ at every sample size while $H_n$ concentrates at zero. Define $\mathcal V_{n,\mathrm{loc}}^{\mathrm{mm}}=\inf_{\delta_n\in\mathfrak D_{n,K}}\sup_{\mu\in\Theta_n}\calR_n(\delta_n,P_\mu)$ and $\mathcal V_{n,\mathrm{loc}}^{H_n}=\inf_{\delta_n\in\mathfrak D_{n,K}}\int_{\Theta_n}\calR_n(\delta_n,P_\mu)\,\rmd H_n(\mu)$, with infima over all measurable randomized portfolio choice rules.

\begin{theorem}[Local minimax and Bayes limits of the sample mean--variance portfolio]
\label{thm:local-mean-variance}
Under \eqref{eq:mv-mean-family}, Assumption~\ref{ass:mv-local-utility}, and \eqref{eq:mv-joint-limit}, let $\delta_n^{\mathrm{MV}}$ be the rule in \eqref{eq:sample-mean-variance-portfolio}. Then, for every density $f$ satisfying the conditions above, the constant $C_0$ in \eqref{eq:mv-constant} satisfies
\begin{align}
  \lim_{n\to\infty}n\sup_{\mu\in\Theta_n}
        \calR_n(\delta_n^{\mathrm{MV}},P_\mu)
  &=\lim_{n\to\infty}n\mathcal V_{n,\mathrm{loc}}^{\mathrm{mm}}=C_0,
  \label{eq:local-mv-minimax}\\
  \lim_{n\to\infty}n\int_{\Theta_n}
        \calR_n(\delta_n^{\mathrm{MV}},P_\mu)\,\rmd H_n(\mu)
  &=\lim_{n\to\infty}n\mathcal V_{n,\mathrm{loc}}^{H_n}=C_0.
  \label{eq:local-mv-bayes}
\end{align}
The one-class EUR rule has the same two limits. Its output $\widehat w_n^\star$ also satisfies
\begin{align}
  \sup_{\mu\in\Theta_n}
      n\E_\mu\left[\norm{\widehat w_n^\star-\widehat w_n^{\mathrm{MV}}}_2^2\right]
  \longrightarrow0.
  \label{eq:eurr-mv-equivalence}
\end{align}
Neither rule uses $f$ or $H_n$.
\end{theorem}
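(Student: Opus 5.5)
The plan is to reduce both rules to a common plug-in form, compute the upper bound directly, and obtain the matching lower bound from a van Trees argument with a prior concentrated near zero.

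\textbf{Reduction and upper bound.} Let $\Theta_n$ shrink as in \eqref{eq:mv-joint-limit}. By Proposition~\ref{prop:mv-localization}, $w^\star(\mu)$ is unique and interior, with $w^\star(\mu)=A^{-1}\Sigma_\mu^{-1}\mu+O(r_{\mu,n}^2)$. The Hessian of $V_{P_\mu}$ at $w^\star$ is $-b\Sigma_0+O(r_{\mu,n})$. By Proposition~\ref{prop:portfolio-approximation}, the regret of any rule whose output stays in a small neighborhood of zero is therefore
\[
\tfrac b2\,\E_\mu\bigl[(\widehat w-w^\star)^\top\Sigma_0(\widehat w-w^\star)\bigr]
\]
up to $o(n^{-1})$, after controlling cubic remainders with fourth moments.

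For the sample mean--variance rule, on an event of probability $1-O(e^{-cn})$ (bounded returns, Hoeffding), the output is $A^{-1}\widehat\Sigma_n^{-1}\bar R_n$. Its difference from $A^{-1}\Sigma_\mu^{-1}\bar R_n$ is $O(\norm{\widehat\Sigma_n-\Sigma_\mu}\,\norm{\bar R_n})$, whose second moment is
\[
O\bigl(n^{-1}(r_{\mu,n}^2+n^{-1})\bigr)=o(n^{-1}).
\]
The deterministic gap is $A^{-1}\Sigma_\mu^{-1}\mu-w^\star=O(r_{\mu,n}^2)=o(n^{-1/2})$ by $\sqrt n\,r_{\mu,n}^2\to0$. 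Hence $\widehat w-w^\star=A^{-1}\Sigma_\mu^{-1}(\bar R_n-\mu)+e_n$ with $n\E\norm{e_n}^2\to0$ uniformly over $\Theta_n$. The leading regret is
\[
\tfrac b2 A^{-2}\operatorname{tr}\bigl(\Sigma_0\Sigma_\mu^{-1}\Sigma_\mu\Sigma_\mu^{-1}\bigr)/n\to \frac{Ku_1^2}{2b}\frac1n=\frac{C_0}{n},
\]
uniformly, since $\Sigma_\mu\to\Sigma_0$. This gives the upper halves of \eqref{eq:local-mv-minimax} and \eqref{eq:local-mv-bayes}; Bayes follows because the bound is uniform.

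For the EUR rule, $\widehat\mu=\bar R_n$ except on an exponentially small event. Then $w^\star(\bar R_n)-w^\star(\mu)=\dot w^\star(\mu)(\bar R_n-\mu)+O(\norm{\bar R_n-\mu}^2)$, with $\dot w^\star(\mu)=A^{-1}\Sigma_0^{-1}+O(r_{\mu,n}+\norm{\bar R_n})$. Comparing with the sample mean--variance expansion gives \eqref{eq:eurr-mv-equivalence}. Proposition~\ref{prop:portfolio-approximation} then transfers the limits to the EUR rule.

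\textbf{Lower bound.} Since the minimax value is at least the Bayes value for $H_n$, it suffices to show $\liminf n\mathcal V^{H_n}_{n,\mathrm{loc}}\ge C_0$. Any rule may be projected into a neighborhood of $0$ without increasing regret by more than $o(n^{-1})$ (regret is locally strongly concave, and the posterior concentrates). Then
\[
\text{regret}\ge \tfrac{b-o(1)}2\,\E\bigl[(\widehat w-w^\star(\mu))^\top\Sigma_0(\widehat w-w^\star(\mu))\bigr]-o(n^{-1}).
\]
Apply the multivariate van Trees inequality to the functional $\psi(\mu)=\Sigma_0^{1/2}w^\star(\mu)$, whose Jacobian is $A^{-1}\Sigma_0^{-1/2}+o(1)$ on $\Theta_n$. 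The Fisher information is $n\Sigma_\mu^{-1}=n\Sigma_0^{-1}(1+o(1))$, and the prior information is $r_{\mu,n}^{-2}\mathcal I(f)$, which is $o(n)$ because $\sqrt n\,r_{\mu,n}\to\infty$. This yields
\[
\E\norm{\widehat\psi-\psi}^2\ge \frac{\operatorname{tr}(A^{-2}\Sigma_0^{-1/2}\Sigma_0\Sigma_0^{-1/2})^2/\ldots}{\ldots},
\]
that is, the trace form gives $\ge A^{-2}K/n\,(1+o(1))$. Multiplying by $b/2$ gives
\[
\frac{bK}{2A^2n}=\frac{Ku_1^2}{2bn}=\frac{C_0}{n}.
\]
Finiteness of $\mathcal I(f)$ and vanishing of $f$ on the boundary are exactly the van Trees conditions.

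\textbf{Main obstacle.} The delicate step is the lower bound. It requires justifying the quadratic reduction for arbitrary randomized rules, whose outputs need not stay near zero. The plan is to replace $\widehat w$ by its projection onto a shrinking ball of radius $Cr_{\mu,n}$. Because $w^\star(\mu)$ lies well inside this ball and $V_{P_\mu}$ is concave with Hessian bounded below near zero, the projection can only reduce regret up to $o(n^{-1})$.

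A second concern is uniformity of the $O(\cdot)$ remainders over $\Theta_n$ in the upper bound. This is handled by the constants in Proposition~\ref{prop:mv-localization}, which are independent of $\mu$, together with bounded support.
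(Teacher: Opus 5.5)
Your proposal is correct and follows the paper's proof essentially step for step. For both rules you use the plug-in expansion $\widehat w-w^\star(\mu)=A^{-1}\Sigma_\mu^{-1}(\bar R_n-\mu)+e_n$ with $n\E\|e_n\|_2^2\to0$ uniformly over $\Theta_n$. For the lower bound you use a radial projection onto a shrinking ball, a local quadratic lower bound on regret, and the matrix van Trees inequality, in which the prior information $r_{\mu,n}^{-2}\mathcal I(f)$ is $o(n)$.

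There are two minor differences. First, you project onto a ball of radius $Cr_{\mu,n}$ rather than the paper's $\sqrt{r_{\mu,n}}$; either works. For $C$ large the radial derivative at the boundary is negative, so the projection weakly increases expected utility exactly, and the appeal to posterior concentration is unnecessary. Second, your trace display is garbled, but its intended conclusion $A^{-2}K/n\,(1+o(1))$ is correct.
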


In the proof (Appendix~\ref{app:local-mv}), the mean--variance approximation changes expected regret by $O(r_{\mu,n}^4)=o(n^{-1})$, and an integration-by-parts inequality for the joint likelihood and prior gives the lower bound $C_0$ for every rule. Since $\sqrt n\,r_{\mu,n}\to\infty$, the prior contribution $r_{\mu,n}^{-2}\mathcal I(f)$ to the second-moment matrix $\mathcal I_n$ of the joint score in \eqref{eq:mv-joint-information} vanishes after division by $n$, so $C_0$ is the same for every admissible $f$, unlike the local problem of \citet[Sections~2.2.1 and~5]{Adusumilli2023risk}.

\subsection{Higher Moments in the Portfolio Rule}
Whether a rule must use a moment beyond the second depends on how fast the expected excess returns shrink to zero with $n$. We show this for the third moment and then consider return distributions whose lower moments match exactly.
\paragraph{The third moment at a slower rate.}
Continue to use the exponential family \eqref{eq:mv-mean-family} and Assumption~\ref{ass:mv-local-utility}, and suppose additionally that $U\in C^4(J)$. For $v\in\bbR^K$, define
\begin{align}
  \beta(v)&=\frac{U'''(x_0)}{2b}\Sigma_0^{-1}
       \E_0\left[R\left(A^{-1}R^\top\Sigma_0^{-1}v\right)^2\right],
  &M_3(v)&=\frac b2 \beta(v)^\top\Sigma_0\beta(v).
  \label{eq:third-moment-displacement}
\end{align}
The vector $\beta(v)$ depends on the third moments of the baseline returns, and the population-optimal weights satisfy
\begin{align}
  w^\star(\mu)
  =A^{-1}\Sigma_\mu^{-1}\mu+\beta(\mu)+O(\norm{\mu}_2^3)
  \quad\text{as }\mu\to0.
  \label{eq:third-moment-oracle}
\end{align}
Since $\beta$ is quadratic, a mean of order $n^{-1/4}$ makes this displacement of order $n^{-1/2}$, the same order as the error in estimated weights.

\begin{theorem}[Third moments in the leading expected regret]
\label{thm:third-moment-regret}
Under the conditions just stated, let $\Theta_n=\{\mu\in\bbR^K:\norm{\mu}_2\leq n^{-1/4}\}$ and use the two rules defined in Section~\ref{sec:distribution}: the sample mean--variance portfolio and the one-class EUR rule. Then, as $n\to\infty$, we have
\begin{align}
  \sup_{\norm{v}_2\leq1}
  \left|n\calR_n(\delta_n^{\mathrm{MV}},P_{n^{-1/4}v})-C_0-M_3(v)\right|
  &\longrightarrow0,
  \label{eq:third-moment-mv-regret}\\
  \sup_{\norm{v}_2\leq1}
  \left|n\calR_n(\delta_n^\star,P_{n^{-1/4}v})-C_0\right|
  &\longrightarrow0.
  \label{eq:third-moment-eurr-regret}
\end{align}
Let $f$ satisfy the density conditions in Section~\ref{sec:distribution}, and let $H_n$ have density $h_n(\mu)=n^{K/4}f(n^{1/4}\mu)$ on $\Theta_n$. For this evaluation set and prior, both $n\mathcal V_{n,\mathrm{loc}}^{\mathrm{mm}}$ and $n\mathcal V_{n,\mathrm{loc}}^{H_n}$ converge to $C_0$. Consequently, the sample mean--variance portfolio has the additional worst-case constant $\max_{\norm{v}_2\leq1}M_3(v)$ and the additional Bayes constant $\int_{\norm{v}_2\leq1}M_3(v)f(v)\,\rmd v$.
\end{theorem}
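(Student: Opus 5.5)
The plan is to reduce both regrets to a local quadratic loss at the oracle, using Proposition~\ref{prop:portfolio-approximation}, and then to compute second moments of the weight errors at the scale $\mu=n^{-1/4}v$.

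\textbf{Step 1: local expansions.} Since $\norm{w^\star(\mu)}_2=O(n^{-1/4})$, the tangent curvature is $J_{P_\mu}(w)=b\Sigma_0+O(n^{-1/4})$ uniformly near the oracle. The regret of any rule whose output is within $o_p(n^{-1/4})$ of $w^\star(\mu)$, with uniformly bounded fourth moments of $\sqrt n(\widehat w-w^\star)$, then satisfies
\begin{align*}
  n\calR_n=\frac b2\,\E\bigl[\sqrt n(\widehat w-w^\star)^\top\Sigma_0\sqrt n(\widehat w-w^\star)\bigr]+o(1).
\end{align*}
The cubic remainder $O(\norm{\widehat w-w^\star}^3)$ and the error in the curvature contribute $o(n^{-1})$. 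I will justify \eqref{eq:third-moment-oracle} by differentiating the first-order condition $\E_\mu[U'(x_0+w^\top R)R]=0$ to second order. For this, expand $U'$ to third order, so that $U\in C^4$ controls the remainder, and use the exponential tilt $\E_\mu[g]=\E_0[g]+\E_0[gR^\top]\eta+O(\norm\eta^2)$ with $\eta(\mu)=\Sigma_0^{-1}\mu+O(\norm\mu^2)$. The term $\tfrac12U'''(x_0)\E_0[R(w^\top R)^2]$ evaluated at $w=A^{-1}\Sigma_0^{-1}\mu$ gives $b\,\beta(\mu)$ after dividing by $b\Sigma_0$. I will also check the sign convention there.

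\textbf{Step 2: the EUR rule.} The EUR output is $w^\star(\bar R_n)$ with probability $1-o(n^{-2})$, by the mean parametrization of the family and exponential bounds on $\bar R_n$. The map $w^\star$ is smooth with Jacobian $A^{-1}\Sigma_0^{-1}+O(n^{-1/4})$ on $\Theta_n$. A delta-method argument, with a remainder controlled by fourth moments of $\bar R_n-\mu$, gives $\sqrt n(\widehat w^\star-w^\star)\approx A^{-1}\Sigma_0^{-1}\sqrt n(\bar R_n-\mu)$. Its covariance is $A^{-2}\Sigma_0^{-1}+o(1)$, so the regret is $\tfrac b2A^{-2}\operatorname{tr}(I_K)=Ku_1^2/(2b)=C_0$, uniformly in $\norm v_2\le1$. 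This proves \eqref{eq:third-moment-eurr-regret}.

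\textbf{Step 3: the mean--variance portfolio.} Its output is $A^{-1}\widehat\Sigma_n^{-1}\bar R_n$. Here $\widehat\Sigma_n-\Sigma_\mu=O_p(n^{-1/2})$ multiplies $\bar R_n=O(n^{-1/4})$, so it contributes only $O_p(n^{-3/4})$. Hence
\begin{align*}
  \widehat w^{\mathrm{MV}}-w^\star=A^{-1}\Sigma_\mu^{-1}(\bar R_n-\mu)-\beta(\mu)+o_{L^2}(n^{-1/2}).
\end{align*}
The cross term has mean $o(n^{-1})$ because $\bar R_n-\mu$ is centered. By quadratic homogeneity, $\beta(n^{-1/4}v)=n^{-1/2}\beta(v)$. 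The regret is therefore $C_0/n+\tfrac b2\beta(v)^\top\Sigma_0\beta(v)/n+o(1/n)$, which is \eqref{eq:third-moment-mv-regret}. The $O(\norm\mu^3)=O(n^{-3/4})$ remainder in \eqref{eq:third-moment-oracle} is negligible.

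\textbf{Step 4: lower bounds.} Both limits are bounded above by $C_0$ through the EUR rule, using Step 2 for the supremum and integrating it against $H_n$. For the lower bound, it suffices to treat the Bayes risk under $H_n$, since the minimax value dominates it. I plan to repeat the van Trees (integration-by-parts) argument of Theorem~\ref{thm:local-mean-variance}. For any rule, the Bayes risk is at least $\tfrac b2$ times the Bayes mean squared error of estimating $w^\star(\mu)$ in the $\Sigma_0$-norm, up to $o(n^{-1})$. Truncating to outputs near the oracle costs nothing, since projecting onto a shrinking ball only lowers the loss at leading order. Van Trees then gives the bound $\operatorname{tr}\bigl(\Sigma_0\,\E[\nabla w^\star]\mathcal I_n^{-1}\E[\nabla w^\star]^\top\bigr)$ with $\mathcal I_n=n\Sigma_0^{-1}+n^{1/2}\mathcal I(f)+\dots$. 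Since $\nabla w^\star=A^{-1}\Sigma_0^{-1}+O(n^{-1/4})$ and the prior term is $O(n^{1/2})=o(n)$, multiplying by $n$ yields $C_0$. This matches because $\sqrt n\,n^{-1/4}\to\infty$. The additional constants for the mean--variance portfolio are then the maximum and the $f$-integral of $M_3$.

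\textbf{Main obstacle.} I expect the hardest step to be the uniform remainder control. The loss expansion and the delta method must hold with $o(1)$ errors uniformly over $\norm v\le1$ after multiplication by $n$. This requires fourth-moment bounds and handling the event on which $\bar R_n$ or $\widehat\Sigma_n$ leaves the region where the interior formulas apply. I would handle that event with Hoeffding bounds, which are available because $P_0$ is bounded, together with boundedness of the regret.
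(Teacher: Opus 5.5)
Your proposal is correct and follows essentially the same route as the paper's proof. The shared steps are: expanding the first-order condition through $U'''(x_0)$ to obtain \eqref{eq:third-moment-oracle}; linearizing both rules in $\sqrt n(\bar R_n-\mu)$, with the mean--variance rule alone carrying the deterministic bias $-\beta(v)$; a quadratic regret expansion at the interior oracle; and the van Trees bound of Appendix~\ref{app:local-mv}, in which the prior information $n^{1/2}\mathcal I(f)$ is $o(n)$. There is one small slip: the cubic term evaluated at $w=A^{-1}\Sigma_0^{-1}\mu$ equals $b\Sigma_0\beta(\mu)$, so dividing by $b\Sigma_0$ gives $\beta(\mu)$, not $b\,\beta(\mu)$.
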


The proof is in Appendix~\ref{app:critical-moments}. If $\beta$ is nonzero in some direction, both additional constants are positive because $f$ is positive inside the unit ball.

For example, take $K=1$, $x_0=1$, $U(x)=\log x$, and baseline returns $R=-1$ and $R=2$ with probabilities $2/3$ and $1/3$. Then $P_\mu(R=2)=(1+\mu)/3$ and $\Sigma_\mu=2+\mu-\mu^2$. On a feasible interval containing a neighborhood of cash, the oracle is $w^\star(\mu)=\mu/2$, while the population mean--variance choice is $\mu/(2+\mu-\mu^2)$ for small $\mu$. Thus, $\beta(v)=v^2/4$, $M_3(v)=v^4/16$, and $C_0=1/2$, although the return family has only two support points.

\paragraph{Portfolio preferences beyond a local approximation.}
Beyond special cases such as quadratic utility, or exponential utility with Gaussian returns, higher moments can change which portfolio is optimal even when means and covariances agree.

\begin{proposition}[Finite-support moment separation]
\label{prop:moment-separation}
Let $L\geq1$, and suppose that $U$ is continuous on an interval $J$ but is not a polynomial of degree at most $L$ on $J$. Then, there are probability distributions $Q_+$ and $Q_-$ supported on the same finite subset of $J$ such that
\begin{align}
  \E_{Q_+}[X^k]
  =
  \E_{Q_-}[X^k]
  \qquad
  \text{for }k=0,1,\ldots,L,
  \label{eq:matching-moments}
\end{align}
but
\begin{align}
  \E_{Q_+}[U(X)]
  \neq
  \E_{Q_-}[U(X)].
  \label{eq:utility-separation}
\end{align}
For $L=2$, the two distributions have the same mean and variance.
\end{proposition}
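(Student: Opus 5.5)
The plan is a linear-algebra (duality) argument on a finite grid of points. Fix $N=L+2$ distinct points $x_0<x_1<\cdots<x_{L+1}$ in $J$, to be chosen below. Consider the $(L+1)\times(L+2)$ Vandermonde-type matrix $M$ with entries $M_{kj}=x_j^k$ for $k=0,\ldots,L$. Its rank is $L+1$, so its kernel is one-dimensional, spanned by a vector $c\in\bbR^{L+2}$. Since $\sum_j c_j=0$ (the $k=0$ row) and $c\neq0$, $c$ has both positive and negative entries.

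For $\varepsilon>0$ small, define $Q_\pm=\sum_j(\pi_j\pm\varepsilon c_j)\delta_{x_j}$, where $\pi$ is the uniform distribution on the points. For $\varepsilon$ small both are probability distributions supported on the same finite set (all weights are positive). By construction $Mc=0$, which gives \eqref{eq:matching-moments}. The utility gap is
\[
\E_{Q_+}[U(X)]-\E_{Q_-}[U(X)]=2\varepsilon\sum_j c_jU(x_j),
\]
so it suffices to choose the points so that $\sum_j c_jU(x_j)\neq0$.

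Up to scaling, the kernel vector is explicit: $c_j=\prod_{i\neq j}(x_j-x_i)^{-1}$. This is the standard identity that the divided difference of order $L+1$ annihilates polynomials of degree at most $L$. Hence $\sum_jc_jU(x_j)=U[x_0,\ldots,x_{L+1}]$, the $(L+1)$-th divided difference of $U$.

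The remaining, and main, step is to show that if every such divided difference of $U$ over distinct points of $J$ vanishes, then $U$ is a polynomial of degree at most $L$ on $J$, contradicting the hypothesis. I would argue this directly rather than by differentiation, since $U$ is only assumed continuous. Fix distinct $x_0,\ldots,x_L\in J$ and let $p$ be the interpolating polynomial of degree at most $L$. For any other $x\in J$, the vanishing of $U[x_0,\ldots,x_L,x]$ together with Newton's interpolation formula gives
\[
U(x)=p(x)+U[x_0,\ldots,x_L,x]\prod_{i}(x-x_i)=p(x).
\]
So $U=p$ on $J$, which is the contradiction. (Continuity is not even needed here; finiteness of $U$ suffices.) Thus some choice of $L+2$ distinct points gives a nonzero divided difference, and with these points and small $\varepsilon$ we obtain \eqref{eq:utility-separation}.

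For $L=2$, matching of the moments of order $0$, $1$ and $2$ gives equal means and equal second moments, hence equal variances. I expect no serious obstacle beyond checking the Newton formula identity and the positivity of the weights.
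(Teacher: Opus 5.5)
Your proposal is correct and takes essentially the same approach as the paper. You use a null vector of the $(L+1)\times(L+2)$ moment matrix on $L+2$ points, perturb a strictly positive probability vector in that direction, and identify the utility gap with a divided difference of order $L+1$; your Newton-interpolation argument that this divided difference is nonzero for some choice of points, when $U$ is not a polynomial of degree at most $L$, fills in a step the paper only asserts (and, as you note, it does not use continuity).
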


Taking $L=2$ shows that, for such an investor, the mean and covariance of returns do not suffice to rank portfolios.

\begin{corollary}[Means and covariances do not determine the oracle]
\label{cor:moment-insufficiency}
Suppose that $U$ is not a polynomial of degree at most two on the wealth interval under consideration. There are two bounded finite-support static return models and a two-asset feasible set $\calW_2=\calW_{2,1}\cup\calW_{2,2}$, where
\begin{align}
  \calW_{2,a}
  =
  \left\{
    w\in\Delta_2:\norm{w-e_a}_2\leq\epsilon
  \right\}
  \label{eq:moment-separation-feasible-set}
\end{align}
for the pure-asset portfolios $e_1$ and $e_2$ and a sufficiently small $\epsilon>0$. The two models have the same mean vector and covariance matrix, but the oracle belongs to $\calW_{2,1}$ under one model and to $\calW_{2,2}$ under the other. Call a map from the mean vector and covariance matrix of returns to a probability distribution on $\calW_2$ a population decision functional of those two moments. Any such functional makes the same randomized decision under both models and therefore has a positive expected utility gap under at least one. Consequently, a data-dependent rule that converges to such a functional has nonvanishing asymptotic regret.
\end{corollary}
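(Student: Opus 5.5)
The plan is to build the two models from the pair of distributions given by Proposition~\ref{prop:moment-separation} with $L=2$, place them on two independent assets, and then swap which asset receives which distribution. Apply Proposition~\ref{prop:moment-separation} with $L=2$ on the wealth interval $J$. This gives $Q_+$ and $Q_-$ supported on a common finite set $F\subset J$, with equal mean and variance and $\E_{Q_+}[U(X)]\neq\E_{Q_-}[U(X)]$. Relabel them if needed so that $g=\E_{Q_+}[U(X)]-\E_{Q_-}[U(X)]>0$.

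Define model~A by taking $R_1$ and $R_2$ independent, with $x_0+R_1\sim Q_+$ and $x_0+R_2\sim Q_-$. Model~B swaps them: $x_0+R_1\sim Q_-$ and $x_0+R_2\sim Q_+$. Both models are bounded and have finite support. Their mean vectors coincide because the two marginals have the same mean. Their covariance matrices coincide because both are diagonal with the common variance on the diagonal, by independence.

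For $w\in\Delta_2$, terminal wealth is $x_0+w^\top R=w_1(x_0+R_1)+w_2(x_0+R_2)$. This is a convex combination of points of $F$, so it lies in the convex hull of $F$, which is contained in the interval $J$. Hence $V$ is well defined and, being a finite sum of continuous functions, continuous on $\Delta_2$. In model~A we have $V(e_1)-V(e_2)=g$, and in model~B the sign is reversed.

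Next choose the classes. By uniform continuity of $V$ on $\Delta_2$ under both models, pick $\epsilon<\sqrt2/2$, so that $\calW_{2,1}$ and $\calW_{2,2}$ are disjoint, and small enough that $\abs{V(w)-V(e_a)}\le g/3$ for every $w\in\calW_{2,a}$, for $a=1,2$ and in both models. In model~A this gives
\begin{align}
  \inf_{w\in\calW_{2,1}}V(w)\ \ge\ V(e_1)-g/3\ =\ V(e_2)+2g/3\ \ge\ \sup_{w\in\calW_{2,2}}V(w)+g/3 .
\end{align}
So the oracle set lies in $\calW_{2,1}$, and every portfolio in $\calW_{2,2}$ has regret at least $g/3$. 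The symmetric argument applies to model~B with the roles of the classes reversed.

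Now take any population decision functional of the mean and covariance. Because the two models share these moments, it returns the same distribution $\pi$ on $\calW_2$ under both. Since the classes are disjoint, $\pi(\calW_{2,1})+\pi(\calW_{2,2})=1$, so one of them, say $\calW_{2,2}$, has mass at least $1/2$. The expected regret under the model penalizing that class (here model~A) is then at least $g/6>0$.

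For the data-dependent statement, suppose the law of $\widehat w_n$ converges weakly to $\pi$ under each model. The regret $r_P$ is continuous and bounded on $\calW_2$, so $\E_P[r_P(\widehat w_n)]$ converges to $\int r_P\,\rmd\pi$. That limit is at least $g/6$ under one of the two models, so the asymptotic regret does not vanish.

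The only delicate step is the choice of $\epsilon$: it must give disjoint classes and also keep the continuity margin below $g/3$ simultaneously in both models. This holds because there are only two models and $V$ is uniformly continuous on the compact set $\Delta_2$.
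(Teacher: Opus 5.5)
Your proposal is correct and follows the paper's proof: you take $Q_\pm$ from Proposition~\ref{prop:moment-separation} with $L=2$, place them on two independent assets and swap them, shrink the neighborhoods by continuity to get a uniform utility margin, and obtain the bound ``regret at least half the margin'' from the common randomized decision under both models. You also add details the paper leaves implicit: the explicit $g/3$ margin, the check that $x_0+w^\top R$ is a convex combination of support points and so stays in $J$, and reading ``converges to such a functional'' as weak convergence of the law of $\widehat w_n$, which turns the final sentence of the corollary into a rigorous statement.
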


Section~\ref{sec:experiments} gives a four-point example with constant relative risk aversion (CRRA) utility in which the empirical second-order criterion of Appendix~\ref{app:taylor} has a positive limiting selection loss.
\subsection{Portfolios from Symmetry}
\label{sec:risk-parity}
Unlike the route through a small optimal risky investment, this route does not require expected excess returns to shrink with $n$. When standardized returns and feasible standardized holdings are invariant under a transitive group of asset permutations, concavity makes equal standardized holdings optimal, and these holdings contribute equally to portfolio volatility.

\paragraph{Standardized returns and feasible investments.}
As in Section~\ref{sec:distribution}, $R$ collects excess returns over the cash asset and $w_a$ is the amount invested in risky asset $a$. Let $D_s=\operatorname{diag}(s_1,\ldots,s_K)$ for known scales $s_a>0$ that measure the relative volatilities of the assets, and write $Z=D_s^{-1}R$. The distribution of $Z$ is indexed by the unknown parameter $\theta$.

We consider one portfolio class under Assumption~\ref{ass:regular-return-model}. Let $\mathcal G$ be a group of permutation matrices acting transitively on the assets: for every pair of assets $a,a'\in[K]$, some matrix in $\mathcal G$ maps coordinate $a$ to coordinate $a'$. Suppose that, for every $\theta\in\Theta^+$ and $\Pi\in\mathcal G$, $\Pi Z$ and $Z$ have the same distribution under $P_\theta$. For example, the cyclic permutations form such a group. Because the group is transitive, $\E_\theta[Z]=m_\theta\mathbf1$ for a scalar $m_\theta$, and the standardized variances are equal.

Let $\mathcal C=D_s\calW_K$ be the set of feasible standardized holdings. We assume that $\mathcal C\subset\bbR_+^K$ is a full-dimensional compact convex set containing zero and satisfying $\Pi\mathcal C=\mathcal C$ for every $\Pi\in\mathcal G$, such as the box $[0,L]^K$.

Define $\Sigma_\theta=\operatorname{Cov}_\theta(R)$ and $\Sigma_\theta^Z=D_s^{-1}\Sigma_\theta D_s^{-1}$, and assume $\Sigma_\theta^Z\succ0$. For a nonzero portfolio $w$, its volatility and the contribution of asset $a$ are
\begin{align}
  \sigma_\theta(w)&=(w^\top\Sigma_\theta w)^{1/2},
  &\operatorname{RC}_{a,\theta}(w)
  &=w_a\frac{\partial\sigma_\theta(w)}{\partial w_a}
    =\frac{w_a(\Sigma_\theta w)_a}{\sigma_\theta(w)}.
  \label{eq:rp-contributions}
\end{align}
These contributions sum to $\sigma_\theta(w)$, and a positive portfolio has equal risk contributions when each of them equals $\sigma_\theta(w)/K$, the usual volatility-based definition \citep{Maillard2010properties,Cetingoz2024riskbudgeting}.

\paragraph{The risk-parity form of the EUR rule.}
By concavity, averaging feasible standardized holdings over permutations cannot reduce expected utility, so the optimum has equal standardized holdings $\alpha\mathbf1$. Define $T=\mathbf1^\top Z$, $\mathcal A_s=\{\alpha\in\bbR:\alpha\mathbf1\in\mathcal C\}$, and the scalar problem
\begin{align}
  F_\theta(\alpha)&=\E_\theta[U(x_0+\alpha T)],
  &\alpha_\theta&\in\argmax_{\alpha\in\mathcal A_s}F_\theta(\alpha).
  \label{eq:rp-scalar-problem}
\end{align}

\begin{proposition}[Risk-parity representation of the EUR rule]
\label{prop:eurr-risk-parity}
Under the conditions of this section, the maximizer $\alpha_\theta$ in \eqref{eq:rp-scalar-problem} is unique and positive. The population-optimal portfolio and the output of the one-class EUR rule for every $n\geq2$ satisfy
\begin{align}
  w_1(\theta)&=\alpha_\theta D_s^{-1}\mathbf1,
  &\widehat w_n^\star&=\alpha_{\widehat\theta(D)}D_s^{-1}\mathbf1,
  \label{eq:rp-eurr-identity}
\end{align}
where $D=\{R_1,\ldots,R_n\}$ and $\widehat\theta(D)$ are those of Section~\ref{subsec:eurr-one}. Both portfolios have equal contributions to volatility under every distribution in the model. The composition of their investment in risky assets is
\begin{align}
  q_a=\frac{s_a^{-1}}{\sum_{a'=1}^K s_{a'}^{-1}},\qquad a\in[K].
  \label{eq:rp-capital-composition}
\end{align}
\end{proposition}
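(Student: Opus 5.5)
Pass to standardized holdings $c=D_sw\in\mathcal C$, so that $w^\top R=c^\top Z$ and $V_{P_\theta}(w)=G_\theta(c):=\E_\theta[U(x_0+c^\top Z)]$. The whole argument is a symmetrization under $\mathcal G$, done for an arbitrary parameter in $\Theta^+$. It therefore applies both to the true $\theta$ and to the maximum likelihood estimate $\widehat\theta(D)\in\Theta^+$. This identifies the population optimizer and, in the same way, the output of the one-class EUR rule $w_1(\widehat\theta(D))$ of Section~\ref{subsec:eurr-one}.

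\textbf{Symmetrization.} First, $G_\theta(\Pi c)=G_\theta(c)$ for every $\Pi\in\mathcal G$, because
\begin{align}
  (\Pi c)^\top Z=c^\top\Pi^\top Z,
\end{align}
$\Pi^\top=\Pi^{-1}\in\mathcal G$, and $\Pi^{-1}Z$ has the same law as $Z$. Given any feasible $c$, set $\bar c=|\mathcal G|^{-1}\sum_{\Pi\in\mathcal G}\Pi c$. It lies in $\mathcal C$ by convexity and the invariance $\Pi\mathcal C=\mathcal C$, and concavity of $G_\theta$ (inherited from $U$) with Jensen gives $G_\theta(\bar c)\geq G_\theta(c)$. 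Since the group is transitive, the average $\bar c$ is fixed by $\mathcal G$, so all its coordinates are equal and $\bar c=\alpha\mathbf1$ with $\alpha=\mathbf1^\top c/K$. Hence $\sup_{\mathcal C}G_\theta=\sup_{\mathcal A_s}F_\theta$.

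\textbf{Uniqueness and positivity.} Assumption~\ref{ass:regular-return-model}(ii) makes the class optimizer unique, since the negative Hessian is positive definite and the optimizer is interior. The optimum is attained at some $\alpha\mathbf1$, so the unique optimizer is $c^\star=\alpha_\theta\mathbf1$. Uniqueness of $\alpha_\theta$ follows because $F_\theta(\alpha)=G_\theta(\alpha\mathbf1)$ is strictly concave on $\mathcal A_s$, with $F_\theta''=\mathbf1^\top\nabla^2G_\theta\mathbf1<0$.

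For positivity, note that $\mathcal C\subset\bbR_+^K$ gives $\mathcal A_s\subset[0,\infty)$, and interiority of the optimizer gives $c^\star\neq0$ whenever $0$ lies on the relative boundary of $\mathcal C$. Since $\mathcal C\subset\bbR_+^K$ and contains zero, zero is indeed a boundary point, so $\alpha_\theta>0$.

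This is where I expect the main subtlety. Positivity really rests on the interiority assumption, and equivalently on $F_\theta'(0)=U'(x_0)Km_\theta>0$, i.e.\ $m_\theta>0$. I would state it via the interiority condition of Assumption~\ref{ass:regular-return-model}(ii), noting that it forces $m_\theta>0$ throughout $\Theta^+$.

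\textbf{Conclusions.} Mapping back gives $w_1(\theta)=D_s^{-1}c^\star=\alpha_\theta D_s^{-1}\mathbf1$. Applying the same argument at $\widehat\theta(D)\in\Theta^+$, where the invariance also holds, gives $\widehat w_n^\star=\alpha_{\widehat\theta(D)}D_s^{-1}\mathbf1$.

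The composition is $w_a/\sum_{a'}w_{a'}=s_a^{-1}/\sum_{a'}s_{a'}^{-1}$, which is \eqref{eq:rp-capital-composition}.

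For risk contributions, take $w=\alpha D_s^{-1}\mathbf1$ with $\alpha>0$. Then
\begin{align}
  w_a(\Sigma_\theta w)_a=\alpha^2(\Sigma^Z_\theta\mathbf1)_a .
\end{align}
Because $\Pi\Sigma^Z_\theta\Pi^\top=\Sigma^Z_\theta$ for all $\Pi\in\mathcal G$, the vector $\Sigma^Z_\theta\mathbf1$ is $\mathcal G$-invariant, and by transitivity it is constant. Hence every $\operatorname{RC}_{a,\theta}(w)$ is equal to $\sigma_\theta(w)/K$. This holds under every $\theta$ in the model, which covers both the population optimizer and the estimate.
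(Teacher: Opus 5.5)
Your proposal is correct and follows essentially the same route as the paper: you symmetrize over $\mathcal G$ and apply Jensen to reduce to equal standardized holdings, you invoke Assumption~\ref{ass:regular-return-model}(ii) on all of $\Theta^+$ for uniqueness and interiority (which, with $\mathcal C\subset\bbR_+^K$, gives $\alpha_\theta>0$ and covers $\widehat\theta(D)$), and you use the $\mathcal G$-invariance of $\Sigma^Z_\theta$ to get constant row sums and hence equal volatility contributions. The only detail worth stating explicitly is that $\mathbf1^\top\Sigma^Z_\theta\mathbf1>0$, which follows from $\Sigma^Z_\theta\succ0$ and guarantees $\sigma_\theta(w)>0$, so the contributions are well defined.
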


The proof is in Appendix~\ref{app:portfolio-restrictions}. If $v_\theta$ is the common diagonal entry of $\Sigma_\theta^Z$, the volatility of asset $a$ is $s_a\sqrt{v_\theta}$, so \eqref{eq:rp-capital-composition} is inverse-volatility weighting. Equal contributions follow from the equal row sums of $\Sigma_\theta^Z$, so the pairwise correlations need not be equal. The EUR rule invests $\alpha_{\widehat\theta(D)}\sum_as_a^{-1}$ in risky assets and holds the rest in cash. The utility affects only this amount, through \eqref{eq:rp-scalar-problem}, whereas the composition \eqref{eq:rp-capital-composition} depends only on the known relative scales.

\paragraph{Implications for optimal portfolio choice.}
The minimax and Bayes conclusions follow from the general results for the EUR rule. Put $j_\theta=-F_\theta''(\alpha_\theta)>0$. Differentiating the first-order condition of \eqref{eq:rp-scalar-problem} gives
\begin{align}
  \nabla_\theta \alpha_\theta
    &=j_\theta^{-1}\E_\theta[U'(x_0+\alpha_\theta T)T S_\theta(R)],
  &c_1(\theta)
    &=\frac{j_\theta}{2}
       (\nabla_\theta \alpha_\theta)^\top I_\theta^{-1}\nabla_\theta \alpha_\theta.
  \label{eq:rp-general-constant}
\end{align}
This is \eqref{eq:likelihood-constants} with $\dot w_1(\theta)=D_s^{-1}\mathbf1(\nabla_\theta \alpha_\theta)^\top$, since the utility curvature along $D_s^{-1}\mathbf1$ is $j_\theta$. With one class, Theorems~\ref{thm:likelihood-global} and~\ref{thm:likelihood-bayes} yield
\begin{align}
  \lim_{n\to\infty}n\sup_{\theta\in\Theta}\calR_n(\delta_n^\star,P_\theta)
    &=\lim_{n\to\infty}n\mathcal V_{n,K}^{\mathrm{mm}}
      =\sup_{\theta\in\Theta}c_1(\theta),
  \label{eq:rp-inherited-minimax}\\
  \lim_{n\to\infty}n\int_\Theta\calR_n(\delta_n^\star,P_\theta)\,\rmd H(\theta)
    &=\lim_{n\to\infty}n\mathcal V_{n,K}^H
      =\int_\Theta c_1(\theta)\,\rmd H(\theta).
  \label{eq:rp-inherited-bayes}
\end{align}
Here $H\in\mathfrak H$ does not depend on $n$, and the infima are over all feasible measurable randomized portfolio choice rules, including those with unequal volatility contributions.

The diversification argument is related to \citet{Samuelson1967diversification}. Other rationalizations of risk parity use maximin, ambiguity, or diversification constraints \citep{Fisher2015riskparity,Costa2022datadriven,Gava2022alpha}, or a mean--variance objective \citep[Theorem 11.4]{Noguer2026heuristic}, whereas we obtain it exactly from a sample-based expected-utility rule.

\subsection{The Cost of Imposing a Simple Form}
We quantify the expected utility cost of three departures from the conditions under which the EUR rule takes a simple composition: a local loss of symmetry, unknown relative volatilities, and a binding money budget.

\paragraph{Departures from symmetry.}
\label{sec:symmetry-departures}
To measure what the symmetric form of the EUR rule costs when assets differ, we compare the precision gained by keeping its composition with the utility lost by omitting a direction of investment.

Consider one full-dimensional class satisfying Assumption~\ref{ass:regular-return-model} or~\ref{ass:regular-one-class}. At an interior point $\theta_0$ of the evaluation set, suppose $w_1(\theta_0)=\alpha_0v$ for some $\alpha_0>0$ and a specified positive vector $v$, such as $v=D_s^{-1}\mathbf1$ in the symmetric model above. Choose a compact interval $\mathcal A_v$ on which $\alpha v\in\calW_K$, with $\alpha_0$ in its interior, and define
\begin{align}
 \alpha^v(\theta)&\in\argmax_{\alpha\in\mathcal A_v}V_{P_\theta}(\alpha v),
 &\widehat w_n^v&=\alpha^v(\widehat\theta(D))v.
 \label{eq:fixed-composition-fit}
\end{align}
This allocation fixes the composition at $v$ and re-estimates the amount invested. Write $J_{\theta_0}=-\nabla^2V_{P_{\theta_0}}(\alpha_0v)$ and $D_{\theta_0}=\dot w_1(\theta_0)$, and define the projection in the utility-curvature inner product by
\begin{align}
 \Pi_v&=v(v^\top J_{\theta_0}v)^{-1}v^\top J_{\theta_0},
 &D_\parallel&=\Pi_vD_{\theta_0},
 &D_\perp&=(I_K-\Pi_v)D_{\theta_0}.
 \label{eq:composition-projection}
\end{align}
The subscript $\parallel$ marks the changes in the optimal weights along $v$, and $\perp$ marks the changes that no adjustment of the total investment can reproduce. Put
\begin{align}
 c_\parallel&=\frac12\operatorname{tr}(J_{\theta_0}D_\parallel I_{\theta_0}^{-1}D_\parallel^\top),
 &c_\perp&=\frac12\operatorname{tr}(J_{\theta_0}D_\perp I_{\theta_0}^{-1}D_\perp^\top),
 &L(h)&=\frac12h^\top D_\perp^\top J_{\theta_0}D_\perp h.
 \label{eq:composition-loss-constants}
\end{align}

\begin{proposition}[Precision and departures from a common composition]
\label{prop:local-composition}
For the model and allocations just defined, let $\theta_{n,h}=\theta_0+h/\sqrt n$. For every compact $\mathcal K\subset\bbR^p$, as $n\to\infty$, we have
\begin{align}
 \sup_{h\in\mathcal K}\left|n\calR_n(\delta_n^\star,P_{\theta_{n,h}})
                    -c_\parallel-c_\perp\right|&\longrightarrow0,
 \label{eq:composition-full-regret}\\
 \sup_{h\in\mathcal K}\left|n\E_{\theta_{n,h}}[r_{P_{\theta_{n,h}}}(\widehat w_n^v)]
                    -c_\parallel-L(h)\right|&\longrightarrow0.
 \label{eq:composition-restricted-regret}
\end{align}
The assets, return family, utility, feasible set, and direction $v$ do not change with $n$. For a deterministic perturbation $\theta=\theta_0+t h$ with $t\to0$, the loss of the best portfolio with composition $v$ satisfies
\begin{align}
 V_{P_{\theta_0+th}}(w_1(\theta_0+th))
 -V_{P_{\theta_0+th}}(\alpha^v(\theta_0+th)v)
 =t^2L(h)+o(t^2).
 \label{eq:composition-population-cost}
\end{align}
\end{proposition}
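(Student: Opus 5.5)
Everything reduces to a local quadratic expansion of the within-class loss along the moving parameter $\theta_{n,h}$, combined with the linear (influence-function) expansion of the maximum likelihood estimator already used for the one-class constants in Appendix~\ref{app:regular-one-class}. We start with the population statement \eqref{eq:composition-population-cost}. Write $w^\star(t)=w_1(\theta_0+th)=\alpha_0v+tD_{\theta_0}h+O(t^2)$ by the implicit function theorem, and $\alpha^v(\theta_0+th)=\alpha_0+t a_1+O(t^2)$. The first-order condition for $\alpha^v$ is $v^\top\nabla V(\alpha v)=0$. Expanding it around $w^\star(t)$, where the full gradient vanishes (interior optimum), gives $v^\top J_{\theta_0}(\alpha^v v-w^\star)=O(t^2)$. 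Hence $a_1v=\Pi_vD_{\theta_0}h$ and $\alpha^v v-w^\star=-tD_\perp h+O(t^2)$. Because $w^\star$ is a critical point, Proposition~\ref{prop:portfolio-approximation}, or simply Taylor's theorem with bounded third derivatives, gives a loss of $\tfrac12 t^2 h^\top D_\perp^\top J_{\theta_0}D_\perp h+o(t^2)=t^2L(h)+o(t^2)$. We use that $J$ at the perturbed point differs from $J_{\theta_0}$ by $O(t)$.

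For \eqref{eq:composition-full-regret}, the only work is uniformity over $h\in\mathcal K$. The one-class result in Appendix~\ref{app:regular-one-class} gives $n\calR_n(\delta_n^\star,P_\theta)\to c_1(\theta)$ uniformly over $\theta$ in a neighbourhood of $\theta_0$. The argument there is uniform in $\theta$ because of the uniform remainder bounds in \eqref{eq:regular-estimator-ltwo}--\eqref{eq:regular-estimator-fourth}. Continuity of $c_1$ then gives $c_1(\theta_{n,h})\to c_1(\theta_0)$ uniformly on $\mathcal K$. It remains to check $c_1(\theta_0)=c_\parallel+c_\perp$. The cross term vanishes because $\Pi_v$ is the $J_{\theta_0}$-orthogonal projection: $(I-\Pi_v)^\top J_{\theta_0}\Pi_v=0$. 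Therefore $\operatorname{tr}(J DI^{-1}D^\top)$ splits into the parallel and perpendicular pieces. Full-dimensionality lets us take $N_1=I_K$ in \eqref{eq:likelihood-constants}.

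For \eqref{eq:composition-restricted-regret}, decompose the loss under $\theta=\theta_{n,h}$. Since the oracle $w^\star=w_1(\theta)$ is a critical point, expanding $V_\theta$ gives
\begin{align*}
 r_{P_\theta}(\widehat w_n^v)=\tfrac12(\widehat w_n^v-w^\star)^\top J_{\theta_0}(\widehat w_n^v-w^\star)+O(\norm{\widehat w_n^v-w^\star}_2^3)+o(\norm{\widehat w_n^v-w^\star}_2^2).
\end{align*}
Write $\widehat w_n^v-w^\star=(\alpha^v(\widehat\theta)-\alpha^v(\theta))v+(\alpha^v(\theta)v-w^\star)$. By the population step, the second term equals $-n^{-1/2}D_\perp h+o(n^{-1/2})$. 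The first term equals $\Pi_vD_{\theta_0}(\widehat\theta-\theta)+O(\norm{\widehat\theta-\theta}_2^2)$, since the derivative of $\theta\mapsto\alpha^v(\theta)v$ at $\theta_0$ is $\Pi_vD_{\theta_0}$, as shown above. With $\sqrt n(\widehat\theta-\theta)$ asymptotically $N(0,I_{\theta_0}^{-1})$ and mean zero to order $o(n^{-1/2})$, the quadratic form has expectation
\begin{align*}
 n^{-1}\bigl[\tfrac12\operatorname{tr}(J_{\theta_0}D_\parallel I_{\theta_0}^{-1}D_\parallel^\top)+L(h)\bigr]+o(n^{-1}),
\end{align*}
so $c_\parallel+L(h)$ is the limit. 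The cross term between the random part and the bias has vanishing expectation for two reasons: the leading part of the estimator error is centred, and it is, in any case, $J$-orthogonal to $D_\perp h$.

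The main obstacle is making these expectations rigorous uniformly in $h\in\mathcal K$, rather than mere convergence in distribution. The plan is to use the $L^2$ remainder in \eqref{eq:regular-estimator-ltwo}, the fourth-moment bound \eqref{eq:regular-estimator-fourth} to control the cubic Taylor remainder and the $O(\norm{\widehat\theta-\theta}_2^2)$ term, and, under Assumption~\ref{ass:regular-return-model}, the corresponding uniform moment bounds from Appendix~\ref{app:likelihood-global}. We must also handle $\widehat\theta$ falling where $\alpha^v$ is not interior. This event has probability $o(n^{-1})$, with bounded loss, by those same moment bounds and Markov's inequality. Finally, $\alpha^v$ is smooth in $\theta$ near $\theta_0$ by the implicit function theorem, because $v^\top J v>0$.
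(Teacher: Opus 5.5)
Your proposal is correct and follows essentially the paper's argument. The paper's steps are that the derivative of $\theta\mapsto\alpha^v(\theta)v$ at $\theta_0$ is $D_\parallel$, that the restricted error splits into $D_\parallel(\widehat\theta-\theta_{n,h})$ plus the bias $-n^{-1/2}D_\perp h$, that $D_\parallel^\top J_{\theta_0}D_\perp=0$ removes the cross terms, and that the population cost is a Taylor expansion at the unrestricted optimum; you reproduce each of these.

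The differences are minor. For \eqref{eq:composition-full-regret} you reuse the uniform-over-$\Theta$ one-class expansion together with continuity of $c_1$, instead of re-expanding under $\theta_{n,h}$, and this is equally valid. Using the derivative at $\theta_0$ while expanding about $\theta_{n,h}$ also leaves an extra $O(n^{-1/2}\norm{\widehat\theta-\theta_{n,h}}_2)$ term, which is negligible in $L^2$.
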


The proof is in Appendix~\ref{app:portfolio-restrictions}. Keeping the composition removes $c_\perp$ from the estimation constant but adds $L(h)$, so $\widehat w_n^v$ has the smaller local expected regret when $L(h)<c_\perp$ and the larger one when $L(h)>c_\perp$. This trade-off is related to the finding of \citet{Jagannathan2003riskreduction} that portfolio restrictions can reduce estimation error.

\paragraph{Estimated relative volatility.}
\label{sec:estimated-volatility}
When the relative volatilities are unknown, the EUR rule estimates the composition as well. Let $s_a(\theta)>0$ be twice continuously differentiable and put $D_s(\theta)=\operatorname{diag}(s_1(\theta),\ldots,s_K(\theta))$. Suppose that, for every $\theta\in\Theta^+$, the distribution of $D_s(\theta)^{-1}R$ under $P_\theta$ is invariant under the same transitive permutation group and has a finite covariance that is continuous in $\theta$ and positive definite throughout $\Theta^+$. The feasible set $\calW_K$ is known and independent of the unknown scales.

We use Assumption~\ref{ass:regular-one-class}. Define $T_\theta=\mathbf1^\top D_s(\theta)^{-1}R$, and suppose the equation $\E_\theta[U'(x_0+\alpha T_\theta)T_\theta]=0$ has a positive solution $\alpha_\theta$ for which $\alpha_\theta D_s(\theta)^{-1}\mathbf1$ lies in the interior of $\calW_K$ throughout $\Theta^+$.

\begin{proposition}[The EUR rule with estimated asset scales]
\label{prop:eurr-estimated-scales}
Under the conditions just stated, the population optimum and the same one-class EUR rule satisfy
\begin{align}
 w_1(\theta)&=\alpha_\theta D_s(\theta)^{-1}\mathbf1,
 &\widehat w_n^\star&=\alpha_{\widehat\theta(D)}
                       D_s(\widehat\theta(D))^{-1}\mathbf1.
 \label{eq:estimated-scale-eurr}
\end{align}
Let $G_s(\theta)$ be the $K\times p$ matrix whose row $a$ is $\nabla_\theta\log s_a(\theta)^\top$. The derivative used in the general allocation constant is
\begin{align}
 \dot w_1(\theta)
 =D_s(\theta)^{-1}
   \left(\mathbf1\nabla_\theta \alpha_\theta^\top-\alpha_\theta G_s(\theta)\right).
 \label{eq:estimated-scale-jacobian}
\end{align}
The minimax and Bayes conclusions of Theorems~\ref{thm:likelihood-global} and~\ref{thm:likelihood-bayes} apply with this derivative in \eqref{eq:likelihood-constants}.
\end{proposition}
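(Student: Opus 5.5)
We first establish the population identity $w_1(\theta)=\alpha_\theta D_s(\theta)^{-1}\mathbf1$ by the symmetrization argument of Proposition~\ref{prop:eurr-risk-parity}, now with $\theta$-dependent scales. Fix $\theta\in\Theta^+$ and pass to standardized holdings $c=D_s(\theta)w$, so that $w^\top R=c^\top Z_\theta$ with $Z_\theta=D_s(\theta)^{-1}R$. Unlike Proposition~\ref{prop:eurr-risk-parity}, the set $D_s(\theta)\calW_K$ need not be permutation invariant, since $\calW_K$ is fixed while the scales vary with $\theta$. We therefore argue on the unconstrained concave problem. Let $\phi(c)=\E_\theta[U(x_0+c^\top Z_\theta)]$, which is concave and $\mathcal G$-invariant.

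At $c^\star=\alpha_\theta\mathbf1$, invariance gives $\nabla\phi(c^\star)=\E_\theta[U'(x_0+\alpha_\theta T_\theta)Z_\theta]$. This vector is fixed by every $\Pi\in\mathcal G$, and transitivity then makes it a multiple of $\mathbf1$. Its inner product with $\mathbf1$ is $\E_\theta[U'(x_0+\alpha_\theta T_\theta)T_\theta]=0$, so $\nabla\phi(c^\star)=0$. By concavity, $c^\star$ is a global maximizer of $\phi$ over $\bbR^K$ wherever $\phi$ is defined, hence also over the feasible set. Translating back, $\alpha_\theta D_s(\theta)^{-1}\mathbf1\in\operatorname{int}\calW_K$ maximizes $V_{P_\theta}$ over $\calW_K$. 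By the uniqueness in Assumption~\ref{ass:regular-one-class}(ii), it equals $w_1(\theta)$.

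Next we need $\alpha_\theta$ to be well defined and smooth. The negative Hessian bound in Assumption~\ref{ass:regular-one-class}(ii), applied along the direction $D_s(\theta)^{-1}\mathbf1$, gives $-\partial_\alpha^2\E_\theta[U(x_0+\alpha T_\theta)]>0$. So the scalar first-order equation has a unique root near any solution, and the implicit function theorem gives a $C^2$ map $\theta\mapsto\alpha_\theta$ on a neighborhood of $\Theta^+$. One small gap remains: uniqueness of the positive solution is assumed implicitly, and the strict concavity above rules out two interior roots.

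The sample identity is immediate. By Section~\ref{subsec:eurr-one}, the one-class EUR rule outputs $w_1(\widehat\theta(D))$, and $\widehat\theta(D)\in\Theta^+$, so the identity just proved applies at $\vartheta=\widehat\theta(D)$. This gives $\widehat w_n^\star=\alpha_{\widehat\theta(D)}D_s(\widehat\theta(D))^{-1}\mathbf1$.

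The Jacobian then follows by the product rule. Using $\partial_\theta s_a^{-1}=-s_a^{-1}\nabla_\theta\log s_a$, we get $\nabla_\theta(D_s(\theta)^{-1}\mathbf1)=-D_s(\theta)^{-1}G_s(\theta)$. Hence
\[
\dot w_1(\theta)=D_s(\theta)^{-1}\mathbf1\,\nabla_\theta\alpha_\theta^\top-\alpha_\theta D_s(\theta)^{-1}G_s(\theta),
\]
which is \eqref{eq:estimated-scale-jacobian}.

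Finally, the minimax and Bayes conclusions need no new work. Assumption~\ref{ass:regular-one-class} is assumed, and Theorems~\ref{thm:likelihood-global} and~\ref{thm:likelihood-bayes} in the one-class case give the limits with $c_1(\theta)$ defined through $\dot w_1(\theta)$. With one full-dimensional class, $N_1$ can be taken as $I_K$, so $D_1=\dot w_1$. The only task is to identify $\dot w_1$, which was done above.

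The main obstacle is the first step. Optimality of $\alpha_\theta D_s(\theta)^{-1}\mathbf1$ cannot come from averaging over permutations of the feasible set, which is not invariant. It must come from the vanishing gradient together with concavity, and this requires that $\phi$ be differentiable under the expectation with possibly unbounded returns. For that I would use the integrability in Assumption~\ref{ass:utility} and the differentiability in Assumption~\ref{ass:regular-one-class}(ii).
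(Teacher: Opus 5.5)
Your proposal is correct and is essentially the paper's argument. At equal standardized holdings, invariance under the transitive group makes the gradient of expected utility a multiple of $\mathbf1$, the scalar first-order condition makes it vanish, and concavity together with the uniqueness in Assumption~\ref{ass:regular-one-class}(ii) identifies this point as $w_1(\theta)$. Evaluating at $\widehat\theta(D)$, applying the product rule, and invoking the one-class theorems then complete the proof, as in the paper. Your remark that $D_s(\theta)\calW_K$ need not be permutation invariant, so that the averaging argument of Proposition~\ref{prop:eurr-risk-parity} does not carry over directly, correctly explains why the gradient route is needed here.
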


At equal standardized holdings, symmetry makes the derivatives of expected utility with respect to standardized holdings equal and the scalar first-order condition makes their sum zero, so strict concavity gives \eqref{eq:estimated-scale-eurr}, and differentiation gives \eqref{eq:estimated-scale-jacobian}. The fitted portfolio has equal contributions under its fitted covariance.

\paragraph{A return family with unknown asset scales.}
An explicit non-Gaussian family shows that these conditions describe an estimable problem. Let $R_a=s_a S_aX_a$, where the signs $S_a\in\{-1,1\}$ are independent with $\Pr(S_a=1)=\pi>1/2$, the magnitudes $X_a$ are independent gamma variables with shape $k_G>0$ and rate one, and signs and magnitudes are independent. The shape parameter is known, and $\theta=(\pi,\log s_1,\ldots,\log s_K)$ is unknown.

For a positive risk-aversion coefficient $\gamma$, use $U(x)=(1-\exp(-\gamma(x-x_0)))/\gamma$ and a known box $\calW_K=[0,L]^K$. Let the estimation ranges be $\pi\in[\pi_-,\pi_+]\subset(1/2,1)$ and $s_a\in[s_-,s_+]\subset(0,\infty)$. Assume $\gamma Ls_+<1$ and $\alpha_{\pi_+}/s_-<L$, where
\begin{align}
 \alpha_\pi&=\frac1\gamma\tanh\left(\frac{\log(\pi/(1-\pi))}{2(k_G+1)}\right),
 &\ell_\pi(\alpha)&=\pi(1+\gamma \alpha)^{-k_G}+(1-\pi)(1-\gamma \alpha)^{-k_G}.
 \label{eq:gamma-investment}
\end{align}
The gamma moment generating function shows that expected utility is finite on this box. Independence gives $V_{P_\theta}(w)=(1-\prod_a\ell_\pi(s_aw_a))/\gamma$, and its unique optimum is $w_a=\alpha_\pi/s_a$.

For $n$ observations, the maximum likelihood estimates over the compact parameter ranges are
\begin{align}
 \widehat\pi_n
 &=\operatorname{proj}_{[\pi_-,\pi_+]}
       \left(\frac1{nK}\sum_{i=1}^n\sum_{a=1}^K\mathbf1\{R_{ia}>0\}\right),
 &\widehat s_{a,n}
 &=\operatorname{proj}_{[s_-,s_+]}
       \left(\frac1{nk_G}\sum_{i=1}^n|R_{ia}|\right),
 \label{eq:gamma-likelihood-estimates}
\end{align}
where $\operatorname{proj}_I$ denotes projection onto an interval $I$. The number of positive returns over all assets and, for each asset, the sum of absolute returns are sufficient statistics, and the information matrix is $\operatorname{diag}(K/(\pi(1-\pi)),k_GI_K)$ in the coordinates of $\theta$. With $j_\pi=\ell_\pi''(\alpha_\pi)\ell_\pi(\alpha_\pi)^{K-1}/\gamma$, substitution into the general constant yields
\begin{align}
 c_1(\theta)
 =\frac{j_\pi}{2}
   \left(\pi(1-\pi)(\alpha_\pi')^2+\frac{K \alpha_\pi^2}{k_G}\right).
 \label{eq:gamma-allocation-constant}
\end{align}
The first term is the cost of estimating the standardized investment amount with known scales. The second, the cost of estimating the scales, splits into $j_\pi \alpha_\pi^2/(2k_G)$ from a common proportional change in holdings and $j_\pi(K-1)\alpha_\pi^2/(2k_G)$ from changes in composition. Appendix~\ref{app:regular-one-class} verifies the likelihood conditions and this calculation, with the evaluation set inside the estimation ranges and a prior independent of $n$.

\paragraph{Money budgets.}
\label{sec:rp-budget}
A money budget can change the composition even when standardized returns are symmetric. Assume again that $D_s$ is known, and write $y=D_sw$, $c=D_s^{-1}\mathbf1\in\bbR^K$, and $\phi_\theta(y)=\E_\theta[U(x_0+y^\top Z)]$. A known upper bound $\mathcal B$ on money invested in risky assets imposes $c^\top y\leq\mathcal B$.

Let $y_0=\alpha_\theta\mathbf1$ be the interior optimum without the money budget, and let $\mathcal B_0=c^\top y_0$. For a budget $\mathcal B$, the feasible set is the original known set intersected with $\{y:c^\top y\leq\mathcal B\}$, and $y_{\mathcal B}$ is the resulting optimum.

\begin{proposition}[The effect of a money budget]
\label{prop:rp-budget}
Under the transitive symmetry conditions of Section~\ref{sec:risk-parity}, suppose the standardized expected utility is three times continuously differentiable and strictly concave near $y_0$, with $J_\phi=-\nabla^2\phi_\theta(y_0)\succ0$. If $\mathcal B\geq\mathcal B_0$, then $y_{\mathcal B}=y_0$. If the budget has a strictly positive Lagrange multiplier and all other constraints are inactive at a positive optimum, equal standardized holdings are possible only when the entries of $c$ are equal.

For $\mathcal B=\mathcal B_0-\Delta$ with $\Delta\downarrow0$, all other constraints remain inactive, and we have
\begin{align}
 y_{\mathcal B}&=y_0-\Delta\frac{J_\phi^{-1}c}{c^\top J_\phi^{-1}c}+O(\Delta^2),
 &y_{\mathcal B}^v&=\left(\alpha_\theta-\frac{\Delta}{c^\top\mathbf1}\right)\mathbf1,
 \label{eq:budget-allocation-expansion}
\end{align}
where $y_{\mathcal B}^v$ is the best feasible allocation that imposes equal standardized holdings. Their expected-utility difference is
\begin{align}
 \phi_\theta(y_{\mathcal B})-\phi_\theta(y_{\mathcal B}^v)
 &=\frac{\Delta^2}{2}
   \left(\frac{\mathbf1^\top J_\phi\mathbf1}{(c^\top\mathbf1)^2}
                   -\frac1{c^\top J_\phi^{-1}c}\right)+O(\Delta^3).
 \label{eq:budget-composition-cost}
\end{align}
The coefficient is positive when relative volatilities differ.
\end{proposition}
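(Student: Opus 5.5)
The proof works in standardized coordinates $y=D_sw$, where the objective is $\phi_\theta(y)=\E_\theta[U(x_0+y^\top Z)]$ and the budget is the single linear constraint $c^\top y\leq\mathcal B$. The plan has four steps: first use symmetry of $\nabla\phi_\theta$ at $y_0$ for the first two claims, then expand the Karush--Kuhn--Tucker (KKT) system in $\Delta$, and finally compute a quadratic utility gap.

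\emph{Step 1: symmetry.} Transitivity and invariance give $\phi_\theta(\Pi y)=\phi_\theta(y)$ for every $\Pi\in\mathcal G$. Hence $\nabla\phi_\theta(\alpha\mathbf1)$ has equal entries for every $\alpha$, since it is fixed by a transitive group.

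\emph{Step 2: the first two claims.} If $\mathcal B\geq\mathcal B_0$, then $y_0$ is feasible for the smaller budget-constrained set and is optimal on the larger set, so $y_{\mathcal B}=y_0$. Uniqueness follows from strict concavity near $y_0$ together with concavity globally, as in Proposition~\ref{prop:eurr-risk-parity}.

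For the second claim, suppose the optimum is $\alpha\mathbf1$ with all other constraints inactive. Stationarity gives $\nabla\phi_\theta(\alpha\mathbf1)=\lambda c$ with $\lambda>0$. By Step 1 the left side is proportional to $\mathbf1$, so $c$ must be proportional to $\mathbf1$, that is, the entries of $c$ are equal.

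\emph{Step 3: expansion of the constrained optimum.} Let $\mathcal B=\mathcal B_0-\Delta$. Because the budget binds and the other constraints remain inactive near the interior point $y_0$, the KKT system is
\begin{align*}
\nabla\phi_\theta(y)=\lambda c,\qquad c^\top y=\mathcal B_0-\Delta .
\end{align*}
At $\Delta=0$ the solution is $(y_0,0)$. The Jacobian of this system is the bordered matrix with blocks $-J_\phi$ and $c$, which is nonsingular because $J_\phi\succ0$. The implicit function theorem, applicable since $\phi_\theta$ is $C^3$, gives a $C^2$ solution path in $\Delta$. Differentiating at $\Delta=0$ gives
\begin{align*}
-J_\phi\,\dot y=\dot\lambda\,c,\qquad c^\top\dot y=-1,
\end{align*}
so $\dot y=-J_\phi^{-1}c/(c^\top J_\phi^{-1}c)$. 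Concavity makes this KKT point the global constrained optimum. Continuity of the path keeps the other constraints inactive for small $\Delta$.

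The restricted problem sets $y=\alpha\mathbf1$. Concavity in $\alpha$, together with unconstrained optimality at $\alpha_\theta$, forces the budget to bind. Hence $\alpha c^\top\mathbf1=\mathcal B_0-\Delta$, which gives $y^v_{\mathcal B}$ exactly as stated.

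\emph{Step 4: the utility gap.} Taylor-expand $\phi_\theta$ around $y_0$, using $\nabla\phi_\theta(y_0)=0$ and the $C^3$ remainder:
\begin{align*}
\phi_\theta(y_0+d)=\phi_\theta(y_0)-\tfrac12 d^\top J_\phi d+O(\norm{d}_2^3).
\end{align*}

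For the constrained optimum, $d=-\Delta J_\phi^{-1}c/(c^\top J_\phi^{-1}c)+O(\Delta^2)$. The cross term between the $O(\Delta)$ and $O(\Delta^2)$ parts is $O(\Delta^3)$ because the first-order term in the expansion vanishes. The loss is therefore $\Delta^2/(2c^\top J_\phi^{-1}c)+O(\Delta^3)$.

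For the restricted optimum, $d=-\Delta\mathbf1/(c^\top\mathbf1)$ exactly. The loss is therefore $\Delta^2\,\mathbf1^\top J_\phi\mathbf1/(2(c^\top\mathbf1)^2)+O(\Delta^3)$. Subtracting the two losses gives \eqref{eq:budget-composition-cost}.

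\emph{Positivity of the coefficient.} This is the step needing the most care. Apply Cauchy--Schwarz in the $J_\phi$ inner product:
\begin{align*}
(c^\top\mathbf1)^2=\bigl(\mathbf1^\top J_\phi\,J_\phi^{-1}c\bigr)^2\leq(\mathbf1^\top J_\phi\mathbf1)(c^\top J_\phi^{-1}c).
\end{align*}
Equality holds iff $J_\phi^{-1}c\propto\mathbf1$, that is, iff $c\propto J_\phi\mathbf1$.

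The permutation invariance of $\phi_\theta$ at the symmetric point $y_0$ gives $\Pi^\top J_\phi\Pi=J_\phi$ for every $\Pi\in\mathcal G$. Hence $J_\phi\mathbf1$ is $\mathcal G$-invariant, and by transitivity it is a constant vector. Therefore equality holds iff $c\propto\mathbf1$, i.e., iff the $s_a$ are all equal. So the coefficient is strictly positive exactly when relative volatilities differ.
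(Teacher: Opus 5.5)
Your proof is correct and follows the paper's argument step for step. The shared steps are: feasibility of $y_0$ when $\mathcal B\geq\mathcal B_0$; invariance forcing the gradient at symmetric points to be a multiple of $\mathbf1$; the bordered-matrix implicit function theorem for the constrained path; scalar concavity for the restricted allocation; quadratic losses measured from $y_0$; and Cauchy--Schwarz in the $J_\phi$ inner product together with $J_\phi\mathbf1\propto\mathbf1$.

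The one detail to state explicitly is the sign of the multiplier. Your linearized system gives $\dot\lambda=1/(c^\top J_\phi^{-1}c)>0$, so the multiplier is positive for small $\Delta>0$. That positivity is what justifies identifying the implicit-function solution with the constrained optimum.
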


Differentiating the constrained first-order conditions at $\mathcal B_0$ gives the expansion (Appendix~\ref{app:portfolio-restrictions}). When the budget binds while the other constraints are inactive, the investor responds by reallocating across assets instead of reducing each standardized holding by the same amount, and \eqref{eq:budget-composition-cost} measures the cost of keeping the symmetric composition.

\section{EUR Rule with State-Space Modeling}
\label{sec:dynamic}
When a portfolio depends on past returns, the fitting step of the EUR rule estimates an allocation function that maps an observed state to a portfolio. We then analyze the resulting problem of maximizing expected utility conditional on that state.

\subsection{State-Space Model}
The state records the information available before the next return is realized. Fix a positive integer history length $L$ and a model $\calP^{\mathrm{state}}$ of strictly stationary distributions of the return process $(R_t)_{t\in\mathbb Z}$, and assume that the sample includes the lags needed to form the state.

Let
\begin{align}
  S_t^{(L)}
  =
  \Psi_L(R_{t-1},\ldots,R_{t-L})
  \in[0,1]^p
  \label{eq:state}
\end{align}
be a state formed from the past $L$ returns, where $\Psi_L$ is a specified measurable map and $p$ is the state dimension. An allocation function is a measurable function
\begin{align}
  \pi:[0,1]^p\to\calW_K.
  \label{eq:allocation-function}
\end{align}
In Theorem~\ref{thm:state-rate}, $\Pi$ is the class of all measurable $\calW_K$-valued allocation functions, and we assume that measurable conditional maximizers exist.

The expected utility of an allocation function is
\begin{align}
  \mathcal V_P(\pi)
  =
  \E_P\left[
    U\left(x_0+\pi(S_t^{(L)})^\top R_t\right)
  \right].
  \label{eq:state-value}
\end{align}
For a class of allocation functions $\Pi$, define
\begin{align}
  \mathcal V_P^\star
  =
  \sup_{\pi\in\Pi}\mathcal V_P(\pi),
  \qquad
  r_P(\pi)
  =
  \mathcal V_P^\star-\mathcal V_P(\pi).
  \label{eq:state-regret}
\end{align}
The expectation in \eqref{eq:state-value} is taken over the stationary joint distribution of the state and the next return, and an estimated allocation function is held fixed in this expectation.

Let $\mathfrak P_n$ be the set of rules that map the observed time series to an allocation function $\widehat\pi_n\in\Pi$, and define
\begin{align}
  \mathcal V_n^{\mathrm{state}}
  =
  \inf_{\widehat\pi_n\in\mathfrak P_n}
  \sup_{P\in\calP^{\mathrm{state}}}
  \E_P\left[r_P(\widehat\pi_n)\right].
  \label{eq:state-minimax-value}
\end{align}

\subsection{Conditional Portfolio Estimation}
Because a state-dependent portfolio is chosen from the expected utility conditional on the observed state, we apply the relation between estimation error and regret from Section~\ref{sec:complexity} to this conditional expected utility. Fix a norm $\norm{\cdot}$ on $\bbR^K$ with dual norm $\norm{\cdot}_*$, and define the conditional expected utility and its gradient by
\begin{align}
  Q_P(s,w)
  &=
  \E_P\left[
    U\left(x_0+w^\top R_t\right)
    \mathrel{\Big|}S_t^{(L)}=s
  \right],
  \nonumber\\
  g_P(s,w)
  &=
  \nabla_wQ_P(s,w)
  =
  \E_P\left[
    U'\left(x_0+w^\top R_t\right)R_t
    \mathrel{\Big|}S_t^{(L)}=s
  \right].
  \label{eq:conditional-gradient}
\end{align}

\begin{assumption}[Smoothness in the observed state and curvature of utility]
\label{ass:state-smoothness}
The state distribution has a density bounded above and away from zero on $[0,1]^p$. Fix $\beta>0$, let $r_\beta=\lceil\beta\rceil-1$, and put $\alpha_\beta=\beta-r_\beta\in(0,1]$. For every multi-index $\nu$ with $\abs{\nu}\leq r_\beta$, the derivative $D_s^\nu g_P(s,w)$ exists and is uniformly bounded. There is $C_\beta<\infty$ such that, whenever $\abs{\nu}=r_\beta$,
\begin{align}
  \norm{D_s^\nu g_P(s,w)-D_s^\nu g_P(s',w)}_*
  \leq
  C_\beta\norm{s-s'}_2^{\alpha_\beta}
  \label{eq:holder-gradient}
\end{align}
for every $P\in\calP^{\mathrm{state}}$, every $w\in\calW_K$, and all $s,s'\in[0,1]^p$. Each $P$ has an almost surely unique optimal allocation function $\pi_P^\star$. For fixed $q>1$ and $c_0>0$, the utility loss satisfies
\begin{align}
  \mathcal V_P^\star-\mathcal V_P(\pi)
  \geq
  c_0
  \E_P\left[
    \norm{\pi(S_t^{(L)})-\pi_P^\star(S_t^{(L)})}^q
  \right]
  \label{eq:state-utility-loss}
\end{align}
for every $\pi\in\Pi$.
\end{assumption}

\paragraph{Estimation of the conditional gradient.}
For a bandwidth $h$, let $\widehat Q_h(s,w)$ be an estimated conditional objective, let $\widehat g_h(s,w)$ be its gradient in $w$, and let $\varepsilon_h(s)$ be the largest error of this gradient over $w\in\calW_K$ at state $s$, that is, $\varepsilon_h(s)=\sup_{w\in\calW_K}\norm{\widehat g_h(s,w)-g_P(s,w)}_*$. Let $P_S$ be the stationary distribution of $S_t^{(L)}$ and $p_q=q/(q-1)$. The next assumption bounds this error after integrating over $P_S$ rather than taking its supremum over states, thereby avoiding the logarithmic factor that a supremum-norm bound adds for H\"older regression classes \citep{Chen2015optimaluniform}.

\begin{assumption}[Accuracy of the estimated conditional gradient]
\label{ass:state-gradient-estimation}
There is a deterministic effective sample size $n_{\mathrm{eff}}^{\mathrm{up}}=n_{\mathrm{eff}}^{\mathrm{up}}(n)\to\infty$ with $n_{\mathrm{eff}}^{\mathrm{up}}\leq n$ and deterministic bandwidth sets $\mathcal H_n\subset(0,1]$. The bandwidth sets satisfy $\sup_{h\in\mathcal H_n}h\to0$ and $\inf_{h\in\mathcal H_n}n_{\mathrm{eff}}^{\mathrm{up}}h^p\to\infty$, and they contain bandwidths $h_n\asymp(n_{\mathrm{eff}}^{\mathrm{up}})^{-1/(2\beta+p)}$. For every $h\in\mathcal H_n$, the estimated conditional objective $\widehat Q_h(s,w)$ is concave in $w$, and, for some moment order $r\geq p_q$, the following bound on its gradient error holds
\begin{align}
  \sup_{P\in\calP^{\mathrm{state}}}
  \E_P\left[
    \left(\int_{[0,1]^p}\varepsilon_h(s)^{p_q}\,P_S(\rmd s)\right)^{r/p_q}
  \right]^{1/r}
  \leq
  C\left(
    h^\beta+
    \left(n_{\mathrm{eff}}^{\mathrm{up}}h^p\right)^{-1/2}
  \right)
  \label{eq:state-gradient-bound}
\end{align}
uniformly over the admissible bandwidths.
\end{assumption}

Given $\widehat Q_h$, the estimated allocation function satisfies $\widehat\pi_h(s)\in\argmax_{w\in\calW_K}\widehat Q_h(s,w)$ at each state $s$, with a measurable choice when there are multiple maximizers.

\paragraph{Lower bound and minimax rate.}
The next assumption specifies the alternative return distributions that give a matching lower bound.

\begin{assumption}[State-dependent lower-bound submodel]
\label{ass:state-lower}
There is a deterministic effective sample size $n_{\mathrm{eff}}^{\mathrm{low}}=n_{\mathrm{eff}}^{\mathrm{low}}(n)\to\infty$ for which $\calP^{\mathrm{state}}$ contains bounded conditional return models with finite conditional support and the following property. Signed perturbations of the probability mass on disjoint cells of width $h$ in the state space remain within the specified return model and change the gradient of the conditional expected utility by an amount $\gamma_h\asymp h^\beta$. For two models that differ in the sign on one cell, the Kullback--Leibler divergence between the distributions of the full observed time series is at most $C n_{\mathrm{eff}}^{\mathrm{low}}\gamma_h^2h^p$. Every decision that corresponds to the wrong sign on that cell incurs an expected utility regret of at least a positive constant times $\gamma_h^{q/(q-1)}h^p$.
\end{assumption}

\begin{theorem}[Minimax rate for a smooth state-dependent allocation function]
\label{thm:state-rate}
Suppose that $\calW_K$ is one of the closed convex classes in \eqref{eq:model-decomposition}. Under Assumptions \ref{ass:utility}, \ref{ass:state-smoothness}, and \ref{ass:state-gradient-estimation}, there is a rule for estimating the allocation function that satisfies
\begin{align}
  \sup_{P\in\calP^{\mathrm{state}}}
  \E_P\left[r_P(\widehat\pi_n)\right]
  \lesssim
  \inf_{h\in\mathcal H_n}
  \left(
    h^\beta+
    \left(n_{\mathrm{eff}}^{\mathrm{up}}h^p\right)^{-1/2}
  \right)^{q/(q-1)}.
  \label{eq:state-upper-rate}
\end{align}
The bandwidth $h_n$ that balances the two terms and the resulting order $\varepsilon_n^\nabla$ of the estimation error satisfy
\begin{align}
  h_n
  &\asymp
  \left(n_{\mathrm{eff}}^{\mathrm{up}}\right)^{-1/(2\beta+p)},
  \qquad
  \varepsilon_n^\nabla
  &\asymp
  \left(n_{\mathrm{eff}}^{\mathrm{up}}\right)^{-\beta/(2\beta+p)}.
  \label{eq:nonparametric-gradient-rate}
\end{align}
If Assumption~\ref{ass:state-lower} also holds and $n_{\mathrm{eff}}^{\mathrm{low}}\asymp n_{\mathrm{eff}}^{\mathrm{up}}$, write $n_{\mathrm{eff}}$ for their common order. Then, we have
\begin{align}
  \mathcal V_n^{\mathrm{state}}
  \asymp
  n_{\mathrm{eff}}^{-\beta q/((2\beta+p)(q-1))}.
  \label{eq:nonparametric-state-rate}
\end{align}
For $q=2$, the rate in \eqref{eq:nonparametric-state-rate} is $n_{\mathrm{eff}}^{-2\beta/(2\beta+p)}$.
\end{theorem}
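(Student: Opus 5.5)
The upper bound comes from a pointwise (state-by-state) version of Proposition~\ref{prop:within-class}; the lower bound comes from Assumption~\ref{ass:state-lower} through Assouad's lemma.

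\emph{Upper bound.} Fix $h\in\mathcal H_n$ and let $\widehat\pi_h(s)$ maximize the concave function $\widehat Q_h(s,\cdot)$ over $\calW_K$. At each state $s$, $Q_P(s,\cdot)$ is concave on the convex set $\calW_K$ and is maximized at $\pi_P^\star(s)$. I repeat the first-order argument behind Proposition~\ref{prop:within-class} along the segment from $\pi_P^\star(s)$ to $\widehat\pi_h(s)$:
\[
\langle g_P(u,\cdot),\widehat\pi_h-\pi_P^\star\rangle \le 0
\quad\text{and}\quad
\langle \widehat g_h(u,\cdot),\widehat\pi_h-\pi_P^\star\rangle \ge 0
\]
hold in the integrated sense along the segment. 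This gives the pointwise bound
\[
Q_P(s,\pi_P^\star(s))-Q_P(s,\widehat\pi_h(s))\le \varepsilon_h(s)\,\norm{\widehat\pi_h(s)-\pi_P^\star(s)}.
\]
Because the argmax is taken state by state, the regret decomposes as
\[
r_P(\widehat\pi_h)=\E_P\bigl[Q_P(S,\pi^\star_P(S))-Q_P(S,\widehat\pi_h(S))\bigr],
\]
with the estimated allocation function held fixed.

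Integrating the pointwise bound over $P_S$ and applying H\"older with exponents $p_q$ and $q$ gives
\[
r_P\le \Bigl(\int\varepsilon_h^{p_q}\,\rmd P_S\Bigr)^{1/p_q}\,\E\bigl[\norm{\widehat\pi_h-\pi_P^\star}^q\bigr]^{1/q}.
\]
The integrated growth condition~\eqref{eq:state-utility-loss} gives
\[
\E\bigl[\norm{\widehat\pi_h-\pi_P^\star}^q\bigr]\le r_P/c_0,
\]
which is the global replacement for Assumption~\ref{ass:utility-loss}. Combining the two displays yields
\[
r_P\le c_0^{-1/(q-1)}\Bigl(\int\varepsilon_h^{p_q}\,\rmd P_S\Bigr)^{1/(q-1)}.
\]
Taking expectations and using $r\ge p_q$ together with Jensen's inequality, \eqref{eq:state-gradient-bound} bounds the result by
\[
C\bigl(h^\beta+(n_{\mathrm{eff}}^{\mathrm{up}}h^p)^{-1/2}\bigr)^{q/(q-1)}
\]
uniformly in $P$. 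Since the bound holds for every $h\in\mathcal H_n$, I choose $h$ to minimize it, giving \eqref{eq:state-upper-rate}. The admissible $h_n\asymp(n_{\mathrm{eff}}^{\mathrm{up}})^{-1/(2\beta+p)}$ balances the bias and variance terms, which gives \eqref{eq:nonparametric-gradient-rate}. One delicate point is that the growth condition is global while the first-order argument is pointwise; I avoid this mismatch by applying H\"older before invoking \eqref{eq:state-utility-loss}, so no pointwise curvature is required.

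\emph{Lower bound.} Partition $[0,1]^p$ into $M\asymp h^{-p}$ disjoint cells of width $h$. For each sign vector $\sigma\in\{-1,1\}^M$, let $P_\sigma$ be the submodel of Assumption~\ref{ass:state-lower}. Because the losses are additive over cells, the regret of any rule $\widehat\pi_n$ is at least the sum over cells of $c\gamma_h^{q/(q-1)}h^p$ times the indicator that the rule's decision on that cell corresponds to the wrong sign. Assouad's lemma then gives
\[
\sup_\sigma \E_{P_\sigma}\bigl[r(\widehat\pi_n)\bigr]\ge \frac M2\,c\,\gamma_h^{q/(q-1)}h^p\Bigl(1-\max\sqrt{\KL/2}\Bigr),
\]
where the maximum is over pairs of neighbours $\sigma,\sigma'$. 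By assumption, such neighbours satisfy $\KL\le Cn_{\mathrm{eff}}^{\mathrm{low}}\gamma_h^2h^p\asymp n_{\mathrm{eff}}h^{2\beta+p}$. Choosing $h=c'n_{\mathrm{eff}}^{-1/(2\beta+p)}$ with $c'$ small keeps this divergence below a small constant. The lower bound is then of order
\[
Mh^p\,h^{\beta q/(q-1)}\asymp n_{\mathrm{eff}}^{-\beta q/((2\beta+p)(q-1))}.
\]
This matches the upper rate evaluated at $h_n$, which gives \eqref{eq:nonparametric-state-rate}.

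\emph{Main obstacle.} I expect the main obstacle to be the reduction in Assouad's lemma from an arbitrary allocation function to per-cell sign decisions. This requires that the wrong-sign regret bound in Assumption~\ref{ass:state-lower} apply to every decision rather than only to the two candidate allocations. I would first try to define the induced test on each cell as the sign under which the rule's cell-restricted regret is smaller, then use a triangle-type inequality showing that the two sign hypotheses cannot both have small regret.
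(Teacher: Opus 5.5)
Your proposal follows the paper's proof essentially step for step. The upper bound uses the pointwise first-order bound $Q_P(s,\pi_P^\star(s))-Q_P(s,\widehat\pi_h(s))\le\varepsilon_h(s)\norm{\widehat\pi_h(s)-\pi_P^\star(s)}$, then H\"older before the global growth condition \eqref{eq:state-utility-loss}, then Lyapunov with \eqref{eq:state-gradient-bound}. The lower bound applies Assouad over $h^{-p}$ sign-perturbed cells with $n_{\mathrm{eff}}^{\mathrm{low}}h^{2\beta+p}$ a small constant; the paper reads the per-cell wrong-sign regret directly off Assumption~\ref{ass:state-lower} rather than building the induced test you describe.

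There is one algebra slip. Combining $r_P\le A_h(r_P/c_0)^{1/q}$, where $A_h=(\int\varepsilon_h^{p_q}\,\rmd P_S)^{1/p_q}$, gives $r_P\le c_0^{-1/(q-1)}A_h^{q/(q-1)}=c_0^{-1/(q-1)}\int\varepsilon_h^{p_q}\,\rmd P_S$, with exponent one rather than $1/(q-1)$. It is this corrected form that yields the power $q/(q-1)$ after Lyapunov; your exponent agrees with it only when $q=2$.
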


The limit in Theorem~\ref{thm:state-rate} is taken as $n\to\infty$ with $K$, $p$, $L$, $\beta$, and $q$ fixed, and the rate of the expected regret is the estimation rate raised to the power $q/(q-1)$ because of the utility-loss condition \eqref{eq:state-utility-loss}.

\subsection{History Length and Approximation}
\label{subsec:data-selected}
We now balance the approximation error that comes from a finite history or from finitely many basis functions against the estimation error.

\paragraph{History length.}
Let $A_{\mathrm{mem}}(L)$ be the approximation regret from replacing the full return history with $L$ lags. For a finite set of history lengths $\mathcal L_n$, let $p(L)$, $n_{\mathrm{eff}}^{\mathrm{up}}(L)$, and $\mathcal H_n(L)$ denote the state dimension, effective sample size, and bandwidth set for history length $L$, and suppose that the constants in the estimation bounds are the same for every $L\in\mathcal L_n$. When $L$ and $h$ are specified before the data are observed, the best upper bound that results is
\begin{align}
  \inf_{\substack{L\in\mathcal L_n\\h\in\mathcal H_n(L)}}
  \left\{
    A_{\mathrm{mem}}(L)
    +
    \left(
      h^\beta+
      \left(n_{\mathrm{eff}}^{\mathrm{up}}(L)h^{p(L)}\right)^{-1/2}
    \right)^{q/(q-1)}
  \right\}.
  \label{eq:memory-optimization}
\end{align}
A longer history can reduce the first term but can raise $p(L)$ and hence the estimation term, while choosing $L$ from the data adds a selection error.

\paragraph{Approximation classes.}
The same trade-off arises when allocation functions are approximated with finitely many basis functions. For a fixed state, consider nested convex classes of allocation functions
\begin{align}
  \Pi_1\subseteq\Pi_2\subseteq\cdots\subseteq\Pi_\infty,
  \qquad
  \overline{\bigcup_{m\geq1}\Pi_m}=\Pi_\infty,
  \label{eq:nested-sieves}
\end{align}
where the closure is taken in a metric $d_\Pi$ such that $d_\Pi(\pi_j,\pi)\to0$ implies
\begin{align}
  \sup_{P\in\calP^{\mathrm{state}}}
  \abs{\mathcal V_P(\pi_j)-\mathcal V_P(\pi)}
  \to0.
  \label{eq:sieve-value-continuity}
\end{align}
The index $m$ measures the complexity of $\Pi_m$, for example the number of basis functions \citep{Birge1998minimumcontrast}.

Let
\begin{align}
  \pi_P^\star
  &\in
  \argmax_{\pi\in\Pi_\infty}\mathcal V_P(\pi),
  \qquad
  \pi_{P,m}^\star
  &\in
  \argmax_{\pi\in\Pi_m}\mathcal V_P(\pi),
  \label{eq:sieve-oracles}
\end{align}
and define the approximation regret
\begin{align}
  A_P(m)
  =
  \mathcal V_P(\pi_P^\star)-\mathcal V_P(\pi_{P,m}^\star).
  \label{eq:sieve-approximation-regret}
\end{align}
For each class $\Pi_m$, let $\widetilde\pi_m$ maximize over $\Pi_m$ an estimate of expected utility that is differentiable and concave. Suppose that, for every $\pi\in\Pi_m$, the utility loss relative to the oracle $\pi_{P,m}^\star$ is at least $c\norm{\pi-\pi_{P,m}^\star}_{L^q(P_S)}^q$, and that the error of the estimated directional derivatives is bounded in the dual $L^{q/(q-1)}(P_S)$ norm by a random variable whose $q/(q-1)$ moment is at most $v_{n,m}^{q/(q-1)}$. Applying the argument of Proposition~\ref{prop:within-class} at each state and integrating over the state then gives
\begin{align}
  \E_P\left[r_P(\widetilde\pi_m)\right]
  \leq
  A_P(m)+C v_{n,m}^{q/(q-1)}.
  \label{eq:sieve-upper-bound}
\end{align}
\paragraph{Choosing the complexity by validation.}
Because the best complexity is unknown, we select it among $1\leq m\leq\overline m_n$ by validation. We estimate $\widetilde\pi_m$ on $n_1$ training observations and estimate its expected utility by $\widehat{\mathcal V}^{\mathrm{val}}_m$ on $n_2$ independent validation observations. With a deterministic nonnegative allowance $p_{n_2,m}$ for the error of this estimate, we choose
\begin{align}
  \widehat m
  \in
  \argmax_{1\leq m\leq\overline m_n}
  \left(
    \widehat{\mathcal V}^{\mathrm{val}}_m-2p_{n_2,m}
  \right).
  \label{eq:sieve-selector}
\end{align}

\begin{proposition}[Approximation-class oracle inequality]
\label{prop:sieve-oracle}
Suppose that the following inequalities hold on an event $\mathcal E_n$:
\begin{align}
  \abs{
    \widehat{\mathcal V}^{\mathrm{val}}_m
    -
    \mathcal V_P(\widetilde\pi_m)
  }
  \leq
  p_{n_2,m}
  \qquad
  \text{for every }1\leq m\leq\overline m_n.
  \label{eq:sieve-validation-event}
\end{align}
Then, on $\mathcal E_n$, we have
\begin{align}
  r_P(\widetilde\pi_{\widehat m})
  \leq
  \inf_{1\leq m\leq\overline m_n}
  \left(
    A_P(m)
    +
    \mathcal V_P(\pi_{P,m}^\star)-\mathcal V_P(\widetilde\pi_m)
    +
    3p_{n_2,m}
  \right).
  \label{eq:sieve-oracle-inequality}
\end{align}
If $\underline U\leq U(x)\leq\overline U$ holds on the attainable wealth range, then taking expectations adds at most $(\overline U-\underline U)\Pr_P(\mathcal E_n^c)$ to the right-hand side of \eqref{eq:sieve-oracle-inequality}.
\end{proposition}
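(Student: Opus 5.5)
Fix any $m\in\{1,\ldots,\overline m_n\}$ and work on the event $\mathcal E_n$. The argument is the usual penalized-selection chain. By the definition \eqref{eq:sieve-selector} of $\widehat m$,
\[
\widehat{\mathcal V}^{\mathrm{val}}_{\widehat m}-2p_{n_2,\widehat m}\geq\widehat{\mathcal V}^{\mathrm{val}}_m-2p_{n_2,m}.
\]
Applying \eqref{eq:sieve-validation-event} to the index $\widehat m$ gives
\[
\mathcal V_P(\widetilde\pi_{\widehat m})\geq\widehat{\mathcal V}^{\mathrm{val}}_{\widehat m}-p_{n_2,\widehat m}\geq\widehat{\mathcal V}^{\mathrm{val}}_{\widehat m}-2p_{n_2,\widehat m},
\]
where the second step uses $p_{n_2,\widehat m}\geq0$; this is exactly why the allowance is doubled in the selector. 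Applying \eqref{eq:sieve-validation-event} to $m$ gives $\widehat{\mathcal V}^{\mathrm{val}}_m\geq\mathcal V_P(\widetilde\pi_m)-p_{n_2,m}$. Chaining the three inequalities yields
\[
\mathcal V_P(\widetilde\pi_{\widehat m})\geq\mathcal V_P(\widetilde\pi_m)-3p_{n_2,m}.
\]

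Next, subtract from $\mathcal V_P^\star$, taking $\Pi=\Pi_\infty$ so that $\mathcal V_P^\star=\mathcal V_P(\pi_P^\star)$. Decompose telescopically:
\[
\mathcal V_P(\pi_P^\star)-\mathcal V_P(\widetilde\pi_m)=A_P(m)+\bigl(\mathcal V_P(\pi_{P,m}^\star)-\mathcal V_P(\widetilde\pi_m)\bigr),
\]
using the definition \eqref{eq:sieve-approximation-regret}. Together with the previous display this gives the bracket in \eqref{eq:sieve-oracle-inequality} for the fixed $m$. Since $m$ was arbitrary and the left side does not depend on $m$, taking the infimum over $m$ gives \eqref{eq:sieve-oracle-inequality}.

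For the expectation statement, write
\[
r_P(\widetilde\pi_{\widehat m})=r_P(\widetilde\pi_{\widehat m})\mathbf 1_{\mathcal E_n}+r_P(\widetilde\pi_{\widehat m})\mathbf 1_{\mathcal E_n^c}.
\]
On $\mathcal E_n^c$, the regret is a difference of two expected utilities, each lying in $[\underline U,\overline U]$, so it is at most $\overline U-\underline U$. On $\mathcal E_n$, bound the first term by the right-hand side of \eqref{eq:sieve-oracle-inequality}. That right-hand side is nonnegative for each $m$: $A_P(m)\geq0$ and $\mathcal V_P(\pi_{P,m}^\star)\geq\mathcal V_P(\widetilde\pi_m)$ because $\widetilde\pi_m\in\Pi_m$. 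Hence dropping the indicator only increases the bound. Taking expectations, and moving the infimum outside the expectation via $\E[\inf_m X_m]\leq\inf_m\E[X_m]$, gives the stated expectation bound with the added term $(\overline U-\underline U)\Pr_P(\mathcal E_n^c)$.

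The argument contains no real obstacle. The only points needing care are measurability of $\widehat m$ (a finite argmax with deterministic tie-breaking) and the nonnegativity of $p_{n_2,m}$ used in the first chain.
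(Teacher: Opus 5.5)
Your proposal is correct and follows the paper's proof essentially step for step: the same chain combining the selector with the validation event at both $\widehat m$ and $m$, the same telescoping through $\mathcal V_P(\pi_{P,m}^\star)$, and the same split of the expectation over $\mathcal E_n$ and its complement. Your additional remarks, namely that each bracket is nonnegative, which justifies dropping the indicator, and the interchange $\E[\inf_m X_m]\leq\inf_m\E[X_m]$, make explicit what the paper leaves implicit.
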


With bounded utility, if
\begin{align}
  \E_P\left[
    \mathcal V_P(\pi_{P,m}^\star)-\mathcal V_P(\widetilde\pi_m)
  \right]
  \leq
  C v_{n_1,m}^{q/(q-1)}
  \label{eq:within-sieve-regret-bound}
\end{align}
for every $m\leq\overline m_n$, then Proposition~\ref{prop:sieve-oracle} gives
\begin{align}
  \E_P\left[r_P(\widetilde\pi_{\widehat m})\right]
  \leq
  \inf_{1\leq m\leq\overline m_n}
  \left(
    A_P(m)
    +
    C v_{n_1,m}^{q/(q-1)}
    +
    3p_{n_2,m}
  \right)
  +
  (\overline U-\underline U)\Pr_P(\mathcal E_n^c).
  \label{eq:data-selected-sieve-regret}
\end{align}
In this bound, the approximation loss, the estimation error, and the selection error appear as separate terms.

When the gradient of the conditional expected utility is $\beta$-smooth, suppose uniformly over $P\in\calP^{\mathrm{state}}$ that
\begin{align}
  A_P(m)
  &\lesssim
  m^{-\beta q/(p(q-1))},
  \qquad
  v_{n,m}
  &\asymp
  \sqrt{\frac{m}{n_{\mathrm{eff}}}}.
  \label{eq:sieve-estimation-error}
\end{align}
Here, $n_{\mathrm{eff}}$ is the effective sample size in the estimation bound within each class $\Pi_m$. For $m$ specified in advance, the regret bound is of order
\begin{align}
  m^{-\beta q/(p(q-1))}
  +
  \left(\frac{m}{n_{\mathrm{eff}}}\right)^{q/(2(q-1))}.
  \label{eq:sieve-rate-objective}
\end{align}
Minimizing \eqref{eq:sieve-rate-objective} gives
\begin{align}
  m_n
  &\asymp
  n_{\mathrm{eff}}^{p/(2\beta+p)},
  \qquad
  \inf_{m\in\mathbb N}\sup_{P\in\calP^{\mathrm{state}}}
  \E_P\left[r_P(\widetilde\pi_m)\right]
  &\lesssim
  n_{\mathrm{eff}}^{-\beta q/((2\beta+p)(q-1))}.
  \label{eq:sieve-optimal-size}
\end{align}
Here $n_{\mathrm{eff}}(n)\to\infty$ while $K$, $p$, $\beta$, and $q$ stay fixed. The data-selected rule attains the same order provided that $m_n\leq\overline m_n$ and that $p_{n_2,m_n}$ and the failure term in \eqref{eq:data-selected-sieve-regret} are of no larger order.

\section{Experiments}
\label{sec:experiments}

We first compare the expected regrets of the EUR rule and the empirical utility maximizers of Section~\ref{sec:upper} with the theoretical constants in specified return models, and then compare portfolio objectives on industry returns.

\subsection{Simulation Studies}
Because the return models below have finite support or closed-form expected utility, their expected regret can be evaluated directly.

\paragraph{The EUR rule with class selection and weight estimation.}
We use the four-point model in \eqref{eq:joint-estimation-classes}--\eqref{eq:joint-estimation-allocation-constant}, with $x_0=1$, $k_n=\lfloor n^{4/5}\rfloor$, and the screening threshold $\sigma_Ub_n$ with $\sigma_U=1$. The EUR rule estimates $(\theta,\eta)$ separately from the preliminary sample and the second sample, selects a sign class, and then estimates the position in the third asset from $D_1$.

We take the largest expected regret over 21 equally spaced $\eta\in[-0.05,0.05]$ and 101 equally spaced $\theta\in[-0.25,0.25]$, together with $\theta=u/\sqrt{N_n}$ for 401 equally spaced $u\in[-4,4]$ near the tie between the two classes, keeping only the points inside the parameter set. For the Bayes evaluation, we use two priors: $H_1$ is uniform on the parameter rectangle, and $H_2$ has independent coordinates, with $\eta$ uniform and $\theta$ with density $15(1-(\theta/\bar\theta)^2)^2/(16\bar\theta)$ on $[-\bar\theta,\bar\theta]$, where $\bar\theta=0.25$.

\begin{table}[t]
\centering
\caption{Expected regret of the two-class EUR rule, normalized by $\sqrt n$ (total, selection) or $n$ (allocation, Bayes).}
\label{tab:eurr-binary}
\small
\begin{tabular}{rrrrrr}
\toprule
$n$ & $\sqrt n\,\mathcal R$ & $\sqrt n\,\mathcal R_{\rm sel}$ & $n\mathcal R_{\rm alloc}$ & $n\mathcal R_{H_1}$ & $n\mathcal R_{H_2}$ \\
\midrule
250 & 0.02348 & 0.01684 & 0.10491 & 0.21002 & 0.25757 \\
1,000 & 0.02095 & 0.01447 & 0.20511 & 0.27138 & 0.34676 \\
4,000 & 0.01752 & 0.01392 & 0.22801 & 0.29507 & 0.37310 \\
16,000 & 0.01540 & 0.01369 & 0.21598 & 0.29591 & 0.37166 \\
64,000 & 0.01409 & 0.01327 & 0.20752 & 0.28768 & 0.36089 \\
256,000 & 0.01351 & 0.01312 & 0.20155 & 0.27946 & 0.35065 \\
1,024,000 & 0.01315 & 0.01296 & 0.19724 & 0.27347 & 0.34317 \\
$n\to\infty$ & 0.01253 & 0.01253 & -- & 0.25630 & 0.32162 \\
\bottomrule
\end{tabular}
\begin{minipage}{0.95\textwidth}
\footnotesize
\vspace{0.4em}
\noindent The first three columns are evaluated at the grid parameter with the largest total expected regret, with the selection and allocation terms defined as in \eqref{eq:two-layer-decomposition}. The last two columns integrate the total expected regret under $H_1$ and $H_2$. The last row gives the theoretical constants.
\end{minipage}
\end{table}

As $n$ grows, Table~\ref{tab:eurr-binary} shows that weight estimation first makes a substantial contribution to the loss while class selection eventually dominates, and the largest normalized expected regret approaches $\Gamma_{\mathrm{lik}}=0.01253$. At $n=1{,}024{,}000$, the weights are estimated from the fraction $N_n/n=0.9372$ of the observations. Multiplying the two integrated regrets by $N_n$ instead of $n$ therefore gives $0.25630$ and $0.32162$, the limits in the last row, which shows the finite-sample cost of reserving observations for the preliminary comparison.

\paragraph{Three-class Gaussian comparison.}
We solve the Gaussian selection problem in \eqref{eq:likelihood-gaussian-value} for three classes with optimized expected utilities $(0,\mu_1,\mu_2)$, observation $Y\sim N(\mu,\Omega)$, and the four covariance matrices in Table~\ref{tab:gaussian-comparison}. The linear program \eqref{eq:likelihood-gaussian-program} minimizes the largest expected utility gap on $[-4,4]^2$ with mean and observation grids of spacing $0.5$ or $0.25$, and we compare each resulting selector with selecting the largest of $(0,Y_1,Y_2)$.

\begin{table}[t]
\centering
\caption{Expected regret in the three-class Gaussian selection problem.}
\label{tab:gaussian-comparison}
\small
\begin{tabular}{crrrr}
\toprule
$(\Omega_{11},\Omega_{12},\Omega_{22})$ & Grid $0.5$ & Grid $0.25$ & Largest observation & LP objective \\
\midrule
$(2,1,2)$ & 0.37748 & 0.37327 & 0.37215 & 0.37175 \\
$(1,0.8,1)$ & 0.23691 & 0.23207 & 0.23069 & 0.23024 \\
$(1,0,4)$ & 0.45896 & 0.45324 & 0.46145 & 0.45202 \\
$(1,-0.5,2)$ & 0.38741 & 0.37940 & 0.39103 & 0.37743 \\
\bottomrule
\end{tabular}
\begin{minipage}{0.96\textwidth}
\footnotesize
\vspace{0.4em}
\noindent The first three columns report maximum expected regret on a common evaluation grid with spacing $0.125$, and the last column gives the optimal value of the finite linear program with spacing $0.25$.
\end{minipage}
\end{table}

For uncorrelated utility differences with unequal variances and for negatively correlated ones, the finer-grid selector has smaller maximum regret than selecting the largest observation. For the first two covariance matrices, selecting the largest observation does better, although refining the grids reduces the gap.

\paragraph{Regular allocation and class selection.}
The next model isolates weight-estimation error at a unique regular optimum, with four assets, $x_0=1$, and constant absolute risk aversion (CARA) utility $U(x)=(1-\exp(-4(x-1)))/4$. Returns satisfy
\begin{align}
  R_a
  =
  0.006+0.035C+0.075\varepsilon_a,
  \qquad
  a=1,\ldots,4,
  \label{eq:simulation-regular-model}
\end{align}
where $C,\varepsilon_1,\ldots,\varepsilon_4$ are independent Rademacher variables, so equal weighting is the unique interior optimum on the long-only simplex. The statistical model contains all strictly positive probability vectors near the uniform one on these 32 support points, and because the support is finite, $J_P$ and $V_P^{\mathrm{eff}}$ of Theorem~\ref{thm:unique-oracle-constant} in Appendix~\ref{app:influence} can be computed exactly. At each sample size, we compute the empirical expected utility maximizer in 1,500 samples.

\begin{figure}[t]
\centering
\begin{minipage}{0.49\textwidth}
\centering
\includegraphics[width=\textwidth]{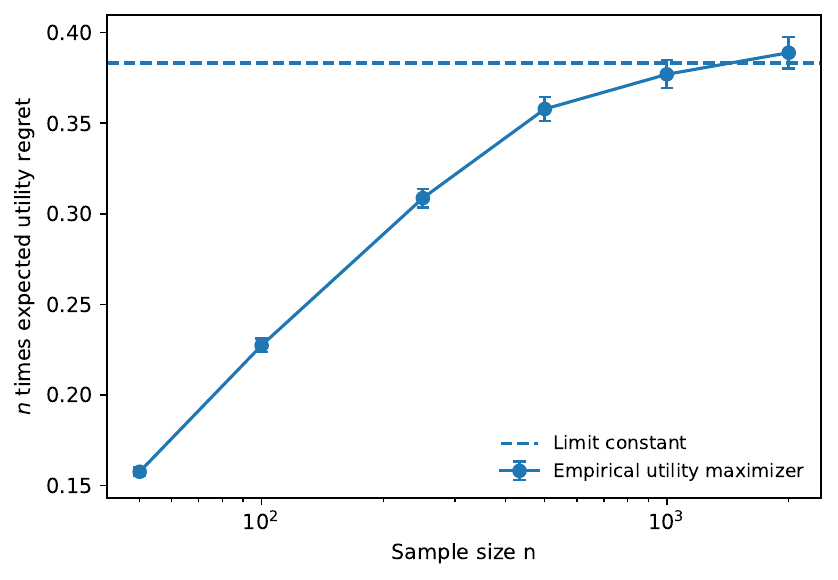}
\end{minipage}\hfill
\begin{minipage}{0.49\textwidth}
\centering
\includegraphics[width=\textwidth]{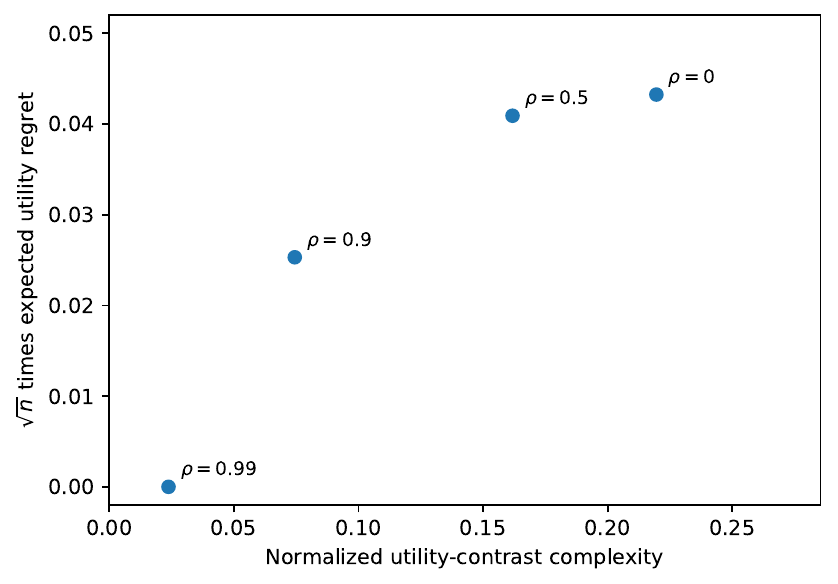}
\end{minipage}
\caption{Regular allocation (left: normalized regret against the exact constant) and correlated class selection (right: normalized selection regret against the complexity of the utility differences at $n=50,100,250,500,1{,}000$). Error bars are Monte Carlo standard errors.}
\label{fig:allocation-selection-experiments}
\end{figure}

The left panel of Figure~\ref{fig:allocation-selection-experiments} compares normalized regret $n\E_P[r_P(\widehat w_n)]$ with the limit constant $\operatorname{tr}(J_PV_P^{\mathrm{eff}})/2=0.3832$. Their ratio rises from $0.411$ at $n=50$ to $0.984$ at $n=1{,}000$ and $1.015$ at $n=2{,}000$ as estimated portfolios on the boundary of the simplex become rarer: at least one estimated weight is zero in $86.7\%$ of samples at $n=50$ and $0.13\%$ at $n=2{,}000$.

To isolate the role of dependence in class comparisons, we fix the number of classes and the normalized utility gap. Each of 20 assets $a=0,\ldots,19$ forms a singleton class, with returns $R_a=\mu_a+0.08(\sqrt\rho C+\sqrt{1-\rho}\varepsilon_a)$, where $C$ and $\varepsilon_a$ are independent Rademacher signs. We set $\mu_a=0$ for $a\ne0$, and under the same CARA utility, $\mu_0$ makes the population expected utility of class zero larger than that of the others by $0.05/\sqrt n$. The rule selects the class with the largest sample expected utility, and we vary $\rho$ over $0$, $0.5$, $0.9$, and $0.99$, with 12,000 replications per design.

At $n=1{,}000$, as $\rho$ rises from $0$ to $0.99$, the utility correlations are $0$, $0.4879$, $0.8870$, and $0.9883$, the normalized complexity of the utility differences declines from $0.2195$ to $0.0238$, and the normalized selection regret falls from $0.0432$ to less than $0.0001$. These declines reflect the role of dependence in the complexity term of Theorem~\ref{thm:selection-allocation-upper}.

\paragraph{Decision boundaries and evaluation priors.}
We next keep the decision rule fixed and vary only how its expected regret is averaged. In a two-class Bernoulli model, the observed utility difference equals $0.08$ with probability $(1+\theta/0.08)/2$ and $-0.08$ otherwise, and the rule selects class one if the estimated difference is strictly positive. With $\bar\theta=0.04$, we average its expected regret under three evaluation priors on $[-\bar\theta,\bar\theta]$ with densities $1/(2\bar\theta)$, $3(1-(\theta/\bar\theta)^2)/(4\bar\theta)$, and $15(1-(\theta/\bar\theta)^2)^2/(16\bar\theta)$, and we also compute its worst-case expected regret.

\begin{figure}[t]
\centering
\begin{minipage}{0.49\textwidth}
\centering
\includegraphics[width=\textwidth]{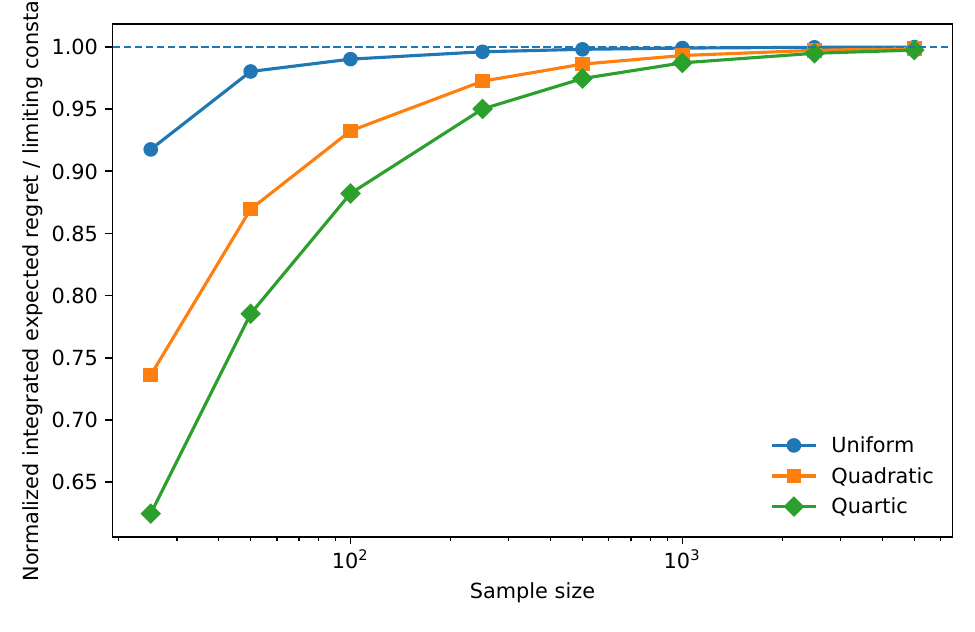}
\end{minipage}\hfill
\begin{minipage}{0.49\textwidth}
\centering
\includegraphics[width=\textwidth]{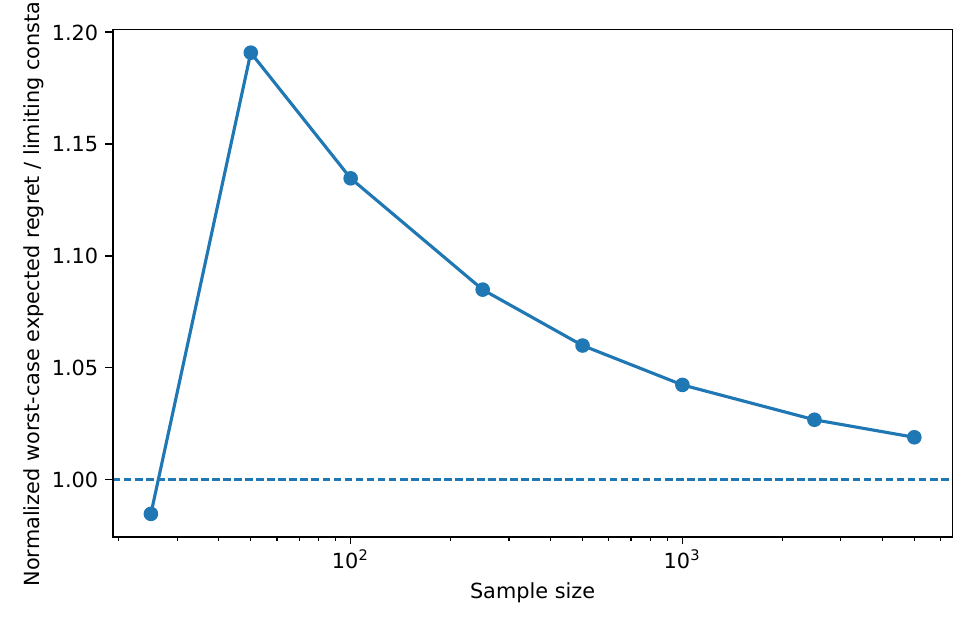}
\end{minipage}
\caption{Regret at a decision boundary, divided by the theoretical constant: integrated expected regret under three evaluation priors (left) and worst-case expected regret (right).}
\label{fig:decision-boundary-experiment}
\end{figure}

The left panel of Figure~\ref{fig:decision-boundary-experiment} shows that the Bayes constant of a single rule varies with the prior density at the decision boundary: at $n=5{,}000$, the normalized integrated expected regrets are $0.03999$, $0.05992$, and $0.07481$ under the uniform, quadratic, and quartic densities, with respective limiting regrets of $0.040$, $0.060$, and $0.075$. In the right panel, $\sqrt n$ times the worst-case expected regret is $0.01385$ at $n=5{,}000$, compared with the Gaussian limit $0.01360$.

\paragraph{Higher moments in class selection.}
We examine whether the first two moments suffice to compare the expected utilities of portfolios. To this end, we use CRRA utility $U(x)=-x^{-4}/4$ with $x_0=1$ and two return distributions on $\{-0.45,\allowbreak-0.10,\allowbreak0.10,\allowbreak0.30\}$ with probability vectors $(461/2100,\allowbreak57/140,\allowbreak1/20,\allowbreak97/300)$ and $(589/2100,\allowbreak13/140,\allowbreak9/20,\allowbreak53/300)$. The two distributions share the mean $-0.0375$, the second moment $0.078125$, and the variance $0.076719$, yet their expected utilities are $-0.791728$ and $-0.893961$. Each criterion, full expected utility or a Taylor criterion of order two, three, or four, selects the class with the larger sample value of the criterion, and we use 12,000 replications at every sample size and both assignments of the distributions to the classes.

\begin{figure}[t]
\centering
\includegraphics[width=0.6\textwidth]{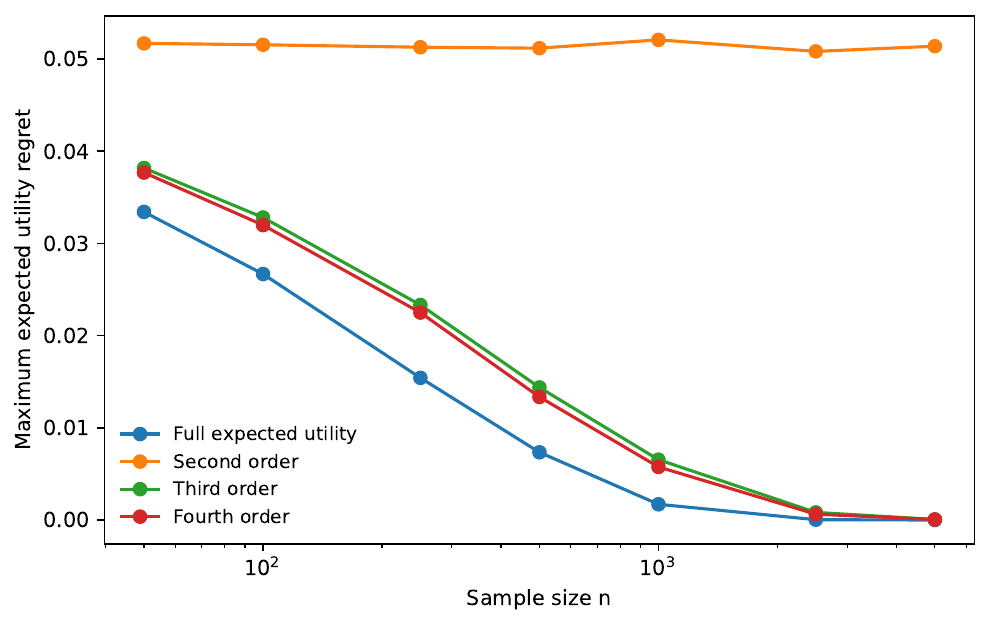}
\caption{Moment separation: worst-case expected utility regret over the two orderings of return distributions with identical means and second moments.}
\label{fig:moment-separation-experiment}
\end{figure}

Because the second-order criterion gives both classes the same population value, its minimum probability of correct selection stays close to one half, and its worst-case expected utility regret is $0.0514$ at $n=5{,}000$, compared with zero for full expected utility and $0.0001$ for the third- and fourth-order criteria. Figure~\ref{fig:moment-separation-experiment} thus illustrates Proposition~\ref{prop:moment-separation}: however accurately the low-order moments are estimated, a criterion that uses only those moments cannot distinguish two distributions whose expected utilities differ only through higher moments.

\paragraph{Third moments in the leading expected regret.}
We use the logarithmic-utility example following Theorem~\ref{thm:third-moment-regret} with $\calW_1=[-0.2,0.2]$ and evaluate the mean $\mu=n^{-1/4}v$ for $v=-1,0,1$. The EUR rule estimates the mean by maximum likelihood over $[-0.4,0.4]$ from all $n$ observations, and the sample mean--variance portfolio uses the same observations.

\begin{table}[t]
\centering
\small
\begin{tabular}{rcccc}
\toprule
$n$ & EUR rule, $v=-1$ & MV, $v=-1$ & EUR rule, $v=1$ & MV, $v=1$ \\
\midrule
4096 & 0.50027 & 0.79945 & 0.50021 & 0.45595 \\
16384 & 0.50006 & 0.70719 & 0.50005 & 0.47903 \\
65536 & 0.50002 & 0.65570 & 0.50001 & 0.49902 \\
262144 & 0.50000 & 0.62444 & 0.50000 & 0.51521 \\
1048576 & 0.50000 & 0.60449 & 0.50000 & 0.52779 \\
\midrule
Limit & 0.50000 & 0.56250 & 0.50000 & 0.56250 \\
\bottomrule
\end{tabular}
\begin{minipage}{0.95\textwidth}
\footnotesize
\vspace{0.4em}
\noindent MV is the sample mean--variance portfolio. Finite-sample entries are exact binomial sums, and the last row gives the limits in Theorem~\ref{thm:third-moment-regret}.
\end{minipage}
\caption{Expected regret multiplied by $n$ in the two-point return model.}
\label{tab:third-moment-regret}
\end{table}

Although both signs of the mean share the limiting excess constant $M_3(v)=1/16$, Table~\ref{tab:third-moment-regret} shows that the sample mean--variance portfolio approaches $9/16$ from above at $v=-1$ and from below at $v=1$. At $v=1$, it therefore has smaller expected regret than the EUR rule up to $n=65{,}536$ and larger expected regret from $n=262{,}144$ on.

\paragraph{Risk parity as the output of the EUR rule.}
To evaluate Proposition~\ref{prop:eurr-risk-parity}, let $K=4$, $x_0=1$, $R=D_sZ$, and $D_s=\operatorname{diag}(0.5,0.65,0.8,0.95)$. The standardized returns have the probability mass function
\begin{align}
 P_\theta(Z=z)
 &=\frac{\exp\left(\theta\sum_{a=1}^4 z_a+
             \tfrac14\sum_{a=1}^4 z_az_{a+1}\right)}{Q(\theta)},
 \qquad z\in\{-1,1\}^4,\quad z_5=z_1,
 \label{eq:rp-numerical-model}
\end{align}
where $Q(\theta)$ is the normalizing constant. The parameter $\theta$ is estimated over $[0.01,0.15]$, and the standardized holdings satisfy $0\leq s_aw_a\leq0.16$. We use $U(x)=\log x$ and $U(x)=(x^{-2}-1)/(-2)$, the latter with relative risk aversion 3, and for either utility the optimum is unique and interior.

This distribution is invariant under cyclic permutations but not under every permutation: at $\theta=0.075$, adjacent and opposite assets have correlations $0.2562$ and $0.1176$. The EUR rule nevertheless assigns $25\%$ of portfolio volatility to each asset, with risky-asset capital shares of approximately $(0.3424,0.2634,0.2140,0.1802)$ for both utilities, while the standardized investment $\alpha_\theta$ is $0.07323$ for log utility and $0.02493$ for relative risk aversion 3.

For each sample size and utility, we use 30,000 repetitions at each of $\theta=0.03,0.075,0.12$ and another 30,000 under a uniform prior on $[0.03,0.12]$, and each repetition estimates $\theta$ by maximum likelihood from all $n$ observations.

\begin{table}[t]
\centering
\small
\begin{tabular}{rcccc}
\toprule
$n$ & Log, $\theta=.075$ & Log, prior & CRRA, $\theta=.075$ & CRRA, prior \\
\midrule
250 & 0.978 (0.008) & 0.911 (0.007) & 0.995 (0.008) & 0.927 (0.007) \\
1000 & 0.992 (0.008) & 0.990 (0.008) & 0.996 (0.008) & 0.992 (0.008) \\
4000 & 0.994 (0.008) & 0.988 (0.008) & 1.013 (0.008) & 1.003 (0.008) \\
16000 & 1.001 (0.008) & 0.987 (0.008) & 1.014 (0.008) & 0.999 (0.008) \\
64000 & 1.009 (0.008) & 1.010 (0.008) & 0.994 (0.008) & 1.004 (0.008) \\
256000 & 0.996 (0.008) & 0.996 (0.008) & 1.009 (0.008) & 0.999 (0.008) \\
\bottomrule
\end{tabular}
\begin{minipage}{0.95\textwidth}
\footnotesize
\vspace{0.4em}
\noindent Parentheses give Monte Carlo standard errors on the same scale. The CRRA columns use coefficient 3.
\end{minipage}
\caption{Expected regret of the risk-parity form of the EUR rule, multiplied by $n$ and divided by the constant in \eqref{eq:rp-general-constant} or its prior integral.}
\label{tab:risk-parity-eurr}
\end{table}

The integrated constants are $0.48019$ for log utility and $0.16756$ for relative risk aversion 3, and Table~\ref{tab:risk-parity-eurr} shows that for both utilities the normalized expected regret approaches one.

\paragraph{Estimating the risk-parity composition.}
In the signed-gamma model of Section~\ref{sec:estimated-volatility}, both the amount invested and the relative scales of the assets are unknown. We use four assets, gamma shape $k_G=3$, CARA coefficient $\gamma=4$, and the feasible box $[0,0.1]^4$, and compute maximum likelihood estimates over $\pi\in[0.53,0.8]$ and $s_a\in[0.7,1.3]$. We evaluate the rule at $\pi=0.58,0.66,0.74$ with scales $(0.85,1,1.1,1.15)$, and under a prior in which $\pi$ is uniform on $[0.58,0.74]$ and the log scales are independent and uniform on $[\log(0.85),\log(1.15)]$.

Each of the twenty settings uses $30{,}000$ repetitions, and in each repetition the estimates from all $n$ observations are used in \eqref{eq:estimated-scale-eurr}. To isolate the loss from estimating the scales, we compare the rule with a comparator that knows the true scales and uses the same estimate of $\pi$, and we compute expected utilities from \eqref{eq:gamma-investment}.

\begin{table}[t]
\centering
\small
\begin{tabular}{rrrrr}
\toprule
$n$ & EUR rule & Known scales & Difference & Contribution RMSE \\
\midrule
250 & 0.10119 (0.00076) & 0.08822 (0.00073) & 0.01297 & 0.01583 \\
1000 & 0.10175 (0.00075) & 0.08912 (0.00072) & 0.01263 & 0.00790 \\
4000 & 0.09936 (0.00073) & 0.08742 (0.00071) & 0.01194 & 0.00393 \\
16000 & 0.10060 (0.00075) & 0.08808 (0.00072) & 0.01252 & 0.00198 \\
64000 & 0.10037 (0.00074) & 0.08796 (0.00071) & 0.01241 & 0.00099 \\
\midrule
Limit & 0.10005 & 0.08781 & 0.01224 & 0 \\
\bottomrule
\end{tabular}
\begin{minipage}{0.95\textwidth}
\footnotesize
\vspace{0.4em}
\noindent Expected-regret columns are multiplied by $n$ and averaged under the joint prior, with Monte Carlo standard errors in parentheses. The last column is the root mean squared deviation from $1/4$ of each asset's true share of portfolio volatility, taken over assets and repetitions.
\end{minipage}
\caption{The EUR rule with estimated relative volatilities.}
\label{tab:estimated-scales}
\end{table}

The integrated leading constant $0.10005$ consists of $0.08781$ from estimating the standardized investment amount and $0.01224$ from estimating the asset scales, and one quarter of the latter reflects the common size of the investment while three quarters reflect its composition. In Table~\ref{tab:estimated-scales}, both the expected regret of the EUR rule and its difference from the known-scale comparator stay close to these limits, and the deviation of the true shares of volatility from $1/4$, which comes from estimating the scales, decreases with $n$.

\paragraph{The cost of departing from symmetry.}
We extend the cyclic sign model by allowing a difference between the first two assets. On $z\in\{-1,1\}^4$, with $z_5=z_1$, let
\begin{align}
 P_{\theta,t}(Z=z)
 &=\frac{\exp\left(\theta\sum_{a=1}^4z_a+t(z_1-z_2)
                   +\tfrac14\sum_{a=1}^4z_az_{a+1}\right)}{Q(\theta,t)}.
 \label{eq:asymmetric-cyclic-model}
\end{align}
Here $Q(\theta,t)$ is the normalizing constant. We use CARA utility with coefficient four, standardized holdings in $[0,0.06]^4$, and estimation ranges $\theta\in[0.05,0.10]$ and $t\in[-0.04,0.04]$. The optimal standardized holdings are then $(\theta\mathbf1+t e)/4$ with $e=(1,-1,0,0)^\top$, but the comparator imposes the equal standardized holdings $\widehat\theta\mathbf1/4$, which are the optimal common amount given the same maximum likelihood estimate of $(\theta,t)$.

We evaluate $\theta=0.075$ and $t=h/\sqrt n$ for $h\in\{0,0.5,1,2\}$ at $n=4{,}096$, $16{,}384$, $65{,}536$, and $262{,}144$, with $30{,}000$ repetitions per setting. Under this model, the constants in \eqref{eq:composition-loss-constants} are $c_\parallel=0.12511$ and $c_\perp=0.12417$, and $L((0,h)^\top)=0.18196h^2$. The limiting expected regrets therefore cross at $|h|\approx0.82607$, so imposing the symmetric composition is better below this value and the EUR rule is better above it.

\begin{table}[t]
\centering
\small
\begin{tabular}{rrrr}
\toprule
$n$ & $h$ & EUR rule & Imposed composition \\
\midrule
4096 & 0 & 0.24760 (0.00142) & 0.12422 (0.00102) \\
4096 & 0.5 & 0.24879 (0.00141) & 0.17266 (0.00103) \\
4096 & 1 & 0.24344 (0.00136) & 0.30786 (0.00103) \\
4096 & 2 & 0.20501 (0.00126) & 0.85311 (0.00102) \\
262144 & 0 & 0.24935 (0.00143) & 0.12504 (0.00102) \\
262144 & 0.5 & 0.25020 (0.00144) & 0.16984 (0.00101) \\
262144 & 1 & 0.25024 (0.00144) & 0.30691 (0.00102) \\
262144 & 2 & 0.24809 (0.00143) & 0.85272 (0.00101) \\
\bottomrule
\end{tabular}
\begin{minipage}{0.95\textwidth}
\footnotesize
\vspace{0.4em}
\noindent Monte Carlo standard errors are in parentheses. The limits from Proposition~\ref{prop:local-composition} are $0.24928$ for the EUR rule and $0.12511+0.18196h^2$ for the imposed composition.
\end{minipage}
\caption{Expected regret multiplied by $n$ near the symmetric model.}
\label{tab:symmetry-departures}
\end{table}

In Table~\ref{tab:symmetry-departures}, the entries at $n=262{,}144$ and $h=0$ are close to their limits, and at $h=2$ the ordering of the two rules reverses.

\paragraph{Money budgets and marginal expected utility.}
In the model \eqref{eq:asymmetric-cyclic-model}, we set $\theta=0.075$ and $t=0$, with known scales $(0.5,0.65,0.8,0.95)$. The unconstrained standardized investment is then $\alpha_\theta=0.01875$, which corresponds to a total risky investment of $\mathcal B_0=0.10952$, and we use 21 equally spaced budgets from $0.6\mathcal B_0$ to $1.1\mathcal B_0$ and the seven shortfalls in Table~\ref{tab:budget-comparison}.

\begin{table}[t]
\centering
\small
\begin{tabular}{rrrr}
\toprule
$\Delta$ & Exact loss & Budget expansion & Marginal-utility approximation \\
\midrule
0.0200000 & 16.352575 & 16.343947 & 16.351785 \\
0.0100000 & 4.086526 & 4.085987 & 4.086477 \\
0.0050000 & 1.021530 & 1.021497 & 1.021527 \\
0.0025000 & 0.255376 & 0.255374 & 0.255376 \\
0.0012500 & 0.063844 & 0.063844 & 0.063844 \\
0.0006250 & 0.015961 & 0.015961 & 0.015961 \\
0.0003125 & 0.003990 & 0.003990 & 0.003990 \\
\bottomrule
\end{tabular}
\begin{minipage}{0.95\textwidth}
\footnotesize
\vspace{0.4em}
\noindent The shortfall is $\Delta=\mathcal B_0-\mathcal B$, and losses are multiplied by $10^6$. The last two columns use \eqref{eq:budget-composition-cost} and \eqref{eq:utility-loss-gradient}.
\end{minipage}
\caption{The cost of imposing equal standardized holdings under a tighter money budget.}
\label{tab:budget-comparison}
\end{table}

The coefficient of $\Delta^2$ in \eqref{eq:budget-composition-cost} is $0.04086$. In Table~\ref{tab:budget-comparison}, the three columns agree more closely as the shortfall decreases.

\subsection{Empirical Studies}
We compare portfolios fitted by full expected utility and by its second-order approximation on monthly value-weighted returns of the ten industry portfolios of the Kenneth R. French Data Library \citep{FrenchDataLibrary} (January 2024 CRSP file) from January 1984 through December 2023, and evaluate the fitted portfolios from January 1999 on.

The primary specification uses CRRA utility with coefficient five and annual rebalancing, and at each decision date a 120-month training block and a 60-month validation block precede a 12-month test block. The classes are all $\binom{10}{3}=120$ three-industry simplices. Within each class, one rule maximizes the sample CRRA utility and the other maximizes the second-order Taylor criterion in \eqref{eq:taylor-criterion}. Each procedure then selects a class by its own objective on the validation block and re-estimates the weights within that class on the combined 180 months before the test block. Classes within $10^{-10}$ of the maximum are treated as tied, and ties are broken in favor of the first class in lexicographic order. Because these procedures maximize sample objectives rather than a likelihood, they are not the EUR rule evaluated in the simulations.

The benchmarks are full expected utility maximization on the ten-asset simplex, the long-only minimum-variance portfolio, and equal weighting. The primary transaction cost is 25 basis points times turnover, where turnover is the $\ell_1$ distance between the weights before rebalancing and the target weights. For each portfolio, we report its certainty equivalent (CE) under the same CRRA utility, paired percentile intervals from 50,000 resamples of the 25 annual test blocks, and Newey--West tests with 12 lags for the monthly utility differences.

\paragraph{Results in the test period.}
We summarize the performance of the fitted rules over the 300 test months.

\begin{table}[t]
\centering
\caption{Out-of-sample performance of the industry portfolios over the test period, net of transaction costs.}
\label{tab:empirical-primary}
\scriptsize
\resizebox{\textwidth}{!}{%
\begin{tabular}{lrrrrrrr}
\toprule
Rule & Mean return & Volatility & Certainty equivalent & Maximum drawdown & Annual turnover & CE difference & $p$-value \\
\midrule
Full expected utility & 8.88 & 16.34 & 1.90 & -50.10 & 0.87 & -2.06 & 0.41 \\
Second order & 8.84 & 16.34 & 1.85 & -49.61 & 0.87 & -2.11 & 0.40 \\
Full simplex & 6.16 & 14.22 & 0.83 & -49.86 & 0.58 & -3.13 & 0.10 \\
Minimum variance & 9.06 & 12.69 & 4.98 & -35.91 & 0.24 & 1.02 & 0.54 \\
Equal weight & 9.85 & 15.16 & 3.96 & -47.61 & 0.10 & 0.00 & -- \\
\bottomrule
\end{tabular}
}
\begin{minipage}{0.96\textwidth}
\footnotesize
\vspace{0.4em}
\noindent All columns except turnover and $p$-value are percentages. CE differences and $p$-values are relative to equal weighting.
\end{minipage}
\end{table}

In Table~\ref{tab:empirical-primary}, the certainty equivalent of the full expected utility rule exceeds that of the rule using the second-order approximation by $0.052$ percentage points, with interval $[-0.320,0.613]$, and that of the full-utility rule on the ten-asset simplex by $1.073$, with interval $[-2.329,4.506]$. Minimum variance gives the highest certainty equivalent, exceeding equal weighting by $1.022$ percentage points with interval $[-2.649,4.708]$. The left panel of Figure~\ref{fig:empirical-results} shows the cumulative net wealth of each rule.

\begin{figure}[t]
\centering
\begin{minipage}{0.49\textwidth}
\centering
\includegraphics[width=\textwidth]{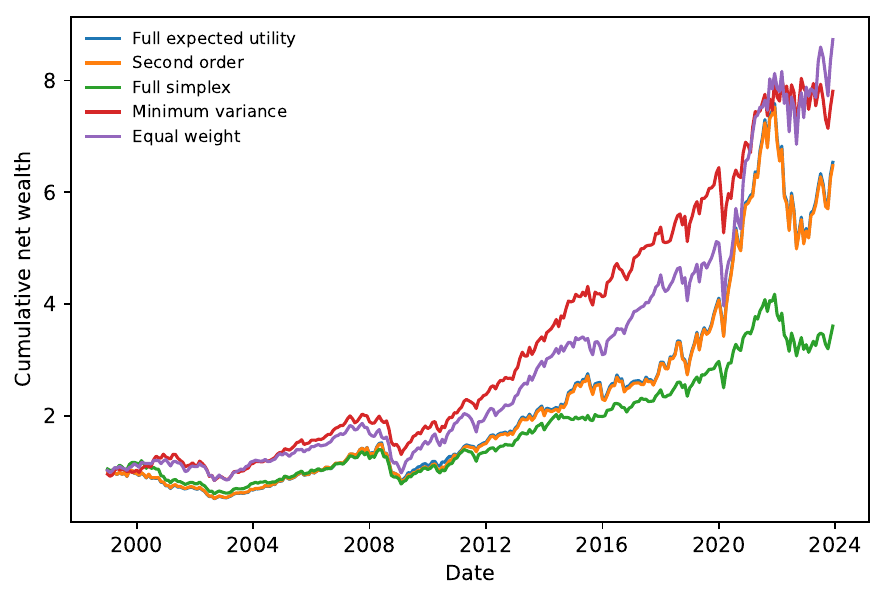}
\end{minipage}\hfill
\begin{minipage}{0.49\textwidth}
\centering
\includegraphics[width=\textwidth]{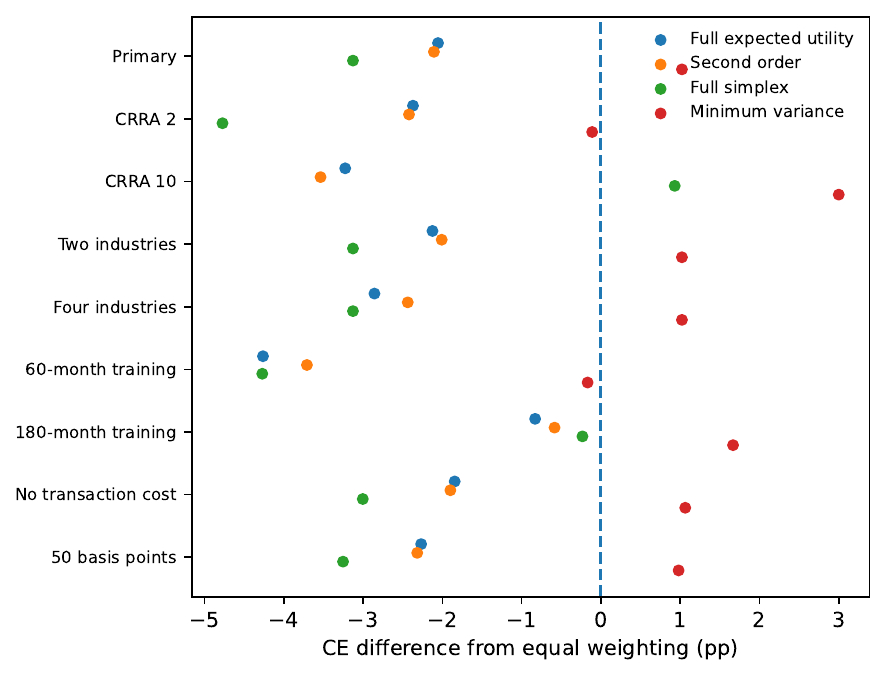}
\end{minipage}
\caption{Industry portfolios: cumulative net wealth in the primary specification (left) and annualized certainty equivalents compared with equal weighting (right).}
\label{fig:empirical-results}
\end{figure}

\paragraph{Comparisons under the same investment constraint.}
Because the ten-industry benchmarks can diversify across more assets than the class-selection procedures, we also compare allocations fitted by full expected utility and by its second-order approximation with minimum-variance and equal-weight allocations within the same 120 three-industry classes. The minimum-variance weights are fitted on the training sample and refitted on all 180 months before the test block.

\begin{table}[t]
\centering
\caption{Portfolio comparisons among classes holding at most three industries.}
\label{tab:empirical-common-constraint}
\small
\begin{tabular}{lrrcr}
\toprule
Procedure & CE (\%) & Difference (pp) & 95\% interval & Turnover \\
\midrule
Full expected utility & 1.903 & 0.000 & -- & 0.874 \\
Second-order objective & 1.851 & -0.052 & [-0.613, 0.320] & 0.868 \\
Minimum variance & 0.475 & -1.427 & [-5.401, 1.954] & 0.920 \\
Equal weights & 1.808 & -0.094 & [-3.175, 3.842] & 0.702 \\
\bottomrule
\end{tabular}
\begin{minipage}{0.96\textwidth}
\footnotesize
\vspace{0.4em}
\noindent The full-utility objective selects the class in every procedure except the second-order one. Differences from full expected utility are in percentage points. Turnover is the average annual sum of absolute weight changes.
\end{minipage}
\end{table}

Under this common restriction, the full-utility procedure has the highest point estimate in Table~\ref{tab:empirical-common-constraint}, although its differences from the added comparators are imprecisely estimated. Moreover, changing the weight estimator within the three-industry classes does not reproduce the performance of the ten-industry minimum-variance portfolio.

\paragraph{The objective used to fit a portfolio.}
Because either objective can be used for class selection and, independently, for the final weight estimation, there are four rules, and comparing them attributes the certainty-equivalent difference between the fitted rules to each step. Let $C_{o_1,o_2}$ denote the annualized certainty equivalent when objective $o_1\in\{F,Q\}$ selects the class and objective $o_2\in\{F,Q\}$ estimates the final weights, where $F$ is full expected utility and $Q$ the second-order criterion. Each contribution averages the paired changes over the two objectives used in the other step:
\begin{align}
  D_{\mathrm{alloc}}
  &={}
  \frac12\left(
    (C_{F,F}-C_{F,Q})
    +(C_{Q,F}-C_{Q,Q})
  \right),
  \label{eq:empirical-allocation-decomposition}
  \\
  D_{\mathrm{select}}
  &={}
  \frac12\left(
    (C_{F,F}-C_{Q,F})
    +(C_{F,Q}-C_{Q,Q})
  \right).
  \label{eq:empirical-selection-decomposition}
\end{align}
By construction, $D_{\mathrm{alloc}}+D_{\mathrm{select}}=C_{F,F}-C_{Q,Q}$.

For each of the four rules, let $U_{j,t}$ be its monthly CRRA utility, $\bar U_j$ its sample mean, and $\lambda_j$ its coefficient in \eqref{eq:empirical-allocation-decomposition} or \eqref{eq:empirical-selection-decomposition}. With the annualized CE transformation $f_\gamma(u)=((1-\gamma)u)^{12/(1-\gamma)}-1$ and $\gamma=5$, we use the centered influence series
\begin{align}
  \xi_t=\sum_{j=1}^4\lambda_j f_\gamma'(\bar U_j)(U_{j,t}-\bar U_j).
  \label{eq:ce-contrast-influence}
\end{align}
We estimate the variance of the CE contribution by dividing the Newey--West long-run variance of $\xi_t$ by the number of test months and use this estimate to compute the statistic in Table~\ref{tab:empirical-decomposition}.

\begin{table}[H]
\centering
\caption{Contributions from class selection and from allocation within a class, in certainty-equivalent terms.}
\label{tab:empirical-decomposition}
\small
\begin{tabular}{lrrrrr}
\toprule
Component & CE contribution & Lower limit & Upper limit & $t$-statistic & $p$-value \\
\midrule
Within-class allocation & -0.122 & -0.283 & 0.038 & -1.302 & 0.193 \\
Class selection & 0.174 & -0.164 & 0.704 & 0.599 & 0.549 \\
Total difference & 0.052 & -0.320 & 0.613 & 0.185 & 0.854 \\
\bottomrule
\end{tabular}
\begin{minipage}{0.94\textwidth}
\footnotesize
\vspace{0.4em}
\noindent Contributions and interval endpoints are percentage points. The $t$-statistics and $p$-values use \eqref{eq:ce-contrast-influence} and Newey--West standard errors.
\end{minipage}
\end{table}

\begin{figure}[t]
\centering
\includegraphics[width=0.62\textwidth]{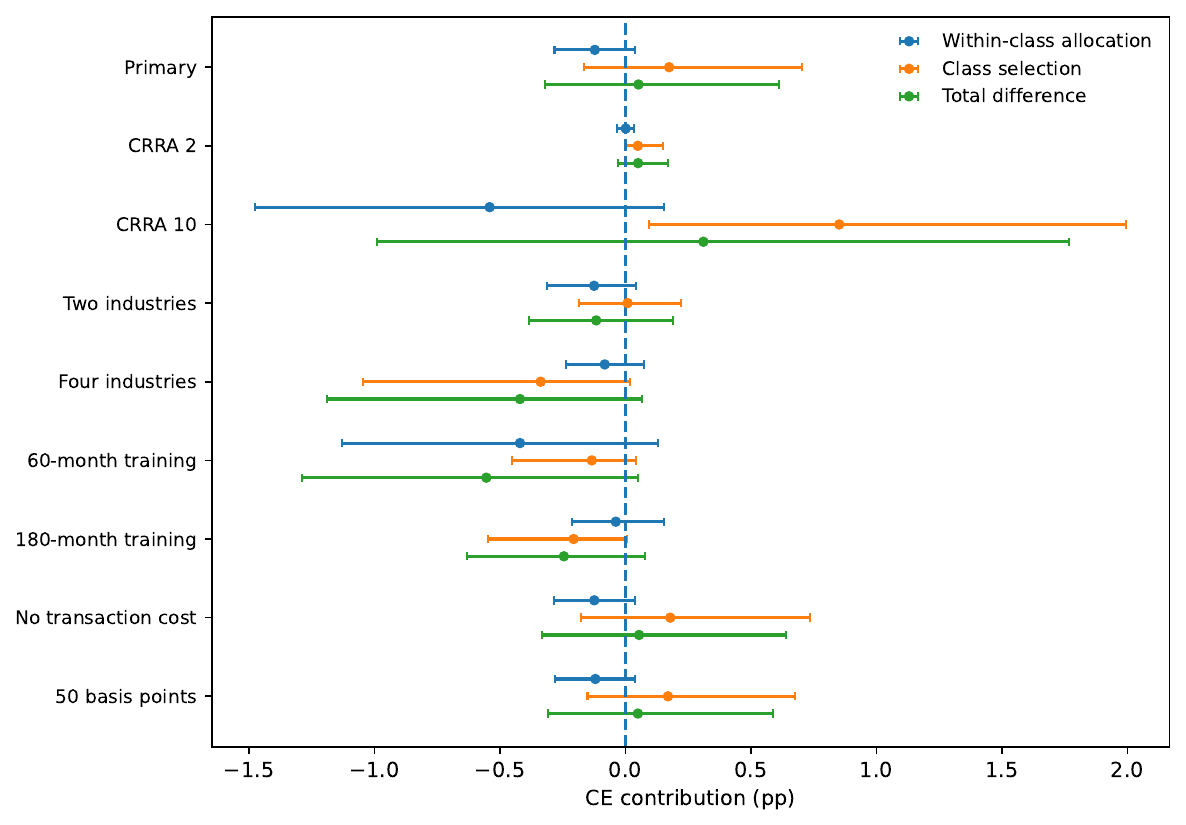}
\caption{Contributions of class selection and of allocation within a class across specifications, in annualized percentage points of certainty equivalent, with paired percentile intervals from resampling annual blocks.}
\label{fig:empirical-decomposition-robustness}
\end{figure}

The four certainty equivalents $C_{F,F}$, $C_{F,Q}$, $C_{Q,F}$, and $C_{Q,Q}$ are $1.903\%$, $2.024\%$, $1.728\%$, and $1.851\%$, and averaging the paired changes gives the contributions in Table~\ref{tab:empirical-decomposition}, whose intervals both include zero.

The two objectives select the same three-industry class in 22 of the 25 annual decisions, and classes are nearly tied in five of the full-utility decisions and eight of the second-order decisions. To assess how reproducible the selection is, we hold the portfolios fitted in each class fixed and resample the five annual blocks of the validation window 20,000 times. The mean probability of selecting the same class again is $0.318$ for full utility and $0.343$ for the second-order criterion. In years when the two objectives disagree, the smaller of these two probabilities is on average $0.108$ larger than in years when they agree, with a permutation $p$-value of $0.496$. The absolute difference between the Taylor remainders on the validation block has correlation $0.905$ with the absolute difference in out-of-sample utility.

\paragraph{Sensitivity to portfolio design.}
We examine the sensitivity of these comparisons by changing the CRRA coefficient from five to two and ten, the number of industries in a class from three to two and four, the training window from 120 months to 60 and 180, and the transaction cost from 25 basis points to zero and 50. The validation window remains 60 months, so the test period begins in January 1994 (360 months) with 60 training months and in January 2004 (240 months) with 180. Across all specifications, certainty-equivalent differences between full utility and the second-order approximation range from $-0.554$ to $0.311$ percentage points, and every bootstrap interval contains zero.

The contribution of weight estimation is negative in eight of the nine specifications and essentially zero when the CRRA coefficient is two, whereas the contribution of class selection is negative with four-industry classes and with both alternative training windows and positive in the other six specifications. Figure~\ref{fig:empirical-decomposition-robustness} shows both contributions, and the right panel of Figure~\ref{fig:empirical-results} shows the certainty-equivalent differences from equal weighting.

The minimum-variance portfolio has a positive but statistically insignificant certainty-equivalent difference from equal weighting in the primary specification, at high risk aversion, with a longer training window, and under either alternative transaction cost.

\section{Conclusion}
\label{sec:conclusion}
We construct a portfolio choice rule that attains the minimax and Bayes lower bounds for expected utility regret, including their leading constants, in a regular parametric return model with a fixed number of assets and finitely many portfolio classes. The EUR rule uses no evaluation prior, and it accounts jointly for choosing a portfolio class and estimating the weights within it. In the worst case, uncertainty about the class can cost more at the leading order than uncertainty about the weights, whereas under the Bayes criterion both costs contribute to the leading constant.

The same rule yields familiar portfolio forms for different economic reasons. When expected excess returns range over neighborhoods of zero that shrink faster than $n^{-1/4}$, estimated means and covariances suffice at the leading order of expected regret, whereas when expected excess returns are of order $n^{-1/4}$, so that the risky investment is larger, third moments change the constant of that order. In the second route, symmetry of the standardized returns and of the feasible standardized holdings under permutations that can move any asset to any other determines a risk-parity composition exactly. When the relative volatilities are unknown, the EUR rule estimates the composition as well, so its constant includes both sources of estimation loss. In both cases, the optimal composition does not require the investor's utility function, which affects only the amount invested, to first order in the expected excess returns in the first route and exactly in the second.

The utility criterion also measures the cost of imposing a simple form when the conditions that support it no longer hold. Near a symmetric model, for example, a prescribed composition gains precision but omits an investment direction, and a binding money budget can change relative holdings even when the standardized return distribution remains symmetric.

\bibliographystyle{tmlr}
\bibliography{arXiv2.bbl}

\clearpage
\appendix

\section{Bounds from Estimation Error}
\label{app:within-class}

The first proof bounds the loss from estimating the weights within a fixed class, and the second adds the error from comparing the fitted portfolios on independent observations.

\begin{proof}[Proof of Proposition~\ref{prop:within-class}]
Fix $P$ and $m$, and choose, by compactness, $w_m^\star\in\calA_m(P)$ satisfying
\begin{align}
  \norm{\widehat w_m-w_m^\star}_m
  =
  d_m\left(\widehat w_m,\calA_m(P)\right).
  \label{eq:closest-oracle}
\end{align}
Since $\widehat w_m$ maximizes $\widehat V_n$ on $\calW_{K,m}$, we have
\begin{align}
  \widehat V_n(\widehat w_m)
  \geq
  \widehat V_n(w_m^\star).
  \label{eq:empirical-optimality}
\end{align}
Hence, the population utility loss satisfies
\begin{align}
  V_{P,m}^\star-V_P(\widehat w_m)
  &\leq
  \left(\widehat V_n-V_P\right)(\widehat w_m)
  -
  \left(\widehat V_n-V_P\right)(w_m^\star)
  \nonumber\\
  &=
  \int_0^1
  \left\langle
    \nabla\widehat V_n(w_m^\star+t(\widehat w_m-w_m^\star))
    -
    \nabla V_P(w_m^\star+t(\widehat w_m-w_m^\star)),
    \widehat w_m-w_m^\star
  \right\rangle
  \rmd t
  \nonumber\\
  &\leq
  Z_{n,m}(P)\norm{\widehat w_m-w_m^\star}_m.
  \label{eq:variational-upper-step}
\end{align}
Assumption~\ref{ass:utility-loss} and \eqref{eq:variational-upper-step} imply
\begin{align}
  c_0\norm{\widehat w_m-w_m^\star}_m^q
  \leq
  Z_{n,m}(P)\norm{\widehat w_m-w_m^\star}_m.
  \label{eq:distance-estimation-comparison}
\end{align}
If the distance is zero, both conclusions are immediate. Otherwise, dividing by it gives \eqref{eq:within-class-distance}, and substitution into \eqref{eq:variational-upper-step} gives \eqref{eq:within-class-bound}.
\end{proof}

\paragraph{Selection among fitted classes.}
\label{app:selection-upper}
We bound the expected regret of the split-sample rule, which estimates the weights within each class on $\mathcal D_1$ and compares the fitted portfolios on an independent sample.

\begin{proof}[Proof of Theorem~\ref{thm:selection-allocation-upper}]
Condition on the training sample $\mathcal D_1$ and write
\begin{align}
  Q_m
  &=
  \E_P\left[\ell_{\widetilde w_m}(R)\mid\mathcal D_1\right],
  \nonumber\\
  \widehat Q_m
  &=
  \frac1{n_2}\sum_{i=1}^{n_2}\ell_{\widetilde w_m}(R_i').
  \label{eq:fitted-model-values}
\end{align}
Choose $m^\star\in\mathfrak M_P^\star$ and $\widetilde m\in\argmax_{m\in\mathfrak M_K}Q_m$. The definition of $\widehat m$ gives $\widehat Q_{\widehat m}\geq\widehat Q_{\widetilde m}$, so we have
\begin{align}
  V_P^\star-Q_{\widehat m}
  &\leq
  V_{P,m^\star}^\star-Q_{m^\star}
  +
  Q_{\widetilde m}-Q_{\widehat m}
  \nonumber\\
  &\leq
  V_{P,m^\star}^\star-Q_{m^\star}
  +
  2\sup_{m\in\mathfrak M_K}
  \left|
    (\widehat Q_m-Q_m)-(\widehat Q_{m_0}-Q_{m_0})
  \right|.
  \label{eq:selection-basic-inequality}
\end{align}
Subtracting the same reference term for every $m$ does not change the maximizing class. Conditional symmetrization then gives
\begin{align}
  \E_P\left[
    \sup_{m\in\mathfrak M_K}
    \left|
      (\widehat Q_m-Q_m)-(\widehat Q_{m_0}-Q_{m_0})
    \right|
    \mathrel{\Big|}\mathcal D_1
  \right]
  \leq
  2\mathfrak C_{\mathrm{sel},n_2}(P\mid\mathcal D_1).
  \label{eq:conditional-symmetrization}
\end{align}
Taking expectations in \eqref{eq:selection-basic-inequality} and minimizing over $m^\star\in\mathfrak M_P^\star$ proves the first inequality in \eqref{eq:selection-allocation-upper}, and Proposition~\ref{prop:within-class} applied to the training sample proves the second.
\end{proof}

\section{Lower Bounds from Finite-Support Return Models}\label{app:minimax-lower}
\label{sec:lower}

This appendix proves lower bounds under the utility-loss condition of Section~\ref{sec:complexity}. Each bound uses bounded return distributions with finite support under which every portfolio choice rule incurs a specified utility loss: the first construction varies the location of an optimal allocation, and the second varies the class that contains it.

\paragraph{Restriction to a submodel.}

A lower bound on a subset of the return model applies to the full model. For every $\calP_{K,0}\subset\calP_K$,
\begin{align}
  \mathcal V_{n,K}^{\mathrm{mm}}
  \geq
  \inf_{\delta_n\in\mathfrak D_{n,K}}
  \sup_{P\in\calP_{K,0}}
  \calR_n(\delta_n,P).
  \label{eq:submodel-lower}
\end{align}
The alternatives below may change with $n$, but each belongs to the fixed return model in \eqref{eq:minimax}, so \eqref{eq:submodel-lower} applies at every $n$.

\paragraph{Allocation lower bound.}

We vary the population-optimal weights over an $r$-dimensional portfolio class, and the Kullback--Leibler divergence $\KL(P,Q)$ from $P$ to $Q$ determines how hard a wrong choice is to avoid.

The following assumption specifies this submodel, with $K$, $r$, the feasible classes, and all constants fixed as $n\to\infty$.

\begin{assumption}[Regular $r$-dimensional allocation submodel]
\label{ass:allocation-lower}
There are a distribution $P_0\in\calP_K$, a class $m_0\in\mathfrak M_K$, a point $w_0\in\operatorname{relint}(\calW_{K,m_0})$, a linear isomorphism $B:\bbR^r\to\operatorname{lin}(\calW_{K,m_0}-w_0)$, and a bounded finite-support family $\{P_t:\norm{t}_2\leq\eta\}\subset\calP_K$ with the following properties.
\begin{enumerate}[label=(\roman*),leftmargin=2.1em]
  \item There is $\eta_x>0$ such that $w_0+Bx\in\calW_{K,m_0}$ whenever $\norm{x}_2\leq\eta_x$.
  \item The following expansion holds uniformly as $(t,x)\to(0,0)$:
  \begin{align}
    V_{P_t}(w_0+Bx)-V_{P_t}(w_0)
    =
    \kappa\ip{t}{x}
    -
    \frac{\lambda}{q}\norm{x}_2^q
    +
    o\left(\norm{t}_2\norm{x}_2+\norm{x}_2^q\right),
    \label{eq:radial-local-objective}
  \end{align}
  where $\kappa>0$, $\lambda>0$, and $q>1$.
  \item For all sufficiently small nonzero $t$, every oracle under $P_t$ lies in the coordinate neighborhood in part (i). There are constants $C_{\mathrm{in}}<\infty$ and $c_{\mathrm{out}}>0$ such that every decision outside $\{w_0+Bx:\norm{x}_2\leq C_{\mathrm{in}}\norm{t}_2^{1/(q-1)}\}$ has regret at least $c_{\mathrm{out}}\norm{t}_2^{q/(q-1)}$.
  \item For all sufficiently small $t$ and $t'$, we have
  \begin{align}
    \KL(P_t,P_{t'})
    \leq
    C\norm{t-t'}_2^2.
    \label{eq:radial-kl}
  \end{align}
\end{enumerate}
\end{assumption}

\begin{theorem}[Allocation lower bound]
\label{thm:allocation-lower}
Under Assumption~\ref{ass:allocation-lower}, there is a constant $c>0$ such that
\begin{align}
  \mathcal V_{n,K}^{\mathrm{mm}}
  \geq
  c\left(\frac{r}{n}\right)^{q/(2(q-1))}
  \label{eq:allocation-lower}
\end{align}
for all sufficiently large $n$.
\end{theorem}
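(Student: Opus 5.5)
Our plan is an Assouad-type reduction on a hypercube of alternatives inside the submodel of Assumption~\ref{ass:allocation-lower}, combined with the restriction inequality \eqref{eq:submodel-lower}. Fix a scale $\epsilon_n=c_1 n^{-1/2}$ with a small constant $c_1$ chosen later, and for $\tau\in\{-1,1\}^r$ set $t_\tau=\epsilon_n\tau/\sqrt r\in\bbR^r$, so $\norm{t_\tau}_2=\epsilon_n\to0$ and all alternatives lie in $\{P_t:\norm{t}_2\leq\eta\}$ for large $n$. By the expansion \eqref{eq:radial-local-objective}, the local objective $x\mapsto\kappa\ip{t}{x}-\frac\lambda q\norm{x}_2^q$ is maximized at $x^\star(t)=(\kappa\norm{t}_2/\lambda)^{1/(q-1)}t/\norm{t}_2$, so the oracle under $P_{t_\tau}$ sits at $w_0+Bx_\tau$ with $x_\tau$ of norm $\asymp\epsilon_n^{1/(q-1)}$ pointing along $\tau$; part~(iii) guarantees that the true oracles are in the coordinate chart and that decisions far from this neighborhood pay at least $c_{\mathrm{out}}\epsilon_n^{q/(q-1)}$, which already exceeds the target order.

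The key step is a separable lower bound on regret: for any decision $w=w_0+Bx$ in the chart neighborhood of part~(iii), I will show
\[
 r_{P_{t_\tau}}(w)\geq c_2\,\epsilon_n^{q/(q-1)}\,\frac1r\sum_{a=1}^r\1\{\operatorname{sign}(x_a)\neq\tau_a\}-o\bigl(\epsilon_n^{q/(q-1)}\bigr).
\]
To get this, I would write the regret via \eqref{eq:radial-local-objective} as $\Psi_t(x^\star(t))-\Psi_t(x)$ plus the remainder, where $\Psi_t(x)=\kappa\ip tx-\frac\lambda q\norm x_2^q$, and note that $\Psi_t(x^\star)-\Psi_t(x)\geq \kappa\ip{t}{x^\star-x}$-type comparisons: for fixed $\norm{x}_2=\rho$ the loss contains $\kappa(\norm t_2\rho-\ip tx)=\kappa\rho\norm t_2(1-\cos)$, and each coordinate with a wrong sign contributes at least $\kappa|t_a||x_a|\cdot 2$ to $\norm t_2\rho-\ip tx$. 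Combining with the radial part (which penalizes $\rho$ far from $\norm{x^\star}_2$) and a case split on whether $\rho\gtrsim\epsilon_n^{1/(q-1)}$ gives the coordinatewise bound; coordinates with $|x_a|$ tiny are handled because then the optimal coordinate mass is missing, again costing order $\epsilon_n^{q/(q-1)}/r$. The remainder $o(\norm t_2\norm x_2+\norm x_2^q)$ is $o(\epsilon_n^{q/(q-1)})$ on the neighborhood, hence absorbed.

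Given the separable bound, Assouad's lemma with a randomized rule (project any output to the chart, or count it as full loss by part~(iii)) yields
\[
 \sup_\tau\calR_n(\delta_n,P_{t_\tau})\geq \frac{c_2}{2}\epsilon_n^{q/(q-1)}\min_{\tau\sim\tau'}\bigl(1-\norm{P_{t_\tau}^{\otimes n}-P_{t_{\tau'}}^{\otimes n}}_{\mathrm{TV}}\bigr)-o(\epsilon_n^{q/(q-1)}),
\]
where $\tau\sim\tau'$ differ in one coordinate. By Pinsker and \eqref{eq:radial-kl}, $\KL(P^{\otimes n}_{t_\tau},P^{\otimes n}_{t_{\tau'}})\leq Cn\,4\epsilon_n^2/r=4Cc_1^2/r\leq 4Cc_1^2$, so choosing $c_1$ small keeps the total variation below $1/2$. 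This gives only $n^{-q/(2(q-1))}$; to capture the factor $r^{q/(2(q-1))}$ I would instead take $\epsilon_n=c_1\sqrt{r/n}$ so each coordinate perturbation is $c_1 n^{-1/2}$ and the per-pair KL is $\leq 4Cc_1^2$, independent of $r$. Since $r$ and all constants are fixed, this rescaling is legitimate and gives \eqref{eq:allocation-lower}.

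The main obstacle is the separable Hamming lower bound on the regret for general $q>1$, because the radial penalty $\norm x_2^q$ is not coordinatewise, so the loss does not split exactly; I expect to need the case analysis on $\norm{x}_2$ relative to $\norm{x^\star}_2$ and uniform control of the $o(\cdot)$ remainder. If this proves messy, a fallback is a Fano/Varshamov--Gilbert packing of sign vectors with pairwise Hamming distance $\geq r/8$, which only requires a lower bound on the regret when $x$ is far from $x_\tau$ in angle, obtained directly from the cosine term above.
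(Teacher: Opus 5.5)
Your proof is correct, but it takes a different route from the paper.

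\textbf{The paper's route.} The paper uses Fano's inequality on a packing $u_1,\ldots,u_M$ of the unit sphere in $\bbR^r$ with $\log M\geq c_Mr$, and sets $t_j=\delta u_j$ with $\delta=c_\delta\sqrt{r/n}$. It rescales $x=\delta^{1/(q-1)}y$, so that the local objective converges, uniformly over the finite packing, to $\kappa\ip{u_j}{y}-\frac\lambda q\norm{y}_2^q$. It then needs only a separation property: regret at least $c_r\delta^{q/(q-1)}$ outside disjoint neighborhoods of the separated maximizers, which follows from strict concavity of the limit. Bounded $r$ is handled separately by a two-point argument with $t_\pm=\pm\delta e_1$.

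\textbf{Your route.} The Assouad hypercube avoids both the packing and this case split. Each adjacent pair has Kullback--Leibler divergence at most $4Cc_1^2$ whatever $r$ is, so the factor $r^{q/(2(q-1))}$ appears naturally. The price is that you need a Hamming-additive loss bound, which Fano does not require.

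\textbf{The key step you flagged.} That bound is easier than your proposed case analysis. Write $\Psi^\star=\Psi_t(x^\star(t))$ and $\cos\phi=\ip{t}{x}/(\norm{t}_2\norm{x}_2)$. Maximizing $\Psi_t$ over the radius along the ray through $x$ gives
\[
\Psi^\star-\Psi_t(x)\geq\Psi^\star\bigl(1-(\cos\phi)_+^{q/(q-1)}\bigr)\geq\tfrac12\Psi^\star(1-\cos\phi).
\]
For $t=t_\tau$, we have $2(1-\cos\phi)=\norm{x/\norm{x}_2-\tau/\sqrt r}_2^2\geq\abs{S}/r$, where $S$ is the set of wrong-sign coordinates, since each such coordinate contributes at least $1/r$. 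So no split on $\norm{x}_2$ and no separate treatment of small $\abs{x_a}$ is needed. The point $x=0$ costs $\Psi^\star$ directly. With the uniform $o(\epsilon_n^{q/(q-1)})$ remainder on the common neighborhood and part~(iii) outside it, your separable bound follows.
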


Here $r$ is the affine dimension of $\calW_{K,m_0}$. For $q=2$, Theorem~\ref{thm:allocation-lower} gives the rate $r/n$, and, for fixed $r$, the power of $n$ in \eqref{eq:allocation-lower} matches that of the upper bound in Section~\ref{sec:complexity}.

\paragraph{Lower bound for class selection.}

The next construction changes which class contains the optimal portfolio while keeping the Kullback--Leibler divergences small, so a rule that chooses a wrong class loses a specified amount of expected utility however accurately it estimates the weights.

\begin{assumption}[Separated near-tie packing]
\label{ass:selection-packing}
There are constants $c_1>0$, $c_2>0$, and $\delta_0>0$ with the following property. For every $0<\delta\leq\delta_0$ at which the lower bound is applied, there are an integer $M\geq2$, pairwise disjoint measurable sets $\mathcal U_1,\ldots,\mathcal U_M\subset\calW_K$, and bounded distributions with finite support $P_0,P_1,\ldots,P_M\in\calP_K$ such that
\begin{enumerate}[label=(\roman*),leftmargin=2.1em]
  \item every oracle portfolio under $P_j$ belongs to $\mathcal U_j$;
  \item every $w\notin \mathcal U_j$ satisfies $r_{P_j}(w)\geq c_1\delta$;
  \item the average Kullback--Leibler divergence of one observation satisfies
  \begin{align}
    \frac1M\sum_{j=1}^M\KL(P_j,P_0)
    \leq
    c_2\delta^2.
    \label{eq:packing-kl}
  \end{align}
\end{enumerate}
The integer $M$ may depend on $\delta$ and hence on $n$.
\end{assumption}

\begin{theorem}[Class-selection lower bound]
\label{thm:selection-lower}
Under Assumption~\ref{ass:selection-packing}, there are constants $c>0$ and $c'>0$, depending only on $c_1$ and $c_2$, such that
\begin{align}
  \mathcal V_{n,K}^{\mathrm{mm}}
  \geq
  c\sqrt{\frac{\log M}{n}}
  \label{eq:selection-lower}
\end{align}
provided that $\log M\leq c'n$ and the packing is available at $\delta\asymp\sqrt{\log M/n}$.
\end{theorem}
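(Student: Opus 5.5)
The plan is Fano's method applied to the packing of Assumption~\ref{ass:selection-packing}, combined with the submodel reduction \eqref{eq:submodel-lower}. Fix $n$ and take $\delta=c_3\sqrt{\log M/n}$ for a small constant $c_3$ chosen below. The condition $\log M\leq c'n$, with $c'$ small enough relative to $c_3$ and $\delta_0$, guarantees $\delta\leq\delta_0$, so the packing is available. By \eqref{eq:submodel-lower}, it suffices to bound $\inf_{\delta_n}\max_{1\leq j\leq M}\calR_n(\delta_n,P_j)$ from below.

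First, reduce portfolio choice to a test. Given any rule $\delta_n$ with output $\widehat w_n$, define the index $\widehat J$ as the $j$ with $\widehat w_n\in\mathcal U_j$, and $\widehat J=1$ if there is no such $j$. This is measurable because the sets $\mathcal U_j$ are measurable and pairwise disjoint, and any randomization is carried along. Under $P_j$, the event $\widehat J\neq j$ implies $\widehat w_n\notin\mathcal U_j$. Part (ii) then gives $r_{P_j}(\widehat w_n)\geq c_1\delta$ on that event. Since regret is nonnegative,
\[
\max_j\calR_n(\delta_n,P_j)\geq\frac1M\sum_{j=1}^M\calR_n(\delta_n,P_j)\geq c_1\delta\,\frac1M\sum_{j=1}^M\Pr_{P_j}(\widehat J\neq j).
\]
Part (i) is needed only for this interpretation and is not otherwise used.

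Second, apply Fano's inequality with $P_0$ as the reference distribution. The mutual information between a uniform index $J$ and the data is at most $\frac1M\sum_j\KL(P_j^{\otimes n},P_0^{\otimes n})$. Tensorization and \eqref{eq:packing-kl} bound this by $nc_2\delta^2=c_2c_3^2\log M$. Randomized rules are covered because the external randomization is independent of $J$. Fano then gives
\[
\frac1M\sum_j\Pr_{P_j}(\widehat J\neq j)\geq1-\frac{c_2c_3^2\log M+\log2}{\log M}.
\]

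The main obstacle is small $M$: when $M=2$ the $\log2$ term can make this bound vacuous. I would choose $c_3$ with $c_2c_3^2\leq1/4$. If $\log M\geq4\log2$, the error probability is then at least $1/2$. If $M<16$, I would instead use a two-point Le Cam bound. Pick $j\neq k$ with $\KL(P_j,P_0)$ and $\KL(P_k,P_0)$ each at most $Mc_2\delta^2$; the average bound and Markov's inequality guarantee at least two such indices. Apply Pinsker's inequality and the triangle inequality through $P_0^{\otimes n}$ to bound $\operatorname{TV}(P_j^{\otimes n},P_k^{\otimes n})$ by a small constant, using $M<16$ and $\log M\geq\log2$. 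The average error probability of the two-point test is then at least a constant.

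In either case the minimax risk is at least a constant times $c_1\delta$, which is of order $\sqrt{\log M/n}$. This proves \eqref{eq:selection-lower}, with $c$ and $c'$ depending only on $c_1$ and $c_2$.
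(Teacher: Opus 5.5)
Your proposal is correct and follows the paper's proof. Both arguments reduce a portfolio rule to an index estimate through the disjoint sets $\mathcal U_j$ and part (ii), apply Fano's inequality centred at $P_0^{\otimes n}$ with $\delta$ a small multiple of $\sqrt{\log M/n}$, and fall back on a two-point Pinsker bound via the triangle inequality through $P_0^{\otimes n}$ when $M$ is small. The only difference is the cutoff: the paper uses Fano for every $M\geq3$ (since $\log2/\log3<1$) and the two-point argument only for $M=2$, whereas you switch at $M=16$, which is equally valid.
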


Here $M$ counts return distributions in the construction, not portfolios or classes, so the result also applies when the decision set is infinite.

\begin{proof}[Proof of Theorem~\ref{thm:allocation-lower}]
When $r$ is sufficiently large, choose a packing $u_1,\ldots,u_M$ of the Euclidean unit sphere in $\bbR^r$ with pairwise distance at least $c_u>0$, $\log M\geq c_Mr$, and $\log M>2\log2$. Bounded values of $r$ are treated at the end. Set
\begin{align}
  t_j
  =
  \delta u_j,
  \qquad
  \delta
  =
  c_\delta\sqrt{\frac rn},
  \label{eq:allocation-packing-points}
\end{align}
where $c_\delta>0$ is small enough that every $t_j$ lies in the submodel of Assumption~\ref{ass:allocation-lower}.

The leading objective in \eqref{eq:radial-local-objective} has the unique maximizer
\begin{align}
  x^\star(t)
  =
  \left(\frac\kappa\lambda\right)^{1/(q-1)}
  \norm{t}_2^{1/(q-1)-1}t.
  \label{eq:local-oracle-coordinate}
\end{align}
Write $x=\delta^{1/(q-1)}y$ and $t=\delta u$. Dividing \eqref{eq:radial-local-objective} by $\delta^{q/(q-1)}$ gives, uniformly for $u$ in the finite packing and $y$ in a fixed bounded set,
\begin{align}
  \delta^{-q/(q-1)}
  \left(
    V_{P_{\delta u}}(w_0+B\delta^{1/(q-1)}y)
    -
    V_{P_{\delta u}}(w_0)
  \right)
  =
  \kappa\ip{u}{y}
  -
  \frac\lambda q\norm{y}_2^q
  +
  o(1).
  \label{eq:scaled-local-objective}
\end{align}
Because the $u_j$ are separated, so are the maximizers of the limiting functions, and we choose disjoint neighborhoods $\mathcal U_j$ of $w_0+Bx^\star(t_j)$ with radii proportional to $\delta^{1/(q-1)}$. Because the packing is finite, the limiting functions are strictly concave, the remainder in \eqref{eq:scaled-local-objective} is uniform, and part (iii) of Assumption~\ref{ass:allocation-lower} holds, it follows that, for all sufficiently large $n$,
\begin{align}
  \inf_{w\in\calW_K\setminus \mathcal U_j}r_{P_{t_j}}(w)
  \geq
  c_r\delta^{q/(q-1)}
  \label{eq:allocation-decision-separation}
\end{align}
for a constant $c_r>0$ independent of $j$ and $n$.

By \eqref{eq:radial-kl}, we have
\begin{align}
  \KL(P_{t_j}^{\otimes n},P_0^{\otimes n})
  \leq
  Cn\delta^2
  \leq
  Cc_\delta^2r.
  \label{eq:allocation-product-kl}
\end{align}
Choose $c_\delta$ so that the last quantity is at most a sufficiently small multiple of $\log M$. Any rule induces an estimator of $j$: it returns the index of the set $\mathcal U_j$ that contains the rule's output, or an arbitrary index if no such set exists. By Fano's inequality and \eqref{eq:allocation-product-kl}, the average probability of an incorrect index is bounded away from zero, and \eqref{eq:allocation-decision-separation} gives
\begin{align}
  \sup_{1\leq j\leq M}
  \E_{P_{t_j}}[r_{P_{t_j}}(\widehat w_n)]
  \geq
  c\delta^{q/(q-1)}
  =
  c\left(\frac rn\right)^{q/(2(q-1))}.
  \label{eq:allocation-fano-conclusion}
\end{align}
For bounded values of $r\geq1$, use $t_+=\delta e_1$ and $t_-=-\delta e_1$ with $\delta=c_\delta/\sqrt n$, which lie in the submodel for all sufficiently large $n$. The same calculation gives disjoint neighborhoods of the two optimal portfolios, and choosing the wrong one incurs regret of order $\delta^{q/(q-1)}$. By \eqref{eq:radial-kl}, $\KL(P_{t_+}^{\otimes n},P_{t_-}^{\otimes n})\leq4C c_\delta^2$, and Pinsker's inequality bounds the total variation distance by $\sqrt{2C}\,c_\delta$, so every binary decision has average error at least $(1-\sqrt{2C}\,c_\delta)/2>0$ for small $c_\delta$. Since $r$ is bounded, the rate $n^{-q/(2(q-1))}$ is the stated rate up to a constant.
\end{proof}

\paragraph{Proof of the selection lower bound.}
We use Fano's inequality when $M\geq3$ and a two-point comparison when $M=2$.

\begin{proof}[Proof of Theorem~\ref{thm:selection-lower}]
Let $J$ be uniform on $\{1,\ldots,M\}$ and let the observations have distribution $P_J^{\otimes n}$. A rule induces an estimator $\widehat J$, equal to $j$ when the selected portfolio lies in $\mathcal U_j$ and arbitrary otherwise. Part (ii) of Assumption~\ref{ass:selection-packing} gives
\begin{align}
  \E_{P_j}\left[r_{P_j}(\widehat w_n)\right]
  \geq
  c_1\delta\Pr_{P_j}(\widehat J\neq j).
  \label{eq:selection-regret-error}
\end{align}
For $M\geq3$, Fano's inequality and \eqref{eq:packing-kl} give
\begin{align}
  \frac1M\sum_{j=1}^M\Pr_{P_j}(\widehat J\neq j)
  \geq
  1-
  \frac{nc_2\delta^2+\log2}{\log M}.
  \label{eq:selection-fano}
\end{align}
With $\delta=c_\delta\sqrt{\log M/n}$ and small $c_\delta$, the right-hand side of \eqref{eq:selection-fano} is uniformly positive because $\log2/\log M\leq\log2/\log3<1$.

For $M=2$, the triangle inequality through $P_0^{\otimes n}$ and Pinsker's inequality bound the total variation distance: $\operatorname{TV}(P_1^{\otimes n},P_2^{\otimes n})\leq\sqrt{n\operatorname{KL}(P_1,P_0)/2}+\sqrt{n\operatorname{KL}(P_2,P_0)/2}\leq\sqrt{2nc_2\delta^2}$. The average binary error is thus at least $(1-\sqrt{2nc_2\delta^2})/2$, which is positive for $\delta=c_\delta\sqrt{\log2/n}$ and small $c_\delta$. Combining either error bound with \eqref{eq:selection-regret-error} proves \eqref{eq:selection-lower}.
\end{proof}

\section{Efficient Estimation of Expected Utility and Portfolio Weights}\label{app:influence}
\label{sec:information}

The constants in Sections~\ref{sec:minimax} and~\ref{sec:bayes} are expressed through efficient estimation of the optimized expected utilities and weights, so we derive the corresponding influence functions here, together with the constant used in the single-class simulation.

\paragraph{Expected utility at specified portfolios.}

Restrictions on the return distribution can reduce the variance of an efficient estimate of expected utility and so sharpen portfolio comparisons. For a fixed portfolio $w$, let
\begin{align}
  Y_w(R)
  =
  U\left(x_0+w^\top R\right),
  \qquad
  V_P(w)=\E_P[Y_w(R)].
  \label{eq:portfolio-utility}
\end{align}
In an unrestricted dominated nonparametric model, the efficient influence function of $V_P(w)$ is
\begin{align}
  \phi_w^{\mathrm{np}}(R)
  =
  Y_w(R)-V_P(w).
  \label{eq:np-value-if}
\end{align}
For a known regular parametric model with score vector $S_\theta$ and Fisher information $I_\theta$, the efficient influence function is
\begin{align}
  \phi_w^{\mathrm{par}}(R)
  =
  \E_\theta\left[
    \phi_w^{\mathrm{np}}(R)S_\theta(R)^\top
  \right]
  I_\theta^{-1}S_\theta(R).
  \label{eq:par-value-if}
\end{align}
The function in \eqref{eq:par-value-if} is the orthogonal projection of $\phi_w^{\mathrm{np}}$ onto the parametric score space, which gives the following covariance comparison.

\begin{proposition}[Value of parametric information]
\label{prop:parametric-information}
For every finite collection of differences between portfolio expected utilities, the parametric efficient covariance is weakly smaller than the nonparametric efficient covariance in positive-semidefinite order. The covariance difference equals the covariance of the projection residual. Equality for a scalar utility difference holds if and only if that residual has zero variance.
\end{proposition}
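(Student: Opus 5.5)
The plan is to work entirely in $L^2_0(P_\theta)$, the space of mean-zero square-integrable functions of one return, where efficient covariances are covariances of influence functions. Fix a finite collection of differences $V_P(w_j)-V_P(w_j')$, $j=1,\ldots,J$. In the nonparametric model the influence function of the $j$th difference is $\psi_j=\phi_{w_j}^{\mathrm{np}}-\phi_{w_j'}^{\mathrm{np}}$, by linearity of \eqref{eq:np-value-if}. In the parametric model it is $\Pi\psi_j$, by linearity of \eqref{eq:par-value-if} in $\phi^{\mathrm{np}}$. Here $\Pi f=\E_\theta[fS_\theta^\top]I_\theta^{-1}S_\theta$.

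The first step is to verify that $\Pi$ is the $L^2(P_\theta)$-orthogonal projection onto the linear span of the score coordinates. It is idempotent, since $\E[S_\theta S_\theta^\top]=I_\theta$. It is also self-adjoint, since $\E[(\Pi f)g]=\E[fS_\theta^\top]I_\theta^{-1}\E[S_\theta g]$, which is symmetric in $f$ and $g$. Hence the residual $\psi_j-\Pi\psi_j$ is orthogonal to every $\Pi\psi_k$.

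Next I expand covariances. Writing $\psi=(\psi_j)_j$ as a vector,
\begin{align*}
\E[\psi\psi^\top]=\E[(\Pi\psi)(\Pi\psi)^\top]+\E[(\psi-\Pi\psi)(\psi-\Pi\psi)^\top],
\end{align*}
because the cross terms vanish by orthogonality. The left side is the nonparametric efficient covariance and the first term on the right is the parametric one. So their difference is the covariance of the projection residual, which is positive semidefinite. This gives both the ordering and the identification of the gap.

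For the scalar case ($J=1$), the difference of the two variances is $\Var(\psi_1-\Pi\psi_1)$, and this is zero exactly when the residual has zero variance. Since both $\psi_1$ and $\Pi\psi_1$ have mean zero, this is equivalent to $\psi_1\in\operatorname{span}(S_\theta)$ almost surely.

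The only delicate point is justifying that \eqref{eq:np-value-if} and \eqref{eq:par-value-if} are the efficient influence functions, so that the efficient covariances are the second moments above. I take this as given from the preceding discussion, using that the parametric tangent space is the score span and the nonparametric one is all of $L^2_0$. Integrability of $\psi_j$ in $L^2$ follows from boundedness of utility on the bounded support, or from the moment conditions of Assumption~\ref{ass:utility}.
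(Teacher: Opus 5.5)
Your proof is correct and follows the paper's argument: the parametric efficient influence function is the $L^2(P_\theta)$-orthogonal projection of the nonparametric one onto the score span, and the Pythagorean decomposition gives the ordering, the residual-covariance identity, and the scalar equality condition. The only difference is cosmetic: you expand the matrix covariance directly, whereas the paper applies the scalar variance decomposition to every linear combination of the differences.
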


With $K$ and the parameter dimension fixed, correctly specified parametric information therefore changes the constants of the leading regret terms but not the powers of $n$.

\paragraph{Efficient estimation at a unique oracle.}

When expected utility has a unique regular maximizer, the estimation target is the portfolio itself, whose influence function follows from the first-order condition. Suppose that $\calW_K$ is locally an affine set at a unique oracle $w_P^\star$, and let the columns of $N\in\bbR^{K\times r}$ form an orthonormal basis for the feasible tangent space. The first-order condition is
\begin{align}
  N^\top
  \E_P\left[
    U'\left(x_0+w_P^{\star\top}R\right)R
  \right]
  =0.
  \label{eq:foc}
\end{align}
Define
\begin{align}
  J_P
  =
  -N^\top
  \E_P\left[
    U''\left(x_0+w_P^{\star\top}R\right)RR^\top
  \right]
  N.
  \label{eq:J}
\end{align}
Assume that $J_P$ is positive definite. In the unrestricted nonparametric model, the efficient influence function of the tangent coordinate $\xi_P$ in $w=w_P^\star+N\xi$ is
\begin{align}
  \psi_P^{\mathrm{np}}(R)
  =
  J_P^{-1}N^\top
  U'\left(x_0+w_P^{\star\top}R\right)R.
  \label{eq:weight-if}
\end{align}
The efficient influence function of the full weight vector is $N\psi_P^{\mathrm{np}}$. In a parametric model, the efficient influence function $\psi_P^{\mathrm{eff}}$ of the tangent coordinate, with covariance matrix $V_P^{\mathrm{eff}}$, is the projection of \eqref{eq:weight-if} onto the score space.

The following theorem translates the covariance of an efficient estimate into expected regret, with $P$, $K$, the tangent dimension $r$, and $U$ fixed as $n\to\infty$.

\begin{theorem}[Exact constant at a regular unique oracle]
\label{thm:unique-oracle-constant}
Suppose that Assumption~\ref{ass:utility} holds through the second derivative, $J_P$ is positive definite, and the oracle map is pathwise differentiable. Let $\widehat w_n=w_P^\star+N\widehat\xi_n$ with probability tending to one, and suppose that
\begin{align}
  \sqrt n\,\widehat\xi_n
  =
  \frac1{\sqrt n}\sum_{i=1}^n\psi_P^{\mathrm{eff}}(R_i)
  +o_P(1),
  \qquad
  \Var_P\left(\psi_P^{\mathrm{eff}}(R)\right)
  =
  V_P^{\mathrm{eff}}.
  \label{eq:unique-oracle-linear-representation}
\end{align}
Assume that the sequence $n\norm{\widehat\xi_n}_2^2$ is uniformly integrable and that the quadratic expansion is integrable in the sense that
\begin{align}
  \E_P\left[
    n\left|
      r_P(\widehat w_n)
      -
      \frac12\widehat\xi_n^\top J_P\widehat\xi_n
    \right|
  \right]
  \to0.
  \label{eq:integrable-quadratic-expansion}
\end{align}
Then, we have
\begin{align}
  \lim_{n\to\infty}
  n\E_P\left[r_P(\widehat w_n)\right]
  =
  \frac12\operatorname{tr}\left(J_PV_P^{\mathrm{eff}}\right).
  \label{eq:quadratic-constant}
\end{align}
\end{theorem}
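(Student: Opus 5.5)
The plan is to replace the regret by its quadratic approximation using the integrability hypothesis, and then compute the expectation of that quadratic form from the asymptotically linear representation.

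\textbf{Step 1 (reduce to a quadratic form).} By \eqref{eq:integrable-quadratic-expansion},
\begin{align}
  n\E_P[r_P(\widehat w_n)]
  =
  \frac12\E_P\left[n\widehat\xi_n^\top J_P\widehat\xi_n\right]+o(1),
\end{align}
so it suffices to show that $\E_P[n\widehat\xi_n^\top J_P\widehat\xi_n]\to\operatorname{tr}(J_PV_P^{\mathrm{eff}})$. The representation $\widehat w_n=w_P^\star+N\widehat\xi_n$ holds only with probability tending to one. On the complementary event I take $\widehat\xi_n$ to be the coordinate of the projection of $\widehat w_n-w_P^\star$ onto the tangent space, which is how the hypothesis on $n\norm{\widehat\xi_n}_2^2$ is read. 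The exceptional event then enters only through \eqref{eq:integrable-quadratic-expansion} and the uniform integrability assumption, so no separate argument is needed for it.

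\textbf{Step 2 (weak limit).} By \eqref{eq:unique-oracle-linear-representation}, the multivariate central limit theorem applied to the i.i.d.\ mean-zero vectors $\psi_P^{\mathrm{eff}}(R_i)$, together with Slutsky's lemma, gives $\sqrt n\,\widehat\xi_n\rightsquigarrow G\sim N(0,V_P^{\mathrm{eff}})$. The mean-zero property holds because $\psi_P^{\mathrm{eff}}$ is an influence function, being a projection onto the mean-zero score space. By the continuous mapping theorem,
\begin{align}
  n\widehat\xi_n^\top J_P\widehat\xi_n\rightsquigarrow G^\top J_PG .
\end{align}

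\textbf{Step 3 (convergence of moments).} This is the only step needing care. Since $0\le n\widehat\xi_n^\top J_P\widehat\xi_n\le\norm{J_P}\,n\norm{\widehat\xi_n}_2^2$, the quadratic forms inherit uniform integrability from the assumed uniform integrability of $n\norm{\widehat\xi_n}_2^2$. Convergence in distribution together with uniform integrability then yields convergence of expectations:
\begin{align}
  \E_P\left[n\widehat\xi_n^\top J_P\widehat\xi_n\right]
  \to
  \E[G^\top J_PG]
  =
  \operatorname{tr}(J_PV_P^{\mathrm{eff}}).
\end{align}
Combining this with Step 1 gives \eqref{eq:quadratic-constant}.

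Pathwise differentiability and $J_P\succ0$ are not used beyond guaranteeing that $\psi_P^{\mathrm{eff}}$ and the quadratic expansion are well defined.
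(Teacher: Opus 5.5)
Your proposal is correct and follows the paper's proof essentially step for step. The CLT with Slutsky gives $\sqrt n\,\widehat\xi_n\rightsquigarrow N(0,V_P^{\mathrm{eff}})$, uniform integrability transfers to the quadratic form and yields convergence of its expectation to $\operatorname{tr}(J_PV_P^{\mathrm{eff}})$, and the integrable quadratic expansion turns this into the regret limit. Your redefinition of $\widehat\xi_n$ on the exceptional event is not needed, since all three hypotheses are already stated directly in terms of $\widehat\xi_n$.
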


In \eqref{eq:quadratic-constant}, $J_P$ translates a weight error into a loss of expected utility, and $V_P^{\mathrm{eff}}$ is the covariance of an efficient estimate of those weights.

The constants that govern the comparison of classes also require the influence function of the optimized value within a class.

\begin{lemma}[Influence function for an optimized value within a portfolio class]
\label{lem:optimized-value-if}
Suppose that $w_m(P)$ is the unique regular optimizer in $\calW_{K,m}$. In the unrestricted dominated nonparametric model, the efficient influence function of $V_{P,m}^\star$ is
\begin{align}
  \phi_m(R)
  =
  U\left(x_0+w_m(P)^\top R\right)-V_{P,m}^\star.
  \label{eq:optimized-value-if}
\end{align}
\end{lemma}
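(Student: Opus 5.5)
The plan is to compute the pathwise derivative of $P\mapsto V_{P,m}^\star$ along an arbitrary smooth one-dimensional submodel and to read off its Riesz representer in $L^2_0(P)$. In the unrestricted dominated nonparametric model the tangent space is all of $L^2_0(P)$. Hence any mean-zero square-integrable gradient is already the efficient influence function, and no projection step is needed.

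\textbf{Submodel and objective.} Fix a bounded score $g\in L^2_0(P)$ and the submodel $\rmd P_t=(1+tg)\,\rmd P$ for small $\abs t$. Define
\begin{align}
  F(t,w)=\E_P\left[U(x_0+w^\top R)(1+tg(R))\right].
\end{align}
Then $V_{P_t,m}^\star=\sup_{w\in\calW_{K,m}}F(t,w)$, and $F$ is affine in $t$, with
\begin{align}
  \partial_tF(0,w)=\E_P\left[U(x_0+w^\top R)g(R)\right].
\end{align}
Assumption~\ref{ass:utility} makes this finite and jointly continuous in $w$. The integrability bound \eqref{eq:portfolio-utility-integrability} also gives $\sup_w\abs{\partial_tF(0,w)}<\infty$.

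\textbf{Envelope step.} This is the main step, and I would handle it with a Danskin-type argument. For the lower bound, evaluating at $w_m(P)$ gives
\begin{align}
  V_{P_t,m}^\star-V_{P,m}^\star\geq F(t,w_m(P))-F(0,w_m(P))=t\,\partial_tF(0,w_m(P)).
\end{align}
For the upper bound, let $w_t$ be any maximizer of $F(t,\cdot)$ on the compact class. Then
\begin{align}
  V_{P_t,m}^\star-V_{P,m}^\star\leq F(t,w_t)-F(0,w_t)=t\,\partial_tF(0,w_t).
\end{align}
Uniqueness of the regular optimizer, together with uniform convergence of $F(t,\cdot)$ to $F(0,\cdot)$ (the error is at most $\abs t\sup_w\abs{\partial_tF(0,w)}$), gives $w_t\to w_m(P)$. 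Continuity of $w\mapsto\partial_tF(0,w)$ then shows that both one-sided bounds have the same derivative. Hence
\begin{align}
  \frac{\rmd}{\rmd t}V_{P_t,m}^\star\Big|_{t=0}=\E_P\left[U(x_0+w_m(P)^\top R)g(R)\right]=\E_P[\phi_m(R)g(R)],
\end{align}
where the last equality uses $\E_P g=0$. The essential point is that the first-order effect of moving the optimizer vanishes. No derivative in $w$ is needed, and regularity enters only through uniqueness and continuity.

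\textbf{Identification of the influence function.} $\phi_m$ is mean zero, and it is square integrable under the moment bound on $U$ that is implicit in "regular" (I would assume second moments if necessary). The identity above holds for bounded $g$, which are dense in $L^2_0(P)$. Since the tangent space is the whole of $L^2_0(P)$, $\phi_m$ is the unique gradient lying in the tangent space, so it is the efficient influence function \eqref{eq:optimized-value-if}. The main obstacle is the convergence $w_t\to w_m(P)$ uniformly enough for the envelope argument to go through; compactness of the class and uniqueness of the optimizer should suffice.
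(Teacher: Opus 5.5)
Your proof is correct and follows the paper's envelope (squeeze) argument: you bound $V_{P_t,m}^\star-V_{P,m}^\star$ below by evaluating at $w_m(P)$ and above by evaluating at a maximizer $w_t$, then use $w_t\to w_m(P)$ to match the two derivatives. Two differences are worth noting. First, your restriction to the linear submodels $(1+tg)\,\rmd P$ with bounded $g$ makes the uniform control in $t$ exact, which the paper's general differentiable path leaves implicit. Second, you are right that no first-order condition in $w$ is needed, even though the paper cites it as the reason the two bounds share a derivative.
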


\begin{proof}
Let $P_t$ be a differentiable path through $P$ with score $s$, let
\begin{align}
  f(t,w)
  =
  \E_{P_t}\left[U\left(x_0+w^\top R\right)\right],
  \label{eq:path-objective}
\end{align}
and write $w_t=w_m(P_t)$. Optimality gives
\begin{align}
  f(t,w_m(P))-f(0,w_m(P))
  \leq
  V_{P_t,m}^\star-V_{P,m}^\star
  \leq
  f(t,w_t)-f(0,w_t).
  \label{eq:optimized-value-squeeze}
\end{align}
Regularity implies $w_t\to w_m(P)$. Moreover, the two sides of \eqref{eq:optimized-value-squeeze} have the same derivative at $t=0$, because at $w_m(P)$ the derivative of $w\mapsto V_P(w)$ is zero in every direction of the feasible tangent space. It follows that
\begin{align}
  \frac{\rmd}{\rmd t}V_{P_t,m}^\star\bigg|_{t=0}
  =
  \E_P\left[
    \left(
      U\left(x_0+w_m(P)^\top R\right)-V_{P,m}^\star
    \right)s(R)
  \right],
  \label{eq:optimized-value-path-derivative}
\end{align}
which proves \eqref{eq:optimized-value-if}.
\end{proof}

\begin{proof}[Proof of Proposition~\ref{prop:parametric-information}]
Let $\phi^{\mathrm{np}}$ be the influence function of a linear combination of the specified expected-utility differences, and let $\calS_\theta$ be the closed parametric score space. The parametric efficient influence function is the orthogonal projection
\begin{align}
  \phi^{\mathrm{par}}
  =
  \operatorname{proj}_{\calS_\theta}\phi^{\mathrm{np}}.
  \label{eq:parametric-projection}
\end{align}
Orthogonality of the projection residual to $\phi^{\mathrm{par}}$ yields the variance decomposition
\begin{align}
  \Var(\phi^{\mathrm{np}})
  =
  \Var(\phi^{\mathrm{par}})
  +
  \Var(\phi^{\mathrm{np}}-\phi^{\mathrm{par}}).
  \label{eq:projection-variance-decomposition}
\end{align}
Applying this identity to every linear combination gives the positive-semidefinite inequality, and for a scalar difference it gives the equality condition.
\end{proof}

\begin{proof}[Derivation of \eqref{eq:weight-if}]
Write $w^\star=w_P^\star$ and, for nearby $Q$, $w_Q^\star=w^\star+N\xi(Q)$. Define
\begin{align}
  F(Q,\xi)
  =
  N^\top\E_Q\left[
    U'\left(x_0+(w^\star+N\xi)^\top R\right)R
  \right].
  \label{eq:tangent-first-order-map}
\end{align}
Then $F(P,0)=0$, and the derivative of $F$ in $\xi$ at $(P,0)$ is $-J_P$. Along a path through $P$ with score $s$, the path derivative with $\xi=0$ held fixed is
\begin{align}
  N^\top\E_P\left[
    U'\left(x_0+w_P^{\star\top}R\right)R s(R)
  \right].
  \label{eq:tangent-path-derivative}
\end{align}
Differentiating $F(P_t,\xi(P_t))=0$ at $t=0$ with the implicit function theorem shows that the path derivative of the oracle coordinate is the expectation of \eqref{eq:weight-if} times $s(R)$.
\end{proof}

\begin{proof}[Proof of Theorem~\ref{thm:unique-oracle-constant}]
The asymptotic linear representation in \eqref{eq:unique-oracle-linear-representation} and the multivariate central limit theorem give
\begin{align}
  \sqrt n\,\widehat\xi_n
  \xrightarrow{d}
  Z_P,
  \qquad
  Z_P\sim N\left(0,V_P^{\mathrm{eff}}\right).
  \label{eq:weight-asymptotic-normality}
\end{align}
Uniform integrability of $n\norm{\widehat\xi_n}_2^2$ carries over to $n\widehat\xi_n^\top J_P\widehat\xi_n$, so \eqref{eq:weight-asymptotic-normality} implies
\begin{align}
  \lim_{n\to\infty}
  \E_P\left[
    n\widehat\xi_n^\top J_P\widehat\xi_n
  \right]
  =
  \E\left[Z_P^\top J_PZ_P\right]
  =
  \operatorname{tr}\left(J_PV_P^{\mathrm{eff}}\right).
  \label{eq:quadratic-form-limit}
\end{align}
The integrable quadratic expansion in \eqref{eq:integrable-quadratic-expansion} then gives
\begin{align}
  n\E_P\left[r_P(\widehat w_n)\right]
  =
  \frac12
  \E_P\left[
    n\widehat\xi_n^\top J_P\widehat\xi_n
  \right]
  +o(1),
  \label{eq:expected-regret-quadratic}
\end{align}
which proves \eqref{eq:quadratic-constant}.
\end{proof}

\section{Proof of the Mean--Variance Result}
\label{app:local-mv}
We prove Proposition~\ref{prop:mv-localization} and Theorem~\ref{thm:local-mean-variance} under Assumption~\ref{ass:mv-local-utility} with the limits \eqref{eq:mv-joint-limit}. Norms are Euclidean or induced matrix norms, and constants are independent of $\mu\in\Theta_n$ for sufficiently large $n$.

\subsection{Localization of the Population Optimum}
The bounded support in \eqref{eq:mv-mean-family} makes the mean map and covariance smooth near zero. The score with respect to the mean parameter is $S_\mu(R)=\Sigma_\mu^{-1}(R-\mu)$, and the Fisher information is $I_\mu=\Sigma_\mu^{-1}$.

At $\mu=0$, concavity gives $\E_0[U(x_0+w^\top R)]\leq U(x_0)$. The inequality is strict for $w\ne0$ because $U''(x_0)<0$ puts the supporting tangent at $x_0$ strictly above $U(x_0+x)$ for nonzero attainable $x$ and $\Sigma_0\succ0$ makes $w^\top R$ nonzero with positive probability. Hence, $w=0$ is the unique maximizer at $\mu=0$.

Because expected utility is jointly continuous in $(\mu,w)$ on the product of a fixed compact neighborhood of zero and $\calW_K$, uniqueness and compactness imply that all maximizers converge to zero as $\mu\to0$. The negative Hessian converges to $J_0=b\Sigma_0\succ0$ as $(\mu,w)\to(0,0)$, so it is positive definite on a fixed small ball around zero, and every maximizer is interior for sufficiently small $\mu$. The implicit function theorem applied to $\nabla_w V_{P_\mu}(w)=0$ gives a unique $C^2$ map $w^\star(\mu)$ with $w^\star(0)=0$ and derivative $\dot w^\star(0)=A^{-1}\Sigma_0^{-1}$. In particular, $\norm{w^\star(\mu)}_2\leq C\norm{\mu}_2$.

Taylor expansion of $U'$ at $x_0$ gives
\begin{align}
  \nabla_wV_{P_\mu}(w)
  =u_1\mu-b(\Sigma_\mu+\mu\mu^\top)w+O(\norm{w}_2^2).
  \label{eq:mv-gradient-expansion}
\end{align}
The constant in the remainder is uniform because the returns and $U'''$ are bounded on this neighborhood. Substituting $w=w^\star(\mu)=O(\norm{\mu}_2)$ into \eqref{eq:mv-gradient-expansion} and solving for $w$ proves \eqref{eq:mv-oracle-localization}. Expanding $U$ itself through its second derivative proves \eqref{eq:mv-local-objective}.

\subsection{Expected Regret of the Two Estimated Portfolios}
We first bound the error of the sample mean--variance portfolio. Write $\Delta_n=\bar R_n-\mu$ and $E_n=\widehat\Sigma_n-\Sigma_\mu$. Bounded returns give, for each fixed integer $r\geq2$,
\begin{align}
  \sup_{\mu\in\Theta_n}\E_\mu[\norm{\Delta_n}_2^r+\norm{E_n}^r]
  \leq C_r n^{-r/2}.
  \label{eq:mv-sample-moments}
\end{align}
Together with exponential bounds on deviations of a fixed size and the uniform positive definiteness of $\Sigma_\mu$ near zero, these bounds give an event $\mathcal F_n$ with $\sup_{\mu\in\Theta_n}\Pr_\mu(\mathcal F_n^c)\leq Ce^{-cn}$ on which $\widehat\Sigma_n\succ0$ and $A^{-1}\widehat\Sigma_n^{-1}\bar R_n$ lies in the interior of $\calW_K$ and hence equals $\widehat w_n^{\mathrm{MV}}$.

The inverse-matrix identity and \eqref{eq:mv-oracle-localization} yield
\begin{align}
  \widehat w_n^{\mathrm{MV}}-w^\star(\mu)
  &=A^{-1}\Sigma_\mu^{-1}\Delta_n+e_{n,\mu},
  \nonumber\\
  \sup_{\mu\in\Theta_n}\E_\mu[\norm{e_{n,\mu}}_2^2]
  &\leq C\left(r_{\mu,n}^4+\frac{r_{\mu,n}^2}{n}+\frac1{n^2}+e^{-cn}\right).
  \label{eq:mv-estimator-expansion}
\end{align}
On $\mathcal F_n$, the remaining terms are bounded by $C(\norm{\mu}_2^2+\norm{E_n}\norm{\mu}_2+\norm{E_n}\norm{\Delta_n}_2)$, whose second moments are bounded by \eqref{eq:mv-sample-moments}. On $\mathcal F_n^c$, compactness bounds the portfolio error and bounded returns bound the linear term, so \eqref{eq:mv-estimator-expansion} holds without changing the rule on that event.

The assumptions on $r_{\mu,n}$ imply $n\sup_{\mu\in\Theta_n}\E_\mu[\norm{e_{n,\mu}}_2^2]\to0$. A third-moment version of the same calculation gives
\begin{align}
  \sup_{\mu\in\Theta_n}
  n\E_\mu[\norm{\widehat w_n^{\mathrm{MV}}-w^\star(\mu)}_2^3]
  \longrightarrow0.
  \label{eq:mv-third-moment}
\end{align}
Since the Hessian at the oracle converges to $-J_0=-b\Sigma_0$ as $\mu\to0$, a Taylor expansion around the interior optimum, together with \eqref{eq:mv-third-moment} and the exponentially small probability of the exceptional event, gives
\begin{align}
  \sup_{\mu\in\Theta_n}\left|
    n\calR_n(\delta_n^{\mathrm{MV}},P_\mu)
    -\frac12\operatorname{tr}\left(J_0A^{-2}\Sigma_\mu^{-1}\right)
  \right|\longrightarrow0.
  \label{eq:mv-uniform-expected-regret}
\end{align}
Because $\Sigma_\mu\to\Sigma_0$ uniformly on $\Theta_n$, the trace converges to $Ku_1^2/(2b)=C_0$. This proves the upper bounds in \eqref{eq:local-mv-minimax} and \eqref{eq:local-mv-bayes}.

For the one-class EUR rule, the maximum likelihood estimate is computed from all $n$ observations, and the resulting estimate of the mean equals $\bar R_n$ whenever $\bar R_n$ lies in the interior of the fixed estimation set, because the likelihood score vanishes exactly when $\nabla\psi(\eta)=\bar R_n$ and strict convexity of $\psi$ makes the maximum unique. The complementary event has probability at most $Ce^{-cn}$. Expanding $w^\star(\widehat\mu(D))$ gives
\begin{align}
  \widehat w_n^\star-w^\star(\mu)
  &=A^{-1}\Sigma_\mu^{-1}(\bar R_n-\mu)+\widetilde e_{n,\mu},
  \nonumber\\
  \sup_{\mu\in\Theta_n}\E_\mu[\norm{\widetilde e_{n,\mu}}_2^2]
  &\leq C\left(\frac{r_{\mu,n}^2}n+\frac1{n^2}+e^{-cn}\right).
  \label{eq:mv-eurr-expansion}
\end{align}
Here $\dot w^\star(\mu)=A^{-1}\Sigma_\mu^{-1}+O(\norm{\mu}_2)$ by $C^2$ regularity, so both rules have the same linear term in the same sample mean. An estimate computed from the second sample alone would use the mean $\bar R_{N_n}^{(1)}=N_n^{-1}\sum_{R_i\in D_1}R_i$, whose difference from $\bar R_n$ has the second-moment matrix
\begin{align}
  \E_\mu\left[
    (\bar R_{N_n}^{(1)}-\bar R_n)(\bar R_{N_n}^{(1)}-\bar R_n)^\top
  \right]
  =\left(\frac1{N_n}-\frac1n\right)\Sigma_\mu,
  \label{eq:mv-sample-split-difference}
\end{align}
which vanishes at rate $k_n/n$ relative to $\Sigma_\mu/n$. Equations \eqref{eq:mv-estimator-expansion}--\eqref{eq:mv-eurr-expansion} prove \eqref{eq:eurr-mv-equivalence}. The same quadratic-regret calculation gives the constant $C_0$ for the EUR rule, both uniformly over $\Theta_n$ and after integration under $H_n$.

\subsection{Lower Bound for Arbitrary Rules}
We prove that no rule improves the constant. After reducing arbitrary feasible decisions to a neighborhood of the cash portfolio $w=0$, we apply an integration-by-parts inequality of the type developed by \citet{Gill1995vantrees}.

Set $r_n=\sqrt{r_{\mu,n}}$. For sufficiently large $n$, the closed ball of radius $r_n$ lies in $\calW_K$. For a unit vector $v$ and $\mu\in\Theta_n$, the directional derivative at $r_nv$ satisfies
\begin{align}
  \frac{\rmd}{\rmd t}V_{P_\mu}(tv)\bigg|_{t=r_n}
  \leq u_1r_{\mu,n}-c r_n<0.
  \label{eq:mv-radial-derivative}
\end{align}
The bound holds because the negative Hessian is positive definite on a fixed small ball, and concavity along each ray then shows that replacing $w$ outside this ball by $r_nw/\norm{w}_2$ weakly increases expected utility for every $\mu\in\Theta_n$. Define $T_n(0)=0$ and $T_n(w)=w\min\{1,r_n/\norm{w}_2\}$ for $w\ne0$. Because the projection $T_n$ depends on neither $\mu$ nor $H_n$, it can be applied to any randomized rule, so we may restrict all output portfolios to this ball.

Since $w^\star(\mu)=O(r_{\mu,n})$, continuity of the Hessian gives a deterministic $\epsilon_n\to0$ such that
\begin{align}
  r_{P_\mu}(w)
  \geq \frac{1-\epsilon_n}{2}
     (w-w^\star(\mu))^\top J_0(w-w^\star(\mu))
  \label{eq:mv-local-quadratic-lower}
\end{align}
for every $\mu\in\Theta_n$ and $\norm{w}_2\leq r_n$.

Let $\E$ now denote expectation under the joint distribution that first draws $\mu\sim H_n$ and then the sample from $P_\mu^{\otimes n}$, including any randomization of the rule. Its score with respect to $\mu$ is
\begin{align}
  S_{n,H_n}=\sum_{i=1}^nS_\mu(R_i)+\nabla\log h_n(\mu).
  \label{eq:mv-joint-score}
\end{align}
For any bounded measurable estimator $\widehat w$, integration by parts gives
\begin{align}
  \E[(\widehat w-w^\star(\mu))S_{n,H_n}^\top]
  &=\overline D_n, &\overline D_n&=\int_{\Theta_n}\dot w^\star(\mu)h_n(\mu)\,\rmd\mu.
  \label{eq:mv-score-identity}
\end{align}
The boundary term is zero because $h_n$ vanishes on the boundary of $\Theta_n$. Since the identity differentiates only the likelihood and $w^\star$, and not the estimator, it applies to every measurable rule.

Since likelihood scores have mean zero conditional on $\mu$ and the observations are independent, we have
\begin{align}
  \mathcal I_n:=\E[S_{n,H_n}S_{n,H_n}^\top]
  =n\int_{\Theta_n}\Sigma_\mu^{-1}h_n(\mu)\,\rmd\mu+r_{\mu,n}^{-2}\mathcal I(f).
  \label{eq:mv-joint-information}
\end{align}
Subtracting $\overline D_n\mathcal I_n^{-1}S_{n,H_n}$ from the estimation error and taking its second-moment matrix proves
\begin{align}
  \E[(\widehat w-w^\star(\mu))(\widehat w-w^\star(\mu))^\top]
  \succeq \overline D_n\mathcal I_n^{-1}\overline D_n^\top.
  \label{eq:mv-matrix-lower}
\end{align}
The condition $nr_{\mu,n}^2\to\infty$ makes the prior contribution in $\mathcal I_n/n$ vanish, so $\mathcal I_n/n\to\Sigma_0^{-1}$ and $\overline D_n\to A^{-1}\Sigma_0^{-1}$. Combining \eqref{eq:mv-local-quadratic-lower} and \eqref{eq:mv-matrix-lower} gives, for every rule,
\begin{align}
  \liminf_{n\to\infty}n\int_{\Theta_n}
       \calR_n(\delta_n,P_\mu)\,\rmd H_n(\mu)
  \geq\frac12\operatorname{tr}\left(J_0 A^{-2}\Sigma_0^{-1}\right)
  =C_0.
  \label{eq:mv-bayes-lower-proof}
\end{align}
The finite-$n$ bound before the limit, $(1-\epsilon_n)\operatorname{tr}(J_0\overline D_n\mathcal I_n^{-1}\overline D_n^\top)/2$, does not depend on the rule, so it bounds the infimum at each $n$. Moreover, the worst-case expected regret is at least the integrated expected regret. With \eqref{eq:mv-uniform-expected-regret}, this proves both equalities in Theorem~\ref{thm:local-mean-variance}.

\section{Proof of the Third-Moment Result}
\label{app:critical-moments}
We first establish \eqref{eq:third-moment-oracle}. Put $u_3=U'''(x_0)$ and $w_0(\mu)=A^{-1}\Sigma_\mu^{-1}\mu$. A Taylor expansion of $U'$ at $x_0$ gives
\begin{align}
  \nabla_wV_{P_\mu}(w)
  =u_1\mu-b(\Sigma_\mu+\mu\mu^\top)w
    +\frac{u_3}{2}\E_\mu[R(w^\top R)^2]+O(\norm{w}_2^3).
  \label{eq:critical-score-expansion}
\end{align}
The remainder is bounded on a neighborhood of zero because the returns are bounded and $U''''$ is continuous. Proposition~\ref{prop:mv-localization} gives $w^\star(\mu)=w_0(\mu)+O(\norm{\mu}_2^2)$, and after substitution into \eqref{eq:critical-score-expansion}, the term $\mu\mu^\top w^\star(\mu)$ is $O(\norm{\mu}_2^3)$. Moreover, replacing $P_\mu$ and $\Sigma_\mu$ by $P_0$ and $\Sigma_0$ inside the quadratic term changes that term by $O(\norm{\mu}_2^3)$, because the exponential family is smooth and the support is bounded. Solving the first-order condition for $w^\star(\mu)-w_0(\mu)$ proves \eqref{eq:third-moment-oracle} with the vector $\beta$ in \eqref{eq:third-moment-displacement}.

For $\mu=n^{-1/4}v$ and $\norm{v}_2\leq1$, the sample-moment bounds in \eqref{eq:mv-sample-moments} and the inverse-matrix expansion give
\begin{align}
  \sqrt n\left(\widehat w_n^{\mathrm{MV}}-w^\star(\mu)\right)
  &=A^{-1}\Sigma_0^{-1}\sqrt n(\bar R_n-\mu)-\beta(v)+e_{n,v},
  &\sup_{\norm{v}_2\leq1}\E_\mu[\norm{e_{n,v}}_2^2]&\longrightarrow0.
  \label{eq:critical-mv-estimation}
\end{align}
After multiplication by $\sqrt n$, the deterministic remainder is $O(\sqrt n\norm{\mu}_2^3)=O(n^{-1/4})$, and the random contributions from estimating the covariance have second moment at most $C(\norm{\mu}_2^2+n^{-1})$ after the same normalization. The events on which the unconstrained optimizer is infeasible have exponentially small probability, and they contribute $o(1)$ because the chosen portfolios lie in the compact feasible set. The corresponding fourth-moment bounds imply $n\E_\mu[\norm{\widehat w_n^{\mathrm{MV}}-w^\star(\mu)}_2^3]\to0$, uniformly over $\norm{v}_2\leq1$.

The linear random term in \eqref{eq:critical-mv-estimation} has expectation zero and covariance tending to $A^{-2}\Sigma_0^{-1}$, and the negative Hessian at the oracle tends to $J_0=b\Sigma_0$. Expanding regret at the interior oracle and using the preceding moment bounds therefore gives
\begin{align}
  n\calR_n(\delta_n^{\mathrm{MV}},P_\mu)
  &=\frac12\operatorname{tr}(J_0A^{-2}\Sigma_0^{-1})
    +\frac12 \beta(v)^\top J_0\beta(v)+o(1)
   =C_0+M_3(v)+o(1),
  \label{eq:critical-regret-calculation}
\end{align}
uniformly over $v$ in the unit ball, which proves \eqref{eq:third-moment-mv-regret}. For the EUR rule, \eqref{eq:mv-eurr-expansion} has no deterministic $\beta(v)$ term because the estimated optimizer maximizes the full expected utility under the fitted distribution, and $n$ times the second moment of its remainder still vanishes when $r_{\mu,n}=n^{-1/4}$. The same quadratic calculation proves \eqref{eq:third-moment-eurr-regret}.

The lower-bound argument in Appendix~\ref{app:local-mv} uses only $r_{\mu,n}\to0$ and $nr_{\mu,n}^2\to\infty$, so \eqref{eq:mv-radial-derivative}, \eqref{eq:mv-local-quadratic-lower}, and \eqref{eq:mv-matrix-lower} apply when $r_{\mu,n}=n^{-1/4}$, and the prior contribution divided by $n$ equals $n^{-1/2}\mathcal I(f)\to0$. The argument therefore gives the lower bound $C_0$ for both infima under the present $\Theta_n$ and $H_n$, attained by the EUR rule. Taking a supremum and integrating the uniform expansion \eqref{eq:third-moment-mv-regret} gives the two additional constants claimed in Theorem~\ref{thm:third-moment-regret}.

\section{Moment Separation}\label{app:moments}
The finite-support construction below produces two distributions whose moments up to a given order coincide but whose expected utilities differ. Applying the construction to two assets shows that the optimal portfolio reverses between the two resulting models.

\begin{proof}[Proof of Proposition~\ref{prop:moment-separation}]
Because $U$ is not a polynomial of degree at most $L$, we can choose $L+2$ distinct points $y_0,\ldots,y_{L+1}$ in $J$ with a nonzero divided difference of order $L+1$. Let $A$ be the $(L+1)\times(L+2)$ matrix with entries $A_{k,j}=y_j^k$ for $k=0,\ldots,L$, and choose a nonzero vector $\xi=(\xi_j)_{j=0}^{L+1}$ in its null space. The nonzero divided difference ensures that
\begin{align}
  \sum_{j=0}^{L+1}\xi_jU(y_j)
  \neq
  0.
  \label{eq:utility-null-direction}
\end{align}
Choose strictly positive $p_j$ with $\sum_{j=0}^{L+1}p_j=1$, and then take $\eta>0$ small enough that every $p_j\pm\eta \xi_j$ is nonnegative. The row of $A$ with $k=0$ implies $\sum_{j=0}^{L+1}\xi_j=0$, so both sets of perturbed masses sum to one. The remaining rows give the equal moments in \eqref{eq:matching-moments}, whereas \eqref{eq:utility-null-direction} makes their expected utilities different.
\end{proof}

\begin{proof}[Proof of Corollary~\ref{cor:moment-insufficiency}]
Take $Q_+$ and $Q_-$ from Proposition~\ref{prop:moment-separation} with $L=2$, interpret $X-x_0$ as an asset return, and form two models with independent asset returns, assigning $Q_+$ to asset 1 and $Q_-$ to asset 2 in the first and reversing them in the second. Their mean vectors and covariance matrices coincide, while the expected utility ranking of the pure-asset portfolios is reversed.

By continuity, the disjoint neighborhoods $\calW_{2,1}$ and $\calW_{2,2}$ of the pure-asset portfolios in \eqref{eq:moment-separation-feasible-set} can be chosen so that the largest expected utility in the preferred neighborhood exceeds that in the other by at least $c>0$. Since any rule based only on the common mean and covariance selects $\calW_{2,1}$ with the same probability $\pi_1$ under both models, its regret is at least $c(1-\pi_1)$ in one model and at least $c\pi_1$ in the other, and its worst-case regret is therefore at least $c/2$.
\end{proof}

\section{The Taylor Criteria Used in the Experiments}
\label{app:taylor}
The experiments compare empirical expected utility with the criteria obtained by replacing $U$ with its Taylor polynomials at initial wealth. For an integer $L\geq1$, define
\begin{align}
  U_L(x_0+x)&=\sum_{k=0}^{L}\frac{U^{(k)}(x_0)}{k!}x^k,
  &V_{P,L}(w)&=\E_P[U_L(x_0+w^\top R)].
  \label{eq:taylor-criterion}
\end{align}
The empirical criterion replaces the expectation by the sample average. When $L=2$, the population criterion $V_{P,L}(w)$ equals $U(x_0)+U'(x_0)w^\top\mu+U''(x_0)\E_P[(w^\top R)^2]/2$. The second-order comparison in Section~\ref{sec:experiments} thus includes the squared mean, which the covariance-based criterion in \eqref{eq:sample-mean-variance-portfolio} does not.

Let $J$ be the interval of attainable wealth, and suppose that $U$ has $L+1$ continuous derivatives on $J$ and write $M_{U,L+1}=\sup_{z\in J}|U^{(L+1)}(z)|$. If this bound is finite, Taylor's theorem gives
\begin{align}
  |U(x_0+x)-U_L(x_0+x)|
  \leq\frac{M_{U,L+1}}{(L+1)!}|x|^{L+1}.
  \label{eq:taylor-remainder-bound}
\end{align}
This bound concerns approximation of the objective. Theorem~\ref{thm:local-mean-variance}, on the other hand, concerns concentration of optimal and estimated portfolios and the resulting loss in expected utility.

\section{Proofs for Allocations Based on an Observed State}
\label{app:sieve}

We first prove the validation inequality used to choose an approximation class and then derive the conditional estimation rate.

\begin{proof}[Proof of Proposition~\ref{prop:sieve-oracle}]
Fix $m\leq\overline m_n$. On the event $\calE_n$, the definition of $\widehat m$ and \eqref{eq:sieve-validation-event} give
\begin{align}
  \mathcal V_P(\widetilde\pi_{\widehat m})
  &\geq
  \widehat{\mathcal V}_{\widehat m}^{\mathrm{val}}-p_{n_2,\widehat m}
  \nonumber\\
  &\geq
  \widehat{\mathcal V}_{m}^{\mathrm{val}}-2p_{n_2,m}+p_{n_2,\widehat m}
  \nonumber\\
  &\geq
  \mathcal V_P(\widetilde\pi_m)-3p_{n_2,m}.
  \label{eq:sieve-selection-chain}
\end{align}
Subtracting \eqref{eq:sieve-selection-chain} from $\mathcal V_P(\pi_P^\star)$ and adding and subtracting $\mathcal V_P(\pi_{P,m}^\star)$ gives
\begin{align}
  r_P(\widetilde\pi_{\widehat m})
  \leq
  A_P(m)
  +
  \mathcal V_P(\pi_{P,m}^\star)-\mathcal V_P(\widetilde\pi_m)
  +
  3p_{n_2,m}.
  \label{eq:sieve-proof-fixed-model}
\end{align}
Taking the infimum over $1\leq m\leq\overline m_n$ proves \eqref{eq:sieve-oracle-inequality}. Because utility is bounded, regret is at most $\overline U-\underline U$, so splitting the expectation over $\calE_n$ and its complement proves the statement in expectation.
\end{proof}

\paragraph{Regret from estimating conditional expected utility.}
\label{app:dynamic}
We bound regret through the integrated error of the estimated conditional gradient and obtain the matching lower bound from sign perturbations on cells of the state space.

\begin{proof}[Proof of Theorem~\ref{thm:state-rate}]
Fix an admissible bandwidth $h$ and define
\begin{align}
  \widehat\pi_h(s)
  \in
  \operatorname*{arg\,max}_{w\in\calW_K}\widehat Q_h(s,w).
  \label{eq:conditional-estimated-maximizer}
\end{align}
Use the tie-breaking rule in Section~\ref{sec:model} to choose a measurable maximizer. Holding the training sample fixed, write
\begin{align}
  \varepsilon_h(s)
  =
  \sup_{w\in\calW_K}
  \norm{\widehat g_h(s,w)-g_P(s,w)}_*.
  \label{eq:conditional-directional-estimation-error}
\end{align}
Let $A_h=(\int\varepsilon_h(s)^{p_q}P_S(\rmd s))^{1/p_q}$ be its integrated norm.
For almost every $s$, $\pi_P^\star(s)$ maximizes the conditional criterion and the segment from $\pi_P^\star(s)$ to $\widehat\pi_h(s)$ lies in one portfolio class, so \eqref{eq:conditional-directional-estimation-error} and the argument for Proposition~\ref{prop:within-class} give
\begin{align}
  Q_P(s,\pi_P^\star(s))-Q_P(s,\widehat\pi_h(s))
  \leq
  \varepsilon_h(s)
  \norm{\widehat\pi_h(s)-\pi_P^\star(s)}.
  \label{eq:conditional-variational-step}
\end{align}
Integrating \eqref{eq:conditional-variational-step} and applying H\"older's inequality gives
\begin{align}
  r_P(\widehat\pi_h)
  \leq A_h
  \left(\int
    \norm{\widehat\pi_h(s)-\pi_P^\star(s)}^qP_S(\rmd s)
  \right)^{1/q}.
  \label{eq:state-variational-step}
\end{align}
The utility-loss inequality in Assumption~\ref{ass:state-smoothness} implies
\begin{align}
  c_0\int
    \norm{\widehat\pi_h(s)-\pi_P^\star(s)}^qP_S(\rmd s)
  \leq A_h
  \left(\int
    \norm{\widehat\pi_h(s)-\pi_P^\star(s)}^qP_S(\rmd s)
  \right)^{1/q}.
  \label{eq:state-distance-estimation}
\end{align}
The bound is trivial if the integral on the left side of \eqref{eq:state-distance-estimation} is zero. If it is positive, dividing \eqref{eq:state-distance-estimation} by the $1/q$ power of this integral and substituting the result into \eqref{eq:state-variational-step} gives
\begin{align}
  r_P(\widehat\pi_h)
  \leq
  c_0^{-1/(q-1)}A_h^{q/(q-1)}.
  \label{eq:state-regret-estimation}
\end{align}
Since $p_q=q/(q-1)$ and $r\geq p_q$ in Assumption~\ref{ass:state-gradient-estimation}, Lyapunov's inequality and \eqref{eq:state-gradient-bound} imply
\begin{align}
  \sup_{P\in\calP^{\mathrm{state}}}
  \E_P\left[r_P(\widehat\pi_h)\right]
  \lesssim
  \left(
    h^\beta+
    \left(n_{\mathrm{eff}}^{\mathrm{up}}h^p\right)^{-1/2}
  \right)^{q/(q-1)}.
  \label{eq:state-proof-upper}
\end{align}
Taking the infimum over $h\in\mathcal H_n$ proves \eqref{eq:state-upper-rate}, and balancing the two terms gives \eqref{eq:nonparametric-gradient-rate}.

For the lower bound, partition a fixed interior subset of $[0,1]^p$ into $M_h\asymp h^{-p}$ cells of side proportional to $h$ and apply on each the signed probability-mass perturbation of Assumption~\ref{ass:state-lower}, with signs in $\{-1,1\}^{M_h}$ and size $\gamma_h=c_\gamma h^\beta$, where $c_\gamma>0$ keeps every perturbed conditional distribution in the model class. Two sign vectors differing on one cell give distributions of the observed time series with Kullback--Leibler divergence at most
\begin{align}
  C n_{\mathrm{eff}}^{\mathrm{low}}\gamma_h^2h^p.
  \label{eq:state-one-cell-information}
\end{align}
If $h$ is chosen so that \eqref{eq:state-one-cell-information} is at most a small constant, Assouad's lemma shows that the expected number of wrongly recovered signs is at least a multiple of $M_h$. Since each wrong sign costs regret of order $\gamma_h^{q/(q-1)}h^p$ and $M_hh^p\asymp1$, every rule then has worst-case regret of order at least $\gamma_h^{q/(q-1)}$. This choice of $h$ satisfies
\begin{align}
  n_{\mathrm{eff}}^{\mathrm{low}}h^{2\beta+p}
  \asymp
  1.
  \label{eq:state-information-balance}
\end{align}
Thus, the lower bound has order
\begin{align}
  \left(n_{\mathrm{eff}}^{\mathrm{low}}\right)^{-\beta q/((2\beta+p)(q-1))}.
  \label{eq:state-proof-lower}
\end{align}
When $n_{\mathrm{eff}}^{\mathrm{low}}\asymp n_{\mathrm{eff}}^{\mathrm{up}}$, \eqref{eq:state-proof-upper} and \eqref{eq:state-proof-lower} have the same order, proving \eqref{eq:nonparametric-state-rate}.
\end{proof}

\section{Proofs of Minimax and Bayes Optimality of the EUR Rule}
\label{app:likelihood-global}

Throughout this appendix, the number of classes and the parameter dimension are fixed, and constants are uniform over $\theta\in\Theta$ and may change between displays. Because the compact estimation set is larger than the evaluation set, a maximum likelihood estimate near the true parameter lies in the interior of the estimation set even when $\theta\in\partial\Theta$.

\subsection{Likelihood Concentration and Expected Allocation Regret}

The density bounds give uniform bounds for the log likelihood and its first four parameter derivatives, and compactness and identifiability give a positive Kullback--Leibler gap outside any fixed neighborhood of $\theta$. In a sample of size $N$, concentration on a fine finite net of the estimation set and the local Lipschitz bound for the log likelihood therefore keep every global likelihood maximizer in a fixed small neighborhood of $\theta$, except on an event of probability at most $C\exp(-cN)$.

Write $\bar S_N=N^{-1}\sum_i S_\theta(R_i)$ for the average score and $\widehat I_N=-N^{-1}\sum_i\nabla^2\log p_\theta(R_i)$ for the observed information. As averages of bounded centered terms, $\bar S_N$ and $\widehat I_N-I_\theta$ have sub-Gaussian tails and $r$th moments of order $N^{-r/2}$. On an event with probability at least $1-C\exp(-cN)$, the score equation and uniform negative curvature give a unique likelihood maximizer and the expansion
\begin{align}
  \widehat\theta_N-\theta
  &=I_\theta^{-1}\bar S_N+\rho_N,
  &
  \norm{\rho_N}_2
  &\leq C\left(\norm{\bar S_N}_2^2+\norm{\widehat I_N-I_\theta}\norm{\bar S_N}_2\right).
  \label{eq:likelihood-score-expansion}
\end{align}
Since the estimator's range is compact, the complement of the event is negligible and, for every fixed $r\geq1$,
\begin{align}
  \sup_{\theta\in\Theta}\E_\theta[\norm{\rho_N}_2^r]&\leq C_rN^{-r},
  &
  \sup_{\theta\in\Theta}\Pr_\theta(\norm{\widehat\theta_N-\theta}_2>t)
  &\leq C\exp(-cNt^2).
  \label{eq:likelihood-parameter-bounds}
\end{align}
For $t$ bounded away from zero, the second inequality follows from global likelihood separation.

Strong concavity in tangent coordinates, interiority of the optimizers, and the implicit function theorem imply that the maps $\theta\mapsto w_m(\theta)$ have uniformly bounded derivatives up to the orders used below. When the optimized expected utility is differentiated as in Lemma~\ref{lem:optimized-value-if}, the estimated weights $\widehat w_m$ in tangent coordinates and the estimated optimized expected utilities $\widehat B_m$ satisfy
\begin{align}
  N_m^\top(\widehat w_m-w_m(\theta))
  &=D_m(\theta)I_\theta^{-1}\bar S_N+\rho_{m,N}^{w},
  \nonumber\\
  \widehat B_m-B_m(\theta)
  &=\nabla B_m(\theta)^\top I_\theta^{-1}\bar S_N+\rho_{m,N}^{B},
  \label{eq:likelihood-weight-value-expansions}
\end{align}
where the $r$th moments of both remainders are $O(N^{-r})$ uniformly. A second-order expansion of the utility loss within class $m$ gives, with $r_m(w,\theta)=B_m(\theta)-V_{P_\theta}(w)$,
\begin{align}
  \sup_{\theta\in\Theta}
    \abs{N\E_\theta[r_m(\widehat w_m,\theta)]-c_m(\theta)}&\longrightarrow0,
  \nonumber\\
  \sup_{N\geq1,\theta\in\Theta}
    \E_\theta[(N r_m(\widehat w_m,\theta))^2]&<\infty.
  \label{eq:likelihood-allocation-bounds}
\end{align}
The leading tangent term has covariance $D_m I_\theta^{-1}D_m^\top/N$, which proves the allocation formula in \eqref{eq:likelihood-constants}.

For every utility difference $d_{m\ell}=B_m-B_\ell$, the same argument yields
\begin{align}
  \sup_{\theta\in\Theta}
  \Pr_\theta(\abs{\widehat d_{m\ell}-d_{m\ell}(\theta)}>t)
  \leq C\exp(-cNt^2)
  \label{eq:likelihood-contrast-tail}
\end{align}
throughout its bounded range. Near a regular tie between two optimal classes, $s_{m\ell}^2$ is bounded away from zero, the bounded score summands in \eqref{eq:likelihood-weight-value-expansions} satisfy the Berry--Esseen inequality uniformly, and the normalized remainder has expected absolute value $O(N^{-1/2})$. Treating separately the event that this remainder exceeds $N^{-1/4}$ and using the boundedness of the normal density, we obtain a normal approximation for the normalized utility difference that is uniform in $\theta$ and also bounds the probability that the two estimated values tie exactly.

\subsection{The Finite Gaussian Comparison}
We first prove the closed form of the two-class value.

\begin{proof}[Proof of Lemma~\ref{lem:two-class-gaussian-value}]
Write $\sigma=\sqrt\Omega$ and let $u^\star$ attain $\kappa$. The sign rule, which selects the class with the larger observation, has expected gap loss $\abs{\mu}\Phi(-\abs{\mu}/\sigma)$ under mean $\mu$, with supremum $\kappa\sigma$ over $\mu\in\bbR$, so $\gamma_C(\Omega)\leq\kappa\sigma$. Conversely, under the prior with mass $1/2$ at each of $\mu=\pm u^\star\sigma$, both points give the gap $u^\star\sigma$, so a rule's Bayes risk is $u^\star\sigma$ times its probability of selecting the class with the smaller mean. The Bayes rule for this prior minimizes that probability by selecting on the sign of the observation, with error probability $\Phi(-u^\star)$. Hence every rule has Bayes risk, and so supremum risk, at least $u^\star\sigma\Phi(-u^\star)=\kappa\sigma$.
\end{proof}

For three or more classes, we first bound the discretization error for a candidate set $C$ and a covariance satisfying $\tau_n^2 I\preceq\Omega\preceq\Lambda I$. Selecting the class with the largest observation has loss at most $2\norm{G}_\infty$, so $\gamma_C(\Omega)$ is uniformly bounded. Now take a nearly minimax rule for the Gaussian comparison and replace its action probabilities by their averages over each observation cell. Because a Gaussian density differs in $L^1$ from its cellwise average by at most the cell diameter times the $L^1$ norm of its gradient, which is at most $C/\tau_n$, this averaging changes the expected loss at every mean in the box $[-\rho_n,\rho_n]^q$ by at most $C\rho_n\epsilon_n/\tau_n$.

Every mean coordinate lies at least $t_n$ inside the box of observations, so an observation falls outside it with probability at most $C\exp(-t_n^2/(2\Lambda))$, and selecting the reference class $m_0$ there costs at most $C\rho_n\exp(-t_n^2/(2\Lambda))$. Moving a mean to its nearest grid point changes the loss by at most twice the displacement and the Gaussian distribution by at most $C\epsilon_n/\tau_n$ in total variation. Hence the solution of \eqref{eq:likelihood-gaussian-program} satisfies
\begin{align}
  \sup_{\mu\in[-\rho_n,\rho_n]^q}
  \E[L_C(\mu,\delta_n^{G}(\mu+G))]
  &\leq\gamma_C(\Omega)+e_n,
  \nonumber\\
  e_n&\leq C\epsilon_n(1+\rho_n/\tau_n)
           +C\rho_n\exp(-t_n^2/(2\Lambda))=o(1).
  \label{eq:likelihood-program-error}
\end{align}
Here $\delta_n^G$ is the rule defined on the cells before the threshold $\log(n+1)$ is applied. Because the set of feasible probability arrays is compact, the optimum is attained, and minimizing successively over the coordinates selects the lexicographic solution, which is measurable because the Gaussian probabilities of the cells are continuous in the covariance.

Put $\ell_n=\log(n+1)$ and append the noise coordinate $z_{m_0}=0$ of the reference class. For any observation $y=\mu+z$, the final action $\delta_n^{T}(y)$ of the thresholded rule satisfies
\begin{align}
  L_C(\mu,\delta_n^{T}(\mu+z))&\leq\ell_n+2\norm{z}_\infty.
  \label{eq:likelihood-threshold-loss}
\end{align}
Replacing a proposed action by the observed maximizer increases its true loss by at most $(2\norm{z}_\infty-\ell_n)_+$, since replacement requires an observed advantage above $\ell_n$ and the true disadvantage is at most the noise difference minus that advantage. Under any covariance bounded by $\Lambda I$, the expected increase is at most $C\exp(-c\ell_n^2)$, so \eqref{eq:likelihood-program-error} remains valid for the thresholded rule.

\subsection{Sampling Approximation with an Estimated Covariance}

For fixed $C$, let $G_C(\theta)=\nabla d_C(\theta)$ and $\xi_{C,\theta}=G_C(\theta)I_\theta^{-1}S_\theta(R)$, a bounded centered vector with covariance $\Omega_C(\theta)$. The utility differences computed from the second sample satisfy
\begin{align}
  \sqrt{N_n}(d_C(\widehat\theta(D_1))-d_C(\theta))
  =N_n^{-1/2}\sum_{i\in D_1}\xi_{C,\theta}(R_i)+r_{C,n},
  \qquad \sup_{\theta\in\Theta}\E_\theta[\norm{r_{C,n}}_2]\leq C N_n^{-1/2}.
  \label{eq:likelihood-contrast-linear}
\end{align}
Adding the independent vector $\tau_n Z_C$ yields a normal approximation in total variation. Indeed, because the convolution of a bounded test function with the $N(0,\tau_n^2 I)$ density has third derivatives of order $\tau_n^{-3}$, replacing the bounded score summands one at a time by Gaussian summands with the same covariance, as in Lindeberg's argument, leaves an error of at most $C N_n^{-1/2}\tau_n^{-3}$. The same smoothing and \eqref{eq:likelihood-contrast-linear} bound the contribution of the remainder $r_{C,n}$ by $C N_n^{-1/2}\tau_n^{-1}$.

Since the map $\theta\mapsto\Omega_C(\theta)$ is Lipschitz near the evaluation set and both regularized covariances have eigenvalues at least $\tau_n^2$, a relative entropy bound shows that their Gaussian distributions differ in total variation by at most
$C\min\{1,\norm{\Omega_C(\widehat\theta(D_0))-\Omega_C(\theta)}/\tau_n^2\}$.
Taking expectation over the preliminary sample and applying \eqref{eq:likelihood-parameter-bounds} bounds the total error by
\begin{align}
  \omega_n=C\left(N_n^{-1/2}(\tau_n^{-3}+\tau_n^{-1})
                       +k_n^{-1/2}\tau_n^{-2}\right),
  \qquad \omega_n\log(n+1)\longrightarrow0.
  \label{eq:likelihood-smoothed-distance}
\end{align}
Conditional on the preliminary sample, the second sample and the added Gaussian vector are independent of the estimated covariance and of the cell rule computed from it, so the bound applies to this random rule, uniformly over the finitely many candidate sets.

To pass from total variation to expected loss, we restrict attention to centered errors below a large multiple of $\log(n+1)$, where \eqref{eq:likelihood-threshold-loss} bounds the loss by $C\log(n+1)$. Outside this range, the sub-Gaussian tails in \eqref{eq:likelihood-contrast-tail} and the bounded covariances of the added noise and of the Gaussian comparison make the contribution uniformly negligible both in probability and in first moment. Hence, whenever the candidate set contains an optimal class, \eqref{eq:likelihood-program-error} and \eqref{eq:likelihood-smoothed-distance} bound the normalized expected selection regret of the EUR rule by the expected Gaussian minimax constant of the candidate set plus $o(1)$, uniformly over the box of means. In particular, the expected selection regret is uniformly $O(n^{-1/2})$. With two candidates, the scalar normal approximation and \eqref{eq:likelihood-contrast-tail} give the same bound for the comparison of the two estimated values, and with one candidate the selection loss is zero.

\subsection{Gaussian Comparison and the Worst-Case Bound}

We record three properties of \eqref{eq:likelihood-gaussian-value}. First, adding independent Gaussian noise cannot improve the minimax value because a rule for the noisier experiment can be applied to the original observations after adding that noise. This monotonicity, together with the identity $\gamma_C(t\Omega)=\sqrt t\,\gamma_C(\Omega)$ for $t>0$, which follows by rescaling the means, implies continuity of $\gamma_C$: if $\Omega_j\to\Omega\succ0$, then $\Omega_j$ lies between $(1-\varepsilon_j)\Omega$ and $(1+\varepsilon_j)\Omega$ with $\varepsilon_j\to0$.

Second, the value is the supremum of Bayes values over finite priors on the means. Indeed, relative to a positive reference density on $\bbR^{\abs{C}-1}$, the randomized kernels with finitely many actions form a weak-* compact convex subset of a finite product of $L^\infty$ spaces, on which the expected loss at each mean is continuous and linear because the likelihood ratios times the losses are integrable. For any number below the minimax value, compactness yields finitely many of the expected-loss constraints that no kernel satisfies simultaneously, and minimax duality for finitely many parameters then gives a finite prior whose Bayes value exceeds that number. The reverse inequality is immediate.

Third, adding competing alternatives cannot decrease the value for the marginal experiment on the utility differences. Take $C\subset C'$ with a common reference class in $C$, fix a finite prior on the $C$ means, and draw the additional means independently and uniformly on $[-T^2-T,-T^2+T]$. For large $T$, they lie below the value of every class in $C$, so the reference class dominates any selection among them. Conditional on the Gaussian errors, shifting a uniform box by its additional error coordinates changes its distribution by $O(\norm{G_{C'\setminus C}}_1/T)$ on average, so the observations converge in total variation to the $C$ observations plus independent uniform coordinates. Since losses are bounded under the finite prior, the Bayes value of the enlarged experiment converges to that of $C$, and taking the supremum over finite priors proves
\begin{align}
  \gamma_C(\Omega_C)\leq\gamma_{C'}(\Omega_{C'}),
  \label{eq:likelihood-gaussian-subset}
\end{align}
where the covariances come from the same Gaussian utility differences. Changing the reference class is an invertible change of observation coordinates and therefore does not affect this inequality.

\paragraph{The preliminary comparison.}
Let $\mathcal E_n$ be the event that every optimized value estimated from the preliminary sample has error at most $\sigma_Ub_n/4$, a quarter of the screening threshold in \eqref{eq:likelihood-retained-classes}. Equations~\eqref{eq:likelihood-parameter-bounds} and \eqref{eq:likelihood-contrast-tail}, applied also to individual values, imply
\begin{align}
  \sup_{\theta\in\Theta}\Pr_\theta(\mathcal E_n^c)
  \leq C\exp(-c k_n b_n^2),
  \label{eq:likelihood-pilot-bound}
\end{align}
with $\sigma_U^2$ absorbed into $c$. The exponents in Algorithm~\ref{rule:eurr} give $k_nb_n^2\asymp k_n^{1/4}$, so $k_nb_n^2/\log n\to\infty$. Consequently, multiplying the probability of $\mathcal E_n^c$ by $n$ still yields a quantity that tends to zero. On $\mathcal E_n$, the candidate set contains every optimal class and every class with gap at most $\sigma_Ub_n/2$, and every candidate has gap at most $3\sigma_Ub_n/2$.

\paragraph{The supremum over the evaluation set.}
To control the supremum over $\Theta$, take $\theta_n\in\Theta$ and a subsequence converging to $\theta_0\in\Theta$. On $\mathcal E_n$, every candidate eventually belongs to $A(\theta_0)$ because all other population gaps have positive limits, and $\widehat\Omega_C\to\Omega_C(\theta_0)$, which is positive definite by full row rank. Continuity, \eqref{eq:likelihood-gaussian-subset}, and the sampling comparison bound the limit superior of the normalized expected selection regret by $\gamma_{A(\theta_0)}(\Omega_{A(\theta_0)}(\theta_0))$, and exceptional events of vanishing probability do not matter because the Gaussian values are uniformly bounded. With two candidates, the comparison of the two estimated values gives the same bound with value $\kappa s_{m\ell}$, where \eqref{eq:likelihood-contrast-tail} controls the gaps that lie outside a bounded interval after normalization. At a unique optimum, the candidate set is that class alone with exponentially high probability. With the $O(n^{-1})$ allocation regret from \eqref{eq:likelihood-allocation-bounds}, this proves the upper bound in \eqref{eq:likelihood-minimax}. Without ties in $\Theta$, compactness gives a uniformly positive gap to every inferior class, and the same argument gives the upper bound in \eqref{eq:likelihood-unique-minimax}.

\subsection{A Minimax Lower Bound in the Original Return Model}

Fix an interior tie point $\theta_0$ and let $C=A(\theta_0)$, $G=\nabla d_C(\theta_0)$, $I=I_{\theta_0}$, and $\Omega=G I^{-1}G^\top$. For a mean $\mu$ of the utility differences, $v(\mu)=I^{-1}G^\top\Omega^{-1}\mu$ satisfies $Gv(\mu)=\mu$, and for any finite collection of means, the local parameters $\theta_{n,\mu}=\theta_0+v(\mu)/\sqrt n$ lie in $\Theta$ for large $n$ and have utility differences $\mu/\sqrt n+O(n^{-1})$.

The bounded likelihood derivatives give
\begin{align}
  \log\frac{\rmd P_{\theta_{n,\mu}}^{\otimes n}}
                 {\rmd P_{\theta_0}^{\otimes n}}
  =\mu^\top\Omega^{-1}Z_n-\frac12\mu^\top\Omega^{-1}\mu+o_P(1),
  \qquad
  Z_n=G I^{-1}n^{-1/2}\sum_i S_{\theta_0}(R_i)
       \rightsquigarrow N(0,\Omega).
  \label{eq:likelihood-efficient-local-experiment}
\end{align}
The remainders are uniform over the finite collection. Since the moment of order $1+\epsilon$ of the likelihood ratio for one observation is $1+O(n^{-1})$, independence makes the likelihood ratios uniformly integrable. The minimum expected loss under the finite prior is a $P_{\theta_0}^{\otimes n}$ expectation of the minimum of finitely many losses weighted by likelihood ratios, and hence it converges to the corresponding Gaussian expected regret under that prior.

We now extend this bound from rules that select a class to arbitrary portfolio choices. Each optimizer of a class in $C$ belongs to no other class, since otherwise it would also optimize that class, which contradicts the distinctness of the optimizers and their uniqueness within each class. By compactness, these optimizers therefore have disjoint fixed neighborhoods, each meeting only its own class. Let $j(w)$ be the index of the class whose neighborhood contains $w$, and a fixed index in $C$ if no neighborhood contains $w$. Outside all neighborhoods, regret is bounded away from zero, also under the local alternatives, whereas inside the neighborhood of class $j$ it is at least the best class value minus $B_j$. Hence, for every action and mean in the collection,
\begin{align}
  \sqrt n\,r_{P_{\theta_{n,\mu}}}(w)
  \geq L_C(\mu,j(w))-O(n^{-1/2}).
  \label{eq:likelihood-action-test-reduction}
\end{align}
Outside the neighborhoods this holds for large $n$ because the finitely many utility-gap losses are bounded, and it extends to randomized actions by taking expectations. Hence the Gaussian Bayes value under every finite prior, and thus $\gamma_C(\Omega)$, is a lower bound for the limit inferior of the normalized minimax expected regret. At ties on $\partial\Theta$, transversality supplies interior tie points with the same optimal classes that converge to them, and taking the supremum over tie points proves the lower bound in \eqref{eq:likelihood-minimax}.

\subsection{Integration under the Original Prior}

At a parameter with a unique optimal class, that class is selected with probability tending to one, and \eqref{eq:likelihood-allocation-bounds} with Cauchy--Schwarz shows that selecting it from the same second sample that estimates its weights leaves its allocation constant unchanged. The total allocation loss is at most the sum of the nonnegative class-specific losses, whose normalized expectations the same equation bounds uniformly. Since the set of ties has Lebesgue measure zero, dominated convergence proves the allocation integral in \eqref{eq:likelihood-bayes} for the bounded continuous density $h$.

\paragraph{Multiple nearly optimal classes.}
To bound regions with multiple nearly optimal classes when there are at least three classes, let $E_b$ be the set where the largest and third-largest optimized values differ by at most $3\sigma_Ub/2$. Near any limiting triple tie, Assumption~\ref{ass:regular-return-model}(iii) makes two utility differences among those classes local coordinates, and compactness gives finitely many such charts with uniformly bounded Jacobians. On $E_b$ both coordinates lie in intervals of length $Cb$, so we have
\begin{align}
  \operatorname{Leb}(E_b)\leq Cb^2
  \label{eq:likelihood-multiple-tube-volume}
\end{align}
for small $b$, after a union over the finitely many triples. If the parameter dimension is too small for a triple tie, $E_b$ is empty for small $b$.

On $\mathcal E_n$, three or more candidates require $\theta\in E_{b_n}$, so the uniform $O(n^{-1/2})$ selection bound and boundedness of $h$ give
\begin{align}
  n\int_{E_{b_n}}\E_\theta[\text{selection loss}]h(\theta)\rmd\theta
  \leq C\sqrt n\,b_n^2+o(1)\longrightarrow0.
  \label{eq:likelihood-multiple-bayes-negligible}
\end{align}
A failed preliminary comparison contributes $o(1)$ after normalization by \eqref{eq:likelihood-pilot-bound}.

\paragraph{Two leading classes.}
Outside $E_{b_n}$, a successful preliminary comparison leaves at most the two leading classes as candidates, and both are candidates when their population gap is at most $\sigma_Ub_n/2$. For a larger gap, omitting the second class changes the expected selection loss by at most $C\exp(-cN_n b_n^2)$ by \eqref{eq:likelihood-contrast-tail}, since the sole optimal candidate has zero selection loss, and this is negligible after integration and multiplication by $n$.

Partition the remaining parameters according to their two leading classes. Near a regular piece of a tie surface $\calT_{m\ell}$, where exactly $m$ and $\ell$ are optimal, apply the coarea formula with $x=d_{m\ell}(\theta)$. For $x=v/\sqrt{N_n}$ with bounded $v$, the scalar normal approximation gives error probability $\Phi(-\abs{v}/s_{m\ell}(\tau))$ at each $\tau\in\calT_{m\ell}$, and \eqref{eq:likelihood-contrast-tail} bounds the normalized integrand by $C\abs{v}\exp(-cv^2)$.

These bounds extend the integration beyond compact subsets of individual tie surfaces. Since $\nabla d_{m\ell}\neq0$ at every limit point of $\calT_{m\ell}$, finitely many charts whose coordinates are utility differences bound, uniformly, the Jacobians and the areas of the sets on which each utility difference is constant. By the rank condition, sets with more than two optimal classes have codimension at least two and meet the closure of any tie surface in a $(p-1)$-dimensional null set, as does $\partial\Theta$ by transversality. Hence the indicator of the two leading classes and of membership in $\Theta$ converges at almost every point of the tie surface. Dominated convergence in these charts, integration over $v$, and \eqref{eq:likelihood-multiple-bayes-negligible} for the excluded sets $E_{b_n}$ give for each pair the limit
\begin{align}
  &\int_{\mathcal T_{m\ell}}
  \frac{h(\tau)}{\norm{\nabla d_{m\ell}(\tau)}_2}
  \left(\int_{\bbR}\abs{v}\Phi(-\abs{v}/s_{m\ell}(\tau))\rmd v\right)
  \rmd\mathcal H^{p-1}(\tau)
  \nonumber\\
  &\qquad=\frac12\int_{\mathcal T_{m\ell}}
  \frac{s_{m\ell}^2(\tau)h(\tau)}{\norm{\nabla d_{m\ell}(\tau)}_2}
  \rmd\mathcal H^{p-1}(\tau).
  \label{eq:likelihood-coarea-limit}
\end{align}
Here we used $n/N_n\to1$ and $\int_{\bbR}\abs{v}\Phi(-\abs{v}/\sigma)\rmd v=\sigma^2/2$. Away from the optimal tie set, inferior gaps are uniformly positive on compact subsets and contribute exponentially small amounts. Summing over pairs proves the selection part of the Bayes upper bound.

\subsection{The Bayes Lower Bound and the No-Tie Minimax Value}

For the given prior and the full sample $D_n$, let $Q_n(w)=\E_H[V_{P_\vartheta}(w)\mid D_n]$ and choose a measurable maximizer over the original feasible set. It is an exact Bayes decision, because the posterior expected oracle value does not depend on the action and randomization cannot improve the maximum.

At an interior parameter with $h(\theta)>0$, expand the log likelihood about its maximizer. In the coordinate $z=\sqrt n(\vartheta-\widehat\theta_n)$, its difference from the maximum is $-z^\top I_\theta z/2+o_P(1)$ on bounded sets, where the prior is asymptotically constant. Uniform local negative curvature gives Gaussian posterior tails in $z$, global likelihood separation makes the posterior mass outside a fixed neighborhood exponentially small, and the posterior normalizing constant has order $n^{-p/2}$ times a positive local density. Hence the local posterior density and its first two centered moments converge, and in particular
\begin{align}
  \E_H[\vartheta-\widehat\theta_n\mid D_n]&=o_P(n^{-1/2}),
  &
  \E_H[\norm{\vartheta-\widehat\theta_n}_2^2\mid D_n]&=O_P(n^{-1}).
  \label{eq:likelihood-posterior-moments}
\end{align}
This is uniform over bounded $n^{-1/2}$ perturbations of the interior point. Taylor expansion of expected utility and its weight gradient, with uniformly bounded parameter derivatives, gives
\begin{align}
  \sup_{w\in\calW_K}\abs{Q_n(w)-V_{P_{\widehat\theta_n}}(w)}
  +\sup_{w\in\calW_K}\norm{\nabla_w Q_n(w)-\nabla_w V_{P_{\widehat\theta_n}}(w)}_2
  =o_P(n^{-1/2}).
  \label{eq:likelihood-posterior-utility}
\end{align}
The posterior-averaged negative Hessian has eigenvalues bounded away from zero in tangent coordinates, so the resulting class optimizers differ from $w_m(\widehat\theta_n)$ by $o_P(n^{-1/2})$.

On unique-optimum regions this gives the efficient normal limit for the exact Bayes portfolio, and Fatou's lemma for the nonnegative quadratic regret gives $c_{m(\theta)}(\theta)$ pointwise as a lower bound for the limit inferior of $n$ times its expected regret. On bounded normal-coordinate intervals about regular pieces of the tie surfaces, \eqref{eq:likelihood-posterior-utility} gives the sign-comparison limit of \eqref{eq:likelihood-coarea-limit}, and Fatou's lemma for the nonnegative selection losses, with exhaustion of the intervals and regular pieces, gives $C_{\mathrm{sel}}(H)$. Points with $h=0$ contribute zero, and by nonnegativity the sets with more than two optimal classes can be omitted. For large $n$, the shrinking neighborhoods of the tie surfaces are disjoint from any fixed compact unique-optimum region where the prior is positive, so the two lower bounds can be added before their domains are exhausted. Since the exact Bayes decision minimizes integrated expected regret at each $n$, this lower bound for $\mathcal V_{n,K}^H$ matches the upper bound and proves \eqref{eq:likelihood-bayes}.

If $\mathcal T$ is empty, the worst-case expected regret is at least its average under any fixed prior. Applying the Bayes result to smooth priors supported near interior points whose allocation constants approach $\sup_{\theta\in\Theta}c_{m(\theta)}(\theta)$, with $n\to\infty$ before the supports shrink, proves \eqref{eq:likelihood-unique-minimax} by continuity of $c_{m(\theta)}(\theta)$.

\section{Portfolio Approximation}
\label{app:portfolio-approximation}
We prove Proposition~\ref{prop:portfolio-approximation} in the orthonormal coordinates of the class. Write $x=N^\top(w-w_P^\star)$ and suppress $P$. Since the gradient in the tangent coordinates is zero at the interior optimum, strong concavity gives $\ell(w)\leq \zeta(w)^\top(-x)-\lambda\norm{x}_2^2/2\leq\norm{\zeta(w)}_2^2/(2\lambda)$. Bounded third derivatives and the eigenvalue bounds give
\begin{align}
 \zeta(w)&=-J(w)x+O(\norm{x}_2^2),
 &\ell(w)&=\frac12x^\top J(w)x+O(\norm{x}_2^3).
 \label{eq:approximation-proof-expansion}
\end{align}
Since strong concavity also gives $\norm{\zeta(w)}_2\geq\lambda\norm{x}_2$, substituting $x=-J(w)^{-1}\zeta(w)+O(\norm{\zeta(w)}_2^2)$ proves the expansion in \eqref{eq:utility-loss-gradient}.

For feasible $\widehat w$ and $\widetilde w$, the fundamental theorem of calculus along their segment and the upper Hessian bound imply
\begin{align}
 |\ell(\widetilde w)-\ell(\widehat w)|
 \leq\Lambda\norm{\widehat w-w_P^\star}_2\norm{\widetilde w-\widehat w}_2
       +\frac\Lambda2\norm{\widetilde w-\widehat w}_2^2.
 \label{eq:approximation-proof-transfer}
\end{align}
Taking expectations and applying Cauchy--Schwarz proves \eqref{eq:portfolio-approximation-transfer}.

\section{Regular Likelihoods in One Portfolio Class}
\label{app:regular-one-class}
This appendix establishes the one-class versions of Theorems~\ref{thm:likelihood-global} and~\ref{thm:likelihood-bayes} under Assumption~\ref{ass:regular-one-class}.

\subsection{Expected Regret and the Lower Bound}
The first-order condition for the interior optimum and the implicit function theorem give a twice continuously differentiable map $w_1(\theta)$. Put $D_\theta=\dot w_1(\theta)$ and $J_\theta=-\nabla^2V_{P_\theta}(w_1(\theta))$. Taylor expansion and \eqref{eq:regular-estimator-ltwo}--\eqref{eq:regular-estimator-fourth} yield
\begin{align}
 w_1(\widehat\theta_n)-w_1(\theta)
 &=D_\theta I_\theta^{-1}\frac1n\sum_{i=1}^nS_\theta(R_i)+u_n(\theta),
 &\sup_{\theta\in\Theta}n\E_\theta[\norm{u_n(\theta)}_2^2]&\longrightarrow0.
 \label{eq:one-class-weight-expansion}
\end{align}
The fourth moment of the weight error is $O(n^{-2})$. By Markov's inequality and \eqref{eq:regular-estimator-fourth}, the estimates leave the neighborhood used for the Taylor formula with probability $O(n^{-2})$, and compactness of the parameter and decision sets bounds the regret on that event. A Taylor expansion of expected utility gives
\begin{align}
 \sup_{\theta\in\Theta}
 \left|n\E_\theta[r_{P_\theta}(w_1(\widehat\theta_n))]
 -\frac12\operatorname{tr}(J_\theta D_\theta I_\theta^{-1}D_\theta^\top)\right|
 \longrightarrow0.
 \label{eq:one-class-regret-expansion}
\end{align}
The cubic loss remainder has expectation $O(n^{-3/2})$ and the $L^2$ remainder in \eqref{eq:one-class-weight-expansion} contributes $o(n^{-1})$. When only one class is available, the parameter estimate is computed from all $n$ observations, so these expansions give the upper bounds under both evaluations.

\paragraph{Lower bound over all rules.}
We obtain the integrated lower bound over all measurable rules using the matrix form of the van Trees argument \citep{Gill1995vantrees}. Fix an interior $\theta_0$ with $w_0=w_1(\theta_0)$ and $\varepsilon>0$. By continuity, every relevant Hessian is at least $(1-\varepsilon)J_{\theta_0}$ on a ball around $w_0$ for $\theta$ near $\theta_0$. Let $T$ be the projection onto a smaller ball in the $J_{\theta_0}$ norm, and shrink the parameter neighborhood until all $w_1(\theta)$ lie in the concentric ball of half its radius. Projection does not increase the $J_{\theta_0}$ distance to $w_1(\theta)$ inside the larger ball. Outside the larger ball, strong concavity gives a lower bound on regret that exceeds every projected quadratic loss when the projection radius is small. Hence, for all feasible $w$ and parameters in that neighborhood,
\begin{align}
 r_{P_\theta}(w)
 \geq\frac{1-\varepsilon}{2}
 (T(w)-w_1(\theta))^\top J_{\theta_0}(T(w)-w_1(\theta)).
 \label{eq:one-class-local-loss-lower}
\end{align}
The construction does not depend on the rule or the sample size.

Let $\varphi$ be a continuously differentiable probability density supported inside this neighborhood, zero on its boundary, with finite $\mathcal I(\varphi)=\int(\nabla\varphi)(\nabla\varphi)^\top\varphi^{-1}\,\rmd\theta$, where integrals involving $\varphi$ are taken over its support. Under the joint distribution with parameter density $\varphi$ and $n$ return observations, write $\mathbb E_\varphi$ for expectation and $S=\sum_{i=1}^n S_\theta(R_i)+\nabla\log\varphi(\theta)$. Integration by parts and the zero mean of the likelihood score give
\begin{align}
 \mathbb E_\varphi[(T(\widehat w)-w_1(\theta))S^\top]&=\overline D_\varphi,
 &\overline D_\varphi&=\int D_\theta \varphi(\theta)\,\rmd\theta,\nonumber\\
 \mathbb E_\varphi[SS^\top]&=n\overline I_\varphi+\mathcal I(\varphi),
 &\overline I_\varphi&=\int I_\theta \varphi(\theta)\,\rmd\theta.
 \label{eq:one-class-joint-score}
\end{align}
The identities hold for every bounded measurable output, including randomization independent of the parameter, because they differentiate the joint density and $w_1$, not the estimator. Matrix Cauchy--Schwarz applied to \eqref{eq:one-class-joint-score}, followed by \eqref{eq:one-class-local-loss-lower}, gives
\begin{align}
 n\int\calR_n(\delta_n,P_\theta)\varphi(\theta)\,\rmd\theta
 \geq\frac{1-\varepsilon}{2}
 \operatorname{tr}\left(J_{\theta_0}\overline D_\varphi
             (\overline I_\varphi+\mathcal I(\varphi)/n)^{-1}\overline D_\varphi^\top\right).
 \label{eq:one-class-integrated-lower}
\end{align}
The right side does not depend on the rule, and letting $n\to\infty$ and then taking densities supported close to $\theta_0$ makes it arbitrarily close to $c_1(\theta_0)$.

For the given continuous density $h$, partition compact subsets of its positive-density region in $\operatorname{int}(\Theta)$ into finitely many small disjoint cubes, and on each choose a smooth nonnegative density below $h$ with mass arbitrarily close to that of $h$ and finite prior information after normalization. Since regret is nonnegative, summing \eqref{eq:one-class-integrated-lower} over the normalized densities, weighted by their masses, bounds the integral under $h$ below. By uniform continuity of $D_\theta$, $I_\theta$, and $J_\theta$, the sum approaches $\int_\Theta c_1(\theta)h(\theta)\,\rmd\theta$ as the cubes shrink and exhaust the positive-density region, since the boundary is Lebesgue null. This proves the Bayes lower bound for every $H\in\mathfrak H$. Smooth priors concentrated near interior points whose allocation constants approach $\sup_{\theta\in\Theta}c_1(\theta)$ give the minimax lower bound, with $n\to\infty$ before the auxiliary supports shrink in both cases. With \eqref{eq:one-class-regret-expansion}, these bounds prove the two one-class conclusions.

\subsection{Verification for the Signed-Gamma Family}
For $r_a\ne0$, the joint density of the example in Section~\ref{sec:estimated-volatility} is
\begin{align}
 p_\theta(r)=\prod_{a=1}^K
 \left(\pi^{\mathbf1\{r_a>0\}}(1-\pi)^{\mathbf1\{r_a<0\}}
 \frac{|r_a|^{k_G-1}\exp(-|r_a|/s_a)}{\Gamma(k_G)s_a^{k_G}}\right).
 \label{eq:gamma-return-density}
\end{align}
The coordinate hyperplanes are null and the common support does not depend on the scales. The scores are
\begin{align}
 S_\pi(R)&=\frac{\sum_{a=1}^K(\mathbf1\{R_a>0\}-\pi)}{\pi(1-\pi)},
 &S_{\log s_a}(R)&=|R_a|/s_a-k_G.
 \label{eq:gamma-score}
\end{align}
Independence of signs and magnitudes gives the stated block-diagonal information. Score moments are bounded on the compact parameter ranges, and the gamma tails justify differentiation under the integral and differentiability in quadratic mean.

The unconstrained maximum likelihood estimates of $\pi$ and $s_a$ are the pooled sign proportion and $n^{-1}k_G^{-1}\sum_i|R_{ia}|$, and maximizing over the prescribed intervals gives \eqref{eq:gamma-likelihood-estimates}. Binomial and gamma exponential bounds make clipping exponentially unlikely for an evaluation set strictly inside those intervals. With $X_{ia}=|R_{ia}|/s_a$, the log-scale estimate satisfies
\begin{align}
 \log\widehat s_{a,n}-\log s_a
 =\frac1{nk_G}\sum_{i=1}^n(X_{ia}-k_G)+r_{a,n},
 \qquad
 \sup_{\theta\in\Theta}n\E_\theta[r_{a,n}^2]\longrightarrow0.
 \label{eq:gamma-log-scale-expansion}
\end{align}
Taylor expansion near the positive mean, with gamma fourth moments and exponentially small tails elsewhere, proves this, and the same bounds give \eqref{eq:regular-estimator-fourth} since the clipped log scales are bounded. The sign estimate has the efficient expansion directly, which verifies the estimation part of Assumption~\ref{ass:regular-one-class}.

The condition $\gamma Ls_+<1$ makes the required derivatives of expected utility finite and continuous near the compact parameter and decision sets. Since $\gamma\E_\theta[(u^\top R)^2\exp(-\gamma w^\top R)]>0$ for every nonzero $u$, continuity and compactness give uniform positive lower and finite upper Hessian bounds. The minimizer of $\ell_\pi(\alpha)$ solves $\pi(1+\gamma \alpha)^{-k_G-1}=(1-\pi)(1-\gamma \alpha)^{-k_G-1}$, which gives \eqref{eq:gamma-investment}, and the parameter bounds make $\alpha_\pi/s_a$ positive and uniformly below $L$, verifying the interior condition.

At this optimum, $\ell_\pi'(\alpha_\pi)=0$, so cross derivatives in standardized holdings vanish and the negative Hessian is $j_\pi I_K$. The derivatives of $w_a=\alpha_\pi/s_a$ with respect to $\pi$ and $\log s_{a'}$ are $\alpha_\pi'/s_a$ and $-\mathbf1\{a=a'\}\alpha_\pi/s_a$. Substituting them and \eqref{eq:gamma-score} into \eqref{eq:likelihood-constants} gives \eqref{eq:gamma-allocation-constant}, and the assertions about the fitted risk-parity allocation follow from the general EUR rule results.

\section{Portfolio Restrictions}
\label{app:portfolio-restrictions}
We prove the risk-parity representation of the EUR rule and the comparisons for a prescribed composition and for a money budget.

\subsection{Proof of Proposition~\ref{prop:eurr-risk-parity}}
For $y\in\mathcal C$, transitivity gives $|\mathcal G|^{-1}\sum_{\Pi\in\mathcal G}\Pi y=(\mathbf1^\top y/K)\mathbf1$, which lies in $\mathcal C$ by convexity and invariance. By concavity of $U$ and invariance of the distribution,
\begin{align}
  \E_\theta\left[U\left(x_0+\frac{\mathbf1^\top y}{K}T\right)\right]
  &\geq\frac1{|\mathcal G|}\sum_{\Pi\in\mathcal G}
       \E_\theta[U(x_0+(\Pi y)^\top Z)]
   =\E_\theta[U(x_0+y^\top Z)].
  \label{eq:rp-concavity}
\end{align}
Assumption~\ref{ass:regular-return-model}(ii) gives strict concavity and an interior optimizer, so the unique optimum is $\alpha_\theta\mathbf1$ in standardized coordinates, with $\alpha_\theta>0$ because $\mathcal C$ is full-dimensional and lies in the nonnegative orthant. With one class, the EUR rule evaluates this optimizer at $\widehat\theta(D)$, proving \eqref{eq:rp-eurr-identity}.

The covariance satisfies $\Pi\Sigma_\theta^Z\Pi^\top=\Sigma_\theta^Z$ for all $\Pi\in\mathcal G$, so transitivity makes its diagonal entries equal and gives $\Sigma_\theta^Z\mathbf1=\lambda_\theta\mathbf1$ for $\lambda_\theta=\mathbf1^\top\Sigma_\theta^Z\mathbf1/K>0$. For $\alpha>0$ and $w=\alpha D_s^{-1}\mathbf1$, $w_a(\Sigma_\theta w)_a=\alpha^2\lambda_\theta$ and $w^\top\Sigma_\theta w=K\alpha^2\lambda_\theta$, so \eqref{eq:rp-contributions} gives $\operatorname{RC}_{a,\theta}(w)=\sigma_\theta(w)/K$. Normalizing the risky holdings by their sum proves \eqref{eq:rp-capital-composition}.

\subsection{Proof of Proposition~\ref{prop:local-composition}}
Differentiating $\nabla_wV_{P_\theta}(w_1(\theta))=0$ at $\theta_0$ gives $\nabla_\theta\nabla_wV_{P_\theta}(\alpha_0v)|_{\theta=\theta_0}=J_{\theta_0}D_{\theta_0}$. The scalar first-order condition for $\alpha^v(\theta)$ then gives
\begin{align}
 v\nabla_\theta \alpha^v(\theta_0)^\top
 =v(v^\top J_{\theta_0}v)^{-1}v^\top J_{\theta_0}D_{\theta_0}=\Pi_vD_{\theta_0}=D_\parallel.
 \label{eq:restricted-composition-derivative}
\end{align}
Under $\theta_{n,h}=\theta_0+h/\sqrt n$, the likelihood expansion implies, uniformly for $h$ in a compact set,
\begin{align}
 \sqrt n\left(\widehat w_n^v-w_1(\theta_{n,h})\right)
 =D_\parallel\sqrt n(\widehat\theta(D)-\theta_{n,h})-D_\perp h+o_{L^2}(1).
 \label{eq:restricted-composition-expansion}
\end{align}
The centered estimator on the right has mean tending to zero and covariance tending to $I_{\theta_0}^{-1}$, and bounded fourth moments control Taylor remainders and estimates outside the local neighborhood, so quadratic expansion of regret gives \eqref{eq:composition-restricted-regret}. For the EUR rule, the expansion has $D_{\theta_0}$ in place of $D_\parallel$ and no bias term, and since $D_\parallel^\top J_{\theta_0}D_\perp=0$ its trace constant is $c_\parallel+c_\perp$, proving \eqref{eq:composition-full-regret}. For a deterministic perturbation $th$, the two population portfolios differ by $tD_\perp h+O(t^2)$, and Taylor expansion at the unrestricted optimum proves \eqref{eq:composition-population-cost}.

\subsection{Proof of Proposition~\ref{prop:rp-budget}}
When $\mathcal B\geq\mathcal B_0$, the original optimizer remains feasible and solves the restricted problem. At a symmetric allocation, invariance under the transitive group makes the gradient a multiple of $\mathbf1$. If the money budget is the only active constraint, stationarity sets this gradient equal to $\lambda_{\mathcal B}c$, which, for $\lambda_{\mathcal B}>0$, requires the entries of $c$ to be equal.

Near $\mathcal B_0$, the bordered matrix for the equations $\nabla\phi_\theta(y)=\lambda_{\mathcal B}c$ and $c^\top y=\mathcal B$ is nonsingular because $J_\phi\succ0$ and $c\ne0$. The implicit function theorem gives the first expansion in \eqref{eq:budget-allocation-expansion}, with $\lambda_{\mathcal B}=\Delta/(c^\top J_\phi^{-1}c)+O(\Delta^2)>0$ for small $\Delta>0$. Strict concavity of the scalar objective makes its optimum under the tighter budget equal to $\alpha_\theta-\Delta/(c^\top\mathbf1)$, proving the second expansion.

When both portfolios are written relative to $y_0$ with $\nabla\phi_\theta(y_0)=0$, their quadratic losses from the original optimum are $\Delta^2/(2c^\top J_\phi^{-1}c)$ and $\Delta^2\mathbf1^\top J_\phi\mathbf1/(2(c^\top\mathbf1)^2)$, up to $O(\Delta^3)$, and their difference proves \eqref{eq:budget-composition-cost}. By Cauchy--Schwarz the coefficient is nonnegative, with equality precisely when $c$ is proportional to $J_\phi\mathbf1$. Since invariance and transitivity give $J_\phi\mathbf1=\lambda_0\mathbf1$ for some $\lambda_0>0$, equality occurs only when the relative volatilities are equal.

\end{document}